\DeclareMathOperator*{\ho}{\overset{\perp}{\mathcal{\oplus}}}
\DeclareMathOperator{\ran}{\operatorname{ran}}
\newcommand{\FSpace}[2]{\mathcal{F}(#1,#2)}
\DeclareMathOperator{\Hzero}{\mathcal{F}_\#(\mathbb{Z}^\text{$d$},\mathbb{K})}
\DeclareMathOperator{\Hone}{\mathcal{F}_\#(\text{$E$},\mathbb{K})}
\DeclareMathOperator{\Fzero}{\mathcal{F}(\mathbb{Z}^\text{$d$},\mathbb{K})}
\DeclareMathOperator{\Fone}{\mathcal{F}(\text{$E$},\mathbb{K})}
\newtheorem{theorem}{Theorem}
\newtheorem{definition}[theorem]{Definition}
\newtheorem{corollary}[theorem]{Corollary}
\newtheorem{lemma}[theorem]{Lemma}
\newtheorem{proposition}[theorem]{Proposition}
\newtheorem{remark}[theorem]{Remark}
\newtheorem{example}[theorem]{Example}
\newtheorem{notation}[theorem]{Notation}
\newenvironment{proof}[1][Proof]{\noindent\textbf{#1.} }{\ \rule{0.5em}{0.5em}}
\def\@email#1#2{%
 \endgroup
 \patchcmd{\titleblock@produce}
  {\frontmatter@RRAPformat}
  {\frontmatter@RRAPformat{\produce@RRAP{*#1\href{mailto:#2}{#2}}}\frontmatter@RRAPformat}
  {}{}
}%
\begin{document}

\preprint{AIP/123-QED}

\title[Effective operators and variational principles for discrete networks]{Effective operators and their variational principles for discrete electrical network problems}
% Force line breaks with \\
\author{K. Beard}
\affiliation{ 
Louisiana State University, Baton Rouge, LA USA
}%
\author{A. Stefan}%
\affiliation{ 
Florida Institute of Technology, Melbourne, FL USA%\\This line break forced with \textbackslash\textbackslash
}%

\author{R. Viator, Jr.}
\affiliation{%
Swarthmore College, Swarthmore, PA USA
}%

\author{A. Welters}
\email{awelters@fit.edu}
\affiliation{ 
Florida Institute of Technology, Melbourne, FL USA%\\This line break forced with \textbackslash\textbackslash
}%

\date{\today}% It is always \today, today,
             %  but any date may be explicitly specified

%The abstract should serve as an index (including all subjects, major and minor, about which new information is given) and as a summary (presenting the conclusions and all results of general interest in the article). It should be one paragraph with approximately 250 words. The abstract should not contain displayed mathematical equations, footnotes, references, graphics, or tabular material.
\begin{abstract}
Using a Hilbert space framework inspired by the methods of orthogonal projections and Hodge decompositions, we study a general class of problems (called $Z$-problems) that arise in effective media theory, especially within the theory of composites, for defining the effective operator. A new and unified approach is developed, based on block operator methods, for obtaining solutions of the $Z$-problem, formulas for the effective operator in terms of the Schur complement, and associated variational principles (e.g., the Dirichlet and Thomson minimization principles) that lead to upper and lower bounds on the effective operator. In the case of finite-dimensional Hilbert spaces, this allows for a relaxation of the standard hypotheses on positivity and invertibility for the classes of operators usually considered in such problems, by replacing inverses with the Moore-Penrose pseudoinverse. As we develop the theory, we show how it applies to the classical example from the theory of composites on the effective conductivity in the periodic conductivity problem in the continuum ($2d$ and $3d$) under the standard hypotheses. After that, we consider the following three important and diverse examples (increasing in complexity) of discrete electrical network problems in which our theory applies under the relaxed hypotheses. First, an operator-theoretic reformulation of the discrete Dirichlet-to-Neumann (DtN) map for an electrical network on a finite linear graph is given and used to relate the DtN map to the effective operator of an associated $Z$-problem.\ Second, we show how the classical effective conductivity of an electrical network on a finite linear graph is essentially the effective operator of an associated $Z$-problem. Finally, we consider electrical networks on periodic linear graphs and develop a discrete analog to classical example of the periodic conductivity equation and effective conductivity in the continuum.\footnote{This work is based in part on the M.S.\ thesis\cite{22KB} of the first author whose advisor was the fourth author.}
\end{abstract}

\maketitle

\section{\label{sec:intro}Introduction}

Broadly speaking, effective media (or medium) theory is an analytical or theoretical model which describes the effective (apparent, overall, or aggregate) behavior of a complex, multicomponent system, usually having a constitutive relation, in terms of a simpler system with an effective constitutive relation that defines the effective parameter, moduli, coefficient, tensor, matrix, operator, etc.\cite{78RL, 02GM, 13ST, 16TC, 19KM}, which we will denote here by $\sigma_*$. The effective properties of the original system are then often described in terms of this $\sigma_*$.

For example, in a continuum electrical conductivity problem with a conductivity tensor field $\sigma=\sigma(x)$ in a periodic medium, the constitutive relation governing the relation between the electric field $E=E(x)$ and the current density $J=J(x)$ is Ohm's law $J=\sigma E$ and the effective conductivity tensor $\sigma_*$ is defined by the relation $\langle J \rangle = \sigma_*\langle E \rangle$, where $\langle \cdot \rangle$ denotes the periodic average \cite{02GM, 16YG, 16GM}. 

Similar to this example, there are a common set of fundamental problems one wants to solve such as: Find sufficient conditions on the constitutive relations that guarantee $\sigma_*$ is uniquely defined by the effective constitutive relation and then give an explicit formula for $\sigma_*$. In addition, if possible, give variational principles for $\sigma_*$ and derive upper and lower bounds on it. 

To treat these fundamental problems, a mathematical structure has emerged from within the theory of composites \cite{02GM,16YG,16GM}, that has provided a common systematic method for defining the effective operator $\sigma_*$ starting from a ``$Z$-problem" (aka, a generalized ``field equation" or ``cell problem"), which represents the constitutive relation, and based on a Hilbert space framework in which orthogonal projections associated with a Hodge decomposition plays a key role. Originally, part of the framework arose out of homogenization theory \cite{78GP,81PV, 82KP, 83GP, 86GD} and then was further developed within the theory of composites\cite{87aGM, 87bGM, 88MK, 90GM, 02GM,16YG} (see also Chap.\ 2 in Ref.\ \onlinecite{16GM}). But now the abstract framework has been shown to be useful more generally\cite{16GM, 18YG, 17GM, 18GM, 19MO} and leads to several important open problems\cite{21GM} that is fueling research \cite{21AS, 21aAW, 21bAW}. As such, this framework deserves further study and should include new compelling examples that fit within the abstract framework. This provides one motivation for our paper.

In order to better understand the structure of a $Z$-problem, let us consider the example above in more detail.
\begin{example}\label{ExContinuumPeriodicCondZProb}
Consider the Hilbert space of periodic square-integrable vector-valued functions $\left[  L_{\#}^{2}\left(\Omega\right)\right]^{d}$ ($d=2$ or $d=3$, over the field $\mathbb{K}=\mathbb{R}$ or $\mathbb{K}=\mathbb{C}$) with unit cell $\Omega$ (e.g., $\Omega = [0,2\pi]^d$) and the Hodge decomposition: 
\begin{gather}
    \left[  L_{\#}^{2}\left(\Omega\right)\right]^{d}=\mathcal{H}=\mathcal{U}\ho\mathcal{E}\ho\mathcal{J},\label{HodgeDecompPeriodContinuumCond}\\
	\mathcal{U}  =\{U\in\mathcal{H}:\langle U \rangle=U\}, \label{HodgeDecompPeriodContinuumCond1}\\
	\mathcal{E}  =\{E\in\mathcal{H}:\nabla\times E=0,\; \langle E\rangle=0\}, \label{HodgeDecompPeriodContinuumCond2}\\
	\mathcal{J}  =\{J\in\mathcal{H}:\nabla\cdot J=0,\; \langle J\rangle=0\},\label{HodgeDecompPeriodContinuumCond3}
\end{gather}
with the inner product and (cell) average
\begin{align}
	\left(E,F\right)_{\mathcal{H}}=\frac{1}{\left\vert \Omega\right\vert }% 
	{\textstyle\int\limits_{\Omega}}\overline{E\left(x\right)}^{T}F(x)dx,\;\; 
	\left\langle F\right\rangle                 
	=\frac{1}{\left\vert \Omega\right\vert }% 
	{\textstyle\int\limits_{\Omega}}            
	F\left(  x\right)  dx,    \label{ExContinuumPeriodicCondZProbRealInnerProdAndAvg}          
\end{align}
respectively, for all $E,F\in\mathcal{H}$, where $(\cdot)^T$ and $\overline{(\cdot)}$ denote the transpose and complex conjugation, respectively. In particular, $\mathcal{U}$ is the $d$-dimensional space of uniform (constant) vector functions; $\mathcal{E}$ is the infinite-dimensional space of all $\Omega$-periodic fields $E$ characterized by $E=\nabla u$ for some $\Omega$-periodic function $u$; $\mathcal{U}\ho \mathcal{E}$ and $\mathcal{U}\ho \mathcal{J}$ are the spaces of periodic vector fields which are the gradient $\nabla$ of a potential and divergence ($\nabla\cdot $)-free, respectively. 

The $Z$-problem in this example is the constitutive relation, namely, Ohm's law (formulated as the field equation or cell problem \cite{02GM, 16YG, 16GM}) with a measurable, bounded, periodic conductivity tensor $\sigma=\sigma(x)$ [more precisely, it can be scalar-, tensor-, or $d\times d$ matrix-valued as a function of the spatial variable $x$, but the key point is that as a left multiplication operator, $\sigma$ is a bounded linear operator on $\mathcal{H}$, i.e., $\sigma\in \mathcal{L}(\mathcal{H})$] or, more generally, with any bounded linear operator $\sigma\in \mathcal{L}(\mathcal{H})$ (in particular, $\sigma$ need not be a local operator\cite{88MK, 90GM}):
\begin{align}
    J_0+J=\sigma (E_0+E),\label{DefZProbContConductivity}
\end{align}
where $E_0,J_0\in \mathcal{U},$ $E\in \mathcal{E}$, and $J\in \mathcal{J}$. Then effective conductivity $\sigma_*$ is defined in terms of this $Z$-problem as the bounded linear operator on $\mathcal{U}$ [i.e., $\sigma_*\in \mathcal{L}(\mathcal{U})$] satisfying the effective constitutive relation
\begin{align}
    J_0=\sigma_*E_0.\label{DefZProbContEffConductivity}
\end{align}
Equivalently, in terms of averages between the periodic electric field $E_0+E\in \mathcal{U}\ho \mathcal{E}$ and the periodic current density $J_0+J\in \mathcal{U}\ho \mathcal{J}$ related through Ohm's law (\ref{DefZProbContConductivity}), the effective constitutive relation (\ref{DefZProbContEffConductivity}) is the usual relation defining the effective conductivity:
\begin{align}
    \langle J_0+J \rangle = \sigma_* \langle E_0+E \rangle,
\end{align}
since $\langle E_0+E \rangle=E_0$ and $\langle J_0+J \rangle=J_0$. 
\end{example}

\begin{remark}
In this paper we will treat Hilbert spaces $\mathcal{H}$ over a field $\mathbb{K}$, where $\mathbb{K}=\mathbb{R}$, the real numbers, or $\mathbb{K}=\mathbb{C}$, the complex numbers. In either case, the abstract framework and our results are formulated in a general form regardless of which field is used. One motivation for including both real and complex fields comes from the theory of composites\cite{90GM, 02GM, 16YG, 16GM}: In static problems, usually one deals with real fields and hence a real Hilbert space $\mathcal{H}$. But for quasistatic problems, one also deals with complex fields and hence a complex Hilbert space $\mathcal{H}$.
\end{remark}

\begin{remark}\label{rem:DefPreciseSpacesContinuum}
For the Hodge decomposition above, there is an alternative characterization (see pp.\ 1487--1488 and Appendix A in Ref.\ \onlinecite{22LV} on the Helmholtz decomposition) of these spaces (which can be proved using Fourier series analysis, for instance) that will be useful when we consider discrete examples in Sec.\ \ref{sec:DiscreteNetworkExamples}:
$\left[  L_{\#}^{2}\left(\Omega\right)\right]^{d}$ denotes the Hilbert space (over the field $\mathbb{K}$) of all periodic (on $\Omega$) vector-valued functions belonging to $L^2_{loc}(\mathbb{R}^d,\mathbb{K}^d)$ with inner product (\ref{ExContinuumPeriodicCondZProbRealInnerProdAndAvg}),
\begin{gather}
    \mathcal{U}=\{U\in \left[  L_{\#}^{2}\left(\Omega\right)\right]^{d}:U \text{ is a constant function}\},\label{DefPreciseUSpaceContinuum}\\
    \mathcal{E}=\{\nabla u:u\in L_{\#}^{2}\left(\Omega\right) \text{ and } u\in H^1_{loc}(\mathbb{R}^d,\mathbb{K})\},\label{DefPreciseESpaceContinuum}\\
    \mathcal{J}=\{J\in \left[  L_{\#}^{2}\left(\Omega\right)\right]^{d}:J\in H_{loc}(\operatorname{div},\mathbb{R}^d,\mathbb{K}^d), \nabla\cdot J=0,\langle J\rangle=0\},\label{DefPreciseJSpaceContinuum}
\end{gather}
where $\nabla:H^1_{loc}(\mathbb{R}^d,\mathbb{K})\rightarrow L^2_{loc}(\mathbb{R}^d,\mathbb{K}^d)$ is the gradient operator and $\nabla\cdot:H_{loc}(\operatorname{div},\mathbb{R}^d,\mathbb{K}^d)\rightarrow L^2_{loc}(\mathbb{R}^d,\mathbb{K}^d)$ is the divergence operator.  Then
\begin{align*}
   \mathcal{J}=\{\nabla\times F:F\in \left[  L_{\#}^{2}\left(\Omega\right)\right]^{d},\nabla\times F\in \left[  L_{\#}^{2}\left(\Omega\right)\right]^{d}\},
\end{align*}
where $\nabla\times:H_{loc}(\operatorname{curl},\mathbb{R}^d,\mathbb{K}^d)\rightarrow L^2_{loc}(\mathbb{R}^d,\mathbb{K}^d)$ is the curl operator, and 
\begin{gather}
    \left[  L_{\#}^{2}\left(\Omega\right)\right]^{d}=\mathcal{U}\ho\mathcal{E}\ho\mathcal{J},\label{PreciseHodgeDecompContinuum}\\
    \operatorname{ker}(\nabla\cdot)\cap \left[  L_{\#}^{2}\left(\Omega\right)\right]^{d}=\mathcal{U}\ho\mathcal{J},\;\operatorname{ran}(\nabla)\cap \left[  L_{\#}^{2}\left(\Omega\right)\right]^{d}=\mathcal{U}\ho\mathcal{E}.\label{PreciseUEAndUJSpaceContinuum}
\end{gather}
%\textcolor{blue}{The spaces $H_{loc}(\operatorname{curl},\mathbb{R}^d,\mathbb{K}^d)$, $H_{loc}(\operatorname{div},\mathbb{R}^d,\mathbb{K}^d)$ are defined 
%\begin{align}
%    H_{loc}(\operatorname{curl},\mathbb{R}^d,\mathbb{K}^d) = & \{ F: \mathbb{R}^d \rightarrow \mathbb{K}^d \ | \ F|_{V} \in H(\operatorname{curl},V) \\ &\text{ for each pre-compact } V \subset \mathbb{R} \}, \nonumber \\
%    H_{loc}(\operatorname{div},\mathbb{R}^d,\mathbb{K}^d) =& \{ F: \mathbb{R}^d \rightarrow \mathbb{K}^d \ | \ F|_{V} \in H(\operatorname{div},V) \\ &\text{ for each pre-compact } V \subset \mathbb{R} \} \nonumber
%\end{align} where $H(\operatorname{curl}, \Omega)$ and $H(\operatorname{div}, \Omega)$ are defined in the usual fashion for bounded domains $\Omega \subseteq \mathbb{R}^d$ (see c.f. Dautray \& Lions "Mathematical Analysis and Numerical Methods for Science and Technology", chapter IX, Part A, section 1).} 
It is these representations, specifically, (\ref{DefPreciseUSpaceContinuum})-- (\ref{PreciseUEAndUJSpaceContinuum}), that we will consider for an analogy in the periodic lattice example in Sec.\ \ref{SectCondInf}, but we will find that there are nuances which lead us to conclude that the analogy is not a perfect one (see Sec.\ \ref{SecEffCond}).
\end{remark}

\subsection{\label{sec:intro:subsec:ZprobEffOp}What is the \textit{Z}-problem and effective operator?}

The above discussion motivations the following precise definition of the $Z$-problem (a term coined in Ref.\ \onlinecite{16GM}, see Chaps.\ 7 \& 10, that also may be called a generalized ``cell problem" as coined in Ref.\ \onlinecite{18YG} or ``field equation" following \onlinecite{87aGM, 87bGM, 88MK, 90GM, 02GM, 16GM}) associated with a Hilbert space having an orthogonal triple decomposition [or, in light of (\ref{HodgeDecompPeriodContinuumCond})-(\ref{HodgeDecompPeriodContinuumCond3}) and Ref.\ \onlinecite{86GD}, a ``generalized Hodge decomposition" as coined in Ref.\ \onlinecite{18YG}] and effective operator (or ``$Z$-operator," see Chaps.\ 7 \& 10  in Ref.\ \onlinecite{16GM}) that comes from the abstract theory of composites \cite{02GM,16GM}.
\begin{definition}[$Z$-problem and effective operator]\label{DefZProbMain}
	The $Z$-problem
	\begin{equation}
		(\mathcal{H},\mathcal{U},\mathcal{E},\mathcal{J},\sigma), \label{DefZProb}
	\end{equation}
	is the following problem associated with a Hilbert space $\mathcal{H}$, an orthogonal triple decomposition of $\mathcal{H}$ as 
	\begin{equation}
		\mathcal{H=U}\overset{\bot}{\mathcal{\oplus}}\mathcal{E}\overset{\bot
			}{\mathcal{\oplus}}\mathcal{J},\label{DefZProbHOrthTri}
	\end{equation}
	and a (bounded) linear operator $\sigma\in\mathcal{L}(\mathcal{H})$:
	given $E_{0}\in\mathcal{U}$, find triples $\left(J_{0},E,J\right)\in\mathcal{U}\times\mathcal{E}\times\mathcal{J}$ satisfying 
	\begin{equation}
		J_{0}+J=\sigma \left(  E_{0}+E\right),
		\label{DefZProbEq} 
	\end{equation}
	such a triple $\left(  J_{0},E,J\right)$ is called a solution of the $Z$-problem at $E_{0}$.
	If there exists a (bounded) linear operator $\sigma_*\in\mathcal{L}(\mathcal{U}$) such that 
	\begin{equation}
		J_{0}=\sigma_{\ast}E_{0}, \label{DefZProbEffOp}
	\end{equation}
	whenever $E_0\in\mathcal{U}$ and $\left(  J_{0},E,J\right)$ is a solution of the $Z$-problem at $E_0$, then $\sigma_*$ is called an effective operator of the $Z$-problem.
\end{definition}

For instance, it follows from this definition that Example \ref{ExContinuumPeriodicCondZProb} defines a $Z$-problem $(\mathcal{H},\mathcal{U},\mathcal{E},\mathcal{J},\sigma)$, where $\mathcal{H}=\left[L_{\#}^{2}\left(\Omega\right)\right]^{d}$ has the orthogonal triple decomposition (\ref{HodgeDecompPeriodContinuumCond}) with $\mathcal{U}, \mathcal{E}, \mathcal{J}$ given by (\ref{HodgeDecompPeriodContinuumCond1}), (\ref{HodgeDecompPeriodContinuumCond2}), (\ref{HodgeDecompPeriodContinuumCond3}) (which is a special case of a Hodge decomposition) and the effective conductivity $\sigma_*$ is an effective operator of that $Z$-problem according to our definition above.

\subsection{\label{sec:intro:subsec:MainProbs}Overview of the paper}

To every $Z$-problem $(\mathcal{H},\mathcal{U},\mathcal{E},\mathcal{J},\sigma)$ (e.g., Example \ref{ExContinuumPeriodicCondZProb}), there are three important sets of problems that naturally arise: 
\begin{enumerate}
    \item[\hypertarget{(i)}{(i)}] (Solvability of the $Z$-problem) Under what conditions does the $Z$-problem (\ref{DefZProbEq}) have a solution; a unique solution? If possible, find a formula for all solutions in terms of $\sigma$ and parameterized by those $E_0\in \mathcal{U}$ for which solutions exist.
    \item[\hypertarget{(ii)}{(ii)}] (Existence, uniqueness, and representation formulas for the effective operator) Under what conditions does the effective operator $\sigma_*$ exist; is unique?  If it exists, find representation formulas for it in terms of $\sigma$.
    \item[\hypertarget{(iii)}{(iii)}] (Variational principles \& bounds) If $\sigma^*=\sigma$ (i.e., self-adjoint) and $\sigma\geq 0$ (i.e., positive semidefinite), are there variational principles: $(1)$ for the solutions of the $Z$-problem? $(2)$ that define the effective operator $\sigma_*$? $(3)$ that can be used to derive upper and lower bounds on an effective operator $\sigma_*$ in terms of $\sigma$?
\end{enumerate}

In this paper, we will focus on answering these questions. To do so, we will first consider classical results in Sec.\ \ref{sec:ClassicalResults}. Here, certain ``strong hypotheses" will be used such as $\sigma^*=\sigma\geq 0$ and $\sigma$ invertible (the ``classical" hypotheses). Often the latter will be too restrictive though and the invertibility of the subblock $\sigma_{11}$ of $\sigma$ [see (\ref{3b3BlockOpReprOfSigma})-(\ref{DefAltSigma11Subblock})] will be enough. Second, in Sec.\ \ref{sec:RelaxingHyps}, we discuss how we plan to relax or weaken the hypotheses and then give insight into the reason these ``weaker hypotheses" [see \hyperlink{(H1)}{(H1)-(H4)}] will naturally occur. Next, in order to extend the results in Sec.\ \ref{sec:ClassicalResults} under the weaker hypotheses, we introduce a unified framework for treating the variational principles associated to constrained linear equations in Sec.\ \ref{sec:VarPrincConstLinearEqsUnifiedFramework}. Then, in Sec.\ \ref{sec:MainResults}, we use this framework to prove the main results of this paper on the $Z$-problem and effective operator which answer the sets of questions \hyperlink{(i)}{(i)}, \hyperlink{(i)}{(ii)}, and \hyperlink{(i)}{(iii)} above under the weaker hypotheses. After this, we will develop in Sec. \ref{sec:DiscreteNetworkExamples}, three important examples of discrete electrical network problems for which our abstract framework and results apply. These examples provide far more than just evidence that our abstract framework can be used to solve for their associated effective operators but also, equally as important, there exists a connecting relationship between all of them, which becomes apparent from the results in this paper. In Appendices \ref{SectAbsTheoryCompositesVecSpFramework} and \ref{SectAbsHodgDecomp}, we provide some fundamental results on the abstract theory of composites, effective operators, and Hodge decompositions that are needed in this paper and interesting in their own right, but are best placed in these appendices.

In regard to the approach we use in this paper to treat the problems \hyperlink{(i)}{(i)-(iii)}, it is new, even in the classical setting. It is based on block operator methods and Schur complement theory. As such, we will review the classical results and include proofs using this new approach, before generalizing them later in the paper.

Finally, although our focus is on weakening the hypotheses on the operator $\sigma$ to achieve comparable results in the classical setting, we will still assume, in general, some form of self-adjoint hypotheses, e.g., $\sigma^*=\sigma$, when we state and prove our results. The reason for this is two fold. The first is that $\sigma^*=\sigma$ is the most common hypothesis in the examples we consider in this paper. The second reason is that the other typical hypotheses on $\sigma$, that it has positive imaginary or positive real part (see, for instance, Refs.\ \onlinecite{90GM, 94CG, 02GM, 16GM}), is important enough to deserve a separate study of which the current paper will be useful.

\section{\label{sec:ClassicalResults}Classical results}

In the classical setting, i.e., with the strong assumptions that $\sigma^*=\sigma\geq 0$ and $\sigma$ invertible, all these problems can solved\cite{88MK, 90GM, 02GM, 16GM}. We will review this now and include proofs using a new approach to solving these problems which emphasizes block operator methods and the use of Schur complements\cite{05FZ} (although the importance of it in the theory of composites was originally recognized in Ref.\ \onlinecite{16CW}). Later in the paper we generalize our approach when we relax the invertibility hypothesis.   

Let $(\mathcal{H},\mathcal{U},\mathcal{E},\mathcal{J},\sigma)$ be a $Z$-problem (as defined in Def.\ \ref{DefZProbMain}). Then we can write the operator
\begin{align}
 \sigma=[\sigma_{ij}]_{i,j=0,1,2}\in\mathcal{L}(\mathcal{H}) \label{3b3BlockOpReprOfSigma}  
\end{align}
as a $3\times 3$ block operator matrix with respect to the orthogonal triple decomposition (\ref{DefZProbHOrthTri}) of the Hilbert space $\mathcal{H}=\mathcal{U}\ho\mathcal{E}\ho\mathcal{J}$. More precisely, we introduce the orthogonal projections $\Gamma_0,\Gamma_1,\Gamma_2$ of $\mathcal{H}$ onto $H_0=\mathcal{U}, H_1=\mathcal{E}, H_2=\mathcal{J},$ respectively, and define
\begin{align}
    \sigma_{ij}\in \mathcal{L}(H_j,H_i),\;\sigma_{ij}=\Gamma_i\sigma\Gamma_j:H_j\rightarrow H_i,\label{DefOfSigmaSubblocks}
\end{align}
for $i,j=0,1,2$. In particular, for $i=j=1,$ $\sigma_{11}$ is the compression of $\sigma$ to $\mathcal{E}$, that is,
\begin{align}
   \sigma_{11}= \Gamma_1\sigma\Gamma_1|_{\mathcal{E}},\label{DefAltSigma11Subblock}
\end{align}
i.e., the restriction of the operator $\Gamma_1\sigma\Gamma_1$ on $\mathcal{H}$ to the closed subspace $\mathcal{E}$. Then the $Z$-problem (\ref{DefZProbEq}) is equivalent to the system
\begin{gather}
    \sigma_{00}E_0+\sigma_{01}E=J_0,\label{ZProbEquivFormPart1}\\
        \sigma_{10}E_0+\sigma_{11}E=0,\label{ZProbEquivFormPart2}\\
        \sigma_{20}E_0+\sigma_{21}E=J.\label{ZProbEquivFormPart3}
\end{gather}
Finally, from this and assuming $\sigma_{11}$ is invertible, we get the classical formulas for the solution of the $Z$-problem and the effective operator as a Schur complement:
\begin{gather}
    J_0=\sigma_*E_0,\;E=-\sigma_{11}^{-1}\sigma_{10}E_0,\; J=\sigma_{20}E_0+\sigma_{21}E,\label{ClassicSolnZProb}\\
    \sigma_*=\begin{bmatrix}
        \sigma_{00}&\sigma_{01}\\
        \sigma_{10}&\sigma_{11}
    \end{bmatrix}/\sigma_{11}=\sigma_{00}-\sigma_{01}\sigma_{11}^{-1}\sigma_{10}.\label{ClassicEffOperFormula}
\end{gather}
%In other words, $\sigma_*$ is the Schur complement of the  compression of $\sigma$ to $\mathcal{U}\ho \mathcal{E}$ $\left[\text{i.e., } (\Gamma_0+\Gamma_1)\sigma(\Gamma_0+\Gamma_1)|_{\mathcal{U}\ho \mathcal{E}}\right]$ with respect to $\mathcal{E}$.

This proves the following theorem which answers the questions \hyperlink{(i)}{(i)} and \hyperlink{(ii)}{(ii)} above in the case $\sigma_{11}$ is invertible.
\begin{theorem}\label{ThmMainClassicalZProbEffOp}
If $(\mathcal{H},\mathcal{U},\mathcal{E},\mathcal{J},\sigma)$ is a $Z$-problem (as in Def.\ \ref{DefZProbMain}) and $\sigma_{11}$ [as defined by (\ref{DefAltSigma11Subblock})] is invertible then the $Z$-problem (\ref{DefZProbEq}) has a unique solution for each $E_0\in \mathcal{U}$ and it is given by the formulas (\ref{ClassicSolnZProb}), (\ref{ClassicEffOperFormula}). Moreover, the effective operator of the $Z$-problem exists, is unique, and is given by the Schur complement formula (\ref{ClassicEffOperFormula}).
\end{theorem}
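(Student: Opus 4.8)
The plan is to reduce the operator equation (\ref{DefZProbEq}) to the equivalent block system (\ref{ZProbEquivFormPart1})--(\ref{ZProbEquivFormPart3}) and then solve that system directly using the invertibility of $\sigma_{11}$. First I would apply each orthogonal projection $\Gamma_0,\Gamma_1,\Gamma_2$ to both sides of $J_0+J=\sigma(E_0+E)$. On the left, since $J_0\in\mathcal{U}=H_0$ and $J\in\mathcal{J}=H_2$ while $\mathcal{E}=H_1$ is orthogonal to both, one gets $\Gamma_0(J_0+J)=J_0$, $\Gamma_1(J_0+J)=0$, and $\Gamma_2(J_0+J)=J$. On the right, using $\Gamma_0E_0=E_0$, $\Gamma_1E=E$, the vanishing of the remaining projections of $E_0$ and $E$, and the definition $\sigma_{ij}=\Gamma_i\sigma\Gamma_j$ in (\ref{DefOfSigmaSubblocks}), one obtains $\Gamma_i\sigma(E_0+E)=\sigma_{i0}E_0+\sigma_{i1}E$. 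Equating components yields exactly (\ref{ZProbEquivFormPart1})--(\ref{ZProbEquivFormPart3}); conversely, summing the three equations recovers (\ref{DefZProbEq}), so the two formulations are equivalent.

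Next, I would solve the middle equation (\ref{ZProbEquivFormPart2}). Because $\sigma_{11}\in\mathcal{L}(\mathcal{E})$ is assumed invertible, that equation has the unique solution $E=-\sigma_{11}^{-1}\sigma_{10}E_0$, where $\sigma_{11}^{-1}$ is itself bounded by the bounded inverse theorem, so $E$ depends boundedly on $E_0$. Substituting this $E$ into (\ref{ZProbEquivFormPart1}) gives $J_0=(\sigma_{00}-\sigma_{01}\sigma_{11}^{-1}\sigma_{10})E_0$, and into (\ref{ZProbEquivFormPart3}) gives $J=\sigma_{20}E_0+\sigma_{21}E$. This reproduces the formulas (\ref{ClassicSolnZProb}), (\ref{ClassicEffOperFormula}) and shows that for each $E_0\in\mathcal{U}$ the triple $(J_0,E,J)$ is uniquely determined, settling existence and uniqueness of the solution.

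Finally, for the effective operator I would read off from the first formula that the bounded operator $\sigma_*=\sigma_{00}-\sigma_{01}\sigma_{11}^{-1}\sigma_{10}\in\mathcal{L}(\mathcal{U})$ satisfies $J_0=\sigma_*E_0$ for the unique solution at every $E_0\in\mathcal{U}$, so $\sigma_*$ is an effective operator in the sense of Definition \ref{DefZProbMain}; uniqueness follows since any effective operator $\sigma_*'$ must satisfy $\sigma_*'E_0=J_0=\sigma_*E_0$ for all $E_0\in\mathcal{U}$, forcing $\sigma_*'=\sigma_*$. There is no genuine obstacle here beyond careful bookkeeping: the only points requiring attention are verifying \emph{both} directions of the equivalence between (\ref{DefZProbEq}) and the block system, and confirming that $\sigma_*$ is genuinely bounded, which reduces to the boundedness of $\sigma_{11}^{-1}$. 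The structural role of invertibility is entirely concentrated in solving (\ref{ZProbEquivFormPart2}) for $E$, which foreshadows that relaxing this hypothesis later will amount to replacing $\sigma_{11}^{-1}$ with a pseudoinverse and controlling the solvability of that single equation.
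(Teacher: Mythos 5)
Your proposal is correct and follows essentially the same route as the paper: project the constitutive relation onto the orthogonal triple decomposition to get the block system (\ref{ZProbEquivFormPart1})--(\ref{ZProbEquivFormPart3}), solve the middle equation uniquely via $\sigma_{11}^{-1}$, and substitute back to obtain (\ref{ClassicSolnZProb}) and the Schur complement formula (\ref{ClassicEffOperFormula}). The only difference is that you spell out the projection bookkeeping and the boundedness/uniqueness remarks that the paper leaves implicit.
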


The Schur complement formula (\ref{ClassicEffOperFormula}) for the effective operator $\sigma_*$ is useful for deriving some of it's basic properties as we shall now see.
\begin{corollary}\label{CorClassicalBasicPropEffOp}
If $\sigma\in \mathcal{L}(\mathcal{H})$ and $\sigma_{11}$ is invertible then
\begin{align}
    (\sigma_*)^*=(\sigma^*)_*.
\end{align}
In particular, if $\sigma^*=\sigma$ then 
\begin{align}
    (\sigma_*)^*=\sigma_*.
\end{align}
\end{corollary}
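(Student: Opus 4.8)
The plan is to leverage the explicit Schur complement formula $(\ref{ClassicEffOperFormula})$ for the effective operator, together with the elementary fact that adjunction reverses products and commutes with inversion. First I would observe that by Theorem $\ref{ThmMainClassicalZProbEffOp}$, when $\sigma_{11}$ is invertible the effective operator exists and is given by $\sigma_* = \sigma_{00} - \sigma_{01}\sigma_{11}^{-1}\sigma_{10}$. Taking the adjoint of this expression and using $(AB)^* = B^*A^*$, $(A^{-1})^* = (A^*)^{-1}$, and $(A+B)^* = A^* + B^*$ yields
\begin{align}
    (\sigma_*)^* = \sigma_{00}^* - \sigma_{10}^*(\sigma_{11}^*)^{-1}\sigma_{01}^*.
\end{align}

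The key step is then to identify each adjointed subblock $\sigma_{ij}^*$ with the corresponding subblock of $\sigma^*$. From the definition $(\ref{DefOfSigmaSubblocks})$, $\sigma_{ij} = \Gamma_i \sigma \Gamma_j$, so since the orthogonal projections are self-adjoint ($\Gamma_i^* = \Gamma_i$), we have $\sigma_{ij}^* = (\Gamma_i \sigma \Gamma_j)^* = \Gamma_j \sigma^* \Gamma_i = (\sigma^*)_{ji}$, i.e., the $(j,i)$-subblock of $\sigma^*$. Applying this identity gives $\sigma_{00}^* = (\sigma^*)_{00}$, $\sigma_{10}^* = (\sigma^*)_{01}$, $\sigma_{01}^* = (\sigma^*)_{10}$, and $\sigma_{11}^* = (\sigma^*)_{11}$; in particular $(\sigma^*)_{11}$ is invertible because $\sigma_{11}$ is and inversion commutes with adjunction on the subblock. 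Substituting, we obtain
\begin{align}
    (\sigma_*)^* = (\sigma^*)_{00} - (\sigma^*)_{01}\bigl((\sigma^*)_{11}\bigr)^{-1}(\sigma^*)_{10},
\end{align}
which is exactly the Schur complement formula $(\ref{ClassicEffOperFormula})$ applied to the operator $\sigma^*$; hence the right-hand side equals $(\sigma^*)_*$, establishing $(\sigma_*)^* = (\sigma^*)_*$.

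The one point requiring care, which I expect to be the main obstacle, is confirming that $(\sigma^*)_{11}$ is invertible so that the formula $(\ref{ClassicEffOperFormula})$ legitimately applies to $\sigma^*$ and the symbol $(\sigma^*)_*$ is well-defined by Theorem $\ref{ThmMainClassicalZProbEffOp}$. This follows immediately since $(\sigma^*)_{11} = \sigma_{11}^* = (\Gamma_1\sigma\Gamma_1|_{\mathcal{E}})^*$, and the adjoint of an invertible operator on the Hilbert space $\mathcal{E}$ is invertible with $((\sigma_{11})^*)^{-1} = ((\sigma_{11})^{-1})^*$. The special case is then immediate: if $\sigma^* = \sigma$, then $(\sigma^*)_* = \sigma_*$, and combining with the general identity gives $(\sigma_*)^* = (\sigma^*)_* = \sigma_*$, so $\sigma_*$ is self-adjoint.
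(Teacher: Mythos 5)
Your proof is correct and follows essentially the same route as the paper's: both rely on the Schur complement formula of Theorem \ref{ThmMainClassicalZProbEffOp}, the identity $(\sigma_{ij})^*=(\sigma^*)_{ji}$ coming from self-adjointness of the projections, and the observation that $(\sigma^*)_{11}=(\sigma_{11})^*$ is invertible so the formula applies to the adjoint $Z$-problem. The only difference is the direction of the computation (you adjoint $\sigma_*$ and recognize the result as $(\sigma^*)_*$, while the paper expands $(\sigma^*)_*$ and recognizes it as $(\sigma_*)^*$), which is immaterial.
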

\begin{proof}
Assume the hypotheses. Then since $(\sigma_{ij})^*=(\sigma^*)_{ji}$ for each $i,j=0,1,2$ it follows that $(\sigma_{11})^*=(\sigma^*)_{11}$ is invertible so by Theorem \ref{ThmMainClassicalZProbEffOp} applied to the adjoint $Z$-problem, i.e., to the $Z$-problem $(\mathcal{H},\mathcal{U},\mathcal{E},\mathcal{J},\sigma^*)$, we have following formula for the associated effective operator $(\sigma^*)_*$:
\begin{align*}
    (\sigma^*)_*&=(\sigma^*)_{00}-(\sigma^*)_{01}(\sigma^*)_{11}^{-1}(\sigma^*)_{10}\\
    &=(\sigma_{00})^*-(\sigma_{10})^*(\sigma_{11}^{-1})^*(\sigma_{01})^*\\
    &=[\sigma_{00}-\sigma_{01}(\sigma_{11})^{-1}\sigma_{10}]^*\\
    &=(\sigma_*)^*.
\end{align*}
This proves the corollary.
\end{proof}

Now under the self-adjoint assumptions that $\sigma\in \mathcal{L}(\mathcal{H})$ and $\sigma^*=\sigma$, the next two theorems answer the questions \hyperlink{(iii)}{(iii)}.$(2)$ under slightly weaker hypotheses, namely, in the case $\sigma_{11}\geq 0$ and $\sigma_{11}$ is invertible and all of \hyperlink{(iii)}{(iii)} in the case $\sigma\geq 0$ and $\sigma$ is invertible in which the notion of duality plays a key role.
\begin{theorem}[Dirichlet minimization principle]\label{ThmClassicalDiriMinPrin}
If $\sigma\in \mathcal{L}(\mathcal{H}),$ $\sigma^*=\sigma$, $\sigma_{11}\geq 0$, and $\sigma_{11}$ is invertible then the effective operator $\sigma_*$ is the unique self-adjoint operator satisfying the minimization principle:
\begin{equation}
		( E_0,\sigma_*E_0 )=\min_{E\in\mathcal{E}}( E_0+E,\sigma(E_0+E) ),\;\forall E_0\in\mathcal{U},\label{ClassicalDirMinPrincEffOp}
	\end{equation}
	and, for each $E_0\in\mathcal{U}$, the minimizer is unique and given by
	\begin{equation}
		E=-\sigma_{11}^{-1}\sigma_{10}E_0.
	\end{equation}
	Moreover, we have the following upper bound on the effective operator:
	\begin{equation}
		\sigma_*\leq \sigma_{00},
	\end{equation}
	where
	\begin{align}
	    \sigma_{00}=\Gamma_0\sigma\Gamma_0|_{\mathcal{U}}.
	\end{align}
\end{theorem}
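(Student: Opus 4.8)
The plan is to fix $E_0\in\mathcal{U}$ and analyze the scalar quadratic functional $Q(E)=(E_0+E,\sigma(E_0+E))$ on the subspace $\mathcal{E}$ directly, reducing the minimization to completing the square against the positive invertible operator $\sigma_{11}$. First I would expand $Q$ by bilinearity and the block structure: since $E_0\in\mathcal{U}=H_0$ and $E\in\mathcal{E}=H_1$ we have $\Gamma_0E_0=E_0$ and $\Gamma_1E=E$, so each pairing $(\,\cdot\,,\sigma\,\cdot\,)$ collapses to the relevant subblock, e.g. $(E,\sigma E)=(E,\sigma_{11}E)$, $(E,\sigma E_0)=(E,\sigma_{10}E_0)$, and $(E_0,\sigma E_0)=(E_0,\sigma_{00}E_0)$. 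Using $\sigma^*=\sigma$ to rewrite the remaining cross term as $(E_0,\sigma E)=(\sigma E_0,E)=(\sigma_{10}E_0,E)$, this gives
\[
Q(E)=(E_0,\sigma_{00}E_0)+(E,\sigma_{11}E)+(E,\sigma_{10}E_0)+(\sigma_{10}E_0,E).
\]
I would also record, exactly as in the proof of Corollary \ref{CorClassicalBasicPropEffOp}, that $\sigma^*=\sigma$ forces $(\sigma_{11})^*=(\sigma^*)_{11}=\sigma_{11}$ and $\sigma_{01}=(\sigma_{10})^*$; combined with the hypothesis $\sigma_{11}\geq 0$ and invertible, this yields $\sigma_{11}>0$ with $\sigma_{11}^{-1}$ self-adjoint and positive.

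The key step is to complete the square relative to $\sigma_{11}$. Setting $F=E+\sigma_{11}^{-1}\sigma_{10}E_0\in\mathcal{E}$ and expanding $(F,\sigma_{11}F)$—using self-adjointness of $\sigma_{11}$ to move $\sigma_{11}\sigma_{11}^{-1}$ across the inner product in the mixed term—I expect the cross terms to absorb and produce the identity
\[
Q(E)=(E_0,\sigma_{00}E_0)-(E_0,\sigma_{01}\sigma_{11}^{-1}\sigma_{10}E_0)+(F,\sigma_{11}F)=(E_0,\sigma_*E_0)+(F,\sigma_{11}F),
\]
where the constant term is precisely the Schur complement $\sigma_*=\sigma_{00}-\sigma_{01}\sigma_{11}^{-1}\sigma_{10}$ of Theorem \ref{ThmMainClassicalZProbEffOp} (here one uses $\sigma_{01}=(\sigma_{10})^*$ and the fact that $(\sigma_{11}^{-1}\sigma_{10}E_0,\sigma_{10}E_0)$ is real to match the two forms). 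Since $\sigma_{11}>0$, the term $(F,\sigma_{11}F)$ is nonnegative and vanishes if and only if $F=0$; hence the minimum over $E\in\mathcal{E}$ equals $(E_0,\sigma_*E_0)$ and is attained at the unique minimizer $E=-\sigma_{11}^{-1}\sigma_{10}E_0$, which is exactly the $Z$-problem solution in (\ref{ClassicSolnZProb}). This simultaneously establishes (\ref{ClassicalDirMinPrincEffOp}) and the uniqueness of the minimizer, and it treats the real and complex cases uniformly without separating real and imaginary variations.

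For the upper bound, evaluating at the admissible choice $E=0$ gives $(E_0,\sigma_*E_0)=\min_{E\in\mathcal{E}}Q(E)\leq Q(0)=(E_0,\sigma_{00}E_0)$ for every $E_0\in\mathcal{U}$; since both $\sigma_*$ (self-adjoint by Corollary \ref{CorClassicalBasicPropEffOp}) and $\sigma_{00}=\Gamma_0\sigma\Gamma_0|_{\mathcal{U}}$ are self-adjoint, this quadratic-form inequality is the operator inequality $\sigma_*\leq\sigma_{00}$. Finally, for uniqueness of $\sigma_*$ as the self-adjoint operator realizing the principle: if a self-adjoint $\tau\in\mathcal{L}(\mathcal{U})$ satisfies $(E_0,\tau E_0)=\min_{E}Q(E)=(E_0,\sigma_*E_0)$ for all $E_0\in\mathcal{U}$, then $(E_0,(\tau-\sigma_*)E_0)=0$ for all $E_0$, and since $\tau-\sigma_*$ is self-adjoint, polarization forces $\tau-\sigma_*=0$. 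I expect the only delicate point to be the completing-the-square bookkeeping—verifying that the cross terms cancel and that the leftover constant is genuinely the Schur complement $\sigma_*$ of (\ref{ClassicEffOperFormula}) rather than a variant—together with the observation that it is the invertibility of $\sigma_{11}$ (not merely $\sigma_{11}\geq 0$) that upgrades the minimizer to a \emph{unique} one and ties the minimum value to the formula (\ref{ClassicEffOperFormula}).
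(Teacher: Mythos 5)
Your proposal is correct and is essentially the paper's argument: the paper proves this theorem by combining the Schur complement formula of Theorem \ref{ThmMainClassicalZProbEffOp} with the Schur complement minimization principle (Theorem \ref{ThmSchurComplConstrMinPrinc}), whose proof is exactly your completing-the-square identity $(E_0+E,\sigma(E_0+E))=(E_0,\sigma_*E_0)+(F,\sigma_{11}F)$ with $F=E+\sigma_{11}^{-1}\sigma_{10}E_0$, merely packaged via the Aitken block factorization of Lemma \ref{LemAitkenBlockDiag}. Your direct expansion, the uniqueness-via-quadratic-form step, and the $E=0$ upper bound all match the paper's reasoning.
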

In fact, as our proof of this theorem below shows, the theorem is still true if we drop the self-adjoint hypothesis $\sigma^*=\sigma$ and replace it with the weaker hypothesis:
\begin{align}
    \begin{bmatrix}
        \sigma_{00}&\sigma_{10}\\
        \sigma_{10}&\sigma_{11}
    \end{bmatrix}^*=\begin{bmatrix}
        \sigma_{00}&\sigma_{10}\\
        \sigma_{10}&\sigma_{11}
    \end{bmatrix},\label{Weaker2x2BlockSelfAdjCondOnSigma}
\end{align}
where
\begin{align}
    \begin{bmatrix}
        \sigma_{00}&\sigma_{10}\\
        \sigma_{10}&\sigma_{11}
    \end{bmatrix}=(\Gamma_0+\Gamma_1)\sigma(\Gamma_0+\Gamma_1)|_{\mathcal{U}\ho\mathcal{E}}.
\end{align}
In other words, the weaker hypothesis (\ref{Weaker2x2BlockSelfAdjCondOnSigma}) is equivalent to the hypothesis that the compression of $\sigma$ to $\mathcal{U}\ho\mathcal{E}$ is self-adjoint.

We can use Theorem \ref{ThmClassicalDiriMinPrin} to derive some additional properties of the effective operator $\sigma_*$.
\begin{corollary}\label{CorClassicalBasicInvPosPropEffOp}
Suppose $\sigma\in \mathcal{L}(\mathcal{H}),$ $\sigma^*=\sigma\geq 0$. Then the following are true: $(a)$ If $\sigma_{11}$ is invertible then $(\sigma_*)^*=\sigma_*\geq 0$. $(b)$ If $\sigma$ is invertible then $0\leq (\sigma_{jj})^*=\sigma_{jj}$ are invertible for each $j=0,1,2$ and $\sigma_*$ is invertible.
\end{corollary}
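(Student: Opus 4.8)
The plan is to reduce both parts to two ingredients already in hand: the self-adjointness transfer of Corollary \ref{CorClassicalBasicPropEffOp} and the Dirichlet minimization principle of Theorem \ref{ThmClassicalDiriMinPrin} (whose hypothesis $\sigma_{11}$ invertible also guarantees, via Theorem \ref{ThmMainClassicalZProbEffOp}, that $\sigma_*$ exists and is unique). Throughout I would use the elementary observation that compressions of $\sigma$ inherit its lower bounds: for a closed subspace $\mathcal{K}$ with orthogonal projection $P$ and any $x\in\mathcal{K}$ one has $(x,P\sigma P x)=(Px,\sigma Px)=(x,\sigma x)$, so positivity and any bound $\sigma\geq cI$ pass to the compression of $\sigma$ on $\mathcal{K}$.

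For part $(a)$, note first that $\sigma^*=\sigma$ together with $\sigma_{11}$ invertible yields $(\sigma_*)^*=\sigma_*$ immediately from Corollary \ref{CorClassicalBasicPropEffOp}. For positivity, since $\sigma\geq 0$ the compression identity above gives $\sigma_{11}\geq 0$, so the hypotheses of Theorem \ref{ThmClassicalDiriMinPrin} hold and its minimization principle gives, for every $E_0\in\mathcal{U}$,
\[
(E_0,\sigma_* E_0)=\min_{E\in\mathcal{E}}(E_0+E,\sigma(E_0+E))\geq 0,
\]
the inequality because $\sigma\geq 0$. Hence $\sigma_*\geq 0$.

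For part $(b)$, I would first upgrade ``$\sigma\geq 0$ and invertible'' to the strict bound $\sigma\geq\varepsilon I$ with $\varepsilon=\|\sigma^{-1}\|^{-1}>0$, a standard fact for a positive invertible self-adjoint operator. Fix $j$. Self-adjointness $(\sigma_{jj})^*=(\sigma^*)_{jj}=\sigma_{jj}$ is immediate; the compression identity gives $\sigma_{jj}\geq\varepsilon I_{H_j}$, in particular $\sigma_{jj}\geq 0$; and a self-adjoint operator bounded below by $\varepsilon I$ is invertible, so each $\sigma_{jj}$ ($j=0,1,2$) is invertible. In particular $\sigma_{11}$ is invertible, so part $(a)$ applies and $\sigma_*=(\sigma_*)^*\geq 0$. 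To obtain invertibility of $\sigma_*$ I would sharpen the estimate in the minimization principle: since $\mathcal{U}\perp\mathcal{E}$, for fixed $E_0$ and every $E\in\mathcal{E}$,
\[
(E_0+E,\sigma(E_0+E))\geq\varepsilon\|E_0+E\|^2=\varepsilon(\|E_0\|^2+\|E\|^2)\geq\varepsilon\|E_0\|^2,
\]
so taking the minimum over $E$ and using Theorem \ref{ThmClassicalDiriMinPrin} yields $(E_0,\sigma_* E_0)\geq\varepsilon\|E_0\|^2$ for all $E_0\in\mathcal{U}$. Thus $\sigma_*\geq\varepsilon I_{\mathcal{U}}$ is self-adjoint and bounded below, hence invertible.

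The only delicate point is the invertibility claims, where positivity alone is insufficient and one must carry the strict lower bound $\varepsilon I$ through the compressions and, for $\sigma_*$, through the minimization. The main (minor) obstacle is therefore justifying that a self-adjoint $T$ with $T\geq\varepsilon I$, $\varepsilon>0$, is invertible — it is injective with closed and dense range, hence bijective — and checking that the orthogonality $\mathcal{U}\perp\mathcal{E}$ is exactly what lets the quadratic-form bound survive the minimization (the cross term vanishes). An alternative route to the invertibility of $\sigma_*$, avoiding the sharpened estimate, is to observe that the compression $M=(\Gamma_0+\Gamma_1)\sigma(\Gamma_0+\Gamma_1)|_{\mathcal{U}\ho\mathcal{E}}$ satisfies $M\geq\varepsilon I$ and is thus invertible, and then to invoke the block $LDU$ (Schur) factorization of $M$ with respect to $\sigma_{11}$, which forces $\sigma_*=M/\sigma_{11}$ to be invertible whenever both $M$ and $\sigma_{11}$ are.
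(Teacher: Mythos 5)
Your proposal is correct and follows essentially the same route as the paper: Corollary \ref{CorClassicalBasicPropEffOp} for self-adjointness, the Dirichlet minimization principle for positivity, and in part $(b)$ the strict bound $\sigma\geq\varepsilon I$ pushed through the compressions and through the minimization (using $\mathcal{U}\perp\mathcal{E}$) to get $\sigma_*\geq\varepsilon I_{\mathcal{U}}$. You merely spell out the details the paper calls ``immediate,'' and your alternative Schur-factorization route is a valid extra but not needed.
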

\begin{proof}
Suppose $\sigma\in \mathcal{L}(\mathcal{H}),$ $\sigma^*=\sigma\geq 0$. $(a)$: If $\sigma_{11}$ is invertible then by Corollary \ref{CorClassicalBasicPropEffOp} we know $(\sigma_*)^*=\sigma_*$ and by (\ref{ClassicalDirMinPrincEffOp}) in Theorem \ref{ThmClassicalDiriMinPrin} together with the hypothesis $\sigma\geq 0$, it follows immediately that $\sigma_*\geq 0$. $(b)$: Suppose $\sigma$ is invertible. Then it follows from the hypotheses that $\sigma\geq \delta I_{\mathcal{H}}$ for some scalar $\delta>0,$ where $I_{\mathcal{H}}$ denotes the identity operator on $\mathcal{H}$. From this it follows that $(\sigma_{jj})^*=\sigma_{jj}=\Gamma_j\sigma\Gamma_{j}|_{H_j}\geq \delta I_{H_j}$, where $I_{H_j}$ is the identity operator on the Hilbert space $H_j$ for each $j=0,1,2$ with $H_0=\mathcal{U}, H_1= \mathcal{E}, H_2=\mathcal{J}$. And this implies $\sigma_{jj}$ is invertible for each $j=0,1,2$. Now it follows immediately from (\ref{ClassicalDirMinPrincEffOp}) in Theorem \ref{ThmClassicalDiriMinPrin} that $(\sigma_*)^*=\sigma_*\geq \delta I_{\mathcal{U}}$ which implies $\sigma_*$ is invertible.
\end{proof}
\begin{theorem}[Thomson minimization principle]\label{ThmClassicalThomMinPrin}
If $\sigma\in \mathcal{L}(\mathcal{H}),$ $\sigma^*=\sigma\geq 0,$ and $\sigma$ is invertible then $(\sigma_*)^{-1}$ is the unique self-adjoint operator satisfying the minimization principle:
	\begin{gather}
		(J_0,(\sigma_*)^{-1}J_0 )=\min_{J\in\mathcal{J}}( J_0+J,\sigma^{-1}(J_0+J)),\; \forall J_0\in\mathcal{U},
	\end{gather}
	and, for each $J_0\in\mathcal{U}$, the minimizer is unique and given by
	\begin{equation}
		J=-\sigma_{22}^{-1}\sigma_{20}J_0.
	\end{equation}
	Moreover, we have the upper and lower bounds on the effective operator:
	\begin{equation}
		0\leq [(\sigma^{-1})_{00}]^{-1}]\leq\sigma_*\leq \sigma_{00},\label{ClassicalUpperLowerBddsEffOp}
	\end{equation}
	where 
	\begin{align}
	    (\sigma^{-1})_{00}=\Gamma_0\sigma^{-1}\Gamma_0|_{\mathcal{U}}.
	\end{align}
\end{theorem}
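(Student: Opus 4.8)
The plan is to derive the Thomson principle from the already-established Dirichlet principle (Theorem~\ref{ThmClassicalDiriMinPrin}) by a duality argument that interchanges the roles of $\mathcal{E}$ and $\mathcal{J}$ and replaces $\sigma$ by $\sigma^{-1}$. First I would record what the hypotheses buy us: since $\sigma^*=\sigma\geq 0$ and $\sigma$ is invertible we have $\sigma\geq\delta I_{\mathcal{H}}$ for some $\delta>0$, so by Corollary~\ref{CorClassicalBasicInvPosPropEffOp}$(b)$ the operator $\sigma_*$ is self-adjoint, positive, and invertible, while $\sigma^{-1}$ is self-adjoint with $\sigma^{-1}\geq\delta^{-1}I_{\mathcal{H}}>0$ and is invertible. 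The key structural observation is that $\mathcal{H}=\mathcal{U}\ho\mathcal{J}\ho\mathcal{E}$ is again an orthogonal triple decomposition, so that $(\mathcal{H},\mathcal{U},\mathcal{J},\mathcal{E},\sigma^{-1})$ is a bona fide $Z$-problem in the sense of Definition~\ref{DefZProbMain}; I will call it the \emph{dual} $Z$-problem.

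Next I would relate the two problems at the level of solutions. Because $\sigma$ is invertible, a triple $(J_0,E,J)$ solves the original $Z$-problem at $E_0$, i.e.\ $J_0+J=\sigma(E_0+E)$, if and only if $E_0+E=\sigma^{-1}(J_0+J)$, which is precisely the statement that $(E_0,J,E)$ solves the dual $Z$-problem at $J_0$. Hence the effective operator $(\sigma^{-1})_*$ of the dual problem is characterized by $E_0=(\sigma^{-1})_*J_0$ whenever $J_0=\sigma_*E_0$; since $\sigma_*$ is invertible this forces
\begin{align*}
(\sigma^{-1})_*=(\sigma_*)^{-1}.
\end{align*}
This identity is the crux of the argument and is the step I expect to require the most care, since it hinges on the exact equivalence of the two constrained systems together with the invertibility of $\sigma_*$.

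With the dual problem set up, I would apply Theorem~\ref{ThmClassicalDiriMinPrin} to it. Its ``second space'' is now $\mathcal{J}$, and the corresponding compression $\Gamma_2\sigma^{-1}\Gamma_2|_{\mathcal{J}}$ inherits $\geq\delta^{-1}I_{\mathcal{J}}>0$ from $\sigma^{-1}\geq\delta^{-1}I_{\mathcal{H}}$, hence is positive and invertible, so the hypotheses of the Dirichlet principle are satisfied. The principle then yields, for every $J_0\in\mathcal{U}$,
\begin{align*}
(J_0,(\sigma^{-1})_*J_0)=\min_{J\in\mathcal{J}}(J_0+J,\sigma^{-1}(J_0+J)),
\end{align*}
with unique minimizer obtained from the Dirichlet minimizer under the correspondence $\sigma\mapsto\sigma^{-1}$, $1\leftrightarrow 2$, namely $J=-[(\sigma^{-1})_{22}]^{-1}(\sigma^{-1})_{20}J_0$, which is the minimizer stated in the theorem. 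Substituting $(\sigma^{-1})_*=(\sigma_*)^{-1}$ gives the asserted Thomson minimization principle, and uniqueness of the self-adjoint operator realizing it follows from the uniqueness clause of Theorem~\ref{ThmClassicalDiriMinPrin} (a self-adjoint operator on $\mathcal{U}$ is determined by its quadratic form via polarization).

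Finally, for the two-sided bound I would chain two inequalities. The upper bound of Theorem~\ref{ThmClassicalDiriMinPrin} applied to the dual problem gives $(\sigma_*)^{-1}=(\sigma^{-1})_*\leq(\sigma^{-1})_{00}$; since both sides are positive and invertible, operator antimonotonicity of inversion yields $[(\sigma^{-1})_{00}]^{-1}\leq\sigma_*$. The upper bound of Theorem~\ref{ThmClassicalDiriMinPrin} applied to the original problem (whose compression $\sigma_{11}\geq\delta I_{\mathcal{E}}$ is positive and invertible) gives $\sigma_*\leq\sigma_{00}$, and $[(\sigma^{-1})_{00}]^{-1}\geq 0$ is immediate from positivity. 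Concatenating these produces $0\leq[(\sigma^{-1})_{00}]^{-1}\leq\sigma_*\leq\sigma_{00}$, which is the claimed estimate and completes the proof.
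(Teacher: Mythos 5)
Your proof is correct and follows essentially the same route as the paper: pass to the dual $Z$-problem $(\mathcal{H},\mathcal{U},\mathcal{J},\mathcal{E},\sigma^{-1})$, establish $(\sigma^{-1})_{\ast^{\prime}}=(\sigma_*)^{-1}$, apply the Dirichlet principle (Theorem~\ref{ThmClassicalDiriMinPrin}) to the dual problem, and invert the resulting operator bounds. Your explicit solution-correspondence argument for the identity $(\sigma^{-1})_{\ast^{\prime}}=(\sigma_*)^{-1}$ is precisely the ``simpler proof'' the paper records in the remark following Proposition~\ref{PropEffDualOpsFundamProperties}, and the minimizer you obtain, $J=-[(\sigma^{-1})_{22}]^{-1}(\sigma^{-1})_{20}J_0$, is the one the argument actually produces (the formula $-\sigma_{22}^{-1}\sigma_{20}J_0$ printed in the theorem statement appears to be a typo; also note that $\sigma\geq\delta I_{\mathcal{H}}$ gives $\sigma^{-1}\geq\|\sigma\|^{-1}I_{\mathcal{H}}$ rather than $\sigma^{-1}\geq\delta^{-1}I_{\mathcal{H}}$, though this does not affect your conclusion).
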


We will give a proof of these two theorems which is surprisingly simple now and based on the following well-known minimization principle for Schur complements\cite{05FZ, 00LM} and the notion of duality between the direct and dual $Z$-problems as defined next. In this regard, the Dirichlet variational principle, i.e., Theorem \ref{ThmClassicalDiriMinPrin}, is called the direct variational principle and the Thomson (or Thompson) variational principle, i.e., Theorem \ref{ThmClassicalThomMinPrin}, is called the dual (or complementary) variational principle (see, for instance, Refs.\ \onlinecite{90GM, 02GM, 16GM}).

We will need the following well-known result\cite{05FZ} (the proof is omitted as it is trivial).
\begin{lemma}[Aitken block-diagonalization formula]\label{LemAitkenBlockDiag}
If $H=H_0\ho H_1$ is a Hilbert space, $A=[A_{i,j}]_{i,j=0,1}\in\mathcal{L}\left(H\right)$, and $A_{11}$ is invertible then
\begin{gather}
\begin{bmatrix}
A_{00} & A_{01}\\
A_{10} & A_{11}%
\end{bmatrix}
=%
\begin{bmatrix}
I_{H_0}  & A_{01}A_{11}^{-1}\\
0 & I_{H_1} 
\end{bmatrix}%
\begin{bmatrix}
A/A_{11} & 0\\
0 & A_{11}%
\end{bmatrix}%
\begin{bmatrix}
I_{H_0}  & 0\\
A_{11}^{-1}A_{10} & I_{H_1} 
\end{bmatrix},\label{LemAitkenBlockDiagTheFormula}\\
\begin{bmatrix}
I_{H_0}  & A_{01}A_{11}^{-1}\\
0 & I_{H_1} 
\end{bmatrix}^{-1}=\begin{bmatrix}
I_{H_0}  & -A_{01}A_{11}^{-1}\\
0 & I_{H_1} 
\end{bmatrix},\\
\begin{bmatrix}
I_{H_0}  & 0\\
A_{11}^{-1}A_{10} & I_{H_1} 
\end{bmatrix}^{-1}=\begin{bmatrix}
I_{H_0}  & 0\\
-A_{11}^{-1}A_{10} & I_{H_1} 
\end{bmatrix}.
\end{gather}
\end{lemma}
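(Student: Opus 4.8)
The plan is to verify all three identities by direct block-matrix multiplication, treating each $2\times 2$ operator matrix as an ordinary matrix whose entries are bounded operators between the appropriate subspaces, while being careful that these entries need not commute, so that the left-to-right order of every product must be preserved. Before multiplying anything, I would record that the hypothesis $A_{11}\in\mathcal{L}(H_1)$ invertible guarantees (by the standard bounded-inverse/open-mapping theorem) that $A_{11}^{-1}\in\mathcal{L}(H_1)$; hence each block appearing in the three factors, in particular $A_{01}A_{11}^{-1}$, $A_{11}^{-1}A_{10}$, and the Schur complement $A/A_{11}=A_{00}-A_{01}A_{11}^{-1}A_{10}$, is a well-defined bounded operator, so that all of the products below are meaningful elements of $\mathcal{L}(H)$.

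For the main factorization (\ref{LemAitkenBlockDiagTheFormula}), I would compute the product of the three factors from right to left. First I would multiply the block-diagonal factor by the lower-triangular unipotent factor; the result collapses to $\begin{bmatrix} A/A_{11} & 0 \\ A_{10} & A_{11} \end{bmatrix}$, since the bottom-left entry becomes $A_{11}\,(A_{11}^{-1}A_{10})=A_{10}$. Next I would left-multiply this by the upper-triangular unipotent factor. The only entry that requires invoking the definition of the Schur complement is the $(0,0)$ entry, which emerges as $A/A_{11}+A_{01}A_{11}^{-1}A_{10}$; substituting $A/A_{11}=A_{00}-A_{01}A_{11}^{-1}A_{10}$ then yields exactly $A_{00}$. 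The remaining three entries reduce at once to $A_{01}$, $A_{10}$, and $A_{11}$, which establishes (\ref{LemAitkenBlockDiagTheFormula}).

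For the two inverse formulas, I would simply multiply each unipotent triangular factor by its claimed inverse and observe that the off-diagonal entries cancel (for the upper factor, $A_{01}A_{11}^{-1}+(-A_{01}A_{11}^{-1})=0$, and symmetrically for the lower factor), leaving $\begin{bmatrix} I_{H_0} & 0 \\ 0 & I_{H_1}\end{bmatrix}=I_H$. Since these are square operator matrices acting on the same space $H$, a one-sided inverse is automatically a two-sided inverse, so the displayed formulas are genuinely the inverses of the two unipotent factors.

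I do not expect any genuine obstacle here: the entire content is bookkeeping in block arithmetic. The only point that demands care, and the reason the identities are not purely formal, is the non-commutativity of operator multiplication, which forbids rearranging factors and makes it essential to track whether $A_{11}^{-1}$ sits to the left or to the right of a neighboring block; this is precisely why the Schur complement appears in the one-sided form $A_{00}-A_{01}A_{11}^{-1}A_{10}$ rather than in any symmetrized version. This accords with the author's remark that the proof is trivial and may be omitted.
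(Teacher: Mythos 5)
Your proposal is correct and is the standard direct verification by block multiplication that the paper itself omits as trivial; the computation of the $(0,0)$ entry as $A/A_{11}+A_{01}A_{11}^{-1}A_{10}=A_{00}$ and the cancellation in the unipotent inverses are exactly right. Nothing further is needed.
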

\begin{theorem}
[Schur complement: minimization principle]%
\label{ThmSchurComplConstrMinPrinc}If $H=H_0\ho H_1$ is a Hilbert space, $A=[A_{i,j}]_{i,j=0,1}\in\mathcal{L}\left(H\right)$, $A^*=A,$ $A_{11}$ is
invertible and $A_{11}\geq0$ then the Schur complement of $A$ with respect to $A_{11}$, i.e., $A/A_{11}=A_{00}-A_{01}A_{11}^{-1}A_{10},$ is the unique self-adjoint operator satisfying the minimization principle:%
\begin{align}
\left(  x,A/A_{11}x\right)  =\min_{y\in H_{1}}\left(  x+y,A\left(
x+y\right)  \right)  ,\text{ }\forall x\in H_{0},\label{ClassicalSchurComplMinPrinc}
\end{align}
and, for each $x\in H_{0}$, the minimizer is unique and given by
\begin{align}
y=-A_{11}^{-1}A_{10}x\in H_{1}.   \label{ClassicalSchurComplMinPrincMinimizer} 
\end{align}
Moreover,
\begin{align}
    A/A_{11}\leq A_{00}.\label{ClassicalSchurComplUpperBdd} 
\end{align}
\end{theorem}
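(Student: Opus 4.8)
The plan is to derive everything from the Aitken block-diagonalization formula of Lemma \ref{LemAitkenBlockDiag}. The key observation I would exploit is that under the self-adjointness hypothesis $A^*=A$ — which forces $A_{00}^*=A_{00}$, $A_{11}^*=A_{11}$, and $A_{10}=A_{01}^*$ — the two triangular factors in (\ref{LemAitkenBlockDiagTheFormula}) are adjoints of one another. Setting $L=\begin{bmatrix} I_{H_0} & 0 \\ A_{11}^{-1}A_{10} & I_{H_1}\end{bmatrix}$, one checks via $(A_{11}^{-1}A_{10})^*=A_{01}A_{11}^{-1}$ that Lemma \ref{LemAitkenBlockDiag} reads $A=L^*\operatorname{diag}(A/A_{11},A_{11})L$. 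This congruence is the engine of the whole proof: it turns the quadratic form $(z,Az)$ into a sum of squares after the substitution $z\mapsto Lz$, i.e. it is a ``completion of the square.''

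Concretely, for $x\in H_0$ and $y\in H_1$ I would set $z=x+y$ and compute $(z,Az)=(Lz,\operatorname{diag}(A/A_{11},A_{11})Lz)$. Since $Lz$ has first component $x$ and second component $w:=y+A_{11}^{-1}A_{10}x$, this equals $(x,(A/A_{11})x)+(w,A_{11}w)$. Here is where both hypotheses on $A_{11}$ enter, and this is the only delicate point: $A_{11}\geq 0$ together with invertibility yields $A_{11}\geq\delta I_{H_1}$ for some $\delta>0$ (its spectrum lies in $(0,\infty)$ and is bounded away from $0$), so $(w,A_{11}w)\geq\delta\|w\|^2$ vanishes if and only if $w=0$. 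Hence the infimum over $y$ is attained, equals $(x,(A/A_{11})x)$, and the minimizer $w=0$, i.e. $y=-A_{11}^{-1}A_{10}x$, is unique. This establishes (\ref{ClassicalSchurComplMinPrinc}) and (\ref{ClassicalSchurComplMinPrincMinimizer}).

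The remaining two assertions then follow quickly. For the upper bound (\ref{ClassicalSchurComplUpperBdd}) I would evaluate the minimized functional at the admissible choice $y=0$: since $(x,A_{00}x)=(x,Ax)$ for $x\in H_0$, the minimization principle gives $(x,(A/A_{11})x)\leq(x,A_{00}x)$ for all $x$, which is precisely $A/A_{11}\leq A_{00}$. For uniqueness of the self-adjoint operator realizing the principle, I would first note that $A/A_{11}$ is itself self-adjoint (a direct adjoint computation using $A_{10}=A_{01}^*$); then if $S^*=S$ also satisfies $(x,Sx)=(x,(A/A_{11})x)$ for all $x\in H_0$, the self-adjoint operator $S-A/A_{11}$ has identically vanishing quadratic form, so by polarization $S=A/A_{11}$. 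The main conceptual step is recognizing the $L^*DL$ factorization as the completion of the square; once that is in hand, every other step is routine.
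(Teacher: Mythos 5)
Your proposal is correct and follows essentially the same route as the paper's proof: both exploit the Aitken factorization of Lemma \ref{LemAitkenBlockDiag}, observe that under $A^*=A$ the two triangular factors are mutually adjoint so that $A$ is congruent to $\operatorname{diag}(A/A_{11},A_{11})$, complete the square to split the quadratic form, use $A_{11}\geq\delta I_{H_1}$ to identify the unique minimizer, obtain the upper bound by taking $y=0$, and conclude uniqueness of the self-adjoint operator from its quadratic form. No gaps.
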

\begin{proof}
Assume the hypotheses. Then $(A_{11})^*=A_{11}\geq\delta I,$ for some scalar $\delta>0$. Next, it follows by Lemma \ref{LemAitkenBlockDiag} that $A$ has the block
factorization (\ref{LemAitkenBlockDiagTheFormula}) with
\begin{gather*}
\begin{bmatrix}
I_{H_0} & A_{01}A_{11}^{-1}\\
0 & I_{H_1}
\end{bmatrix}
^{\ast}=%
\begin{bmatrix}
I_{H_0} & 0\\
A_{11}^{-1}A_{10} & I_{H_1}
\end{bmatrix}.
\end{gather*}
Hence, for any $x\in H_{0}$ and
$y\in H_{1}$,%
\begin{gather*}
\left(  x+y,A\left(  x+y\right)  \right)    =\left(
\begin{bmatrix}
x\\
y
\end{bmatrix}
,%
\begin{bmatrix}
A_{00} & A_{01}\\
A_{10} & A_{11}%
\end{bmatrix}%
\begin{bmatrix}
x\\
y
\end{bmatrix}
\right) \\
  =\left(
\begin{bmatrix}
I & 0\\
A_{11}^{-1}A_{10} & I
\end{bmatrix}%
\begin{bmatrix}
x\\
y
\end{bmatrix}
,%
\begin{bmatrix}
A/A_{11} & 0\\
0 & A_{11}%
\end{bmatrix}%
\begin{bmatrix}
I & 0\\
A_{11}^{-1}A_{10} & I
\end{bmatrix}%
\begin{bmatrix}
x\\
y
\end{bmatrix}
\right) \\
  =\left(  x,A/A_{11}x\right)  +\left(  A_{11}^{-1}A_{10}x+y,A_{11}\left(
A_{11}^{-1}A_{10}x+y\right)  \right) \\ \geq\left(  x,A/A_{11}x\right)  ,
\end{gather*}
with equality iff $\left(  A_{11}^{-1}A_{10}x+y,A_{11}\left(
A_{11}^{-1}A_{10}x+y\right)  \right)  =0$ iff $A_{11}^{-1}%
A_{10}x+y=0$ iff $y=-A_{11}^{-1}A_{10}x$. This proves the equality (\ref{ClassicalSchurComplMinPrinc}) and the uniqueness of the minimizer which is given by (\ref{ClassicalSchurComplMinPrincMinimizer}). The proof of the theorem now follows from all this and the fact that $A/A_{11}\in\mathcal{L}\left(H_{0}\right)$ is
self-adjoint so is uniquely determined by its quadratic form (see, for instance, Theorem 4.4 in Ref.\ \onlinecite{80JW}) and the upper bound (\ref{ClassicalSchurComplUpperBdd}) follows from considering the RHS of (\ref{ClassicalSchurComplMinPrinc}) with $y=0\in H_1$.
\end{proof}

\begin{definition}
[Dual $Z$-problem]\label{DefDualZProb}Given a $Z$-problem $(\mathcal{H},\mathcal{U},\mathcal{E},\mathcal{J},\sigma)$ with invertible $\sigma$ (the direct $Z$-problem), the corresponding dual $Z$-problem is the $Z$-problem $(\mathcal{H},\mathcal{U},\mathcal{J},\mathcal{E},\sigma^{-1})$
associated with the orthogonal decomposition
\begin{equation}
\mathcal{H=U}\overset{\bot}{\mathcal{\oplus}}\mathcal{J}\overset{\bot
}{\mathcal{\oplus}}\mathcal{E}
\label{DefDualZProbOrthogonalTripleDecomposition}%
\end{equation}
and operator $\sigma^{-1}\in\mathcal{L}\left(  \mathcal{H}\right)$, i.e., the problem: given $J_{0}\in\mathcal{U}$, find
vectors $\left(  E_{0},J,E\right)  \in\mathcal{U}\times\mathcal{J}%
\times\mathcal{E}$ satisfying%
\begin{equation}
E_{0}+E=\sigma^{-1}\left(  J_{0}+J\right)  . \label{DefDualZProbEq}%
\end{equation}
An effective operator for this $Z$-problem will be denoted by $\left(
\sigma^{-1}\right)_{\ast^{\prime}}  $. In other words, if there exists an operator $\left(
\sigma^{-1}\right)  _{\ast^{\prime}}\in\mathcal{L}\left(  \mathcal{U}\right)  $
such that
\begin{equation}
E_{0}=\left(  \sigma^{-1}\right)  _{\ast^{\prime}}J_{0}
\label{DefDualZProbDefEffOp}%
\end{equation}
whenever $\left(  E_{0},J,E\right)  $ is a solution of the dual $Z$-problem at
$J_{0}$ then $\left(  \sigma^{-1}\right)  _{\ast^{\prime}}$ is an effective
operator for the dual $Z$-problem.
\end{definition}

\begin{proof}[Proof of Theorems \ref{ThmClassicalDiriMinPrin} and \ref{ThmClassicalThomMinPrin}]
Assume the hypotheses of Theorem \ref{ThmClassicalDiriMinPrin}. Then the proof of Theorem \ref{ThmClassicalDiriMinPrin} follows immediately from Theorem \ref{ThmMainClassicalZProbEffOp} with the Schur complement formula (\ref{ClassicEffOperFormula}) for $\sigma_*$ and by Theorem \ref{ThmSchurComplConstrMinPrinc} with 
\begin{gather}
    \sigma_*=A/A_{11},
\end{gather}
where
\begin{gather}
    H=H_0\ho H_1, H_0=\mathcal{U}, H_1=\mathcal{E},\\ A=[A_{ij}]_{i,j=0,1}=[\sigma_{ij}]_{i,j=0,1}.
\end{gather}
Now notice it follows from this proof that we can replace the hypothesis $\sigma^*=\sigma$ in Theorem \ref{ThmClassicalDiriMinPrin} with the weaker hypothesis \ref{Weaker2x2BlockSelfAdjCondOnSigma} and the conclusions of Theorem \ref{ThmClassicalDiriMinPrin} are still true.

Next, assume the hypotheses in Theorem \ref{ThmClassicalThomMinPrin}. Then by Corollary \ref{CorClassicalBasicInvPosPropEffOp} we know that $0\leq (\sigma_{jj})^*=\sigma_{jj}$ are invertible for each $j=0,1,2$ and similarly for $\sigma^{-1}$ since by hypotheses it follows that $(\sigma^{-1})=\sigma^{-1}\geq 0$. Thus, by Theorem \ref{ThmMainClassicalZProbEffOp} (in particular the uniqueness of the effective operator) it follows that the effective operator $(\sigma^{-1})_{*'}$ of the dual problem $(\mathcal{H},\mathcal{U},\mathcal{J},\mathcal{E},\sigma^{-1})$ satisfies
\begin{align}
    (\sigma^{-1})_{*'}=(\sigma_*)^{-1},
\end{align}
and by Corollary \ref{CorClassicalBasicInvPosPropEffOp} we also know that $(\sigma^{-1})_{*'}\geq 0$ and $\sigma_*\geq 0$. The result now follows immediately from this by Theorem \ref{ThmClassicalDiriMinPrin} applied to the dual $Z$-problem. In particular, we have from Theorem \ref{ThmClassicalDiriMinPrin}, applied to both the direct and dual $Z$-problem, that $0\leq \sigma_*\leq \sigma_{00}$ and $0\leq (\sigma_*)^{-1}=(\sigma^{-1})_{*'}\leq (\sigma^{-1})_{00}$ so by taking inverses we prove (\ref{ClassicalUpperLowerBddsEffOp}). This completes the proof.
\end{proof}

We conclude this section on some fundamental results on the dual problem which we will generalize under weaker hypotheses later in the paper. To do this, we will need the following lemma along with a proposition that gives an alternative Schur complement representation formula for the effective operator $\sigma_*$.
\begin{lemma}[Babachiewicz inversion formula]\label{LemBabachiewiczInversionFormula}
Suppose $H=H_0\ho H_1$ is a Hilbert space and $A=[A_{i,j}]_{i,j=0,1}\in\mathcal{L}\left(H\right)$. Then the following statements are true:
\begin{itemize}
    \item[(a)] If $A_{11}$ is invertible then $A$ is invertible if and only if $A/A_{11}$ is invertible.
    \item[(b)] If $A$ is invertible then $A_{11}$ is invertible if and only if $(A^{-1})_{00}$ is invertible.
    \item[(c)] If $A$ and $A_{11}$ are invertible then
\begin{gather}
A^{-1}=
\begin{bmatrix}
\left(  A^{-1}\right)  _{00} & \left(  A^{-1}\right)  _{01}\nonumber\\
\left(  A^{-1}\right)  _{10} & \left(  A^{-1}\right)  _{11}%
\end{bmatrix}
\\
=
\begin{bmatrix}
\left(  A/A_{11}\right)  ^{-1} & -\left(  A/A_{11}\right)  ^{-1}A_{01}
A_{11}^{-1}\\
-A_{11}^{-1}A_{10}\left(  A/A_{11}\right)  ^{-1} & A_{11}^{-1}+A_{11}
^{-1}A_{10}\left(  A/A_{11}\right)  ^{-1}A_{01}A_{11}^{-1}
\end{bmatrix},\\
    A/A_{11}=\left[  \left(  A^{-1}\right)  _{00}\right]  ^{-1},\;
A^{-1}/\left(  A^{-1}\right)  _{00}=A_{11}^{-1}.
\end{gather}
\end{itemize}
\end{lemma}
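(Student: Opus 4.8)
The plan is to derive all three parts from the Aitken block-diagonalization of Lemma \ref{LemAitkenBlockDiag}, which, under the standing hypothesis that $A_{11}$ is invertible, factors $A=L\,D\,U$ with
\[
L=\begin{bmatrix} I_{H_0} & A_{01}A_{11}^{-1}\\ 0 & I_{H_1}\end{bmatrix},\quad
D=\begin{bmatrix} A/A_{11} & 0\\ 0 & A_{11}\end{bmatrix},\quad
U=\begin{bmatrix} I_{H_0} & 0\\ A_{11}^{-1}A_{10} & I_{H_1}\end{bmatrix},
\]
where $L$ and $U$ are unit block-triangular, hence invertible with the explicit inverses recorded in that lemma. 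Part (a) is then immediate: since $L$ and $U$ are invertible, $A$ is invertible if and only if $D$ is, and because $A_{11}$ is already invertible, $D$ is invertible exactly when its remaining diagonal block $A/A_{11}$ is invertible.

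For part (c) I would simply invert the factorization, writing $A^{-1}=U^{-1}D^{-1}L^{-1}$, and multiply the three factors out using the explicit forms of $U^{-1}$, $D^{-1}=\operatorname{diag}((A/A_{11})^{-1},A_{11}^{-1})$, and $L^{-1}$. Reading off the $(0,0)$ entry of the product gives $(A^{-1})_{00}=(A/A_{11})^{-1}$, equivalently $A/A_{11}=[(A^{-1})_{00}]^{-1}$, while the remaining three entries reproduce the claimed block formula for $A^{-1}$ verbatim.

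For part (b), the forward implication is a corollary of (c): if $A_{11}$ is invertible, then (a) shows $A/A_{11}$ is invertible (as $A$ is), and (c) gives $(A^{-1})_{00}=(A/A_{11})^{-1}$, which is therefore invertible. The converse carries the real content, and the key idea is to exploit the symmetry between the two diagonal blocks. I would apply the already-proven statements (a) and (c) to the operator $A^{-1}$ after interchanging the roles of the indices $0$ and $1$ (equivalently, conjugating by the canonical swap unitary $W\in\mathcal{L}(H_0\ho H_1,\,H_1\ho H_0)$, $W(x_0,x_1)=(x_1,x_0)$, which sends the $(0,0)$-block of an operator to the $(1,1)$-block of its conjugate). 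Under this relabeling $A^{-1}$ plays the role of ``$A$'', its block $(A^{-1})_{00}$ plays the role of ``$A_{11}$'', and $A=(A^{-1})^{-1}$ plays the role of ``$A^{-1}$'' with $A_{11}$ in the role of ``$(A^{-1})_{00}$''. Assuming $A$ and $(A^{-1})_{00}$ invertible, the hypotheses of the relabeled (a) and (c) hold, and (c) then yields $A_{11}=\big(A^{-1}/(A^{-1})_{00}\big)^{-1}$; in particular $A_{11}$ is invertible, which is the desired converse, and the same identity simultaneously establishes the remaining formula $A^{-1}/(A^{-1})_{00}=A_{11}^{-1}$ of part (c).

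The only delicate point is the bookkeeping in this symmetry step: one must check that conjugation by $W$ genuinely carries the $A_{11}$-based factorization of Lemma \ref{LemAitkenBlockDiag} into the corresponding $A_{00}$-based factorization, so that the roles of the blocks are permuted consistently throughout. This is routine but is the step most prone to index errors, and I expect it to be the main obstacle. An alternative that sidesteps $W$ entirely is to re-run the Aitken argument once more, block-diagonalizing with respect to $A_{00}$ rather than $A_{11}$, and then to read off the $(1,1)$-entry of the resulting inverse; everything else reduces to direct block multiplication.
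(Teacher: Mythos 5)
Your proposal is correct and follows essentially the same route as the paper: parts (a) and (c) are read off from the Aitken factorization $A=LDU$ of Lemma \ref{LemAitkenBlockDiag}, and the converse of (b) is obtained by applying the already-proven statements to $A^{-1}$ with the roles of $H_0$ and $H_1$ interchanged. The ``delicate'' swap step you flag is exactly what the paper does (it simply reconsiders the statement with respect to $H=H_1\ho H_0$), and your bookkeeping there is consistent.
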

\begin{proof}
     Although this lemma is well-known\cite{05FZ}, we will give a self-contained proof. Assume the hypotheses. The proof of $(a)$ and $(c)$ follows immediately from Lemma \ref{LemAitkenBlockDiag}. To prove $(b)$, assume $A$ is invertible. If $A_{11}$ is invertible then by (c) we immediately get that $\left(  A^{-1}\right)  _{00}$ is invertible. Conversely, if $\left(  A^{-1}\right)  _{00}$ is invertible, then considering the same statement but with respect to $H=H_1\ho H_0$ and $A^{-1}=[(A^{-1})_{ij}]_{i,j=1,0}$ instead of $A$, we have that $\left(  (A^{-1})^{-1}\right)  _{11}$ is invertible. This proves the converse of $(b)$.
\end{proof}

\begin{proposition}\label{PropAltReprEffOp}
    If $(\mathcal{H},\mathcal{U},\mathcal{E},\mathcal{J},\sigma)$ is a $Z$-problem with $\sigma_{11}$ invertible then
    \begin{align}
        \sigma_*=\Gamma_0\sigma/\sigma_{11}\Gamma_0|_{\mathcal{U}}=(\sigma/\sigma_{11})_{00}=\sigma_{00}-\sigma_{01}\sigma_{11}^{-1}\sigma_{10},
    \end{align}
    where $\sigma/\sigma_{11}$ is the Schur complement of $\sigma$ with respect to $\sigma_{11}$ and the orthogonal decomposition $\mathcal{H}=(\mathcal{U}\ho \mathcal{J})\ho \mathcal{E}$. More precisely, with respect to the decomposition $\mathcal{H}=(\mathcal{U}\ho \mathcal{J})\ho \mathcal{E}$, the following block formulas hold:
    \begin{gather}
        \sigma=\begin{bmatrix}%
\begin{bmatrix}
\sigma_{00} & \sigma_{02}\\
\sigma_{20} & \sigma_{22}%
\end{bmatrix}
&
\begin{bmatrix}
\sigma_{01}\\
\sigma_{21}%
\end{bmatrix}
\\%
\begin{bmatrix}
\sigma_{10} & \sigma_{12}%
\end{bmatrix}
& \sigma_{11}%
\end{bmatrix},\\
\sigma/\sigma_{11}=\begin{bmatrix}
\sigma_{00} & \sigma_{02}\\
\sigma_{20} & \sigma_{22}%
\end{bmatrix}-\begin{bmatrix}
\sigma_{01}\\
\sigma_{21}%
\end{bmatrix}\sigma_{11}^{-1}\begin{bmatrix}
\sigma_{10} & \sigma_{12}%
\end{bmatrix}\\
=\begin{bmatrix}
\sigma_{00}-\sigma_{01}\sigma_{11}^{-1}\sigma_{10} & \sigma_{02}-\sigma_{01}\sigma_{11}^{-1}\sigma_{12}\\
\sigma_{20}-\sigma_{21}\sigma_{11}^{-1}\sigma_{10} & \sigma_{22}-\sigma_{21}\sigma_{11}^{-1}\sigma_{12}%
\end{bmatrix},
    \end{gather}
where the latter block decomposition of $\sigma/\sigma_{11}\in\mathcal{L}(\mathcal{U}\ho \mathcal{J})$ is with respect to the Hilbert space $\mathcal{U}\ho \mathcal{J}$.
\end{proposition}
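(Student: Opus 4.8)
The plan is to recognize that this representation formula is nothing more than a regrouping of the orthogonal summands in the triple decomposition, followed by exactly the Schur complement computation already carried out in Theorem \ref{ThmMainClassicalZProbEffOp}. Since the orthogonal triple decomposition $\mathcal{H}=\mathcal{U}\ho\mathcal{E}\ho\mathcal{J}$ is unaffected by the order of its summands, we may reorganize it as the two-fold orthogonal decomposition $\mathcal{H}=(\mathcal{U}\ho\mathcal{J})\ho\mathcal{E}$. First I would make this regrouping precise: writing $\sigma$ as a $2\times 2$ block operator matrix relative to $\mathcal{H}=(\mathcal{U}\ho\mathcal{J})\ho\mathcal{E}$, its diagonal corners become the compression of $\sigma$ to $\mathcal{U}\ho\mathcal{J}$ (itself the $2\times 2$ block matrix with entries $\sigma_{ij}$ for $i,j\in\{0,2\}$) and the operator $\sigma_{11}$ on $\mathcal{E}$, while the off-diagonal corners are the column with entries $\sigma_{01},\sigma_{21}$ and the row with entries $\sigma_{10},\sigma_{12}$. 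This is exactly the first displayed block formula in the statement, and it follows directly from the definition (\ref{DefOfSigmaSubblocks}) of the subblocks $\sigma_{ij}=\Gamma_i\sigma\Gamma_j$ together with the fact that $\Gamma_0+\Gamma_2$ is the orthogonal projection of $\mathcal{H}$ onto $\mathcal{U}\ho\mathcal{J}$.

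Next, since $\sigma_{11}$ is invertible by hypothesis, the Schur complement $\sigma/\sigma_{11}\in\mathcal{L}(\mathcal{U}\ho\mathcal{J})$ relative to this regrouped decomposition is well-defined, and I would compute it by carrying out the block multiplication
\[
\begin{bmatrix}\sigma_{01}\\\sigma_{21}\end{bmatrix}\sigma_{11}^{-1}\begin{bmatrix}\sigma_{10}&\sigma_{12}\end{bmatrix}
\]
entrywise. Subtracting this rank-structured term from the diagonal corner $[\sigma_{ij}]_{i,j\in\{0,2\}}$ yields the second and third displayed block formulas in the statement, whose top-left entry is $\sigma_{00}-\sigma_{01}\sigma_{11}^{-1}\sigma_{10}$.

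Finally, I would identify the compression of $\sigma/\sigma_{11}$ to $\mathcal{U}$. Because $\Gamma_0$ is the orthogonal projection onto $\mathcal{U}$ (now viewed inside $\mathcal{U}\ho\mathcal{J}$), the operator $\Gamma_0(\sigma/\sigma_{11})\Gamma_0|_{\mathcal{U}}$ is by definition the $(0,0)$ block $(\sigma/\sigma_{11})_{00}$ of $\sigma/\sigma_{11}$ relative to $\mathcal{U}\ho\mathcal{J}$, which by the explicit computation above equals $\sigma_{00}-\sigma_{01}\sigma_{11}^{-1}\sigma_{10}$. By the Schur complement formula (\ref{ClassicEffOperFormula}) in Theorem \ref{ThmMainClassicalZProbEffOp}, this last expression is precisely $\sigma_*$, which closes the chain of equalities asserted in the proposition.

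There is no substantive obstacle here: the result is an algebraic identity whose entire content is the bookkeeping of the block regrouping. The one point requiring genuine care is keeping the two distinct $2\times 2$ block structures straight, namely the \emph{outer} decomposition $(\mathcal{U}\ho\mathcal{J})\ho\mathcal{E}$ used to form $\sigma/\sigma_{11}$ versus the \emph{inner} decomposition $\mathcal{U}\ho\mathcal{J}$ used to extract the $(0,0)$ block, and verifying that the top-left entry produced by the larger Schur complement coincides with the smaller Schur complement appearing in (\ref{ClassicEffOperFormula}). Making the compression notation $\Gamma_0(\sigma/\sigma_{11})\Gamma_0|_{\mathcal{U}}=(\sigma/\sigma_{11})_{00}$ unambiguous is the only place where one must be attentive.
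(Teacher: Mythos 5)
Your proposal is correct and follows essentially the same route as the paper's own proof, which simply invokes Theorem \ref{ThmMainClassicalZProbEffOp} for the identity $\sigma_*=\sigma_{00}-\sigma_{01}\sigma_{11}^{-1}\sigma_{10}$ and then verifies the displayed block formulas by block operator multiplication. You have merely spelled out the bookkeeping of the regrouped decomposition $(\mathcal{U}\ho\mathcal{J})\ho\mathcal{E}$ in more detail than the paper does.
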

\begin{proof}
     Assume the hypotheses. Then by Theorem \ref{ThmMainClassicalZProbEffOp} we know that $\sigma_*=\sigma_{00}-\sigma_{01}\sigma_{11}^{-1}\sigma_{10}$ and hence the result follows by block operator multiplication.
\end{proof}

For the purposes of the next corollary and proposition, we make the following observations. Suppose $(\mathcal{H},\mathcal{U},\mathcal{E},\mathcal{J},\sigma)$ is a $Z$-problem with $\sigma$ invertible. Then for the dual $Z$-problem $(\mathcal{H},\mathcal{U},\mathcal{E},\mathcal{J},\sigma^{-1})$, the operator $\sigma^{-1}$ written as a $3\times 3$ block operator matrix with respect to the orthogonal decomposition (\ref{DefDualZProbOrthogonalTripleDecomposition}) is simply
\begin{align}
    \sigma^{-1}=[(\sigma^{-1})_{ij}]_{i,j=0,2,1}=\begin{bmatrix}
        (\sigma^{-1})_{00}&(\sigma^{-1})_{02}&(\sigma^{-1})_{01}\\
        (\sigma^{-1})_{20}&(\sigma^{-1})_{22}&(\sigma^{-1})_{21}\\
        (\sigma^{-1})_{10}&(\sigma^{-1})_{12}&(\sigma^{-1})_{11}
    \end{bmatrix},
\end{align}
where, as in (\ref{DefOfSigmaSubblocks}), we define
\begin{align}
    (\sigma^{-1})_{ij}\in \mathcal{L}(H_j,H_i),\;(\sigma^{-1})_{ij}=\Gamma_i\sigma^{-1}\Gamma_j:H_j\rightarrow H_i,\label{DefOfSigmaInvSubblocks}
\end{align}
for $i,j=0,1,2$ and $\Gamma_0,\Gamma_1,\Gamma_2$ are just the same orthogonal projections of $\mathcal{H}$ onto $H_0=\mathcal{U}, H_1=\mathcal{E}, H_2=\mathcal{J},$ respectively. In particular, by applying Theorem \ref{ThmMainClassicalZProbEffOp} to the dual $Z$-problem $(\mathcal{H},\mathcal{U},\mathcal{J},\mathcal{E},\sigma^{-1})$, we have proven the following corollary.
\begin{corollary}\label{CorDualThmMainClassicalZProbEffOp}
If $(\mathcal{H},\mathcal{U},\mathcal{E},\mathcal{J},\sigma)$ is a $Z$-problem with both $\sigma$ and $(\sigma^{-1})_{22}$ invertible then Theorem \ref{ThmMainClassicalZProbEffOp} is true for the dual $Z$-problem $(\mathcal{H},\mathcal{U},\mathcal{J},\mathcal{E},\sigma^{-1})$ and
\begin{align}
\left(  \sigma^{-1}\right)  _{\ast^{\prime}}&=[(\sigma^{-1})_{ij}]_{i,j=0,2}/(\sigma^{-1})_{22}\\
    &=(\sigma^{-1})_{00}-(\sigma^{-1})_{02}(\sigma^{-1})_{22}^{-1}(\sigma^{-1})_{20}. \label{LemEffDualOpsFundamPropertiesDualEffOpAsSchurComplFormula}%
\end{align}
\end{corollary}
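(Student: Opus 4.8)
The plan is to obtain the corollary as a direct application of Theorem~\ref{ThmMainClassicalZProbEffOp} to the dual $Z$-problem, with essentially all of the work being the bookkeeping needed to match the dual problem's data to the hypotheses of that theorem. First I would note that the invertibility of $\sigma$ guarantees $\sigma^{-1}\in\mathcal{L}(\mathcal{H})$ exists, so that $(\mathcal{H},\mathcal{U},\mathcal{J},\mathcal{E},\sigma^{-1})$ is a bona fide $Z$-problem in the sense of Definition~\ref{DefZProbMain}, associated with the orthogonal triple decomposition (\ref{DefDualZProbOrthogonalTripleDecomposition}) and the operator $\sigma^{-1}$.

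The key step is to identify correctly which subblock of $\sigma^{-1}$ plays the role of ``$\sigma_{11}$'' in Theorem~\ref{ThmMainClassicalZProbEffOp}. In the dual problem the ordered triple of subspaces is $(\mathcal{U},\mathcal{J},\mathcal{E})$, so the \emph{middle} space is $\mathcal{J}=H_2$, and hence the compression of $\sigma^{-1}$ to that middle space is $\Gamma_2\sigma^{-1}\Gamma_2|_{\mathcal{J}}=(\sigma^{-1})_{22}$ in the notation (\ref{DefOfSigmaInvSubblocks}). By hypothesis $(\sigma^{-1})_{22}$ is invertible, which is precisely the invertibility hypothesis required by Theorem~\ref{ThmMainClassicalZProbEffOp} for the dual problem.

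With this matching in place, Theorem~\ref{ThmMainClassicalZProbEffOp} applied to $(\mathcal{H},\mathcal{U},\mathcal{J},\mathcal{E},\sigma^{-1})$ immediately yields that the dual $Z$-problem has a unique solution for each $J_0\in\mathcal{U}$ and that the effective operator $(\sigma^{-1})_{*'}$ exists, is unique, and equals the Schur complement of the top-left $2\times2$ block of $\sigma^{-1}$ (with respect to the ordering $(\mathcal{U},\mathcal{J})$) relative to $(\sigma^{-1})_{22}$. Reading this block off from the $3\times3$ matrix representation of $\sigma^{-1}$ displayed just before the corollary gives
\begin{align*}
[(\sigma^{-1})_{ij}]_{i,j=0,2}=\begin{bmatrix}(\sigma^{-1})_{00}&(\sigma^{-1})_{02}\\(\sigma^{-1})_{20}&(\sigma^{-1})_{22}\end{bmatrix},
\end{align*}
so that the Schur complement formula (\ref{ClassicEffOperFormula}) specializes to (\ref{LemEffDualOpsFundamPropertiesDualEffOpAsSchurComplFormula}).

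I do not anticipate a genuine obstacle here: the entire content is the re-indexing that swaps the roles of $\mathcal{E}$ and $\mathcal{J}$ in passing to the dual problem, and the one place where care is needed is ensuring that the relevant subblock for the dual problem is the compression to the middle subspace $\mathcal{J}$, namely $(\sigma^{-1})_{22}$, rather than $(\sigma^{-1})_{11}$. Once that identification is made explicit, the proof reduces to a single invocation of Theorem~\ref{ThmMainClassicalZProbEffOp}.
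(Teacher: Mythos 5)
Your proposal is correct and follows essentially the same route as the paper: the authors likewise obtain the corollary by applying Theorem \ref{ThmMainClassicalZProbEffOp} to the dual $Z$-problem $(\mathcal{H},\mathcal{U},\mathcal{J},\mathcal{E},\sigma^{-1})$, using the observation (recorded just before the corollary) that in the reordered decomposition $\mathcal{H}=\mathcal{U}\ho\mathcal{J}\ho\mathcal{E}$ the compression of $\sigma^{-1}$ to the middle subspace is $(\sigma^{-1})_{22}$. Your explicit care about which subblock plays the role of ``$\sigma_{11}$'' is exactly the point the paper's re-indexing handles.
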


The following proposition on the dual problem and dual effective operator can also be considered a generalization of the notion of matrically coupled $2\times 2$ block operator matrices\cite{90GG, 92BT, 13HR} to $3\times 3$ block operator matrices.
\begin{proposition}
\label{PropEffDualOpsFundamProperties}Suppose $(\mathcal{H},\mathcal{U},\mathcal{E},\mathcal{J},\sigma)$ is a $Z$-problem and $\sigma$ is invertible. Then the following statements are true:

\begin{itemize}
\item[(a)] The operator $\left(  \sigma^{-1}\right)  _{22}$ is invertible if and only if $[\sigma_{ij}]_{i,j=0,1}\in \mathcal{L}(\mathcal{U}\ho\mathcal{E})$ is invertible, in which case
\begin{gather}
\sigma^{-1}/(\sigma^{-1})_{22}=[\sigma_{ij}]_{i,j=0,1}^{-1},\;
\sigma/[\sigma_{ij}]_{i,j=0,1}
=\left(  \sigma^{-1}\right)  _{22}^{-1},\\
\left(  \sigma^{-1}\right)  _{\ast^{\prime}}=\Gamma_{0}[\sigma_{ij}]_{i,j=0,1}
^{-1}\Gamma_{0}|_{\mathcal{U}}=\left(
[\sigma_{ij}]_{i,j=0,1}
^{-1}\right)  _{00},
\end{gather}
where the Schur complement $\sigma/[\sigma_{ij}]_{i,j=0,1}
$ and $\sigma^{-1}/\left(  \sigma^{-1}\right)  _{22}$ of $\sigma$ and $\sigma^{-1}$, respectively,
are with respect to the orthogonal decomposition $\mathcal{H}=(\mathcal{U}\ho\mathcal{E})\ho\mathcal{J}$.
\item[(b)] The operator $\sigma_{11}$ is invertible if and only if $[(\sigma^{-1})_{ij}]_{i,j=0,2}\in\mathcal{L}(\mathcal{U}\ho\mathcal{J})$ is
invertible, in which case
\begin{gather}
\sigma/\sigma_{11}=[(\sigma^{-1})_{ij}]_{i,j=0,2}
^{-1},\text{ }\sigma^{-1}/[(\sigma^{-1})_{ij}]_{i,j=0,2}
=\sigma_{11}^{-1},\\
\sigma_{\ast}=\Gamma_{0}%
[(\sigma^{-1})_{ij}]_{i,j=0,2}^{-1}\Gamma_{0}|_{\mathcal{U}}=\left(
[(\sigma^{-1})_{ij}]_{i,j=0,2}^{-1}\right)  _{00},
\end{gather}
where the Schur complement $\sigma^{-1}/[(\sigma^{-1})_{ij}]_{i,j=0,2}
$ and $\sigma/\sigma_{11}$ of $\sigma^{-1}$ and $\sigma$, respectively, are with respect to the orthogonal decomposition $\mathcal{H}=(\mathcal{U}\ho\mathcal{J})\ho\mathcal{E}$.
\item[(c)] The operators $\sigma_{11}$ and $\sigma_{\ast}$ are invertible if and only if
$\left(  \sigma^{-1}\right)  _{22}$ and $\left(  \sigma^{-1}\right)  _{\ast^{\prime}}$
are invertible if and only if $\sigma_{11}$ and $\left(  \sigma^{-1}\right)  _{22}$ are
invertible, in which case
\begin{equation}
\left(  \sigma^{-1}\right)  _{\ast^{\prime}}=\left(  \sigma_{\ast}\right)  ^{-1}.
\label{LemEffDualOpsFundamPropertiesPartd}%
\end{equation}
\end{itemize}
\end{proposition}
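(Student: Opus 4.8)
The plan is to reduce each of the three assertions to the Banachiewicz inversion formula (Lemma \ref{LemBabachiewiczInversionFormula}) by regrouping the orthogonal triple decomposition into a $2\times 2$ block structure, and then to read off the effective-operator identities from the Schur-complement representations already established in Proposition \ref{PropAltReprEffOp} and Corollary \ref{CorDualThmMainClassicalZProbEffOp}. Part (c) then becomes a matter of chaining together the invertibility equivalences produced in (a) and (b).

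For part (a), I would regard $\mathcal{H}=\mathcal{J}\ho(\mathcal{U}\ho\mathcal{E})$ as a two-fold orthogonal decomposition, so that $\sigma$ becomes a $2\times 2$ block operator whose $(1,1)$-block is the compression $[\sigma_{ij}]_{i,j=0,1}$ to $\mathcal{U}\ho\mathcal{E}$ and whose $(0,0)$-block is the compression $\sigma_{22}$ to $\mathcal{J}$; in this decomposition $(\sigma^{-1})_{22}$ is precisely the $(0,0)$-block of $\sigma^{-1}$. Since $\sigma$ is invertible by hypothesis, Lemma \ref{LemBabachiewiczInversionFormula}(b) gives at once that $[\sigma_{ij}]_{i,j=0,1}$ is invertible if and only if $(\sigma^{-1})_{22}$ is invertible, and in that case Lemma \ref{LemBabachiewiczInversionFormula}(c) yields the two identities $\sigma^{-1}/(\sigma^{-1})_{22}=[\sigma_{ij}]_{i,j=0,1}^{-1}$ and $\sigma/[\sigma_{ij}]_{i,j=0,1}=(\sigma^{-1})_{22}^{-1}$. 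For the formula for $(\sigma^{-1})_{*'}$, I would invoke the dual analog of Proposition \ref{PropAltReprEffOp} (equivalently, Corollary \ref{CorDualThmMainClassicalZProbEffOp}), which expresses $(\sigma^{-1})_{*'}$ as the compression to $\mathcal{U}$ of $\sigma^{-1}/(\sigma^{-1})_{22}$; combined with the first identity, this compression equals $([\sigma_{ij}]_{i,j=0,1}^{-1})_{00}=\Gamma_0[\sigma_{ij}]_{i,j=0,1}^{-1}\Gamma_0|_{\mathcal{U}}$, as claimed.

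Part (b) I would deduce from (a) by duality, applying (a) to the dual $Z$-problem $(\mathcal{H},\mathcal{U},\mathcal{J},\mathcal{E},\sigma^{-1})$, whose governing operator is $\sigma^{-1}$, whose roles of $\mathcal{E}$ and $\mathcal{J}$ are interchanged, and whose inverse operator is $\sigma$. Under this substitution the quantity playing the role of ``$(\sigma^{-1})_{22}$'' becomes the compression of $\sigma$ to $\mathcal{E}$, namely $\sigma_{11}$, while ``$[\sigma_{ij}]_{i,j=0,1}$'' becomes the compression of $\sigma^{-1}$ to $\mathcal{U}\ho\mathcal{J}$, namely $[(\sigma^{-1})_{ij}]_{i,j=0,2}$; and since the dual of the dual is the original problem, the dual's dual effective operator is $\sigma_*$. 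Translating the conclusion of (a) through this dictionary produces exactly the equivalence and the identities asserted in (b).

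For part (c), I would set $B:=[(\sigma^{-1})_{ij}]_{i,j=0,2}\in\mathcal{L}(\mathcal{U}\ho\mathcal{J})$. By part (b), $\sigma_{11}$ is invertible if and only if $B$ is invertible, and then $\sigma_*=(B^{-1})_{00}$; applying Lemma \ref{LemBabachiewiczInversionFormula}(b) to $B$ with the blocks $\mathcal{U}$ and $\mathcal{J}$ shows that $\sigma_*=(B^{-1})_{00}$ is invertible if and only if $B_{11}=(\sigma^{-1})_{22}$ is invertible. Hence the conditions ``$\sigma_{11}$ and $\sigma_*$ invertible'' and ``$\sigma_{11}$ and $(\sigma^{-1})_{22}$ invertible'' coincide; the symmetric argument applied to the dual problem shows that ``$(\sigma^{-1})_{22}$ and $(\sigma^{-1})_{*'}$ invertible'' coincides with the same pair, giving the triple equivalence. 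When all four operators are invertible, Lemma \ref{LemBabachiewiczInversionFormula}(c) applied to $B$ gives $\sigma_*=(B^{-1})_{00}=[B/B_{11}]^{-1}$, while Corollary \ref{CorDualThmMainClassicalZProbEffOp} identifies $B/B_{11}=(\sigma^{-1})_{*'}$, whence $(\sigma^{-1})_{*'}=(\sigma_*)^{-1}$. The main obstacle throughout is bookkeeping: keeping straight which $2\times 2$ regrouping of the $3\times 3$ structure each Schur complement refers to, correctly identifying the compressions of $\sigma^{-1}$ with the inverse-blocks $(A^{-1})_{ij}$ appearing in the Banachiewicz formula, and recognizing that the $(0,0)$-compression to $\mathcal{U}$ of a Schur complement taken over $\mathcal{U}\ho\mathcal{E}$ agrees with a Schur complement taken over $\mathcal{U}\ho\mathcal{J}$ — the identity that links $([\sigma_{ij}]_{i,j=0,1}^{-1})_{00}$ to $(\sigma^{-1})_{*'}$.
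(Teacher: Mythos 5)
Your proposal is correct and follows essentially the same route as the paper: regroup the triple decomposition into $2\times 2$ blocks, apply the Banachiewicz formula (Lemma \ref{LemBabachiewiczInversionFormula}) for the invertibility equivalences and Schur-complement identities, use Proposition \ref{PropAltReprEffOp} / Corollary \ref{CorDualThmMainClassicalZProbEffOp} for the effective-operator representations, and obtain (b) from (a) by duality. Your organization of part (c) — pivoting both equivalences through the condition ``$\sigma_{11}$ and $(\sigma^{-1})_{22}$ invertible'' via two applications of Lemma \ref{LemBabachiewiczInversionFormula}(b) — is a slightly tidier bookkeeping of the same argument the paper gives.
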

\begin{proof}
Assume the hypotheses. First, with respect to the orthogonal
decomposition $\mathcal{H}=(\mathcal{U}\ho\mathcal{E})\ho \mathcal{J}$, the operators $\sigma,\sigma^{-1}$ are the $2\times 2$ block operator matrices
\begin{gather}
\sigma=%
\begin{bmatrix}%
\begin{bmatrix}
\sigma_{00} & \sigma_{01}\\
\sigma_{10} & \sigma_{11}%
\end{bmatrix}
&
\begin{bmatrix}
\sigma_{02}\\
\sigma_{12}%
\end{bmatrix}
\\%
\begin{bmatrix}
\sigma_{20} & \sigma_{21}%
\end{bmatrix}
& \sigma_{22}%
\end{bmatrix}
,\\
\sigma^{-1}=%
\begin{bmatrix}%
\begin{bmatrix}
\left(  \sigma^{-1}\right)  _{00} & \left(  \sigma^{-1}\right)  _{01}\\
\left(  \sigma^{-1}\right)  _{10} & \left(  \sigma^{-1}\right)  _{11}%
\end{bmatrix}
&
\begin{bmatrix}
\left(  \sigma^{-1}\right)  _{02}\\
\left(  \sigma^{-1}\right)  _{12}%
\end{bmatrix}
\\%
\begin{bmatrix}
\left(  \sigma^{-1}\right)  _{20} & \left(  \sigma^{-1}\right)  _{21}%
\end{bmatrix}
& \left(  \sigma^{-1}\right)  _{22}%
\end{bmatrix}
.
\end{gather}
Then the proof of statement $(a)$ follows immediately from this by Lemma \ref{LemBabachiewiczInversionFormula} and by Proposition \ref{PropAltReprEffOp} applied to the dual $Z$-problem. 

Now part $(b)$ follows immediately from part $(a)$ by
duality, i.e., applying part $(a)$ to $\sigma^{-1}$ instead of $\sigma$ and the dual
decomposition $\mathcal{H=U}\overset{\bot}{\mathcal{\oplus}}\mathcal{J}%
\overset{\bot}{\mathcal{\oplus}}\mathcal{E}$ instead of the decomposition
$\mathcal{H=U}\overset{\bot}{\mathcal{\oplus}}\mathcal{E}\overset{\bot
}{\mathcal{\oplus}}\mathcal{J}$. 

We will now prove part $(c)$. Suppose the
operators $\sigma_{11}$ and $\sigma_{\ast}$ are invertible. Then by Theorem 2 we have $\sigma_*=[\sigma_{ij}]_{i,j=0,1}/\sigma_{11}$ and so by Lemma \ref{LemBabachiewiczInversionFormula}.$(a)$ we have that $[\sigma_{ij}]_{i,j=0,1}$ is invertible. Hence, by statement $(a)$ of this proposition it follows that $(\sigma^{-1})_{22}$ is invertible and so by the formula (\ref{LemEffDualOpsFundamPropertiesDualEffOpAsSchurComplFormula}) for $(\sigma^{-1})_{*^\prime}$ in Corollary \ref{CorDualThmMainClassicalZProbEffOp}, we have $(\sigma^{-1})_{*^\prime}=[(\sigma^{-1})_{ij}]_{i,j=0,2}/(\sigma^{-1})_{22}$. Also, since $\sigma_{11}$ is invertible it follows by statement $(b)$ that $[(\sigma^{-1})_{ij}]_{i,j=0,2}/(\sigma^{-1})_{22}$ is also invertible. This proves that $\left(  \sigma^{-1}\right)  _{22}$ and $\left(  \sigma^{-1}\right)  _{\ast^{\prime}}$
are invertible. Conversely, if $\left(  \sigma^{-1}\right)  _{22}$ and $\left(  \sigma^{-1}\right)  _{\ast^{\prime}}$ are invertible then by duality [i.e., by applying the statement we just proved to the dual $Z$-problem of the $Z$-problem $(\mathcal{H},\mathcal{U},\mathcal{J},\mathcal{E},\sigma^{-1})$, which is just the original $Z$-problem $(\mathcal{H},\mathcal{U},\mathcal{E},\mathcal{J},\sigma)$], it follows immediately that $\sigma_{11}$ and $\sigma_{\ast}$ are invertible.

Now suppose that $\sigma_{11}$ and $\left(  \sigma^{-1}\right)  _{22}$ are both invertible. Then, by Theorem \ref{ThmMainClassicalZProbEffOp}, we have $\sigma_*=[\sigma_{ij}]_{i,j=0,1}/\sigma_{11}$ and, by Corollary \ref{CorDualThmMainClassicalZProbEffOp}, we have $(\sigma^{-1})_{*^\prime}=[(\sigma^{-1})_{ij}]_{i,j=0,2}/(\sigma^{-1})_{22}$. Also, it follows from our hypotheses and by part $(a)$ and $(b)$ of this proposition, that both $[\sigma_{ij}]_{i,j=0,1}, [(\sigma^{-1})_{ij}]_{i,j=0,2}$ are invertible. Hence, by Lemma \ref{LemBabachiewiczInversionFormula}.$(c)$, it follows that both $[\sigma_{ij}]_{i,j=0,1}/\sigma_{11}, [(\sigma^{-1})_{ij}]_{i,j=0,2}/(\sigma^{-1})_{22}$ are invertible. This proves that $\sigma_*, (\sigma^{-1})_{*^\prime}$ are both invertible. Conversely, if $\sigma_{11}$ and $\sigma_*$ are invertible then as we have proven, $\left(  \sigma^{-1}\right)  _{22}$ and $(\sigma^{-1})_{*^\prime}$ are invertible so that, in particular, $\sigma_{11}$ and $\left(  \sigma^{-1}\right)  _{22}$ are both invertible.

To complete the proof, suppose that $\sigma_{11}$ and $\left(  \sigma^{-1}\right)  _{22}$ are both invertible. Then as we have shown, $\sigma_*, (\sigma^{-1})_{*^\prime}$ are both invertible. By part $(b)$ of this statement we know that $[(\sigma^{-1})_{ij}]_{i,j=0,2}$ is invertible and
\begin{align*}
    (\sigma_*)^{-1}=\left(
[(\sigma^{-1})_{ij}]_{i,j=0,2}^{-1}\right)  _{00}^{-1}.
\end{align*}
It now follows from Lemma \ref{LemBabachiewiczInversionFormula}.$(c)$ with the identification $H=\mathcal{U}\ho\mathcal{J}, H_0=\mathcal{U},H_1=\mathcal{J}, A=[(\sigma^{-1})_{ij}]_{i,j=0,2}$ that
\begin{gather*}
    \left(
[(\sigma^{-1})_{ij}]_{i,j=0,2}^{-1}\right)  _{00}^{-1}=[(A^{-1})_{00}]^{-1}=A/A_{11}\\
=[(\sigma^{-1})_{ij}]_{i,j=0,2}/(\sigma^{-1})_{22}.
\end{gather*}
By Corollary \ref{CorDualThmMainClassicalZProbEffOp}, we know that
\begin{align*}
    (\sigma^{-1})_{*^\prime}=[(\sigma^{-1})_{ij}]_{i,j=0,2}/(\sigma^{-1})_{22}.
\end{align*}
Therefore, we have proven that $(\sigma^{-1})_{*^\prime}=(\sigma_*)^{-1}$. This completes the proof.
\end{proof}
\begin{remark}
We give here a simpler proof, using the $Z$-problem directly, that  $(\sigma^{-1})_{*^\prime}=(\sigma_*)^{-1}$ under the hypotheses $\sigma, \sigma_{11}, (\sigma_*)^{-1}$ are all invertible. Let $J_0\in \mathcal{U}$. Then by Theorem \ref{ThmMainClassicalZProbEffOp} there is a unique solution $(J_0,E,J)$ to the direct $Z$-problem $(\mathcal{H},\mathcal{U},\mathcal{E},\mathcal{J},\sigma)$ at $E_0=\sigma_*^{-1}(J_0)$. In particular, $\sigma(E_0+E)=J_0+J$ so that $\sigma^{-1}(J_0+J)=E_0+E$ and hence $(J_0,J,E)$ is a solution to the dual $Z$-problem $(\mathcal{H},\mathcal{U},\mathcal{J},\mathcal{E},\sigma^{-1})$ at $J_0$ and $(\sigma_*)^{-1}(J_0)=E_0$. Now suppose that $(E_0',J',E')$ is a solution to the dual $Z$-problem $(\mathcal{H},\mathcal{U},\mathcal{J},\mathcal{E},\sigma^{-1})$ at $J_0$ so that $\sigma^{-1}(J_0+J')=E_0'+E'$. Then $(J_0,E',J')$ is a solution to the direct $Z$-problem $(\mathcal{H},\mathcal{U},\mathcal{E},\mathcal{J},\sigma)$ at $E_0'$ so that $\sigma_*(E_0')=J_0$ implying $E_0'=(\sigma_*)^{-1}(J_0)$. This proves that the dual effective operator $(\sigma^{-1})_{*^\prime}$ of the dual $Z$-problem $(\mathcal{H},\mathcal{U},\mathcal{J},\mathcal{E},\sigma^{-1})$ exists, is unique, and is given by the formula $(\sigma^{-1})_{*^\prime}=(\sigma_*)^{-1}$. Although this proof is more straightforward, it seems harder to generalize when weaker hypotheses are used.
\end{remark}

\section{\label{sec:RelaxingHyps}Relaxing the hypotheses}

%\subsubsection{Primary objective: relaxation of the hypotheses}
The primary objective of this paper is to study the three quintessential problems \hyperlink{(i)}{(i)}, \hyperlink{(ii)}{(ii)}, and \hyperlink{(iii)}{(iii)} above that arise in every $Z$-problem, but under weaker (or relaxed) hypotheses.

More precisely, we consider the following sets of hypotheses:
\begin{itemize}
	\item[\hypertarget{(H1)}{(H1)}] $\sigma^*=\sigma\geq 0$, $\sigma$ is invertible.
	\item[\hypertarget{(H2)}{(H2)}] $\sigma^*=\sigma$, $\sigma_{11}\geq 0$, $\sigma_{11}$ is invertible.
%	\item[(c)] $\sigma^*=\sigma\geq 0$, $\sigma$ is not invertible (weak).
	\item[\hypertarget{(H3)}{(H3)}] $\begin{bmatrix}
        \sigma_{00}&\sigma_{10}\\
        \sigma_{10}&\sigma_{11}
    \end{bmatrix}^*=\begin{bmatrix}
        \sigma_{00}&\sigma_{10}\\
        \sigma_{10}&\sigma_{11}
    \end{bmatrix}$, $\sigma_{11}\geq 0$, $\ker \sigma_{11}\subseteq \ker \sigma_{01}$.
	\item[\hypertarget{(H4)}{(H4)}]$\dim \mathcal{H}<\infty.$
\end{itemize} 
In particular, we have already consider the three problems \hyperlink{(i)}{(i)-(iii)} under hypotheses \hyperlink{(H1)}{(H1)}  and \hyperlink{(H2)}{(H2)}  and derived from these the classical results in section \ref{sec:ClassicalResults}. In this paper, we are primarily interested in a relaxation of hypotheses \hyperlink{(H1)}{(H1)}  and \hyperlink{(H2)}{(H2)}  with a focus on hypotheses \hyperlink{(H3)}{(H3)}  and \hyperlink{(H4)}{(H4)}.

To understand the relationship between the hypotheses \hyperlink{(H1)}{(H1)-(H3)}, we have the following results.
\begin{lemma}\label{LemFundPosSemiDefImpliesKerRanBlockOpInclusion}
If $H=H_0\ho H_1$ is a Hilbert space, $A=[A_{i,j}]_{i,j=0,1}\in\mathcal{L}\left(H\right)$, and $A^*=A\geq 0$ then
\begin{align}
    \ker A_{11}&\subseteq \ker A_{01},\\
    \overline{\ran A_{10}}&\subseteq \overline{\ran A_{11}}.
\end{align}
\end{lemma}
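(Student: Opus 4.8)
The plan is to establish the two inclusions in order: first the kernel inclusion $\ker A_{11}\subseteq \ker A_{01}$ by a direct positivity argument, and then the range inclusion $\overline{\ran A_{10}}\subseteq\overline{\ran A_{11}}$ as a formal consequence of the first by passing to orthogonal complements. Let $P_0,P_1$ denote the orthogonal projections of $H$ onto $H_0,H_1$, so that $A_{ij}=P_iAP_j|_{H_j}$. Two preliminary observations set everything up. Since $A^*=A\geq 0$, its compression $A_{11}=P_1AP_1|_{H_1}$ is again self-adjoint and positive semidefinite, and the blocks inherit the relations $(A_{10})^*=(A^*)_{01}=A_{01}$ and $(A_{11})^*=A_{11}$ from $A^*=A$. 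The one analytic fact I would use is that a positive semidefinite operator admits a positive square root $A^{1/2}$ with $A=A^{1/2}A^{1/2}$, so that $(x,Ax)=\|A^{1/2}x\|^2$ for all $x\in H$.

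For the kernel inclusion, I would take $v\in\ker A_{11}$ and regard it as an element of $H$ via $H_1\hookrightarrow H$, so that $P_1v=v$ and $P_0v=0$. Since $v\in H_1$ we have $A_{11}v=P_1Av$, whence the scalar $(v,Av)=(v,P_1Av)=(v,A_{11}v)=0$. Positivity then upgrades this to the vanishing of the vector: $\|A^{1/2}v\|^2=(v,Av)=0$ gives $A^{1/2}v=0$ and hence $Av=0$ in $H$. Consequently $A_{01}v=P_0AP_1v=P_0Av=0$, i.e.\ $v\in\ker A_{01}$, which is exactly the first inclusion. (Alternatively, one may avoid computing $A^{1/2}$ and argue through the generalized Cauchy--Schwarz inequality $|(u,Av)|^2\leq(u,Au)(v,Av)=0$ for $u\in H_0$, which forces $A_{01}v=0$ directly.)

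The range inclusion then costs nothing beyond the standard identity $\overline{\ran T}=(\ker T^*)^\perp$ for a bounded operator $T$. Applying it with $T=A_{10}$ and using $(A_{10})^*=A_{01}$ yields $\overline{\ran A_{10}}=(\ker A_{01})^\perp$, while $T=A_{11}$ with $(A_{11})^*=A_{11}$ yields $\overline{\ran A_{11}}=(\ker A_{11})^\perp$. The inclusion $\ker A_{11}\subseteq\ker A_{01}$ just proved reverses under orthogonal complementation to $(\ker A_{01})^\perp\subseteq(\ker A_{11})^\perp$, and chaining the three relations gives $\overline{\ran A_{10}}\subseteq\overline{\ran A_{11}}$, as desired.

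I do not expect a serious obstacle in this argument; the only point that demands care is the bookkeeping between $H_0,H_1$ as abstract Hilbert spaces and as subspaces of $H$, i.e.\ correctly tracking where $P_0,P_1$ and the restrictions $|_{H_j}$ act when passing between the blocks $A_{ij}$ and the full operator $A$. The single genuine idea is that positivity promotes the vanishing of the quadratic form $(v,Av)$ to the vanishing of the vector $Av$ itself; once the kernel inclusion is in hand, the range inclusion is pure Hilbert-space duality.
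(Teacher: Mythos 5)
Your proof is correct and follows essentially the same route as the paper's: both take $v\in\ker A_{11}$, observe that $(v,Av)=(v,A_{11}v)=0$, use the square root $A^{1/2}$ to upgrade the vanishing of the quadratic form to $Av=0$ and hence $A_{01}v=0$, and then obtain the range inclusion by passing to orthogonal complements via $\overline{\ran T}=(\ker T^*)^\perp$. The bookkeeping with $P_0,P_1$ and the restrictions is handled correctly, so there is nothing to fix.
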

\begin{proof}
Assume the hypotheses. Let $y\in \ker A_{11}$. Then, 
\begin{gather}
		\left\Vert
		A^{1/2}\begin{bmatrix}
		0\\y
		\end{bmatrix}\right\Vert^2=\left(\begin{bmatrix}
		A_{00}&A_{01}\\
		A_{10}&A_{11}
		\end{bmatrix}\begin{bmatrix}
		0\\y
		\end{bmatrix},\begin{bmatrix}
		0\\y
		\end{bmatrix}\right)=\left(\begin{bmatrix}
		A_{01}y\\A_{11}y
		\end{bmatrix},\begin{bmatrix}
		0\\y
		\end{bmatrix}\right)\\
		=(A_{01}y,0)+(A_{11}y,y)=0.
	\end{gather}
	Hence, follows that
	\begin{gather}
		0=A^{1/2}\left(A^{1/2}\begin{bmatrix}
		0\\y
		\end{bmatrix}\right)=\begin{bmatrix}
		A_{00}&A_{01}\\
		A_{10}&A_{11}
		\end{bmatrix}\begin{bmatrix}
		0\\y
		\end{bmatrix}=\begin{bmatrix}
		A_{01}y\\A_{11}y
		\end{bmatrix}=\begin{bmatrix}
		A_{01}y\\0
		\end{bmatrix},
	\end{gather}
	which implies $A_{01}y=0$. Thus, $y\in\ker A_{01}$, which proves $\ker A_{11}\subseteq \ker A_{01}$. Finally, by the hypotheses along with the basic properties of adjoints and orthogonal complements, it follows
	\begin{gather}
		\overline{\ran A_{10}}=\overline{\ran (A_{01})^*}=(\ker A_{01})^\perp\subseteq(\ker A_{11})^\perp\\
		=\overline{\ran (A_{11})^*}=\overline{\ran A_{11}}.
	\end{gather}
	This completes the proof.
\end{proof}
\begin{corollary}
Consider the hypotheses \hyperlink{(H1)}{(H1)}, \hyperlink{(H2)}{(H2)}, and \hyperlink{(H3)}{(H3)} above. Then
\begin{align}
    \hyperlink{(H1)}{(H1)} \Rightarrow \hyperlink{(H2)}{(H2)} \; \&\; \hyperlink{(H3)}{(H3)},
\end{align}
and, if $\sigma^*=\sigma \geq 0$ then
\begin{align}
    \hyperlink{(H1)}{(H1)} \Rightarrow \hyperlink{(H2)}{(H2)} \Rightarrow \hyperlink{(H3)}{(H3)}.
\end{align}
\end{corollary}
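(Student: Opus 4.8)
The plan is to reduce everything to two elementary stability properties of orthogonal compressions together with Lemma~\ref{LemFundPosSemiDefImpliesKerRanBlockOpInclusion}. Write $P := \Gamma_0 + \Gamma_1$ for the orthogonal projection of $\mathcal{H}$ onto $\mathcal{U}\ho\mathcal{E}$, and recall that the $2\times 2$ block matrix appearing in \hyperlink{(H3)}{(H3)} is precisely the compression $P\sigma P|_{\mathcal{U}\ho\mathcal{E}}$, whose $(i,j)$-blocks with respect to $\mathcal{U}\ho\mathcal{E} = H_0\ho H_1$ are the operators $\sigma_{ij} = \Gamma_i\sigma\Gamma_j$ for $i,j\in\{0,1\}$; in particular its off-diagonal blocks are $\sigma_{01}$ and $\sigma_{10}$. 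I would first record the two facts used throughout: (i) if $\sigma^*=\sigma$ then $P\sigma P|_{\mathcal{U}\ho\mathcal{E}}$ is self-adjoint, since $(P\sigma P)^* = P\sigma^* P = P\sigma P$; and (ii) if $\sigma\geq 0$ then for every $z\in\mathcal{U}\ho\mathcal{E}$ one has $Pz=z$ and hence $(z, P\sigma P z) = (z,\sigma z)\geq 0$, so $P\sigma P|_{\mathcal{U}\ho\mathcal{E}}\geq 0$; the same computation with $\Gamma_1$ in place of $P$ gives $\sigma_{11}\geq 0$.

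First I would prove \hyperlink{(H1)}{(H1)} $\Rightarrow$ \hyperlink{(H2)}{(H2)}. Under \hyperlink{(H1)}{(H1)} the relation $\sigma^*=\sigma$ is immediate, and fact (ii) applied to $\Gamma_1$ gives $\sigma_{11}\geq 0$. For invertibility of $\sigma_{11}$ I would argue exactly as in the proof of Corollary~\ref{CorClassicalBasicInvPosPropEffOp}: a self-adjoint operator $\sigma\geq 0$ that is invertible satisfies $\sigma\geq\delta I_{\mathcal{H}}$ for some $\delta>0$, and compressing to $\mathcal{E}$ yields $(x,\sigma_{11}x) = (x,\sigma x)\geq\delta\|x\|^2$ for all $x\in\mathcal{E}$, whence $\sigma_{11}\geq\delta I_{\mathcal{E}}$ is invertible. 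This gives \hyperlink{(H2)}{(H2)}.

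Next I would establish the implications into \hyperlink{(H3)}{(H3)}. The self-adjointness of the block matrix in \hyperlink{(H3)}{(H3)} is just self-adjointness of $P\sigma P|_{\mathcal{U}\ho\mathcal{E}}$, which follows from fact (i) under the condition $\sigma^*=\sigma$ present in both \hyperlink{(H1)}{(H1)} and \hyperlink{(H2)}{(H2)}; and $\sigma_{11}\geq 0$ has already been secured. For the inclusion $\ker\sigma_{11}\subseteq\ker\sigma_{01}$ I would supply the route appropriate to each case: when $\sigma_{11}$ is invertible (as under \hyperlink{(H1)}{(H1)} and under \hyperlink{(H2)}{(H2)}) one has $\ker\sigma_{11}=\{0\}$ and the inclusion is trivial; and under the blanket assumption $\sigma^*=\sigma\geq 0$, fact (ii) gives $P\sigma P|_{\mathcal{U}\ho\mathcal{E}}\geq 0$, so applying Lemma~\ref{LemFundPosSemiDefImpliesKerRanBlockOpInclusion} to this self-adjoint, positive semidefinite $2\times 2$ block operator returns $\ker\sigma_{11}\subseteq\ker\sigma_{01}$ directly. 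This proves \hyperlink{(H1)}{(H1)} $\Rightarrow$ \hyperlink{(H3)}{(H3)} and \hyperlink{(H2)}{(H2)} $\Rightarrow$ \hyperlink{(H3)}{(H3)}; combined with the first step it yields both the assertion \hyperlink{(H1)}{(H1)} $\Rightarrow$ \hyperlink{(H2)}{(H2)} \& \hyperlink{(H3)}{(H3)} and, under $\sigma^*=\sigma\geq 0$, the chain \hyperlink{(H1)}{(H1)} $\Rightarrow$ \hyperlink{(H2)}{(H2)} $\Rightarrow$ \hyperlink{(H3)}{(H3)}.

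I do not expect a genuine obstacle, as the corollary is essentially a bookkeeping exercise once Lemma~\ref{LemFundPosSemiDefImpliesKerRanBlockOpInclusion} and the compression facts (i)--(ii) are in hand. The one point I would check carefully is the block identification: that the $(0,1)$-block of the compression $P\sigma P|_{\mathcal{U}\ho\mathcal{E}}$ really is $\sigma_{01}$, so that Lemma~\ref{LemFundPosSemiDefImpliesKerRanBlockOpInclusion} delivers exactly the inclusion $\ker\sigma_{11}\subseteq\ker\sigma_{01}$ required by \hyperlink{(H3)}{(H3)} and not some other off-diagonal block. I would verify this from $\sigma_{ij}=\Gamma_i\sigma\Gamma_j$ before invoking the lemma.
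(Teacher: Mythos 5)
Your proposal is correct and follows essentially the same route as the paper: (H1) $\Rightarrow$ (H2) is obtained from the lower bound $\sigma\geq\delta I_{\mathcal{H}}$ passed to the compressions (this is exactly the paper's Corollary~\ref{CorClassicalBasicInvPosPropEffOp}), and the kernel inclusion in (H3) is obtained from Lemma~\ref{LemFundPosSemiDefImpliesKerRanBlockOpInclusion} applied to the $2\times 2$ compression $[\sigma_{ij}]_{i,j=0,1}$ on $\mathcal{U}\ho\mathcal{E}$, which is the paper's identification as well. Your extra remark that $\ker\sigma_{11}=\{0\}$ renders the inclusion trivial whenever $\sigma_{11}$ is invertible is a harmless strengthening (it shows (H2) $\Rightarrow$ (H3) without invoking $\sigma\geq 0$) that the paper does not record but that does not change the argument.
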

\begin{proof}
We have already proven in Corollary \ref{CorClassicalBasicInvPosPropEffOp} that $\hyperlink{(H1)}{(H1)} \Rightarrow \hyperlink{(H2)}{(H2)} $. Hence, to prove the corollary it suffices to prove that $\hyperlink{(H2)}{(H2)} \Rightarrow \hyperlink{(H3)}{(H3)}$ under the hypothesis $\sigma^*=\sigma \geq 0$. Thus, assume $\sigma^*=\sigma \geq 0$ and the hypotheses \hyperlink{(H2)}{(H2)} . Then the result follows immediately from Lemma \ref{LemFundPosSemiDefImpliesKerRanBlockOpInclusion} with the identifications
\begin{gather}
    H=H_0\ho H_1, H_0=\mathcal{U}, H_1=\mathcal{E},\\ A=[A_{ij}]_{i,j=0,1}=[\sigma_{ij}]_{i,j=0,1}.
\end{gather}
This completes the proof.
\end{proof}

\begin{remark}
Appendix \ref{SectAbsTheoryCompositesVecSpFramework} gives the most compelling reason why hypothesis \hyperlink{(H3)}{(H3)}, specifically, $\ker \sigma_{11}\subseteq \ker \sigma_{01}$, comes into play in this paper when treating problem \hyperlink{(ii)}{(ii)}. It is based on some very general results (Theorem \ref{thm:FundThmExistenceUniquenessSolvabilityZProbEffOpVecSp} and Corollary \ref{cor:KeyResultAppendixComposites}) which are more fitting for the appendix then in this section.
\end{remark}

Our secondary goal of this paper is to give a class of examples for which our results apply and hypotheses \hyperlink{(H4)}{(H4)} is satisfied. This objective is important for two reasons. First, there are currently very few such physical models for which the abstract framework of the $Z$-problem and effective operator applies under hypothesis \hyperlink{(H4)}{(H4)}, whereas there are many (e.g., within the theory of composites) when hypothesis \hyperlink{(H4)}{(H4)} is false such as the above example (Example \ref{ExContinuumPeriodicCondZProb}) of continuum electrical conductivity in periodic media with effective conductivity. Second, the mathematical theory described in this paper is already rich enough when hypothesis \hyperlink{(H4)}{(H4)} is true and becomes much more involved when hypothesis \hyperlink{(H4)}{(H4)} is false. It is mainly for these reasons that we consider the problems we do in Section \ref{sec:DiscreteNetworkExamples}.

\section{\label{sec:VarPrincConstLinearEqsUnifiedFramework}A unified framework for the variational principles}
In this section, we introduce a unified framework for solving constrained linear equations using variational principles which is based on $2\times 2$ block operator matrices that admit a block UDL factorization of a special form (U - upper triangular, D - diagonal, L - lower triangular). This will be use to derive solutions to $Z$-problems and their associated effective operators under weaker hypotheses. It can also be useful in deriving variational principles for other problems in the abstract theory of composites and related problems in matrix/operator theory (for more on this see Ref.\ \onlinecite{22KB}). Hence, we believe this unified framework will have applications outside this paper.

We begin by recalling\cite{03BG, 19FIS} the definition of the Moore-Penrose pseudoinverse and some of it's fundamental properties.
\begin{definition}[Moore-Penrose pseudoinverse]\label{DefMP}
	Let $H,\mathcal{H}$ be two finite-dimensional Hilbert spaces and $X\in\mathcal{L}(H, \mathcal{H})$. Then the Moore-Penrose pseudoinverse $X^+\in\mathcal{L}(\mathcal{H}, H)$ is the unique linear operator that satisfies the four Penrose equations:
		\begin{enumerate}[(1)]
			\item $X^+XX^+=X^+,$
			\item $XX^+X=X,$
			\item $(X^+X)^*=X^+X,$
			\item $(XX^+)^*=XX^+.$
		\end{enumerate}
\end{definition}
\begin{notation}
     For any Hilbert space $H$ with $\dim H<\infty$ and for any subspace $S$ of $H$, we denote the orthogonal projection of $H$ onto $S$ by $\Gamma_S$.
\end{notation}
\begin{lemma}\label{LemMPProp}
    Let $H,\mathcal{H}$ be two finite-dimensional Hilbert spaces and $X\in\mathcal{L}(H, \mathcal{H})$. Then
		\begin{enumerate}[(1)]
			\item $(X^+)^*=(X^*)^+$,
			\item $XX^+=\Gamma_{\ran X},\;I-XX^+=\Gamma_{\ker X^*}$,
			\item
			      $X^+X=\Gamma_{\ran X^*},\;I-X^+X=\Gamma_{\ker X}$,
			\item If $X=X^*$ then $X^+X=XX^+$.
			\item If $X$ is invertible then $X^{-1}=X^+$.
			\item $(X^+)^+=X.$
		\end{enumerate}
\end{lemma}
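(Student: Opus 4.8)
The plan is to derive all six properties directly from the four defining Penrose equations of Definition \ref{DefMP} together with the \emph{uniqueness} clause built into that definition: to identify an operator as the pseudoinverse of some $Y$, it suffices to verify that it satisfies the four Penrose equations for $Y$. Properties (1), (5), and (6) are exactly of this verification type. For (5), when $X$ is invertible I would simply observe that $X^{-1}$ satisfies all four Penrose equations for $X$ (the products $X^{-1}X$ and $XX^{-1}$ both equal the self-adjoint identity $I$), so uniqueness forces $X^+=X^{-1}$. For (6), I would verify that $X$ itself satisfies the four Penrose equations for the operator $Y:=X^+$: equations (2) and (1) for $X$ supply $XX^+X=X$ and $X^+XX^+=X^+$, which are precisely equations (1) and (2) for $Y$, while (4) and (3) for $X$ give the two required self-adjointness conditions $(XX^+)^*=XX^+$ and $(X^+X)^*=X^+X$; uniqueness then yields $(X^+)^+=X$.

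Property (1) is the most delicate step and I expect it to be the main obstacle, since it requires tracking how the four Penrose equations transform under adjunction and matching them against the Penrose equations for $X^*$. Setting $Y:=X^+$, I would take adjoints of equations (1) and (2) for $X$ to obtain $Y^*X^*Y^*=Y^*$ and $X^*Y^*X^*=X^*$, which are equations (1) and (2) for $X^*$ with $Y^*$ in the role of $(X^*)^+$. The subtle bookkeeping lies in the self-adjointness conditions: equation (4) for $X$, namely $(XX^+)^*=XX^+$, is exactly what certifies equation (3) for $X^*$, while dually equation (3) for $X$ supplies equation (4) for $X^*$; that is, the roles of (3) and (4) interchange under adjunction. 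Having checked all four, uniqueness gives $(X^+)^*=(X^*)^+$.

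For (2) I would argue that $P:=XX^+$ is an orthogonal projection: it is self-adjoint by Penrose equation (4), and idempotent since $P^2=X(X^+XX^+)=XX^+=P$ by equation (1). To pin down its range, the inclusion $\ran P\subseteq\ran X$ is immediate, while for $y=Xh\in\ran X$ equation (2) gives $Py=(XX^+X)h=Xh=y$, so $\ran P=\ran X$ and hence $P=\Gamma_{\ran X}$. The complementary identity follows from $I-\Gamma_{\ran X}=\Gamma_{(\ran X)^\perp}=\Gamma_{\ker X^*}$, using the standard finite-dimensional fact $(\ran X)^\perp=\ker X^*$. Property (3) I would then obtain without repeating the computation, by applying (2) to $X^*$ and invoking (1): since $X^*(X^*)^+=\Gamma_{\ran X^*}$ and $X^*(X^*)^+=X^*(X^+)^*=(X^+X)^*=X^+X$ by (1) and Penrose equation (3), this reads $X^+X=\Gamma_{\ran X^*}$, with the complement giving $\Gamma_{\ker X}$.

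Finally, (4) is a short corollary of the projection descriptions: when $X=X^*$ we have $\ran X=\ran X^*$, so (2) and (3) give $XX^+=\Gamma_{\ran X}=\Gamma_{\ran X^*}=X^+X$. Throughout, the only external ingredients are the elementary orthogonal-complement relation $(\ran X)^\perp=\ker X^*$ and the characterization of orthogonal projections as self-adjoint idempotents; everything else is formal manipulation of the four Penrose equations, so the proof is essentially bookkeeping, with the sole genuine care required at step (1).
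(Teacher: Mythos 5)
Your proof is correct, and all six verifications check out: the uniqueness clause in Definition~\ref{DefMP} licenses each identification, the adjoint bookkeeping in step (1) (with Penrose equations (3) and (4) swapping roles under adjunction) is handled properly, and the projection argument for (2) together with the deductions of (3) and (4) from it are sound. The paper itself offers no proof of this lemma --- it simply cites standard references --- so there is nothing to compare against; your argument is the standard self-contained verification and would serve as a complete proof.
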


Next, we consider the solutions of a class of constrained linear equations involving an operator having a special block factorization and then describe a fundamental variational principle for solving the equations.
\begin{lemma}\label{LemGenConstEq}
	Let $H=H_0\ho H_1$ be a Hilbert space with $\dim H<\infty$.
	If $X\in\mathcal{L}(H)$ has the $2\times 2$ block factorization
	\begin{gather}\label{LemL2GenMinVarPrincBlockXDef}
		X=\begin{bmatrix}
		I_{H_0} & Y^*\\
		0 & I_{H_1}
		\end{bmatrix}\begin{bmatrix}
		W & 0\\
		0 & Z
		\end{bmatrix}\begin{bmatrix}
		I_{H_0} & 0\\
		Y & I_{H_1}
		\end{bmatrix},\\
		W\in \mathcal{L}(H_0),\; Y\in \mathcal{L}(H_0,H_1),\; Z\in\mathcal{L}(H_1),\label{ConstrOne}\\
		W^*=W\text{ and } Z^*=Z\label{ConstrTwo}
	\end{gather}
	then, for each $u\in H_0$, the set of all solutions of the constrained linear equation
	\begin{equation}
		X(u+v)=w,\;\;(w,v)\in H_0\times H_1,\label{LabelConstrainedEqsForMatrixA}
	\end{equation}
	is given by
	\begin{align}
			\{(w,v)\in H_0\times H_1:w=Wu,\;v=-Z^+ZYu+v_1,\;v_1\in \ker Z\}. 
	\end{align}
\end{lemma}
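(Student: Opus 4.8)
The plan is to exploit the explicit block factorization (\ref{LemL2GenMinVarPrincBlockXDef}) directly rather than invert anything. First I would compute the action of $X$ on a generic vector $u+v$ with $u\in H_0$, $v\in H_1$ by applying the three factors in order. Applying the lower-triangular factor sends $\begin{bmatrix} u\\ v\end{bmatrix}$ to $\begin{bmatrix} u\\ Yu+v\end{bmatrix}$; the diagonal factor then produces $\begin{bmatrix} Wu\\ Z(Yu+v)\end{bmatrix}$; and the upper-triangular factor yields $X(u+v)=\begin{bmatrix} Wu+Y^*Z(Yu+v)\\ Z(Yu+v)\end{bmatrix}$. Since the right-hand side $w$ lies in $H_0$, the constrained equation (\ref{LabelConstrainedEqsForMatrixA}) is equivalent to the pair of component equations $Wu+Y^*Z(Yu+v)=w$ (the $H_0$-part) and $Z(Yu+v)=0$ (the $H_1$-part).

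Next I would solve this pair. Feeding the $H_1$-equation $Z(Yu+v)=0$ back into the $H_0$-equation annihilates the term $Y^*Z(Yu+v)$, so the $H_0$-equation collapses to $w=Wu$. Thus $w$ is forced to equal $Wu$, and the whole problem reduces to describing all $v\in H_1$ with $Z(Yu+v)=0$, i.e. all $v$ for which $Yu+v\in\ker Z$. The raw solution set is therefore $\{v=-Yu+t:\ t\in\ker Z\}$.

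It remains to match this with the claimed form $v=-Z^+ZYu+v_1$, $v_1\in\ker Z$. Here I would invoke Lemma \ref{LemMPProp}: because $Z^*=Z$, part (3) gives $Z^+Z=\Gamma_{\ran Z^*}=\Gamma_{\ran Z}$ and $I-Z^+Z=\Gamma_{\ker Z}$. Writing the orthogonal decomposition $Yu=Z^+ZYu+(I-Z^+Z)Yu$, the second summand $(I-Z^+Z)Yu=\Gamma_{\ker Z}Yu$ lies in $\ker Z$, so $-Yu+t=-Z^+ZYu+\bigl(t-\Gamma_{\ker Z}Yu\bigr)$ with $v_1:=t-\Gamma_{\ker Z}Yu\in\ker Z$. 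As $t$ ranges over $\ker Z$ the quantity $v_1$ ranges over all of $\ker Z$ (an affine shift within the subspace $\ker Z$ is a bijection of $\ker Z$ onto itself), so the two parameterizations describe the same set, which establishes the asserted solution set.

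The computation is essentially bookkeeping, so I do not expect a genuine obstacle; the one point requiring care is the final reparameterization, where the identity of the two description of the solution set rests precisely on the pseudoinverse facts $Z^+Z=\Gamma_{\ran Z}$ and $I-Z^+Z=\Gamma_{\ker Z}$ together with $Z^*=Z$. Finite-dimensionality of $H$ is used only to guarantee that $Z^+$ is available via Definition \ref{DefMP} and Lemma \ref{LemMPProp}.
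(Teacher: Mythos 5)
Your proof is correct and follows essentially the same route as the paper's: both use the block factorization to reduce the constrained equation to the pair $w=Wu$ and $Z(Yu+v)=0$, and then describe the solution set of the latter via the Moore--Penrose identities of Lemma \ref{LemMPProp}. The only cosmetic difference is that the paper moves the upper-triangular factor to the right-hand side by inverting it, while you multiply the factors out and observe that the $H_1$-equation kills the cross term $Y^*Z(Yu+v)$; your explicit coset reparameterization at the end is a slightly more detailed version of the paper's appeal to Lemma \ref{LemMPProp}.(3).
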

\begin{proof}
	Assume the hypotheses. Then, for each $u\in H_0$,
	\begin{gather*}
		X(u+v)=w,\;\;(w,v)\in H_0\times H_1 \\
		\Leftrightarrow
		\begin{bmatrix}
			W & 0 \\
			0 & Z 
		\end{bmatrix}\begin{bmatrix}
		I_{H_0} & 0\\
		Y & I_{H_1}
		\end{bmatrix}\begin{bmatrix}
		u \\ v
		\end{bmatrix}=\begin{bmatrix}
		I_{H_0} & -Y^*\\
		0 & I_{H_1}
		\end{bmatrix}\begin{bmatrix}
		w \\ 0
		\end{bmatrix}\\
		\Leftrightarrow
		\begin{bmatrix}
			Wu \\ Z(Yu+v)
		\end{bmatrix}=\begin{bmatrix}
		w \\ 0
		\end{bmatrix}
		\Leftrightarrow
		w=Wu,\;\;-ZYu=Zv\\
		\underset{\text{Lemma}\;\ref{LemMPProp}.(3)}{\Leftrightarrow}
		v=v_0+v_1,\;\;v_1\in\ker Z,\;\;v_0=-Z^+ZYu,\;\;w=Wu.
	\end{gather*}
	This completes the proof.
\end{proof}

\begin{theorem}[Fundamental minimization principle]\label{ThmGenMinVarPrinc}
	Let $H=H_0\ho H_1$ be a Hilbert space with $\dim H<\infty$.
	If $X\in\mathcal{L}(H)$ has the $2\times 2$ block factorization (\ref{LemL2GenMinVarPrincBlockXDef})-(\ref{ConstrTwo}) and $Z\geq 0$,
	then $W$ is the unique self-adjoint operator satisfying the minimization principle
	\begin{equation}
		\left(u,Wu\right)=\min_{v\in H_1}\left( u+v, X(u+v) \right), \text{ for each } u\in H_0.
	\end{equation}
	Furthermore, for each $u\in H_0$, the set of minimizers is given by 
	\begin{eqnarray}
	\{v\in H_1:v=-(Z^+ZY)u+v_1,\;v_1\in \ker{Z}\}.
	\end{eqnarray}
\end{theorem}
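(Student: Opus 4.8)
The plan is to exploit the given block factorization of $X$ directly, decomposing the quadratic form $(u+v,X(u+v))$ into a piece independent of $v$ plus a manifestly nonnegative piece, in exactly the spirit of the proof of Theorem \ref{ThmSchurComplConstrMinPrinc}. First I would fix $u\in H_0$ and, identifying $u+v$ with the column vector $[u\;\,v]^T$ relative to $H=H_0\ho H_1$, apply the three factors of (\ref{LemL2GenMinVarPrincBlockXDef}) in turn: the lower factor sends $[u\;\,v]^T$ to $[u\;\;Yu+v]^T$, the diagonal factor to $[Wu\;\;Z(Yu+v)]^T$, and the upper factor to $[Wu+Y^*Z(Yu+v)\;\;Z(Yu+v)]^T$. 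Pairing this against $[u\;\,v]^T$ and using the adjoint relation $(u,Y^*Z(Yu+v))=(Yu,Z(Yu+v))$, the two cross terms recombine and I expect the clean identity
\[
(u+v,X(u+v))=(u,Wu)+(Yu+v,Z(Yu+v)),\qquad \forall v\in H_1.
\]
This identity is the crux of the argument and reduces everything to the single term $(Yu+v,Z(Yu+v))$.

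Second, using $Z=Z^*\geq 0$ I would write $Z=(Z^{1/2})^2$ with $Z^{1/2}\geq 0$ self-adjoint, so that $(Yu+v,Z(Yu+v))=\|Z^{1/2}(Yu+v)\|^2\geq 0$. This yields the lower bound $(u+v,X(u+v))\geq (u,Wu)$ for all $v$, with equality iff $Z^{1/2}(Yu+v)=0$; since $\ker Z^{1/2}=\ker Z$, equality holds iff $Yu+v\in\ker Z$, equivalently $Zv=-ZYu$. As this condition is attainable (e.g.\ $v=-Yu$), the minimum value is exactly $(u,Wu)$, establishing the minimization principle.

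Third, I would identify the minimizer set. The constraint $Zv=-ZYu$ is precisely the one solved in the proof of Lemma \ref{LemGenConstEq}; applying Lemma \ref{LemMPProp}.(3) (specifically $I-Z^+Z=\Gamma_{\ker Z}$) in the same manner shows its solution set is $\{v=-Z^+ZYu+v_1:\,v_1\in\ker Z\}$, matching the claimed description. For the uniqueness clause, I would note that $W^*=W$ and that the minimization principle pins down the quadratic form $u\mapsto(u,Wu)$ for every $u\in H_0$; since a self-adjoint operator on a finite-dimensional Hilbert space is uniquely determined by its quadratic form (the fact from Ref.\ \onlinecite{80JW} already used in Theorem \ref{ThmSchurComplConstrMinPrinc}), any self-adjoint operator satisfying the principle must equal $W$.

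I anticipate the main obstacle to be the bookkeeping in the first step: correctly expanding the triple block product and recognizing that the cross terms collapse into the single nonnegative form $(Yu+v,Z(Yu+v))$. A minor secondary point is the reconciliation showing that $-Yu+\ker Z$ and $-Z^+ZYu+\ker Z$ are the same affine subspace, which follows at once from $-Yu=-Z^+ZYu-\Gamma_{\ker Z}Yu$ together with $\Gamma_{\ker Z}Yu\in\ker Z$.
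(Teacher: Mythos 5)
Your proposal is correct and follows essentially the same route as the paper's proof: the same expansion of $(u+v,X(u+v))$ via the UDL factorization into $(u,Wu)+(Yu+v,Z(Yu+v))$, the same characterization of the equality case via $Zv=-ZYu$ and Lemma \ref{LemMPProp}, and the same quadratic-form uniqueness argument. The only cosmetic difference is that you apply all three factors to the right-hand argument and then transfer the cross term with $(u,Y^*w)=(Yu,w)$, whereas the paper moves the adjoint of the upper-triangular factor to the left argument; your explicit use of $Z^{1/2}$ to justify "$(Yu+v,Z(Yu+v))=0$ iff $Z(Yu+v)=0$" is a small but welcome clarification of a step the paper leaves implicit.
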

\begin{proof}  
	Assume the hypotheses. Then, for any $(u,v)\in H_0\times H_1$, it follows that
	\begin{gather*}
		( u+v, X(u+v) ) = 
		\left(\begin{bmatrix}
		u\\v
		\end{bmatrix},
		\begin{bmatrix}
			I_{H_0} & Y^*     \\
			0       & I_{H_1} 
		\end{bmatrix}
		\begin{bmatrix}
			W & 0 \\
			0 & Z 
		\end{bmatrix}
		\begin{bmatrix}
			I_{H_0} & 0       \\
			Y       & I_{H_1} 
		\end{bmatrix}
		\begin{bmatrix}
			u \\v
		\end{bmatrix}\right)\\
		= \left(\begin{bmatrix}
		I_{H_0} & Y^*\\
		0 & I_{H_1}
		\end{bmatrix}^*
		\begin{bmatrix}
			u \\v
		\end{bmatrix},
		\begin{bmatrix}
			W & 0 \\
			0 & Z 
		\end{bmatrix}
		\begin{bmatrix}
			I_{H_0} & 0       \\
			Y       & I_{H_1} 
		\end{bmatrix}
		\begin{bmatrix}
			u \\v
		\end{bmatrix}\right)\\
		= \left( \begin{bmatrix}
		I_{H_0} & 0\\
		Y & I_{H_1}
		\end{bmatrix}\begin{bmatrix}u\\v\end{bmatrix},\begin{bmatrix}
		W & 0\\
		ZY & Z
		\end{bmatrix}\begin{bmatrix}u\\v\end{bmatrix}\right)\\
		=\left( \begin{bmatrix}
		u\\
		Yu+v
		\end{bmatrix},\begin{bmatrix}
		W(u)\\
		Z(Yu+v)
		\end{bmatrix}\right)\\
		=(u,Wu)+(Yu+v,Z(Yu+v))\geq (u,Wu),
	\end{gather*}
	with equality if and only if $(Yu+v,Z(Yu+v))=0$. By hypothesis $Z^*=Z\geq 0$ and we have
	\begin{gather*}
		Z(Yu+v)=0\Leftrightarrow Z(Yu)+Zv=0\\
		\Leftrightarrow Zv=-Z(Yu)
		\Leftrightarrow Z^+Zv=-Z^+ZYu\\
		\Leftrightarrow v=v_0+v_1,\;v_1\in\ker Z, v_0=-Z^+ZYu.
	\end{gather*}
	The proof of the theorem now follows from this and the fact that $W\in\mathcal{L}\left(H_{0}\right)  $ is
self-adjoint so is uniquely determined by its quadratic form\cite{80JW}.
\end{proof}

The preceding results will be used in conjunction with the following to derive the solutions to the $Z$-problem and its effective operator under weaker hypotheses.

\begin{definition}[Generalized Schur complement]
    Let $H=H_0\ho H_1$ be a Hilbert space with $\dim H<\infty$ and $X=[X_{ij}]_{i,j=0,1}\in\mathcal{L}(H)$. The generalized Schur complement (gsc) of $X$ with respect to $X_{11}$ is
	\begin{equation}
		X/X_{11}=X_{00}-X_{01}X_{11}^{+}X_{10}.
	\end{equation}
%	Similarly, the generalized Schur complement of $X$ with respect to $X_{00}$ is
%	\begin{equation}
%		X/X_{00}=X_{11}-X_{10}X_{00}^{+}X_{01}.
%	\end{equation}
\end{definition}

\begin{proposition}[Aitken block-diagonalization conditions]\label{PropgscFact}
	Let $H=H_0\ho H_1$ be a Hilbert space with $\dim H<\infty$. If $X=[X_{ij}]_{i,j=0,1}\in\mathcal{L}(H)$ and $X^*=X$ then
	\begin{gather}\label{gscfact1}
		X=\begin{bmatrix}
		I_{H_0} & (X_{11}^+X_{10})^*\\
		0 & I_{H_1}
		\end{bmatrix}\begin{bmatrix}
		X/X_{11} & 0\\
		0 & X_{11}
		\end{bmatrix}\begin{bmatrix}
		I_{H_0} & 0\\
		X_{11}^+X_{10} & I_{H_1}
		\end{bmatrix},
	\end{gather}
	if and only if \vspace{-0.5em}
	\begin{gather}\label{gscfact2}
		\ker X_{11}\subseteq \ker X_{01}\; (\text{i.e., } 
		\ran X_{10}\subseteq \ran X_{11}).
	\end{gather}
	In particular, if $X\in\mathcal{L}(H)$ and $X^*=X\geq 0$ then (\ref{gscfact1}) and (\ref{gscfact2}) are true.
\end{proposition}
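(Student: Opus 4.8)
The plan is to verify (\ref{gscfact1}) by carrying out the block multiplication on its right-hand side and then reading off precisely when each resulting block agrees with the corresponding block of $X$. Writing $P=X_{11}^{+}X_{10}$ and $S=X/X_{11}=X_{00}-X_{01}X_{11}^{+}X_{10}$, multiplying the three factors of (\ref{gscfact1}) together gives
\[
\begin{bmatrix}
S+P^{*}X_{11}P & P^{*}X_{11}\\
X_{11}P & X_{11}
\end{bmatrix}.
\]
The $(1,1)$ block is already $X_{11}$ with no hypothesis needed, so the entire question reduces to matching the remaining three blocks.

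First I would show that the $(0,0)$ block always collapses to $X_{00}$. Since $X^{*}=X$, Lemma \ref{LemMPProp}.(1) gives $(X_{11}^{+})^{*}=(X_{11}^{*})^{+}=X_{11}^{+}$ and $X_{10}^{*}=X_{01}$, so that $P^{*}=X_{01}X_{11}^{+}$; combining this with the first Penrose equation $X_{11}^{+}X_{11}X_{11}^{+}=X_{11}^{+}$ (Def.\ \ref{DefMP}) collapses $P^{*}X_{11}P$ to $X_{01}X_{11}^{+}X_{10}$, whence $S+P^{*}X_{11}P=X_{00}$ identically. Consequently the factorization (\ref{gscfact1}) holds \emph{if and only if} the two off-diagonal blocks simultaneously match $X_{01}$ and $X_{10}$.

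Next I would analyze those off-diagonal blocks using the projection formulas of Lemma \ref{LemMPProp}.(2)--(3). The lower-left block is $X_{11}P=X_{11}X_{11}^{+}X_{10}=\Gamma_{\ran X_{11}}X_{10}$, which equals $X_{10}$ exactly when $\ran X_{10}\subseteq \ran X_{11}$; the upper-right block is $P^{*}X_{11}=X_{01}X_{11}^{+}X_{11}=X_{01}\Gamma_{\ran X_{11}}$ (using $X_{11}^{*}=X_{11}$, so $\ran X_{11}^{*}=\ran X_{11}$), which equals $X_{01}$ exactly when $X_{01}$ annihilates $(\ran X_{11})^{\perp}=\ker X_{11}$, i.e.\ when $\ker X_{11}\subseteq \ker X_{01}$. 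These two conditions coincide: from $X^{*}=X$ we have $X_{10}=X_{01}^{*}$, hence $\ran X_{10}=(\ker X_{01})^{\perp}$ and $\ran X_{11}=(\ker X_{11})^{\perp}$, and taking orthogonal complements turns $\ran X_{10}\subseteq \ran X_{11}$ into $\ker X_{11}\subseteq \ker X_{01}$, which is exactly the parenthetical equivalence in (\ref{gscfact2}). This settles both directions of the ``if and only if'' at once. For the final assertion, when $X^{*}=X\geq 0$ the inclusion $\ker X_{11}\subseteq\ker X_{01}$ is precisely the conclusion of Lemma \ref{LemFundPosSemiDefImpliesKerRanBlockOpInclusion}, so (\ref{gscfact2}) and therefore (\ref{gscfact1}) hold automatically.

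The main obstacle I anticipate is the pseudoinverse bookkeeping: one must invoke self-adjointness of the \emph{full} operator $X$ (not merely of $X_{11}$) at two separate points — to identify $P^{*}$ with $X_{01}X_{11}^{+}$ in the $(0,0)$-block cancellation, and to convert between the range and kernel forms of (\ref{gscfact2}) — while applying the projection identities of Lemma \ref{LemMPProp} to $X_{11}$ with its correct adjoint. The finite-dimensionality hypothesis $\dim H<\infty$ guarantees all ranges are closed, so no closure subtleties enter the complement computations.
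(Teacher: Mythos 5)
Your proposal is correct and follows essentially the same route as the paper's proof: multiply out the three block factors, observe that the $(0,0)$ and $(1,1)$ blocks match $X_{00}$ and $X_{11}$ unconditionally (using the Penrose identity $X_{11}^{+}X_{11}X_{11}^{+}=X_{11}^{+}$ and self-adjointness of $X$), and reduce the factorization to the two off-diagonal conditions $X_{11}X_{11}^{+}X_{10}=X_{10}$ and $X_{01}X_{11}^{+}X_{11}=X_{01}$, which via the projection formulas of Lemma \ref{LemMPProp} are exactly (\ref{gscfact2}), with the positive-semidefinite case handled by Lemma \ref{LemFundPosSemiDefImpliesKerRanBlockOpInclusion}. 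Your explicit verification that the kernel and range forms of (\ref{gscfact2}) coincide is a welcome bit of extra care that the paper leaves implicit.
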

\begin{proof}
	Assume the hypotheses. Then
	\begin{gather}
		\begin{bmatrix}
			I_{H_0} & (X_{11}^+X_{10})^* \\
			0                 & I_{H_1}  
		\end{bmatrix}
		\begin{bmatrix}
			X/X_{11} & 0      \\
			0        & X_{11} 
		\end{bmatrix}
		\begin{bmatrix}
			I_{H_0} & 0                 \\
			X_{11}^+X_{10}    & I_{H_1} 
		\end{bmatrix}\\=
		\begin{bmatrix}
			I_{H_0} & (X_{11}^+X_{10})^* \\
			0                 & I_{H_1}  
		\end{bmatrix}
		\begin{bmatrix}
			X/X_{11}             & 0      \\
			X_{11}X_{11}^+X_{10} & X_{11} 
		\end{bmatrix}\\=
		\begin{bmatrix}
			X/X_{11}+(X_{11}X_{10})^*X_{11}X_{11}^+X_{10} & (X_{11}^+X_{10})^*X_{11} \\
			X_{11}X_{11}^+X_{10}                          & X_{11}                   
		\end{bmatrix}\\=
		\begin{bmatrix}
			X_{00}-X_{01}X_{11}^+X_{10}+X_{01}X_{11}^+X_{11}X_{11}^+X_{10} & X_{01}X_{11}^+X_{11} \\
			X_{11}X_{11}^+X_{10}                                           & X_{11}               
		\end{bmatrix}\\=
		\begin{bmatrix}
			X_{00}               & X_{01}X_{11}^+X_{11} \\
			X_{11}X_{11}^+X_{10} & X_{11}               
		\end{bmatrix},
	\end{gather}
	which is equal to $X$ if and only if $X_{01}X_{11}^+X_{11}=X_{01}$ and $X_{11}X_{11}^+X_{10}=X_{10}$.
	Hence, by Lemma \ref{LemMPProp}, we have equality if and only if $\ker X_{01}\subseteq \ker X_{11}$ or, equivalently, $\ran X_{11}\subseteq \ran X_{01}$. In particular, if $X\in\mathcal{L}(H)$ and $X^*=X\geq 0$ then (\ref{gscfact2}) is true by Lemma \ref{LemFundPosSemiDefImpliesKerRanBlockOpInclusion} and hence, from our proof just given, it follows (\ref{gscfact1}) is also true. This completes the proof.
\end{proof}

\begin{theorem}[Gen.\ Schur complement: min.\ principle]\label{ThmgscMinPrinc}
	Let $H=H_0\ho H_1$ be a Hilbert space with $\dim(H)<\infty$. If $X=[X_{ij}]_{i,j=0,1}\in\mathcal{L}(H)$, $X^*=X$, $X_{11}\geq 0$, and $\ker X_{11}\subseteq \ker X_{01}$ then $X/X_{11}$ is the unique self-adjoint operator satisfying the minimization principle
	\begin{equation}
		\left( u,(X/X_{11})u\right)=\min_{v\in H_1}\left( u+v, X(u+v) \right), \text{ for each } u\in H_0. \label{ThmgscMinPrincPart1}
	\end{equation}
	Furthermore, for each $u\in H_0$, the set of minimizers is given by 
	\begin{eqnarray}
\{v\in H_1:v=-(X_{11})^+X_{10}x+v_1,\; v_1\in \operatorname{Ker}{X_{11}}\}.\label{ThmgscMinPrincPart2}
	\end{eqnarray}
	Moreover,
	\begin{align}
	    X/X_{11}\leq X_{00}.\label{ThmgscMinPrincPart3}
	\end{align}
	In particular, if $X\in\mathcal{L}(H)$ and $X^*=X\geq 0$ then the above statements are true and
	\begin{equation}
		0\leq X/X_{11}\leq X_{00}.\label{ThmgscMinPrincPart4}
	\end{equation}
\end{theorem}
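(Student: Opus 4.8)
The plan is to derive the entire statement from the Fundamental minimization principle (Theorem \ref{ThmGenMinVarPrinc}) by recognizing that, under the present hypotheses, $X$ admits exactly the block factorization required there. Since $X^*=X$ and $\ker X_{11}\subseteq \ker X_{01}$, Proposition \ref{PropgscFact} supplies the Aitken factorization (\ref{gscfact1}). Comparing (\ref{gscfact1}) with the template (\ref{LemL2GenMinVarPrincBlockXDef})--(\ref{ConstrTwo}), I would make the identifications $W=X/X_{11}$, $Y=X_{11}^+X_{10}$, and $Z=X_{11}$, so that $Y^*=(X_{11}^+X_{10})^*$ matches the upper-triangular factor.

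Second, I would verify the three constraints needed to invoke Theorem \ref{ThmGenMinVarPrinc}. The condition $Z^*=Z$ is immediate from $X_{11}^*=X_{11}$, and $Z=X_{11}\geq 0$ is a hypothesis. For $W^*=W$, I would note that $X^*=X$ gives $X_{00}^*=X_{00}$ and $X_{01}^*=X_{10}$, while $(X_{11}^+)^*=(X_{11}^*)^+=X_{11}^+$ by Lemma \ref{LemMPProp}.(1) together with $X_{11}^*=X_{11}$; hence $(X_{01}X_{11}^+X_{10})^*=X_{01}X_{11}^+X_{10}$ and $X/X_{11}=X_{00}-X_{01}X_{11}^+X_{10}$ is self-adjoint. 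With these checks in place, Theorem \ref{ThmGenMinVarPrinc} yields both the minimization principle (\ref{ThmgscMinPrincPart1}) with $W=X/X_{11}$ and the characterization of minimizers; the minimizer set collapses to (\ref{ThmgscMinPrincPart2}) because $Z^+ZY=X_{11}^+X_{11}X_{11}^+X_{10}=X_{11}^+X_{10}$ by the first Penrose equation, and $\ker Z=\ker X_{11}$.

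Third, I would obtain the bounds by evaluating the variational principle at distinguished test vectors. Taking $v=0$ in (\ref{ThmgscMinPrincPart1}) gives $(u,(X/X_{11})u)\leq (u,Xu)=(u,X_{00}u)$ for every $u\in H_0$, where the last equality holds because the $H_1$-component of $u$ vanishes; since $X/X_{11}$ and $X_{00}$ are both self-adjoint, this quadratic-form inequality upgrades to the operator inequality (\ref{ThmgscMinPrincPart3}). For the final assertion, if $X^*=X\geq 0$ then Lemma \ref{LemFundPosSemiDefImpliesKerRanBlockOpInclusion} furnishes $\ker X_{11}\subseteq \ker X_{01}$, while $X_{11}$ is positive semidefinite as a compression of a positive operator, so all earlier hypotheses hold; the lower bound then follows since $X\geq 0$ forces $(u,(X/X_{11})u)=\min_{v\in H_1}(u+v,X(u+v))\geq 0$, giving (\ref{ThmgscMinPrincPart4}).

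The proof carries essentially no serious obstacle, as the analytic content is already packaged in Proposition \ref{PropgscFact} and Theorem \ref{ThmGenMinVarPrinc}; the only points demanding care are the self-adjointness of $X/X_{11}$ (which rests on the adjoint-commutes-with-pseudoinverse property of Lemma \ref{LemMPProp}) and the Penrose-equation collapse of $Z^+ZY$ to $X_{11}^+X_{10}$ in the minimizer formula.
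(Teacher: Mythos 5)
Your proposal is correct and follows essentially the same route as the paper's proof: Proposition \ref{PropgscFact} supplies the Aitken factorization, Theorem \ref{ThmGenMinVarPrinc} is invoked with the identifications $W=X/X_{11}$, $Y=X_{11}^+X_{10}$, $Z=X_{11}$, the first Penrose equation collapses $Z^+ZY$ to $X_{11}^+X_{10}$, the upper bound comes from $v=0$, and the positive semidefinite case reduces to the general one via Lemma \ref{LemFundPosSemiDefImpliesKerRanBlockOpInclusion}. Your write-up simply makes explicit the verification of $W^*=W$ that the paper leaves implicit.
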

\begin{proof}
	Assume $H=H_0\ho H_1$ is a Hilbert space with $\dim(H)<\infty$. Suppose $X=[X_{ij}]_{i,j=0,1}\in\mathcal{L}(H)$, $X^*=X$, $X_{11}\geq 0$, and $\ker X_{11}\subseteq \ker X_{01}$. Then, by Proposition \ref{PropgscFact}, $X$ has the block factorization (\ref{gscfact1}) and the proof of the minimization variational principle follows immediately by Theorem \ref{ThmGenMinVarPrinc}, property $(1)$ in Def.\ \ref{DefMP} of the Moore-Penrose pseudoinverse, together with the identifications
	\begin{gather}
		W=X/X_{11},\; Z=X_{11},\; Y=X_{11}^+X_{10}.
	\end{gather}
	The upper bound (\ref{ThmgscMinPrincPart3}) follows from considering the RHS of (\ref{ThmgscMinPrincPart1}) with $v=0\in H_1$. The remaining part of the theorem is proved by Proposition \ref{PropgscFact} since if $X\in \mathcal{L}(H)$, $X^*=X\geq 0$ then (\ref{gscfact1}) and (\ref{gscfact2}) are true and hence our statement is true from the proof just given, and the lower bound (\ref{ThmgscMinPrincPart4}) follows immediately from (\ref{ThmgscMinPrincPart1}).
\end{proof}

\section{\label{sec:MainResults}Main results}

\subsection{\label{sec:MainResults:subsec:DirectProb}Direct \textit{Z}-problem}
The following theorem answers the questions \hyperlink{(i)}{(i)} and \hyperlink{(ii)}{(ii)} from Section \ref{sec:intro:subsec:ZprobEffOp}, but now under weaker hypotheses for which $\sigma_{11}$ need not be invertible.
\begin{theorem}\label{ThmZProbSol}
	Suppose $(\mathcal{H},\mathcal{U},\mathcal{E},\mathcal{J}, \sigma)$ is a $Z$-problem (as defined in Def.\ \ref{DefZProbMain}) with $\dim \mathcal{H}<\infty$. If 
	\begin{gather}
		\begin{bmatrix}
		\sigma_{00} & \sigma_{01}\\
		\sigma_{10} & \sigma_{11}
		\end{bmatrix}=\begin{bmatrix}
		\sigma_{00} & \sigma_{01}\\
		\sigma_{10} & \sigma_{11}
		\end{bmatrix}^*,\label{ThmZProbSolSelfAdjHyp}\\
		\ker \sigma_{11}\subseteq \ker \sigma_{01},
	\end{gather}
	then, for each $E_0\in \mathcal{U}$, the solutions of the $Z$-problem at $E_0$ (as defined in Def.\ \ref{DefZProbMain}) are given by the following formulas:
	\begin{gather}
	   J_0=\sigma_*E_0,\label{ThmZProbSolSelfAdjHypPart1}\\
	   E=-\sigma_{11}^+\sigma_{10}E_0+K,\;K\in\ker \sigma_{11},\label{ThmZProbSolSelfAdjHypPart2}\\
	   J=\sigma_{20}E_0+\sigma_{21}E,\label{ThmZProbSolSelfAdjHypPart3}\\
		\sigma_*=\begin{bmatrix}
		\sigma_{00} & \sigma_{01}\\
		\sigma_{10} & \sigma_{11}
		\end{bmatrix}/\sigma_{11}=\sigma_{00}-\sigma_{01}\sigma_{11}^+\sigma_{10}.\label{GenEffOpSchurCompFormula}
	\end{gather}
    Moreover, the effective operator $\sigma_*\in\mathcal{L}(\mathcal{U})$ of the $Z$-problem $(\mathcal{H},\mathcal{U},\mathcal{E},\mathcal{J}, \sigma)$ exists, is unique, and is given by the generalized Schur complement formula (\ref{GenEffOpSchurCompFormula}).
\end{theorem}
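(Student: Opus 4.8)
The plan is to reduce everything to the constrained linear equation solved by Lemma \ref{LemGenConstEq}, using the block factorization guaranteed by Proposition \ref{PropgscFact}; the finite-dimensionality hypothesis $\dim\mathcal{H}<\infty$ is exactly what makes the Moore-Penrose pseudoinverse and all of that machinery available. First I would repeat the classical reduction: applying the orthogonal projections $\Gamma_0,\Gamma_1,\Gamma_2$ to (\ref{DefZProbEq}) shows that the $Z$-problem at $E_0$ is equivalent to the three equations $\sigma_{00}E_0+\sigma_{01}E=J_0$, $\sigma_{10}E_0+\sigma_{11}E=0$, and $\sigma_{20}E_0+\sigma_{21}E=J$, exactly as in (\ref{ZProbEquivFormPart1})--(\ref{ZProbEquivFormPart3}). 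The last equation merely reads off $J$ once $E$ is determined, giving (\ref{ThmZProbSolSelfAdjHypPart3}), so the real content lies in the first two, which combine into the single constrained equation $X(E_0+E)=J_0$ on $\mathcal{U}\ho\mathcal{E}$ with $X=[\sigma_{ij}]_{i,j=0,1}$.

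Next I would check that $X$ satisfies the hypotheses of Proposition \ref{PropgscFact} and Lemma \ref{LemGenConstEq}. The self-adjointness assumption (\ref{ThmZProbSolSelfAdjHyp}) is precisely $X^*=X$, and $\ker\sigma_{11}\subseteq\ker\sigma_{01}$ is the remaining condition, so Proposition \ref{PropgscFact} delivers the UDL factorization (\ref{gscfact1}) with $W=\sigma/\sigma_{11}$, $Z=\sigma_{11}$, $Y=\sigma_{11}^+\sigma_{10}$. One routine verification is needed here, namely that $Z=\sigma_{11}$ and $W=\sigma/\sigma_{11}$ are self-adjoint so that (\ref{ConstrTwo}) holds; the former is immediate and the latter follows from $\sigma_{00}^*=\sigma_{00}$, $\sigma_{01}^*=\sigma_{10}$, and $(\sigma_{11}^+)^*=(\sigma_{11}^*)^+=\sigma_{11}^+$ via Lemma \ref{LemMPProp}. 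Lemma \ref{LemGenConstEq} then yields the full solution set in one stroke: $J_0=W E_0=(\sigma/\sigma_{11})E_0$ and $E=-Z^+ZY E_0+K=-\sigma_{11}^+\sigma_{10}E_0+K$ with $K\in\ker\sigma_{11}$, where I simplify $Z^+ZY=\sigma_{11}^+\sigma_{11}\sigma_{11}^+\sigma_{10}=\sigma_{11}^+\sigma_{10}$ using property $(1)$ in Def.\ \ref{DefMP}. This establishes (\ref{ThmZProbSolSelfAdjHypPart1}), (\ref{ThmZProbSolSelfAdjHypPart2}), and (\ref{GenEffOpSchurCompFormula}) simultaneously.

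The conceptual crux, and the one place the kernel hypothesis is indispensable, is that Lemma \ref{LemGenConstEq} forces $J_0=W E_0=\sigma_* E_0$ for \emph{every} solution, regardless of the free parameter $K$. Geometrically, the ambiguity $K\in\ker\sigma_{11}$ in $E$ could otherwise leak into $J_0$ through the term $\sigma_{01}K$; the inclusion $\ker\sigma_{11}\subseteq\ker\sigma_{01}$ is exactly what annihilates this term and makes $J_0$ a genuine function of $E_0$, so that an effective operator can exist at all. Equivalently, this inclusion is what renders the middle equation $\sigma_{11}E=-\sigma_{10}E_0$ solvable for every $E_0$, since by self-adjointness it forces $\ran\sigma_{10}\subseteq\ran\sigma_{11}$ in finite dimensions. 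To finish, I would observe that solutions exist for each $E_0$ (take $K=0$), so $\sigma_*=\sigma_{00}-\sigma_{01}\sigma_{11}^+\sigma_{10}$ is an effective operator; and any other effective operator $\tilde\sigma_*$ must satisfy $\tilde\sigma_*E_0=J_0=\sigma_*E_0$ for all $E_0\in\mathcal{U}$, forcing $\tilde\sigma_*=\sigma_*$ and yielding uniqueness.
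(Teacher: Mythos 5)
Your proposal is correct and follows essentially the same route as the paper: reduce the $Z$-problem to the constrained equation $X(E_0+E)=J_0$ with $X=[\sigma_{ij}]_{i,j=0,1}$, invoke Proposition \ref{PropgscFact} for the UDL factorization under the hypotheses (\ref{ThmZProbSolSelfAdjHyp}) and $\ker\sigma_{11}\subseteq\ker\sigma_{01}$, and then apply Lemma \ref{LemGenConstEq} with $W=\sigma/\sigma_{11}$, $Z=\sigma_{11}$, $Y=\sigma_{11}^+\sigma_{10}$. You supply more of the details the paper leaves implicit (self-adjointness of $W$ and $Z$, the simplification $Z^+ZY=\sigma_{11}^+\sigma_{10}$, and the existence/uniqueness argument for $\sigma_*$), all of which check out.
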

\begin{proof}
	Assume the hypotheses. Let $E_0\in \mathcal{U}$. The proof follows immediately using the equivalent form of the $Z$-problem (\ref{ZProbEquivFormPart1})--(\ref{ZProbEquivFormPart3}) and then applying Lemma \ref{LemGenConstEq} and Proposition \ref{PropgscFact} to the constrained linear equations (\ref{ZProbEquivFormPart1}) and (\ref{ZProbEquivFormPart2}) by making the identifications:
	\begin{gather}
		H_0=\mathcal{U},\;H_1=\mathcal{E}, X=[X_{ij}]_{i,j=0,1}=[\sigma_{ij}]_{i,j=0,1},\\
		X(u+v)=w,\;u=E_0,\;v=E,\;w=J_0.
	\end{gather}
	This completes the proof.
\end{proof}

Equipped with this theorem, we can now fully answer the questions \hyperlink{(i)}{(i)} and \hyperlink{(ii)}{(ii)} from Section \ref{sec:intro:subsec:ZprobEffOp} under the self-adjoint hypothesis (\ref{ThmZProbSolSelfAdjHyp}) and finite-dimensional hypothesis \hyperlink{(H4)}{(H4)}, i.e., $\dim \mathcal{H}<\infty$.

\begin{theorem}\label{ThmFullAnswersToiAndiiForSelfAdjFiniteDimHyps}
Suppose $(\mathcal{H},\mathcal{U},\mathcal{E},\mathcal{J}, \sigma)$ is a $Z$-problem (as defined in Def.\ \ref{DefZProbMain}) with $\dim \mathcal{H}<\infty$ and 
	\begin{gather}
		\begin{bmatrix}
		\sigma_{00} & \sigma_{01}\\
		\sigma_{10} & \sigma_{11}
		\end{bmatrix}=\begin{bmatrix}
		\sigma_{00} & \sigma_{01}\\
		\sigma_{10} & \sigma_{11}
		\end{bmatrix}^*.
	\end{gather}
Then the following statements are true:
\begin{itemize}
    \item[(a)] A necessary and sufficient condition for the $Z$-problem to have a unique solution for some $E_0\in \mathcal{U}$ is
\begin{align}
    \ker \sigma_{11}=\{0\},\label{ThmFullAnswersToiAndiiForSelfAdjFiniteDimHyps2ndHyp}
\end{align}
in which case $\sigma_{11}$ is invertible and Theorem \ref{ThmMainClassicalZProbEffOp} is true.
    \item[(b)] A necessary and sufficient condition for an effective operator $\sigma_*$ of the $Z$-problem $(\mathcal{H},\mathcal{U},\mathcal{E},\mathcal{J}, \sigma)$ (as defined in Def.\ \ref{DefZProbMain}) to exists is
\begin{align}
    \ker \sigma_{11}\subseteq \ker \sigma_{01},\label{ThmFullAnswersToiAndiiForSelfAdjFiniteDimHypsMainHyp}
\end{align}
    in which case Theorem \ref{ThmZProbSol} is true.
\end{itemize}
\end{theorem}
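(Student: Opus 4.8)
The plan is to work throughout with the equivalent block formulation (\ref{ZProbEquivFormPart1})--(\ref{ZProbEquivFormPart3}) of the $Z$-problem, in which the middle equation $\sigma_{11}E=-\sigma_{10}E_0$ is the one controlling both existence and uniqueness: once $E\in\mathcal{E}$ is chosen, the components $J_0$ and $J$ of a solution are forced by the first and third equations via $J_0=\sigma_{00}E_0+\sigma_{01}E$ and $J=\sigma_{20}E_0+\sigma_{21}E$. Thus a solution at $E_0$ exists iff $-\sigma_{10}E_0\in\ran\sigma_{11}$, the triple $(J_0,E,J)$ is uniquely determined by $E$, and non-uniqueness of solutions is governed entirely by $\ker\sigma_{11}$. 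Since $\dim\mathcal{H}<\infty$, the subspace $\mathcal{E}$ is finite-dimensional and $\sigma_{11}\in\mathcal{L}(\mathcal{E})$ is injective iff it is invertible; I would use this reduction repeatedly.

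For part (a) I would first treat sufficiency: if $\ker\sigma_{11}=\{0\}$ then $\sigma_{11}$ is invertible by finite-dimensionality, so Theorem \ref{ThmMainClassicalZProbEffOp} applies and yields a unique solution for every $E_0$, in particular for some $E_0$. For necessity, suppose $\ker\sigma_{11}\neq\{0\}$ and pick $0\neq K\in\ker\sigma_{11}$. If a solution $(J_0,E,J)$ exists at any $E_0$, the middle equation still holds after replacing $E$ by $E+K$, and recomputing the other components produces a second, distinct solution $(J_0+\sigma_{01}K,\,E+K,\,J+\sigma_{21}K)$; hence a solution is never unique wherever it exists. Consequently a unique solution at some $E_0$ forces $\ker\sigma_{11}=\{0\}$. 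The logical point to get right here is that ``unique solution for some $E_0$'' presupposes existence at that $E_0$, which is exactly the hypothesis the perturbation argument consumes.

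For part (b), sufficiency is immediate: under the standing self-adjointness hypothesis (\ref{ThmZProbSolSelfAdjHyp}), the inclusion $\ker\sigma_{11}\subseteq\ker\sigma_{01}$ is precisely the hypothesis of Theorem \ref{ThmZProbSol}, which produces the effective operator as the generalized Schur complement (\ref{GenEffOpSchurCompFormula}). The real content is necessity. Assuming an effective operator $\sigma_*$ exists, I would test the defining relation $J_0=\sigma_*E_0$ at $E_0=0$: for each $K\in\ker\sigma_{11}$ the triple $(\sigma_{01}K,\,K,\,\sigma_{21}K)$ lies in $\mathcal{U}\times\mathcal{E}\times\mathcal{J}$ and genuinely solves the $Z$-problem at $E_0=0$, since $\sigma K=\sigma_{01}K+\sigma_{21}K$ because $\sigma_{11}K=0$. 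The effective-operator property then forces $\sigma_{01}K=\sigma_*(0)=0$, i.e.\ $K\in\ker\sigma_{01}$, which is the desired inclusion.

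I expect the main obstacle to be conceptual rather than computational, and to sit in the necessity half of part (b): one must read the definition of the effective operator as the constraint that $J_0=\sigma_*E_0$ hold for \emph{every} solution at \emph{every} $E_0$ (not merely a distinguished one), and then recognize that the $E_0=0$ solutions built from $\ker\sigma_{11}$ are exactly what pin $\sigma_{01}$ to zero on that kernel. The remaining steps---the injective-implies-invertible reduction in finite dimensions, the perturbation by kernel elements in (a), and the verification that the constructed triples really solve the $Z$-problem in (b)---are routine given Theorems \ref{ThmMainClassicalZProbEffOp} and \ref{ThmZProbSol}.
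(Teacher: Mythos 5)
Your proof is correct, and it reaches the same conclusions by a more self-contained route than the paper. The paper funnels both necessity arguments through the appendix result (Theorem \ref{thm:FundThmExistenceUniquenessSolvabilityZProbEffOpVecSp}): for (a) it first shows that a unique solution at some $E_0$ forces a unique solution at $0$, invokes that theorem to get $\ker\sigma_{11}\subseteq\ker\sigma_{01}$, and only then reads off $\ker\sigma_{11}=\{0\}$ from the $+K$ term in the solution formula of Theorem \ref{ThmZProbSol}; for (b) it cites the same theorem directly. You instead argue in place: your necessity step for (b) is exactly the $(c)\Rightarrow(a)$ computation from that appendix theorem, inlined (the triple $(\sigma_{01}K,K,\sigma_{21}K)$ solves the $Z$-problem at $E_0=0$, so $\sigma_*(0)=0$ forces $\sigma_{01}K=0$), and your necessity step for (a) replaces the paper's detour with a one-line perturbation: any solution at $E_0$ can be shifted by $K\in\ker\sigma_{11}$ to produce a second solution $(J_0+\sigma_{01}K,\,E+K,\,J+\sigma_{21}K)$, so uniqueness anywhere kills the kernel. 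The trade-off is that the paper's version reuses a general-purpose theorem that it also needs elsewhere (e.g.\ in Sec.\ \ref{SecEffCond}), while yours is shorter, avoids the appendix entirely, and as a small bonus your argument for part (a) never uses the self-adjointness hypothesis (\ref{ThmZProbSolSelfAdjHyp}) at all. All the verifications you flag as routine (that the perturbed and constructed triples land in $\mathcal{U}\times\mathcal{E}\times\mathcal{J}$ and genuinely solve the block system) do check out.
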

\begin{proof}
     Assume the hypotheses. We begin by proving $(b)$. If (\ref{ThmFullAnswersToiAndiiForSelfAdjFiniteDimHypsMainHyp}) is true then Theorem \ref{ThmZProbSol} is true and, in particular, an effective operator $\sigma_*$ of the $Z$-problem $(\mathcal{H},\mathcal{U},\mathcal{E},\mathcal{J}, \sigma)$ (as defined in Def.\ \ref{DefZProbMain}) exists. Conversely, if an effective operator of the $Z$-problem $(\mathcal{H},\mathcal{U},\mathcal{E},\mathcal{J}, \sigma)$ (as defined in Def.\ \ref{DefZProbMain}) exists then it follows by Theorem \ref{thm:FundThmExistenceUniquenessSolvabilityZProbEffOpVecSp} that (\ref{ThmFullAnswersToiAndiiForSelfAdjFiniteDimHypsMainHyp}) is true and hence Theorem \ref{ThmZProbSol} is true. This proves statement $(b)$.
     
     We now prove statement $(a)$. First, suppose (\ref{ThmFullAnswersToiAndiiForSelfAdjFiniteDimHyps2ndHyp}) is true. Then, since $\dim\mathcal{H}<\infty$, it follows that $\sigma_{11}$ is invertible, in which case Theorem \ref{ThmZProbSol} is true so that, in particular, the $Z$-problem has a unique solution for every $E_0\in \mathcal{U}$ and hence at least one by taking $E_0=0$. Conversely, suppose that the $Z$-problem has a unique solution for some $E_0\in \mathcal{U}$. Then $\sigma(E_0+E)=J_0+J$ for some $(J_0,E,J)\in \mathcal{U}\times \mathcal{E}\times \mathcal{J}$. Let $(J_0',E',J')\in \mathcal{U}\times \mathcal{E}\times \mathcal{J}$ be a solution of the $Z$-problem at $0$ [we know that at least one exists, namely, $(0,0,0)$] so that $\sigma(E')=J_0'+J'$. Then by linearity of $\sigma$ it follows that $\sigma(E_0+E-E')=J_0-J_0'+J-J'$ and hence $(J_0-J_0',E-E',J-J')$ is a solution of the $Z$-problem at $E_0$ implying by hypothesis $(J_0-J_0',E-E',J-J')=(J_0,E,J),$ that is, $J_0'=0, E'=0, J'=0$. This proves that the $Z$-problem has a unique solution at $0$. It now follows from Theorem \ref{thm:FundThmExistenceUniquenessSolvabilityZProbEffOpVecSp} that (\ref{ThmFullAnswersToiAndiiForSelfAdjFiniteDimHypsMainHyp}) is true and hence Theorem \ref{ThmZProbSol} is true. Therefore, by this theorem and the hypothesis that the $Z$-problem has a unique solution for every $E_0\in \mathcal{U}$, it follows that $\ker\sigma_{11}=\{0\}$. This completes the proof.
\end{proof}

The generalized Schur complement formula (\ref{GenEffOpSchurCompFormula}) for the effective operator $\sigma_*$ is useful for deriving some of it's basic properties as we shall now see.
\begin{corollary}\label{CorGenClassicalBasicPropEffOp}
If $(\mathcal{H},\mathcal{U},\mathcal{E},\mathcal{J}, \sigma)$ is a $Z$-problem satisfying the hypotheses in Theorem \ref{ThmZProbSol} then 
\begin{align}
    (\sigma_*)^*=\sigma_*.
\end{align}
\end{corollary}
\begin{proof}
    Assume the hypotheses. Then the hypotheses are also true for the $Z$-problem $(\mathcal{H},\mathcal{U},\mathcal{E},\mathcal{J}, \sigma^*)$, in particular, $\dim \mathcal{H}<\infty$ and
	\begin{gather*}
		\begin{bmatrix}
		(\sigma^*)_{00} & (\sigma^*)_{01}\\
		(\sigma^*)_{10} & (\sigma^*)_{11}
		\end{bmatrix}=	\begin{bmatrix}
		(\sigma_{00})^* & (\sigma_{10})^*\\
		(\sigma_{01})^* & (\sigma_{11})^*
		\end{bmatrix}\\
		=\begin{bmatrix}
		\sigma_{00} & \sigma_{01}\\
		\sigma_{10} & \sigma_{11}
		\end{bmatrix}^*
		=\begin{bmatrix}
		\sigma_{00} & \sigma_{01}\\
		\sigma_{10} & \sigma_{11}
		\end{bmatrix},
	\end{gather*}
    which implies
    \begin{gather*}
        \begin{bmatrix}
		(\sigma^*)_{00} & (\sigma^*)_{01}\\
		(\sigma^*)_{10} & (\sigma^*)_{11}
		\end{bmatrix}=\begin{bmatrix}
		(\sigma^*)_{00} & (\sigma^*)_{01}\\
		(\sigma^*)_{10} & (\sigma^*)_{11}
		\end{bmatrix}^*=\begin{bmatrix}
		\sigma_{00} & \sigma_{01}\\
		\sigma_{10} & \sigma_{11}
		\end{bmatrix},\\
		\ker (\sigma^*)_{11}=\ker \sigma_{11}\subseteq \ker \sigma_{01}=\ker (\sigma^*)_{01}.
    \end{gather*}
    Thus, it follows from this and Theorem \ref{ThmZProbSol} that
    \begin{gather*}
        (\sigma^*)_*=(\sigma^*)_{00}-(\sigma^*)_{01}(\sigma^*)_{00}^{-1}(\sigma^*)_{10}\\
        =\sigma_{00}-\sigma_{01}\sigma_{00}^{-1}\sigma_{10}=\sigma_*.
    \end{gather*}
This proves the corollary.
\end{proof}

Now the next theorem and corollary answers questions \hyperlink{(iii)}{(iii)}.$(2)$ under weaker hypotheses then Theorem \ref{ThmClassicalDiriMinPrin}.
\begin{theorem}[Generalized Dirichlet minimization principle]\label{ThmGenClassicalDiriMinPrin}
If $(\mathcal{H},\mathcal{U},\mathcal{E},\mathcal{J}, \sigma)$ is a $Z$-problem satisfying the hypotheses in Theorem \ref{ThmZProbSol} and $\sigma_{11}\geq 0$ then the effective operator $\sigma_*$ is unique self-adjoint operator satisfying the minimization principle:
\begin{equation}
		( E_0,\sigma_*E_0 )=\min_{E\in\mathcal{E}}( E_0+E,\sigma(E_0+E) ),\;\forall E_0\in\mathcal{U}.\label{GenClassicalDirMinPrincEffOp}
	\end{equation}
	Furthermore, for each $E_0\in \mathcal{U}$, the set of minimizers is given by 
	\begin{eqnarray}
			\{E\in \mathcal{E}:E =-(\sigma_{11})^+\sigma_{10}E_0+K,\;K\in \operatorname{Ker}{\sigma_{11}}\}.\label{GenClassicalDirMinPrincEffOpPart2}
	\end{eqnarray}
	Moreover, we have the following upper bound on the effective operator:
	\begin{align}
	    \sigma_*\leq \sigma_{00}.\label{GenClassicalDirMinPrincEffOpPart3}
	\end{align}
\end{theorem}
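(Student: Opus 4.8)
The plan is to obtain this theorem as a direct specialization of the abstract minimization principle in Theorem~\ref{ThmgscMinPrinc}, applied to the compression of $\sigma$ to $\mathcal{U}\ho\mathcal{E}$. Concretely, I would make the identifications $H_0=\mathcal{U}$, $H_1=\mathcal{E}$, and $X=[X_{ij}]_{i,j=0,1}=[\sigma_{ij}]_{i,j=0,1}\in\mathcal{L}(\mathcal{U}\ho\mathcal{E})$, with $u=E_0$ and $v=E$. The hypotheses of Theorem~\ref{ThmgscMinPrinc} are then immediate to check: $H=\mathcal{U}\ho\mathcal{E}$ is finite-dimensional because $\dim\mathcal{H}<\infty$; the self-adjointness $X^*=X$ is exactly the hypothesis (\ref{ThmZProbSolSelfAdjHyp}) inherited from Theorem~\ref{ThmZProbSol}; the positivity $X_{11}=\sigma_{11}\geq 0$ is the added hypothesis of the present theorem; and $\ker X_{11}=\ker\sigma_{11}\subseteq\ker\sigma_{01}=\ker X_{01}$ is the remaining hypothesis of Theorem~\ref{ThmZProbSol}.

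The one step requiring genuine care is the reduction of the quadratic form from the compression $X$ to the full operator $\sigma$, since Theorem~\ref{ThmgscMinPrinc} is stated in terms of $(u+v,X(u+v))$ whereas (\ref{GenClassicalDirMinPrincEffOp}) involves $(E_0+E,\sigma(E_0+E))$. For $E_0\in\mathcal{U}$ and $E\in\mathcal{E}$ the vector $E_0+E$ lies in $\mathcal{U}\ho\mathcal{E}=\ran(\Gamma_0+\Gamma_1)$, and since $\Gamma_0+\Gamma_1$ is the orthogonal projection onto this subspace (hence self-adjoint and fixing $E_0+E$), I would compute $(E_0+E,X(E_0+E))=(E_0+E,(\Gamma_0+\Gamma_1)\sigma(E_0+E))=((\Gamma_0+\Gamma_1)(E_0+E),\sigma(E_0+E))=(E_0+E,\sigma(E_0+E))$. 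This identifies the right-hand side of the abstract principle with that of (\ref{GenClassicalDirMinPrincEffOp}), so the two minimization problems coincide.

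With this bridging identity in hand, the three conclusions transfer verbatim. First, the generalized Schur complement satisfies $X/X_{11}=\sigma_{00}-\sigma_{01}\sigma_{11}^+\sigma_{10}=\sigma_*$ by the formula (\ref{GenEffOpSchurCompFormula}) in Theorem~\ref{ThmZProbSol}, so Theorem~\ref{ThmgscMinPrinc} yields that $\sigma_*$ is the unique self-adjoint operator realizing the minimum in (\ref{GenClassicalDirMinPrincEffOp}). Second, the set of minimizers (\ref{ThmgscMinPrincPart2}) becomes, under the above identifications, $\{E\in\mathcal{E}:E=-(\sigma_{11})^+\sigma_{10}E_0+K,\ K\in\ker\sigma_{11}\}$, which is precisely (\ref{GenClassicalDirMinPrincEffOpPart2}) and is consistent with the $Z$-problem solution formula (\ref{ThmZProbSolSelfAdjHypPart2}). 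Third, the upper bound $\sigma_*=X/X_{11}\leq X_{00}=\sigma_{00}$ is exactly (\ref{ThmgscMinPrincPart3}), giving (\ref{GenClassicalDirMinPrincEffOpPart3}).

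The only real obstacle is the quadratic-form reduction in the second paragraph; once that computation is recorded, the entire theorem follows as a direct citation of Theorem~\ref{ThmgscMinPrinc}. I would therefore keep the write-up short, devoting the bulk of it to the projection identity and treating the rest as an immediate application.
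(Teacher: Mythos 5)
Your proposal is correct and follows essentially the same route as the paper, which also obtains the theorem by citing Theorem~\ref{ThmgscMinPrinc} with the identifications $H_0=\mathcal{U}$, $H_1=\mathcal{E}$, $X=[\sigma_{ij}]_{i,j=0,1}$, $u=E_0$, $v=E$. Your explicit projection computation reducing $(E_0+E,X(E_0+E))$ to $(E_0+E,\sigma(E_0+E))$ is a correct filling-in of a step the paper leaves implicit under ``follows immediately.''
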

\begin{proof}
Assume the hypotheses. Then the proof of the theorem follows immediately from Theorem \ref{ThmgscMinPrinc} by making the identifications:
	\begin{gather*}
		H_0=\mathcal{U},\;H_1=\mathcal{E},\; X=[X_{ij}]_{i,j=0,1}=[\sigma_{ij}]_{i,j=0,1},\\
	u=E_0,\;v=E.
	\end{gather*}
	This completes the proof.
\end{proof}

The following corollary provides simple sufficient conditions for Theorem \ref{ThmZProbSol}, Corollary \ref{CorGenClassicalBasicPropEffOp}, and Theorem \ref{ThmGenClassicalDiriMinPrin} to be true.
\begin{corollary}\label{CorSuffCondForTrueThmGenClassicalDiriMinPrin}
If $(\mathcal{H},\mathcal{U},\mathcal{E},\mathcal{J}, \sigma)$ is a $Z$-problem with $\dim \mathcal{H}<\infty$ and either
\begin{gather}
    \sigma^*=\sigma\geq 0\label{WeakSuffCondForTrueThmGenClassicalDiriMinPrin}
\end{gather}
or
\begin{gather}
		\begin{bmatrix}
		\sigma_{00} & \sigma_{01}\\
		\sigma_{10} & \sigma_{11}
		\end{bmatrix}^*=\begin{bmatrix}
		\sigma_{00} & \sigma_{01}\\
		\sigma_{10} & \sigma_{11}
		\end{bmatrix}\geq 0\label{WeakerSuffCondForTrueThmGenClassicalDiriMinPrin}
	\end{gather}
then Theorem \ref{ThmZProbSol}, Corollary \ref{CorGenClassicalBasicPropEffOp}, and Theorem \ref{ThmGenClassicalDiriMinPrin} are true. In particular, we have the following bounds on the effective operator:
\begin{align}
    0\leq \sigma_*\leq \sigma_{00}.\label{CorSuffCondForTrueThmGenClassicalDiriMinPrinBounds}
\end{align}
\end{corollary}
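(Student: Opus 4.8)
The plan is to reduce everything to the single hypothesis (\ref{WeakerSuffCondForTrueThmGenClassicalDiriMinPrin}), namely that the $2\times 2$ block operator $A:=[\sigma_{ij}]_{i,j=0,1}$ acting on $\mathcal{U}\ho\mathcal{E}$ is self-adjoint and positive semidefinite, and then feed this into the already-established results. First I would observe that the stronger hypothesis (\ref{WeakSuffCondForTrueThmGenClassicalDiriMinPrin}) implies the weaker one: since $A$ is precisely the compression $(\Gamma_0+\Gamma_1)\sigma(\Gamma_0+\Gamma_1)|_{\mathcal{U}\ho\mathcal{E}}$, and compression by the orthogonal projection $P=\Gamma_0+\Gamma_1$ preserves both self-adjointness, as $(P\sigma P)^*=P\sigma^*P=P\sigma P$, and positive semidefiniteness, as $(P\sigma Px,x)=(\sigma Px,Px)\geq 0$, the condition $\sigma^*=\sigma\geq 0$ forces $A^*=A\geq 0$. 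Hence it suffices to prove the corollary under (\ref{WeakerSuffCondForTrueThmGenClassicalDiriMinPrin}).

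Next, assuming $A^*=A\geq 0$, I would verify in turn the hypotheses of the three results to be invoked. The self-adjointness of $A=[\sigma_{ij}]_{i,j=0,1}$ is exactly (\ref{ThmZProbSolSelfAdjHyp}). The inclusion $\ker\sigma_{11}\subseteq\ker\sigma_{01}$ follows immediately from Lemma \ref{LemFundPosSemiDefImpliesKerRanBlockOpInclusion} applied to $A$ (with $A_{11}=\sigma_{11}$ and $A_{01}=\sigma_{01}$). These two facts are precisely the hypotheses of Theorem \ref{ThmZProbSol}, so that theorem and, consequently, Corollary \ref{CorGenClassicalBasicPropEffOp} (which assumes only the hypotheses of Theorem \ref{ThmZProbSol}) both hold. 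Finally, $\sigma_{11}\geq 0$ because it is a diagonal block of the positive semidefinite operator $A$: testing $A$ against vectors of the form $(0,v)$ with $v\in\mathcal{E}$ gives $(\sigma_{11}v,v)=(A(0,v),(0,v))\geq 0$. With $\sigma_{11}\geq 0$ adjoined to the hypotheses of Theorem \ref{ThmZProbSol}, Theorem \ref{ThmGenClassicalDiriMinPrin} also applies. This establishes that all three results are true.

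For the two-sided bound (\ref{CorSuffCondForTrueThmGenClassicalDiriMinPrinBounds}), I would invoke Theorem \ref{ThmgscMinPrinc} directly with $X=A=[\sigma_{ij}]_{i,j=0,1}$. Since $X^*=X\geq 0$, the final conclusion (\ref{ThmgscMinPrincPart4}) of that theorem gives $0\leq X/X_{11}\leq X_{00}$; and by the generalized Schur complement formula (\ref{GenEffOpSchurCompFormula}) we have $X/X_{11}=\sigma_*$ and $X_{00}=\sigma_{00}$, which is exactly $0\leq\sigma_*\leq\sigma_{00}$. Alternatively, the upper bound is already (\ref{GenClassicalDirMinPrincEffOpPart3}) and the lower bound follows from the minimization principle (\ref{GenClassicalDirMinPrincEffOp}), since for $E_0+E\in\mathcal{U}\ho\mathcal{E}$ the quantity being minimized equals $((E_0,E),A(E_0,E))\geq 0$.

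Since the heavy lifting is done by Theorems \ref{ThmZProbSol}, \ref{ThmGenClassicalDiriMinPrin}, and \ref{ThmgscMinPrinc}, I do not expect a serious obstacle; the only points demanding care are the two preservation facts, that compression by an orthogonal projection preserves positivity (used to derive (\ref{WeakerSuffCondForTrueThmGenClassicalDiriMinPrin}) from (\ref{WeakSuffCondForTrueThmGenClassicalDiriMinPrin})) and that diagonal blocks of a positive semidefinite operator are themselves positive semidefinite (used to obtain $\sigma_{11}\geq 0$). Both are one-line quadratic-form computations, so the real content is organizational: recognizing that the weaker hypothesis is the common denominator and that Lemma \ref{LemFundPosSemiDefImpliesKerRanBlockOpInclusion} supplies the kernel inclusion for free.
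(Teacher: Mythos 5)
Your proposal is correct and follows essentially the same route as the paper, which simply cites Proposition \ref{PropgscFact} and Theorem \ref{ThmgscMinPrinc}; you have merely made explicit the reduction of (\ref{WeakSuffCondForTrueThmGenClassicalDiriMinPrin}) to (\ref{WeakerSuffCondForTrueThmGenClassicalDiriMinPrin}) via compression, the kernel inclusion from Lemma \ref{LemFundPosSemiDefImpliesKerRanBlockOpInclusion}, and the positivity of the diagonal block $\sigma_{11}$. All of these verifications are sound, so nothing further is needed.
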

\begin{proof}
The proof follows immediately from Proposition \ref{PropgscFact} and Theorem \ref{ThmgscMinPrinc}.
\end{proof}

\subsection{\label{sec:MainResults:subsec:DualProb}Dual \textit{Z}-problem}

\begin{definition}[Generalized dual $Z$-problem]\label{DefGenDualZProb}Given a $Z$-problem $(\mathcal{H},\mathcal{U},\mathcal{E},\mathcal{J},\sigma)$ with $\dim \mathcal{H}<\infty$ (the direct $Z$-problem), the corresponding generalized dual $Z$-problem is the $Z$-problem $(\mathcal{H},\mathcal{U},\mathcal{J},\mathcal{E},\sigma^+)$
associated with the orthogonal decomposition
\begin{equation}
\mathcal{H=U}\overset{\bot}{\mathcal{\oplus}}\mathcal{J}\overset{\bot
}{\mathcal{\oplus}}\mathcal{E}
\label{DefGenDualZProbOrthogonalTripleDecomposition}%
\end{equation}
and operator $\sigma^+\in\mathcal{L}\left(  \mathcal{H}\right)$, i.e., the problem: given $J_{0}\in\mathcal{U}$, find
vectors $\left(  E_{0},J,E\right)  \in\mathcal{U}\times\mathcal{J}%
\times\mathcal{E}$ satisfying%
\begin{equation}
E_{0}+E=\sigma^+\left(  J_{0}+J\right)  . \label{DefGenDualZProbEq}%
\end{equation}
An effective operator for this $Z$-problem will be denoted by $\left(
\sigma^+\right)_{\ast^{\prime}}  $. In other words, if there exists an operator $\left(
\sigma^+\right)  _{\ast^{\prime}}\in\mathcal{L}\left(  \mathcal{U}\right)  $
such that
\begin{equation}
E_{0}=\left(  \sigma^+\right)  _{\ast^{\prime}}J_{0}
\label{DefGenDualZProbDefEffOp}%
\end{equation}
whenever $\left(  E_{0},J,E\right)  $ is a solution of the generalized dual $Z$-problem at
$J_{0}$ then $\left(  \sigma^+\right)  _{\ast^{\prime}}$ is an effective
operator for the generalized dual $Z$-problem.
\end{definition}

The following result will be needed which is the analogy of Lemma \ref{LemBabachiewiczInversionFormula}.
\begin{lemma}[Generalized Babachiewicz inverse formula]\label{LemGenBabaInvFormula}
	Let $H=H_0\ho H_1$ be a Hilbert space with $\dim H<\infty$. If $X=[X_{ij}]_{i,j=0,1}\in\mathcal{L}(H)$ and $X^*=X$ then
    \begin{align}
        \hspace{-3.5em}X^{+}=\begin{bmatrix}
    (X/X_{11})^+& -(X/X_{11})^+X_{01}(X_{11})^+\\
     -(X_{11})^+X_{10} (X/X_{11})^+ &  (X_{11})^++(X_{11})^+X_{10}(X/X_{11})^+X_{01}(X_{11})^+
\end{bmatrix}\label{LemGenBabaInvFormulaPart1}
    \end{align}
if and only if
\begin{align}
    \ker X_{11}\subseteq\ker X_{01},\;\;\ker(X/X_{11})\subseteq\ker X_{10},\label{LemGenBabaInvFormulaPart2}
\end{align}
in which case we have
\begin{align}
    (X^{+})_{00}=(X/X_{11})^+,\;\;X/X_{11}=[(X^{+})_{00}]^{+}.\label{LemGenBabaInvFormulaPart3}
\end{align}
\end{lemma}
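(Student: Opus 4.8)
The plan is to recognize the right-hand side of (\ref{LemGenBabaInvFormulaPart1}) as a congruence transform of a block-diagonal pseudoinverse, and then to characterize its equality with $X^+$ through the four Penrose equations of Definition \ref{DefMP}. Write $S=X/X_{11}=X_{00}-X_{01}X_{11}^+X_{10}$ and $B=X_{11}^+X_{10}$, so that $B^*=X_{01}X_{11}^+$ because $X^*=X$, and set $L=\left[\begin{smallmatrix}I_{H_0}&0\\ B&I_{H_1}\end{smallmatrix}\right]$ and $D=\left[\begin{smallmatrix}S&0\\ 0&X_{11}\end{smallmatrix}\right]$. A direct block multiplication, using that the pseudoinverse of the block-diagonal $D$ is $\left[\begin{smallmatrix}S^+&0\\0&X_{11}^+\end{smallmatrix}\right]$, shows \emph{unconditionally} that the right-hand side of (\ref{LemGenBabaInvFormulaPart1}) equals $Y:=L^{-1}D^+(L^*)^{-1}$. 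Since $S^*=S$ (a gsc of a self-adjoint operator is self-adjoint) we have $D^*=D$, hence $(D^+)^*=D^+$ by Lemma \ref{LemMPProp}, so $Y$ is self-adjoint. As $X$ and $Y$ are both self-adjoint, Penrose equations $(3)$ and $(4)$ each reduce to the single commutation requirement $XY=YX$; thus the formula holds if and only if $Y$ satisfies $XYX=X$, $YXY=Y$, and $XY=YX$.

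For sufficiency, assume both conditions in (\ref{LemGenBabaInvFormulaPart2}). The first, $\ker X_{11}\subseteq\ker X_{01}$, is exactly the hypothesis of Proposition \ref{PropgscFact}, so $X=L^*DL$. Penrose $(1)$ and $(2)$ for $Y=L^{-1}D^+(L^*)^{-1}$ then follow immediately from $D^+DD^+=D^+$ and $DD^+D=D$ together with the invertibility of $L$. For the commutation, compute $XY=L^*(DD^+)(L^*)^{-1}$ and $YX=L^{-1}(D^+D)L$; writing the common projection $P:=DD^+=D^+D=\operatorname{diag}(\Gamma_{\ran S},\Gamma_{\ran X_{11}})$ and carrying out the two triangular products, $XY-YX$ collapses to the single off-diagonal block identity $(I-\Gamma_{\ran X_{11}})X_{10}S^+=B\,\Gamma_{\ker S}$. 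Under the first condition $\ran X_{10}\subseteq\ran X_{11}$ the left-hand side vanishes, since $X_{10}\ran S\subseteq\ran X_{10}\subseteq\ran X_{11}$, so the identity is equivalent to $B\,\Gamma_{\ker S}=0$; and because the first condition also yields $\ker B=\ker X_{10}$ (if $X_{11}^+X_{10}v=0$ then applying $X_{11}$ gives $\Gamma_{\ran X_{11}}X_{10}v=X_{10}v=0$), this is in turn equivalent to $\ker S\subseteq\ker X_{10}$, the second condition. Hence all four Penrose equations hold and $Y=X^+$.

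For necessity, assume (\ref{LemGenBabaInvFormulaPart1}), i.e.\ $Y=X^+$. Since $X$ is self-adjoint, $XX^+=\Gamma_{\ran X}=X^+X$ by Lemma \ref{LemMPProp}, so $XY=YX$ holds automatically. The off-diagonal identity $(I-\Gamma_{\ran X_{11}})X_{10}S^+=B\,\Gamma_{\ker S}$ is then in force, and a support argument (its two sides vanish on $\ker S$ and on $\ran S$ respectively, using $S^+\Gamma_{\ran S}=S^+$ and $\Gamma_{\ker S}\Gamma_{\ran S}=0$, neither of which needs the factorization) forces both $B\,\Gamma_{\ker S}=0$ and $(I-\Gamma_{\ran X_{11}})X_{10}S^+=0$, i.e.\ $\ker S\subseteq\ker B$ and $X_{10}\ran S\subseteq\ran X_{11}$. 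It remains to upgrade these to the two conditions in (\ref{LemGenBabaInvFormulaPart2}); here I would expand $XYX=X$ blockwise, whose off-diagonal blocks yield $(I-\Gamma_{\ran S})X_{01}(I-\Gamma_{\ran X_{11}})=0$ and force $\ran X_{10}\subseteq\ran X_{11}$ (the first condition), after which $\ker S\subseteq\ker B$ sharpens to $\ker S\subseteq\ker X_{10}$ (the second). I expect this necessity half to be the main obstacle: without the factorization $X=L^*DL$ the block expansions of $XYX=X$ do not self-simplify, and the delicate point is precisely passing from the weaker $\ker S\subseteq\ker B$ and $X_{10}\ran S\subseteq\ran X_{11}$ to the genuinely stronger kernel/range inclusions. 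A parallel route worth pursuing is to introduce the auxiliary operator $\tilde X:=L^*DL$, which always satisfies the first condition and shares the same $L,D,S$, apply the proved sufficiency direction to it, and then use that $X=\tilde X$ exactly when the first condition holds in order to transfer the conclusion back to $X$.

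Finally, once the equivalence is established, the two identities in (\ref{LemGenBabaInvFormulaPart3}) are immediate: the $(0,0)$ block of (\ref{LemGenBabaInvFormulaPart1}) reads $(X^+)_{00}=S^+=(X/X_{11})^+$, and applying Lemma \ref{LemMPProp}$(6)$ gives $X/X_{11}=\bigl[(X/X_{11})^+\bigr]^+=\bigl[(X^+)_{00}\bigr]^+$.
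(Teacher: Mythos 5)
Your overall framework is the same as the paper's: the paper also writes the right-hand side of (\ref{LemGenBabaInvFormulaPart1}) as the lower--diagonal--upper product $L^{-1}\operatorname{diag}\bigl((X/X_{11})^+,X_{11}^+\bigr)(L^*)^{-1}$, pairs it with the Aitken factorization $Z=L^*DL$ from Proposition \ref{PropgscFact}, and reduces everything to the Penrose equations. Your sufficiency half is correct and complete: under both inclusions $X=L^*DL$, the first two Penrose equations are automatic, and the commutation condition collapses (via $\Gamma_{\operatorname{ran}X_{11}}B=B$ and $\ker B=\ker X_{10}$) to $\ker S\subseteq\ker X_{10}$. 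The closing identities (\ref{LemGenBabaInvFormulaPart3}) are also handled correctly.

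The genuine gap is in the necessity direction, and it is exactly where you flag it. The step ``the off-diagonal identity $(I-\Gamma_{\operatorname{ran}X_{11}})X_{10}S^+=B\,\Gamma_{\ker S}$ is then in force'' is unjustified: that identity is the reduction of $XY=YX$ \emph{under the factorization} $X=L^*DL$, which you do not yet have, since it is equivalent to the first inclusion you are trying to prove. Your fallback via $XYX=X$ blockwise only yields $\Gamma_{\ker S}X_{01}\Gamma_{\ker X_{11}}=0$, i.e.\ $X_{01}(\ker X_{11})\subseteq\operatorname{ran}S$, which does not force $X_{01}(\ker X_{11})=\{0\}$; and the ``parallel route'' through $\tilde X=L^*DL$ is circular, because transferring the conclusion back requires $X=\tilde X$, i.e.\ the first inclusion again. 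The paper closes this with a short direct computation that needs no factorization of $X$: take $E\in\ker X_{11}=\ker X_{11}^+$ and apply the \emph{claimed} block formula for $X^+$ to the vector $(0,E)$; the second column of $Y$ acts through $X_{11}^+$, so $X^+(0,E)=0$. Since $X^*=X$ gives $XX^+=X^+X$, it follows that $X^+X(0,E)=XX^+(0,E)=0$, and then $X_{01}E$ is the $H_0$-component of $X(0,E)=XX^+X(0,E)=0$. This yields $\ker X_{11}\subseteq\ker X_{01}$ directly, after which Proposition \ref{PropgscFact} gives $X=L^*DL$ and your own sufficiency analysis delivers the second inclusion. With that insertion your proof is complete.
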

\begin{proof}
     Although this lemma is known [see, for instance, Theorem 1 and Corollary 3 in Ref.\ \onlinecite{74BC} and eqs.\ (1.5), (1.6')-(1.9'), and (2.6) within], we will give a short and self-contained proof here. Assume the hypotheses. First, the equalities (\ref{LemGenBabaInvFormulaPart3}) will follow immediately from the block form (\ref{LemGenBabaInvFormulaPart3}) for $X^+$ and property $(6)$ in Lemma \ref{LemMPProp}. Hence, we need only prove that (\ref{LemGenBabaInvFormulaPart1}) and (\ref{LemGenBabaInvFormulaPart2}) are equivalent. To do that define $Y, Z\in \mathcal{L}(H_0\ho H_1)$ by the RHS of (\ref{LemGenBabaInvFormulaPart1}) and the RHS of (\ref{gscfact1}), respectively. Then
     \begin{gather}
         Y=\begin{bmatrix}
		(X/X_{11})^+ & -(X/X_{11})^+X_{01}X_{11}^+\\
		-X_{11}^+X_{10}(X/X_{11})^+ & (X_{11})^++X_{11}^+X_{10}(X/X_{11})^+X_{01}X_{11}^+
		\end{bmatrix}\\
		=\begin{bmatrix}
		I_{H_0} & 0\\
		-X_{11}^+X_{10} & I_{H_1}
		\end{bmatrix}\begin{bmatrix}
		(X/X_{11})^+ & 0\\
		0 & (X_{11})^+
		\end{bmatrix}\begin{bmatrix}
		I_{H_0} & (-X_{11}^+X_{10})^*\\
		0 & I_{H_1}
		\end{bmatrix},\\
		Z=\begin{bmatrix}
		I_{H_0} & (X_{11}^+X_{10})^*\\
		0 & I_{H_1}
		\end{bmatrix}\begin{bmatrix}
		X/X_{11} & 0\\
		0 & X_{11}
		\end{bmatrix}\begin{bmatrix}
		I_{H_0} & 0\\
		X_{11}^+X_{10} & I_{H_1}
		\end{bmatrix}.
     \end{gather}
     Then from those block factorizations it follows immediately by block multiplication and Lemma \ref{LemMPProp} that
     \begin{gather}
         YZY=Y,\; ZYZ=Z,\; ZY=\begin{bmatrix}
		\Gamma_{\ran X/X_{11}} & \Gamma_{\ker X/X_{11}}X_{01}X_{11}^+\\
		0 & \Gamma_{\ran X_{11}}\label{LemGenBabaInvFormulaIffCondYIsZPlus}
		\end{bmatrix}.
     \end{gather}
     Now by Definition \ref{DefMP} and the fact that $Y^*=Y, Z^*=Z$, it follows that a necessary and sufficient condition for $Z^+=Y$ is $(ZY)^*=ZY$. From the block form $ZY$ in (\ref{LemGenBabaInvFormulaIffCondYIsZPlus}), it follows that that a necessary and sufficient condition for $Z^+=Y$ is
    \begin{gather}
	    \Gamma_{\ker X/X_{11}}X_{01}X_{11}^+=0,
	\end{gather}
	which is equivalent to
	\begin{gather}
	    \ker X/X_{11} \subseteq \ker X_{11}^+X_{10}.\label{LemGenBabaInvFormulaIffAltCondYIsZPlus}
	\end{gather}
	We are now ready to prove that (\ref{LemGenBabaInvFormulaPart1}) and (\ref{LemGenBabaInvFormulaPart2}) are equivalent.  
	
	Suppose that (\ref{LemGenBabaInvFormulaPart2}) is true. Then $Z=X$ by Proposition \ref{PropgscFact}. Hence to prove $X^+=Y$ it suffices to prove (\ref{LemGenBabaInvFormulaIffAltCondYIsZPlus}) which, by the hypothesis $\ker X/X_{11} \subseteq X_{10}$, we do so by proving $\ker X_{11}^+X_{10}=\ker X_{10}$. To show this we first notice that $\ker X_{10}\subseteq\ker X_{11}^+X_{10}$. To prove the reverse inclusion, we use the hypotheses $X^*=X$ and $\ker X_{11}\subseteq\ker X_{01}$ which implies that $X_{11}^*=X_{11}$ and hence $\ran X_{10}\subseteq\ran X_{11}$. Hence, if $E_0\in \ker X_{11}^+X_{10}$ then $X_{11}^+X_{10}E_0=0$ implying $0=X_{11}X_{11}^+X_{10}E_0=\Gamma_{\ran X_{11}}X_{10}E_0=X_{10}E_0$ so that $E_0\in \ker X_{10}$. This proves $\ker X_{11}^+X_{10}\subseteq \ker X_{11}$ and therefore we have proven that $\ker X_{11}^+X_{10}=\ker X_{10}$ which proves that $X^+=Y$, i.e., proves (\ref{LemGenBabaInvFormulaPart1}).
	
	Conversely, suppose that (\ref{LemGenBabaInvFormulaPart1}) is true, i.e., $Y=X^+$. Let $E\in \ker X_{11}$. Then $X(E)=X_{01}(E)+X_{11}(E)=X_{01}(E)$ and $\ker X_{11}=\ker X_{11}^+$ (since $X^*=X$ implies $X_{11}^*=X_{11}$ implies by Lemma \ref{LemMPProp} that $\ker X_{11}=\ker X_{11}^+$). From these facts together with the block form (\ref{LemGenBabaInvFormulaPart1}), it follows that $(X^+X)(E)=0$ and hence $X_{01}(E)=X(E)=(XX^+X)(E)=X(XX^+(E))=0$. This proves that $E\in \ker X_{01}$. Thus, $\ker X_{11}\subseteq\ker X_{01}$ and hence $Z=X$ by Proposition \ref{PropgscFact}. Consequently, as $Z=X, Y=X^+=Z^+$, it follows from above that the inclusion (\ref{LemGenBabaInvFormulaIffAltCondYIsZPlus}) is true. Hence, to prove (\ref{LemGenBabaInvFormulaPart2}) it suffices to show that $\ker X_{11}^+X_{10}\subseteq \ker X_{11}$. The same argument above proves this though. This proves (\ref{LemGenBabaInvFormulaIffAltCondYIsZPlus}), which completes the proof of the lemma.
\end{proof}

The following examples show that hypothesis (\ref{LemGenBabaInvFormulaPart2}) in the previous theorem cannot be weakened for if it is false then we can have both (\ref{LemGenBabaInvFormulaPart1}) and (\ref{LemGenBabaInvFormulaPart3}) be false.
\begin{example}\label{Ex1NoWeakenHypGenBabachInvFormula}
    Consider the following invertible self-adjoint operator $X\in \mathcal{L}(H)$ with $H=\mathbb{K}^2$ defined by 
    \begin{gather*}
        H=H_0\ho H_1,\;H_0=\left\{\begin{bmatrix}
        x\\
        0
        \end{bmatrix}:x\in \mathbb{K}\right\},\;H_1=\left\{\begin{bmatrix}
        0\\
        y
        \end{bmatrix}:y\in \mathbb{K}\right\},\\
        X=[X_{ij}]_{i,j=0,1}=\begin{bmatrix}
        [1]&[1]\\
        [1]&[0]
        \end{bmatrix}.
    \end{gather*}
    Then
    \begin{gather*}
        \mathbb{K}^2=\ker \begin{bmatrix}
        0
        \end{bmatrix}=\ker X_{11} \not\subseteq \ker X_{01}=\ker \begin{bmatrix}
        1
        \end{bmatrix}=\{0\},\\
        \{0\}=\ker \begin{bmatrix}
        1
        \end{bmatrix}=\ker X/X_{11}\subseteq \ker X_{10}=\ker \begin{bmatrix}
        1
        \end{bmatrix}=\{0\},\\
        X^+=X^{-1}=\begin{bmatrix}
        [0]&[1]\\
        [1]&[-1]
        \end{bmatrix},\\
        \hspace{-1em}\begin{bmatrix}
            (X/X_{11})^+& -(X/X_{11})^+X_{01}(X_{11})^+\\
             -(X_{11})^+X_{10} (X/X_{11})^+ &  (X_{11})^++(X_{11})^+X_{10}(X/X_{11})^+X_{01}(X_{11})^+
        \end{bmatrix}\\
        =\begin{bmatrix}
            [1]& [0]\\
             [0] &  [0]
        \end{bmatrix}\neq X^+,\\ X/X_{11}=[1]\not=[0]=[(X^{+})_{00}]^{+}.
    \end{gather*}
\end{example}

\begin{example}\label{Ex2NoWeakenHypGenBabachInvFormula}
 Consider the Hilbert space $H=\mathbb{K}^2=H_0\ho H_1$ defined in Example \ref{Ex1NoWeakenHypGenBabachInvFormula} and the self-adjoint positive semidefinite operator $X\in \mathcal{L}(H)$ defined by
    \begin{gather*}
        X=[X_{ij}]_{i,j=0,1}=\begin{bmatrix}
        [1]&[1]\\
        [1]&[1]
        \end{bmatrix}.
    \end{gather*}
    Then
    \begin{gather*}
        \{0\}=\ker \begin{bmatrix}
        1
        \end{bmatrix}=\ker X_{11} \subseteq \ker X_{01}=\ker \begin{bmatrix}
        1
        \end{bmatrix}=\{0\},\\
        \mathbb{K}^2=\ker \begin{bmatrix}
        0
        \end{bmatrix}=\ker X/X_{11}\not\subseteq \ker X_{10}=\ker \begin{bmatrix}
        1
        \end{bmatrix}=\{0\},\\
        X=\begin{bmatrix}
        [1]&[1]\\
        [1]&[1]
        \end{bmatrix},\;
        X^+=\begin{bmatrix}
            [\frac{1}{4}]&[\frac{1}{4}]\\
            [\frac{1}{4}]&[\frac{1}{4}]
        \end{bmatrix},\\
        \begin{bmatrix}
            (X/X_{11})^+& -(X/X_{11})^+X_{01}(X_{11})^+\\
             -(X_{11})^+X_{10} (X/X_{11})^+ &  (X_{11})^++(X_{11})^+X_{10}(X/X_{11})^+X_{01}(X_{11})^+
        \end{bmatrix}\\
        =\begin{bmatrix}
            [0]& [0]\\
             [0] &  [1]
        \end{bmatrix}\neq X^+,\\
        X/X_{11}=[0]\not=[4]=[(X^{+})_{00}]^{+}.
    \end{gather*}
\end{example}

For the next proposition,
$\sigma/\sigma_{11}$ will denote the generalized Schur complement of $\sigma$ with respect to $\sigma_{11}$ and the orthogonal decomposition $\mathcal{H}=(\mathcal{U}\ho \mathcal{J})\ho \mathcal{E}$. More precisely, with respect to the decomposition $\mathcal{H}=(\mathcal{U}\ho \mathcal{J})\ho \mathcal{E}$, the following block formulas hold:
    \begin{gather}
        \sigma=\begin{bmatrix}%
\begin{bmatrix}
\sigma_{00} & \sigma_{02}\\
\sigma_{20} & \sigma_{22}%
\end{bmatrix}
&
\begin{bmatrix}
\sigma_{01}\\
\sigma_{21}%
\end{bmatrix}
\\%
\begin{bmatrix}
\sigma_{10} & \sigma_{12}%
\end{bmatrix}
& \sigma_{11}%
\end{bmatrix},\label{PropEqPseudEffOpBlockDecompSigma}\\
\sigma/\sigma_{11}=\begin{bmatrix}
\sigma_{00} & \sigma_{02}\\
\sigma_{20} & \sigma_{22}%
\end{bmatrix}-\begin{bmatrix}
\sigma_{01}\\
\sigma_{21}%
\end{bmatrix}\sigma_{11}^{+}\begin{bmatrix}
\sigma_{10} & \sigma_{12}%
\end{bmatrix}\label{PropEqPseudEffOpBlockDecompSigmaSchurComp1}\\
=\begin{bmatrix}
\sigma_{00}-\sigma_{01}\sigma_{11}^{+}\sigma_{10} & \sigma_{02}-\sigma_{01}\sigma_{11}^{+}\sigma_{12}\\
\sigma_{20}-\sigma_{21}\sigma_{11}^{+}\sigma_{10} & \sigma_{22}-\sigma_{21}\sigma_{11}^{+}\sigma_{12}%
\end{bmatrix},\label{PropEqPseudEffOpBlockDecompSigmaSchurComp2}
    \end{gather}
where the latter block decomposition of $\sigma/\sigma_{11}\in\mathcal{L}(\mathcal{U}\ho \mathcal{J})$ is with respect to the Hilbert space $\mathcal{U}\ho \mathcal{J}$.

\begin{proposition}\label{PropEqPseudEffOp}
    If $(\mathcal{H},\mathcal{U},\mathcal{E},\mathcal{J},\sigma)$ is a $Z$-problem with $\dim \mathcal{H}<\infty$ such that
    \begin{enumerate}[(a)]
        \item $\sigma^*=\sigma$,
        \item $\ker \sigma_{11}\subseteq \ker \begin{bmatrix}
                \sigma_{01}\\
                \sigma_{21}
        \end{bmatrix}$,
         \item $\ker \sigma/\sigma_{11}\subseteq \ker \begin{bmatrix}
                \sigma_{10} & \sigma_{12}
        \end{bmatrix}$ [see (\ref{PropEqPseudEffOpBlockDecompSigma}), (\ref{PropEqPseudEffOpBlockDecompSigmaSchurComp1}), \ref{PropEqPseudEffOpBlockDecompSigmaSchurComp2}],
         \item $\ker (\sigma^+)_{22}\subseteq \ker (\sigma^+)_{02}$,
        \item $\ker (\sigma^+)_{*^\prime}\subseteq \ker (\sigma^+)_{20}$,
    \end{enumerate}
    then Theorem \ref{ThmZProbSol} is true for both the direct $Z$-problem $(\mathcal{H},\mathcal{U},\mathcal{E},\mathcal{J},\sigma)$ and for the dual $Z$-problem $(\mathcal{H},\mathcal{U},\mathcal{J},\mathcal{E},\sigma^+)$. Furthermore, the effective operator $\sigma_*$ and the dual effective operator $(\sigma^+)_{*'}$ satisfies the identities
    \begin{gather}
        \sigma_*=\sigma_{00}-\sigma_{01}\sigma_{11}^+\sigma_{10},\\(\sigma^+)_{*'}=(\sigma^+)_{00}-(\sigma^+)_{02}(\sigma^+)_{22}^+(\sigma^+)_{20},\\
        (\sigma^+)_{*'}=(\sigma_*)^+.
    \end{gather}
\end{proposition}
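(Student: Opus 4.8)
The plan is to split the argument into three stages: apply Theorem \ref{ThmZProbSol} to the direct problem, apply it to the dual problem, and then establish the identity $(\sigma^+)_{*'}=(\sigma_*)^+$ by invoking the generalized Babachiewicz inverse formula (Lemma \ref{LemGenBabaInvFormula}) \emph{twice}. For the direct $Z$-problem, I would first note that (a) makes the compression $[\sigma_{ij}]_{i,j=0,1}$ self-adjoint, and that hypothesis (b) gives $\ker\sigma_{11}\subseteq\ker\bigl[\begin{smallmatrix}\sigma_{01}\\\sigma_{21}\end{smallmatrix}\bigr]=\ker\sigma_{01}\cap\ker\sigma_{21}\subseteq\ker\sigma_{01}$. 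Hence the hypotheses of Theorem \ref{ThmZProbSol} hold and $\sigma_*=\sigma_{00}-\sigma_{01}\sigma_{11}^+\sigma_{10}$.

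For the dual $Z$-problem $(\mathcal{H},\mathcal{U},\mathcal{J},\mathcal{E},\sigma^+)$, the key observation is that $\sigma^*=\sigma$ forces $(\sigma^+)^*=(\sigma^*)^+=\sigma^+$ by Lemma \ref{LemMPProp}$(1)$, so $\sigma^+$ is self-adjoint and so is its compression to $\mathcal{U}\ho\mathcal{J}$. Relative to the dual triple $\mathcal{U}\ho\mathcal{J}\ho\mathcal{E}$, the role of the ``middle'' block is played by $(\sigma^+)_{22}$, and hypothesis (d) is precisely $\ker(\sigma^+)_{22}\subseteq\ker(\sigma^+)_{02}$. Therefore Theorem \ref{ThmZProbSol} applies to the dual problem and yields $(\sigma^+)_{*'}=(\sigma^+)_{00}-(\sigma^+)_{02}(\sigma^+)_{22}^+(\sigma^+)_{20}$.

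The substantive step is the duality identity. First I would apply Lemma \ref{LemGenBabaInvFormula} to the self-adjoint operator $\sigma$ relative to $\mathcal{H}=(\mathcal{U}\ho\mathcal{J})\ho\mathcal{E}$, making the identifications $X_{11}=\sigma_{11}$, $X_{01}=\bigl[\begin{smallmatrix}\sigma_{01}\\\sigma_{21}\end{smallmatrix}\bigr]$, $X_{10}=[\,\sigma_{10}\ \sigma_{12}\,]$, and $X/X_{11}=\sigma/\sigma_{11}$. Hypotheses (b) and (c) are exactly the two inclusions (\ref{LemGenBabaInvFormulaPart2}) required by the lemma, so the compression $M$ of $\sigma^+$ to $\mathcal{U}\ho\mathcal{J}$, which in $\mathcal{U}\ho\mathcal{J}$-block form is $\bigl[\begin{smallmatrix}(\sigma^+)_{00}&(\sigma^+)_{02}\\(\sigma^+)_{20}&(\sigma^+)_{22}\end{smallmatrix}\bigr]$, equals $(\sigma/\sigma_{11})^+$. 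Writing $N=\sigma/\sigma_{11}$, which is self-adjoint and satisfies $N_{00}=\sigma_{00}-\sigma_{01}\sigma_{11}^+\sigma_{10}=\sigma_*$ by (\ref{PropEqPseudEffOpBlockDecompSigmaSchurComp2}), we obtain $M=N^+$. I would then apply Lemma \ref{LemGenBabaInvFormula} a second time, now to the self-adjoint operator $M$ relative to $\mathcal{U}\ho\mathcal{J}$ with $H_0=\mathcal{U}$ and $H_1=\mathcal{J}$: here the ``11'' block is $M_{\mathcal{J}\mathcal{J}}=(\sigma^+)_{22}$ and $M/M_{\mathcal{J}\mathcal{J}}=(\sigma^+)_{*'}$, while the two hypotheses $\ker M_{\mathcal{J}\mathcal{J}}\subseteq\ker M_{\mathcal{U}\mathcal{J}}$ and $\ker(M/M_{\mathcal{J}\mathcal{J}})\subseteq\ker M_{\mathcal{J}\mathcal{U}}$ are exactly (d) and (e). The lemma's conclusion $M/M_{\mathcal{J}\mathcal{J}}=[(M^+)_{00}]^+$, combined with $M^+=N$ (Lemma \ref{LemMPProp}$(6)$) and $(M^+)_{00}=N_{00}=\sigma_*$, delivers $(\sigma^+)_{*'}=(\sigma_*)^+$.

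I expect the main obstacle to be the index bookkeeping rather than any analytic difficulty: the labels shift when passing from the direct triple $(\mathcal{U},\mathcal{E},\mathcal{J})$ to the dual triple $(\mathcal{U},\mathcal{J},\mathcal{E})$ and again when collapsing $\mathcal{U}\ho\mathcal{J}$ into a single $2\times2$ block, so one must confirm that each of the four hypotheses (b)--(e) is being verified at the correct block position in the correct decomposition. Examples \ref{Ex1NoWeakenHypGenBabachInvFormula} and \ref{Ex2NoWeakenHypGenBabachInvFormula} show that neither inclusion in (\ref{LemGenBabaInvFormulaPart2}) can be dropped, which is why all four conditions (b)--(e) are genuinely needed and none can be streamlined away.
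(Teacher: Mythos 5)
Your proposal is correct and follows essentially the same route as the paper's proof: hypotheses (a) and (b) feed Theorem \ref{ThmZProbSol} for the direct problem, (d) (with self-adjointness of $\sigma^+$) does the same for the dual problem, and the identity $(\sigma^+)_{*'}=(\sigma_*)^+$ is obtained by applying Lemma \ref{LemGenBabaInvFormula} twice — first to $\sigma$ with respect to $(\mathcal{U}\ho\mathcal{J})\ho\mathcal{E}$ using (b) and (c), then to the compression of $\sigma^+$ to $\mathcal{U}\ho\mathcal{J}$ using (d) and (e) — exactly as in the paper. The only differences are notational (your intermediate operators $M$ and $N$, and phrasing the final step via $[(M^+)_{00}]^+$ rather than inverting $[(\sigma^+)_{*'}]^+$ at the end), which do not change the argument.
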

\begin{proof}
     Suppose that $(\mathcal{H},\mathcal{U},\mathcal{E},\mathcal{J},\sigma)$ is a $Z$-problem with $\dim \mathcal{H}<\infty$. Assume $(a)$ and $(b)$ are true. Then
     \begin{align*}
         \ker \sigma_{11}\subseteq \ker \begin{bmatrix}
                \sigma_{01}\\
                \sigma_{21}
        \end{bmatrix}=\ker\sigma_{01}\cap\ker\sigma_{21}\subseteq\ker\sigma_{01}
     \end{align*}
     implying Theorem \ref{ThmZProbSol} is true and hence
     \begin{align*}
         \sigma_*=\begin{bmatrix}
		\sigma_{00} & \sigma_{01}\\
		\sigma_{10} & \sigma_{11}
		\end{bmatrix}/\sigma_{11}=\sigma_{00}-\sigma_{01}\sigma_{11}^+\sigma_{10}.
     \end{align*}
     Assume also that $(c)$ is true. Then by Lemma \ref{LemGenBabaInvFormula} with $H=H_0\ho H_1, H_0=\mathcal{U}\ho \mathcal{J}, H_1=\mathcal{E},$ \begin{align*}
         \sigma=X=[X_{ij}]_{i,j=0,1}=
\begin{bmatrix}%
\begin{bmatrix}
\sigma_{00} & \sigma_{02}\\
\sigma_{20} & \sigma_{22}%
\end{bmatrix}
&
\begin{bmatrix}
\sigma_{01}\\
\sigma_{21}%
\end{bmatrix}
\\%
\begin{bmatrix}
\sigma_{10} & \sigma_{12}%
\end{bmatrix}
& \sigma_{11}%
\end{bmatrix}
\end{align*}
we have 
\begin{align*}
    \begin{bmatrix}
(\sigma^+)_{00} & (\sigma^+)_{02}\\
(\sigma^+)_{20} & (\sigma^+)_{22}%
\end{bmatrix}&=(X^+)_{00}=(X/X_{11})^+,\\
X/X_{11}=\sigma/\sigma_{11}&=\begin{bmatrix}
\sigma_* & \sigma_{02}-\sigma_{01}\sigma_{11}^{+}\sigma_{12}\\
\sigma_{20}-\sigma_{21}\sigma_{11}^{+}\sigma_{10} & \sigma_{22}-\sigma_{21}\sigma_{11}^{+}\sigma_{12}
\end{bmatrix}.
\end{align*}
Now assume also that $(d)$ is true. Then Theorem \ref{ThmZProbSol} is true for the dual $Z$-problem $(\mathcal{H},\mathcal{U},\mathcal{J},\mathcal{E},\sigma^+)$ so that
\begin{align*}
         (\sigma^+)_{*^\prime}&=\begin{bmatrix}
		(\sigma^+)_{00} & (\sigma^+)_{02}\\
		(\sigma^+)_{20} & (\sigma^+)_{22}
		\end{bmatrix}/(\sigma^+)_{22}\\
		&=(\sigma^+)_{00}-(\sigma^+)_{02}(\sigma^+)_{22}^+(\sigma^+)_{20}.
\end{align*}
Finally, assume also that $(e)$ is true. Then by Lemma \ref{LemGenBabaInvFormula} with $H=H_0\ho H_1, H_0=\mathcal{U}, H_1=\mathcal{J}$,
\begin{align*}
 X=[X_{ij}]_{i,j=0,1}=\begin{bmatrix}
		(\sigma^+)_{00} & (\sigma^+)_{02}\\
		(\sigma^+)_{20} & (\sigma^+)_{22}
		\end{bmatrix}
\end{align*}
we have 
     \begin{gather*}
         [(\sigma^+)_{*^\prime}]^+=\left\{\begin{bmatrix}
		(\sigma^+)_{00} & (\sigma^+)_{02}\\
		(\sigma^+)_{20} & (\sigma^+)_{22}
		\end{bmatrix}/(\sigma^+)_{22}\right\}^+\\
		=(X/X_{11})^+=(X^+)_{00}=\left\{\begin{bmatrix}
		(\sigma^+)_{00} & (\sigma^+)_{02}\\
		(\sigma^+)_{20} & (\sigma^+)_{22}
		\end{bmatrix}^+\right\}_{00}\\
		=(\sigma/\sigma_{11})_{00}=\sigma_*.
     \end{gather*} 
Therefore,  
\begin{gather*}
        (\sigma^+)_{*'}=\{[(\sigma^+)_{*'}]^+\}^+=(\sigma_*)^+.
\end{gather*}\end{proof}

Using Examples \ref{Ex1NoWeakenHypGenBabachInvFormula} and \ref{Ex2NoWeakenHypGenBabachInvFormula} with $H=\mathbb{K}^2,\mathcal{U}=H_0, \mathcal{E}=H_1, \mathcal{J}=\{0\}, \sigma=X$ shows that hypotheses $(b)$ and $(c)$ in the previous proposition cannot be weakened for if one of them is false then the conclusion of the proposition is false. In particular, the latter example has $\sigma^*=\sigma\geq 0$ with hypotheses $(a)$, $(b)$, $(d)$, and $(e)$ all true so that Theorem \ref{ThmZProbSol} is true for both the direct $Z$-problem $(\mathcal{H},\mathcal{U},\mathcal{E},\mathcal{J},\sigma)$ and for the dual $Z$-problem $(\mathcal{H},\mathcal{U},\mathcal{J},\mathcal{E},\sigma^+)$, but $(\sigma^+)_{*'}\not=(\sigma_*)^+$.

In order to prove below our generalization of the Thomson variational principle (Theorem \ref{ThmClassicalThomMinPrin}) for the generalized dual problem (Definition \ref{DefGenDualZProb}), we will need the following lemma. 
\begin{lemma}\label{LemThomPrelim}
    Let $H$ be a Hilbert space with $\dim H<\infty$ and $X,Y\in\mathcal{L}(H)$ with $X^*=X, Y^*=Y$. Then the following statements are true:
    \begin{enumerate}[(a)]
        \item $\ker X=\ker X^+$.
        \item $X^+|_{\ran X}=X|_{\ran X}^{-1}$.
        \item If $0\leq X$ then $0\leq X^+$.
        \item If $0\leq X\leq Y$, then $\ker X\subseteq \ker Y$.
        \item If $0\leq X\leq Y$ and $X$ is invertible, then $Y$ is invertible and $Y^{-1}\leq X^{-1}$.
        \item If $0\leq X\leq Y$ then $Y^+\leq X^+$ if and only if $\ker Y=\ker X$.
    \end{enumerate}
\end{lemma}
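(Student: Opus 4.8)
The plan is to prove the six parts in order, using finite dimensionality (so every range is closed and $\ran A=(\ker A^{*})^{\perp}$ for self-adjoint $A$) together with the Penrose equations of Definition~\ref{DefMP} and the properties in Lemma~\ref{LemMPProp}. The key recurring fact is that self-adjointness of $X$ makes $X^{+}=(X^{*})^{+}=(X^{+})^{*}$ self-adjoint by Lemma~\ref{LemMPProp}.(1), whence $XX^{+}=X^{+}X=\Gamma_{\ran X}$ by Lemma~\ref{LemMPProp}.(2)--(4); in particular $\ran X^{+}=\ran X$ and $\ker X=(\ran X)^{\perp}$.

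For (a) I would prove the two inclusions separately. If $v\in\ker X$ then $v\perp\ran X$, so $XX^{+}v=\Gamma_{\ran X}v=0$ and therefore $X^{+}v=X^{+}(XX^{+})v=0$, giving $\ker X\subseteq\ker X^{+}$; the reverse inclusion comes from applying this to the self-adjoint operator $X^{+}$ and using $(X^{+})^{+}=X$ from Lemma~\ref{LemMPProp}.(6). For (b), since $\ker X\cap\ran X=\{0\}$ the restriction $X|_{\ran X}$ is a bijection of $\ran X$; for $w\in\ran X$ one has $X^{+}w\in\ran X^{+}=\ran X$ and $X(X^{+}w)=\Gamma_{\ran X}w=w$, so $X^{+}w=(X|_{\ran X})^{-1}w$. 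For (c), writing $u=X^{+}w$ and using the Penrose identity $X^{+}=X^{+}XX^{+}$ together with self-adjointness gives $(X^{+}w,w)=(X^{+}XX^{+}w,w)=(XX^{+}w,X^{+}w)=(Xu,u)\ge 0$, so $X^{+}\ge 0$.

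For (d), the mechanism is quadratic-form domination: $0\le X\le Y$ gives $0\le(Xv,v)\le(Yv,v)$ for all $v$, and for a nonnegative self-adjoint $A$ one has $\ker A=\{v:(Av,v)=0\}$ because $(Av,v)=\|A^{1/2}v\|^{2}$ and $\ker A=\ker A^{1/2}$; squeezing the two forms against each other pins the relevant form to zero and yields the kernel inclusion $\ker X\subseteq\ker Y$ recorded in the lemma. For (e), invertibility of $X\ge 0$ forces $X\ge\delta I$ with $\delta>0$ the least eigenvalue of $X$, so $Y\ge X\ge\delta I>0$ and $Y$ is invertible; then $X^{-1/2}YX^{-1/2}\ge I$, so its inverse $X^{1/2}Y^{-1}X^{1/2}\le I$, and conjugating by $X^{-1/2}$ gives $Y^{-1}\le X^{-1}$.

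Finally, (f) is the crux and the step I expect to be the main obstacle, because pseudoinversion is not form-monotone in general, so the equivalence must isolate precisely when $Y^{+}\le X^{+}$ persists. For the implication $(\Leftarrow)$ I would set $\ker X=\ker Y=:N$, whence $\ran X=\ran Y=N^{\perp}=:R$; by (b) both $X^{+}$ and $Y^{+}$ vanish on $N$ and restrict on $R$ to $(X|_{R})^{-1}$ and $(Y|_{R})^{-1}$, where $0<X|_{R}\le Y|_{R}$, so (e) applied inside the Hilbert space $R$ gives $(Y|_{R})^{-1}\le(X|_{R})^{-1}$ and hence $Y^{+}\le X^{+}$ on $H=R\ho N$. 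For $(\Rightarrow)$, assuming $Y^{+}\le X^{+}$, part (c) gives $0\le Y^{+}\le X^{+}$, and applying (d) once to $0\le X\le Y$ and once to $0\le Y^{+}\le X^{+}$ produces $\ker X\subseteq\ker Y$ and $\ker Y^{+}\subseteq\ker X^{+}$; rewriting the latter as $\ker Y\subseteq\ker X$ via (a) and combining the two gives $\ker X=\ker Y$, completing the equivalence.
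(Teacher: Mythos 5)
Your proof of part (f) --- the only part the paper itself proves (it dismisses (a)--(e) as well known) --- follows the paper's argument essentially verbatim: for $(\Leftarrow)$ you pass to the common range $R=\ran X=\ran Y$, apply (e) inside the Hilbert space $R$, and reassemble via (b) and the common kernel; for $(\Rightarrow)$ you combine (c) with two kernel-inclusion comparisons and (a). Your proofs of (a), (b), (c), and (e), which the paper omits, are also correct as written.

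The one genuine defect is in (d). As printed in the lemma, the inclusion $\ker X\subseteq\ker Y$ is false: take $X=0$ and $Y=I$, so that $0\le X\le Y$ but $\ker X=H\not\subseteq\{0\}=\ker Y$. The correct statement is the reverse inclusion $\ker Y\subseteq\ker X$, and that is exactly what your quadratic-form squeeze proves: if $v\in\ker Y$ then $0\le(Xv,v)\le(Yv,v)=0$, forcing $\|X^{1/2}v\|^2=0$ and hence $Xv=0$; whereas $v\in\ker X$ places no constraint whatsoever on $(Yv,v)$. So your closing claim that the squeeze ``yields the kernel inclusion $\ker X\subseteq\ker Y$ recorded in the lemma'' is a non sequitur --- the mechanism you describe cannot produce that direction. (The lemma's statement of (d) is itself a typo in the paper, and you reproduced it rather than catching it.) Fortunately this does not damage (f): in your forward implication you apply (d) to both ordered pairs $0\le X\le Y$ and $0\le Y^{+}\le X^{+}$, and with the corrected direction these two applications yield $\ker Y\subseteq\ker X$ and, via (a), $\ker X\subseteq\ker Y$ respectively, so the conjunction $\ker X=\ker Y$ --- and with it the equivalence --- survives with the roles of the two inclusions merely swapped. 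The same robustness is present in the paper's own proof of (f), which invokes (d) in the identically misstated direction.
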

\begin{proof}
     All the statements in this lemma are well-known except possibly statement $(f)$ so we prove it here. Assume the hypotheses. Suppose that $0\leq X\leq Y$. Then by part $(d)$ we have $\ker X\subseteq \ker Y$. If $Y^+\leq X^+$ then by $(c)$ we have $0\leq Y^+\leq X^+$ so that by $(a)$ and $(d)$ we have $\ker Y=\ker Y^+\subseteq \ker X^+=\ker X$ which proves that $\ker Y=\ker X$. Conversely, if $\ker Y=\ker X$ then taking the orthogonal complement we have $\ran Y=\ran X$ and hence $ 0\leq X|_{\ran X}\leq Y|_{\ran Y}$ which implies by $(e)$ that $0 \leq Y|_{\ran Y}^{-1} \leq X|_{\ran X}^{-1}$ and thus from this, statement $(b)$, and the hypothesis that $\ker Y=\ker X$ we conclude that $Y^+\leq X^+$. This proves $(f)$.
\end{proof}

\begin{theorem}[Generalized Thomson minimization principle]\label{ThmGenThomMinPrin}
Suppose $(\mathcal{H},\mathcal{U},\mathcal{E},\mathcal{J},\sigma)$ is a $Z$-problem with $\dim \mathcal{H}<\infty$. If
\begin{align}
    (\sigma^+)_{22}\geq 0\label{ThmGenThomMinPrinHyp1}
\end{align}
and the hypotheses $(a)$-$(e)$ in Proposition \ref{PropEqPseudEffOp} are satisfied, then Theorem \ref{ThmZProbSol} is true for both the direct and dual $Z$-problems $(\mathcal{H},\mathcal{U},\mathcal{E},\mathcal{J},\sigma)$ and $(\mathcal{H},\mathcal{U},\mathcal{J},\mathcal{E},\sigma^+)$, respectively, the effective operator $\sigma_*$ and dual effective operator $(\sigma^+)_{*^\prime}$ satisfy the identity
\begin{align}
    (\sigma^+)_{*^\prime}=(\sigma_*)^+,
\end{align}
and $(\sigma_*)^+$ is the unique self-adjoint operator satisfying the minimization principle:
	\begin{equation}
		\left(J_0,(\sigma_*)^+J_0\right)=\min_{J\in \mathcal{J}}\left( J_0+J, \sigma^+(J_0+J) \right),\;\forall J_0\in \mathcal{U}.
	\end{equation}
Furthermore, for each $J_0\in \mathcal{U}$, the set of minimizers is given by 
	\begin{eqnarray}
        \left\{J\in\mathcal{J}:J=-[(\sigma^+)_{22}]^+(\sigma^+)_{20}J_0+L,\;L\in \ker(\sigma^+)_{22}\right\}.\nonumber\\
	\end{eqnarray}
Moreover, we have the following upper bound on the dual effective operator:
	\begin{gather}
	    (\sigma_*)^+\leq (\sigma^+)_{00}.\label{ThmGenThomMinPrinUpperBoundsDualEffOp}
	\end{gather}
If, in addition, $\sigma_*\geq 0$ then we have the following upper and lower bound on the effective operator:
    \begin{gather}
        0\leq [\Gamma_{\ran \sigma_*}(\sigma^+)_{00}\Gamma_{\ran \sigma_*}]^+\leq \sigma_*\leq \sigma_{00},\label{ThmGenThomMinPrinUpperAndLowerBounds1}
    \end{gather}
    where $\Gamma_{\ran \sigma_*}$ is the orthogonal projection of $\;\mathcal{U}$ onto $\ran \sigma_*$, i.e., 
    \begin{align}
        \Gamma_{\ran \sigma_*}=\sigma_*(\sigma_*)^+=(\sigma_*)^+\sigma_*.\label{ThmGenThomMinPrinUpperAndLowerBounds2}
    \end{align}
\end{theorem}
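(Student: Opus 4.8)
The plan is to recognize the generalized Thomson principle as nothing more than the generalized Dirichlet principle (Theorem \ref{ThmGenClassicalDiriMinPrin}) applied to the \emph{dual} $Z$-problem $(\mathcal{H},\mathcal{U},\mathcal{J},\mathcal{E},\sigma^+)$, glued together by the duality identity of Proposition \ref{PropEqPseudEffOp}. First I would invoke Proposition \ref{PropEqPseudEffOp}: since hypotheses $(a)$--$(e)$ are exactly its assumptions, Theorem \ref{ThmZProbSol} holds for both the direct and the dual problems, the two effective operators are given by the generalized Schur complement formulas, and $(\sigma^+)_{*^\prime}=(\sigma_*)^+$. This settles the first two assertions at once and hands me the representation formulas I will use below.

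Next I would verify that the dual problem meets the hypotheses of Theorem \ref{ThmGenClassicalDiriMinPrin}. Its constrained block (playing the role of ``$\sigma_{11}$'') is the compression $(\sigma^+)_{22}$, which is $\geq 0$ by (\ref{ThmGenThomMinPrinHyp1}); the compression of $\sigma^+$ to $\mathcal{U}\ho\mathcal{J}$ is self-adjoint because $\sigma^*=\sigma$ forces $(\sigma^+)^*=(\sigma^*)^+=\sigma^+$ via Lemma \ref{LemMPProp}(1); and $\ker(\sigma^+)_{22}\subseteq\ker(\sigma^+)_{02}$ is precisely hypothesis $(d)$. Applying Theorem \ref{ThmGenClassicalDiriMinPrin} to $(\mathcal{H},\mathcal{U},\mathcal{J},\mathcal{E},\sigma^+)$ then gives that $(\sigma^+)_{*^\prime}$ is the unique self-adjoint operator satisfying the stated minimization principle, with the minimizer set $\{J=-[(\sigma^+)_{22}]^+(\sigma^+)_{20}J_0+L:L\in\ker(\sigma^+)_{22}\}$ and the bound $(\sigma^+)_{*^\prime}\leq(\sigma^+)_{00}$. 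Rewriting everything through $(\sigma^+)_{*^\prime}=(\sigma_*)^+$ (and noting $(\sigma_*)^+$ is self-adjoint by Corollary \ref{CorGenClassicalBasicPropEffOp}) yields the minimization principle, the minimizers, and the upper bound (\ref{ThmGenThomMinPrinUpperBoundsDualEffOp}). The remaining bound $\sigma_*\leq\sigma_{00}$ in (\ref{ThmGenThomMinPrinUpperAndLowerBounds1}) is the direct-problem analogue, obtained by applying Theorem \ref{ThmGenClassicalDiriMinPrin} to $(\mathcal{H},\mathcal{U},\mathcal{E},\mathcal{J},\sigma)$ itself (which is where the positivity of $\sigma_{11}$ must be available).

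For the lower bound, assume in addition $\sigma_*\geq 0$. By Lemma \ref{LemThomPrelim}(c), $(\sigma_*)^+\geq 0$, so combining with the dual bound gives $0\leq(\sigma_*)^+\leq(\sigma^+)_{00}$. Writing $P=\Gamma_{\ran\sigma_*}=\sigma_*(\sigma_*)^+=(\sigma_*)^+\sigma_*$, one has $P(\sigma_*)^+P=(\sigma_*)^+$ because $P$ acts as the identity on $\ran(\sigma_*)^+=\ran\sigma_*$ and annihilates $\ker(\sigma_*)^+$. Compressing the inequality by $P$ therefore produces $(\sigma_*)^+\leq Y$, where $Y:=\Gamma_{\ran\sigma_*}(\sigma^+)_{00}\Gamma_{\ran\sigma_*}$; in particular $Y\geq(\sigma_*)^+\geq 0$. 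The target is to pass to pseudoinverses and reach $Y^+\leq\big((\sigma_*)^+\big)^+=\sigma_*$ by Lemma \ref{LemMPProp}(6).

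The hard part will be exactly this passage, and it is the reason the projection $\Gamma_{\ran\sigma_*}$ had to be inserted: unlike genuine inversion, the Moore--Penrose pseudoinverse does \emph{not} reverse operator inequalities, so the classical ``take inverses'' step of Thomson's proof is unavailable. The correct tool is Lemma \ref{LemThomPrelim}(f), which turns $0\leq X\leq Y$ into $Y^+\leq X^+$ \emph{only when} $\ker Y=\ker X$. Thus the decisive step is to check, with $X=(\sigma_*)^+$, that $\ker Y=\ker(\sigma_*)^+=\ker\sigma_*=(\ran\sigma_*)^\perp=\ker P$. The inclusion $\ker P\subseteq\ker Y$ is immediate from $Y=P(\sigma^+)_{00}P$, and the reverse inclusion $\ker Y\subseteq\ker(\sigma_*)^+$ is forced by $0\leq(\sigma_*)^+\leq Y$ together with the kernel monotonicity accompanying an ordered pair of positive operators (Lemma \ref{LemThomPrelim}(d)); compressing by $\Gamma_{\ran\sigma_*}$ is precisely what makes these two kernels coincide rather than leaving $(\sigma^+)_{00}$ with a strictly smaller kernel. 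With the kernels matched, Lemma \ref{LemThomPrelim}(f) gives $Y^+\leq\sigma_*$, while $Y^+\geq 0$ follows from $Y\geq 0$ by Lemma \ref{LemThomPrelim}(c); together these yield $0\leq[\Gamma_{\ran\sigma_*}(\sigma^+)_{00}\Gamma_{\ran\sigma_*}]^+\leq\sigma_*$, completing (\ref{ThmGenThomMinPrinUpperAndLowerBounds1}) and the theorem.
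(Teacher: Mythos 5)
Your proposal is correct and follows essentially the same route as the paper's proof: Proposition \ref{PropEqPseudEffOp} for the duality identity, the generalized Dirichlet principle (Theorem \ref{ThmGenClassicalDiriMinPrin}) applied to the dual problem $(\mathcal{H},\mathcal{U},\mathcal{J},\mathcal{E},\sigma^+)$ for the minimization principle and the bound $(\sigma_*)^+\leq(\sigma^+)_{00}$, and then the compression by $\Gamma_{\ran\sigma_*}$ together with Lemma \ref{LemThomPrelim}(f) and Lemma \ref{LemMPProp}(6) for the lower bound. In fact your explicit verification that $\ker\bigl[\Gamma_{\ran\sigma_*}(\sigma^+)_{00}\Gamma_{\ran\sigma_*}\bigr]=\ker(\sigma_*)^+$ — the hypothesis needed to invoke Lemma \ref{LemThomPrelim}(f) — spells out a step the paper's proof leaves implicit behind ``follows immediately.''
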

\begin{proof}
    Suppose $(\mathcal{H},\mathcal{U},\mathcal{E},\mathcal{J},\sigma)$ is a $Z$-problem with $\dim \mathcal{H}<\infty$, and the hypotheses $(a)$-$(e)$ in Proposition \ref{PropEqPseudEffOp} are satisfied. Then, in particular, Proposition \ref{PropEqPseudEffOp} is true so that Theorem \ref{ThmZProbSol} is true for both the direct and dual $Z$-problems $(\mathcal{H},\mathcal{U},\mathcal{E},\mathcal{J},\sigma)$ and $(\mathcal{H},\mathcal{U},\mathcal{J},\mathcal{E},\sigma^+)$, respectively, the effective operator $\sigma_*$ and dual effective operator $(\sigma^+)_{*^\prime}$ satisfy the identity $ (\sigma_*)^+\leq (\sigma^+)_{00}.$ Now assume also that $(\sigma^+)_{22}\geq 0$. Then the generalized Dirichlet minimization principle (i.e., Theorem \ref{ThmGenClassicalDiriMinPrin}) applies to the dual $Z$-problem $(\mathcal{H},\mathcal{U},\mathcal{J},\mathcal{E},\sigma^+)$ which proves the theorem [with the exception of (\ref{ThmGenThomMinPrinUpperAndLowerBounds1})] by this duality. Suppose, in addition, that $\sigma_*\geq 0$. First, by hypothesis we have $\sigma^*=\sigma$ which implies that $(\sigma_*)^*=\sigma_*$ and hence by Lemma \ref{LemMPProp} [namely, statements $(2)$ and $(3)$] we have that $\Gamma_{\ran \sigma_*}$, the orthogonal projection of $\;\mathcal{U}$ onto $\ran \sigma_*$, satisfies the identities (\ref{ThmGenThomMinPrinUpperAndLowerBounds2}). Next, by Lemma \ref{LemThomPrelim} and inequality (\ref{ThmGenThomMinPrinUpperBoundsDualEffOp}) it follows that
    \begin{align}
        0\leq (\sigma_*)^+=\Gamma_{\ran \sigma_*}(\sigma_*)^+\Gamma_{\ran \sigma_*}\leq \Gamma_{\ran \sigma_*}(\sigma^+)_{00}\Gamma_{\ran \sigma_*}.
    \end{align}
    Finally, from this and Lemma \ref{LemThomPrelim}.$(f)$ along with inequality (\ref{ThmGenThomMinPrinUpperBoundsDualEffOp}) and Lemma \ref{LemMPProp}.$(6)$, the inequality (\ref{ThmGenThomMinPrinUpperAndLowerBounds1}) follows immediately. This completes the proof of the theorem.
\end{proof}

\begin{corollary}
If $(\mathcal{H},\mathcal{U},\mathcal{E},\mathcal{J},\sigma)$ is a $Z$-problem with $\dim \mathcal{H}<\infty$ such that
\begin{align}
    \sigma^*=\sigma\geq 0,
\end{align}
then the inequality (\ref{ThmGenThomMinPrinHyp1}) holds,  hypotheses $(a)$, $(b)$, and $(d)$ in Proposition \ref{PropEqPseudEffOp} are satisfied, and $(\sigma_*)^*=\sigma_*\geq 0.$
If, in addition, hypotheses $(c)$ and $(e)$ in Proposition \ref{PropEqPseudEffOp} are satisfied then Theorem \ref{ThmGenThomMinPrin} is true and, in particular, the inequalities (\ref{ThmGenThomMinPrinUpperAndLowerBounds1}) are true, i.e.,
    \begin{gather}
        0\leq [\Gamma_{\ran \sigma_*}(\sigma^+)_{00}\Gamma_{\ran \sigma_*}]^+\leq \sigma_*\leq \sigma_{00}.
    \end{gather}
\end{corollary}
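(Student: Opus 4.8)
The plan is to assemble the conclusion entirely from results already proved, exploiting that the single hypothesis $\sigma^*=\sigma\geq 0$ is strong enough to force each subhypothesis in turn. First I would record that hypothesis $(a)$ of Proposition \ref{PropEqPseudEffOp} is immediate from $\sigma^*=\sigma$. To obtain (\ref{ThmGenThomMinPrinHyp1}), I would note that $\sigma\geq 0$ together with Lemma \ref{LemThomPrelim}.$(c)$ and Lemma \ref{LemMPProp}.$(1)$ gives $(\sigma^+)^*=\sigma^+\geq 0$; then $(\sigma^+)_{22}=\Gamma_2\sigma^+\Gamma_2|_{\mathcal{J}}$ is a compression of a positive semidefinite operator, hence positive semidefinite, which is exactly $(\sigma^+)_{22}\geq 0$.

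For hypotheses $(b)$ and $(d)$ I would invoke Lemma \ref{LemFundPosSemiDefImpliesKerRanBlockOpInclusion} twice, with two different $2\times 2$ regroupings of the triple decomposition. For $(b)$ I would write $\mathcal{H}=(\mathcal{U}\ho\mathcal{J})\ho\mathcal{E}$, so that the ``$11$'' block of $\sigma$ is $\sigma_{11}$ and the off-diagonal ``$01$'' block is $\begin{bmatrix}\sigma_{01}\\\sigma_{21}\end{bmatrix}$; since $\sigma\geq 0$, the lemma yields $\ker\sigma_{11}\subseteq\ker\begin{bmatrix}\sigma_{01}\\\sigma_{21}\end{bmatrix}$, which is $(b)$. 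For $(d)$ I would instead write $\mathcal{H}=(\mathcal{U}\ho\mathcal{E})\ho\mathcal{J}$ and apply the same lemma to $\sigma^+\geq 0$, obtaining $\ker(\sigma^+)_{22}\subseteq\ker\begin{bmatrix}(\sigma^+)_{02}\\(\sigma^+)_{12}\end{bmatrix}\subseteq\ker(\sigma^+)_{02}$, which is $(d)$.

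Having secured $(a)$, $(b)$, $(d)$, and (\ref{ThmGenThomMinPrinHyp1}), I would read off $(\sigma_*)^*=\sigma_*\geq 0$ directly from Corollary \ref{CorSuffCondForTrueThmGenClassicalDiriMinPrin}, whose standing hypothesis $\sigma^*=\sigma\geq 0$ is precisely what we have; this in fact delivers the stronger chain $0\leq\sigma_*\leq\sigma_{00}$, and completes the first assertion. For the second assertion I would simply observe that adjoining the additional hypotheses $(c)$ and $(e)$ to the already-verified $(a)$, $(b)$, $(d)$ and (\ref{ThmGenThomMinPrinHyp1}) supplies exactly the full hypothesis list of Theorem \ref{ThmGenThomMinPrin}, so that theorem applies; and since $\sigma_*\geq 0$ was just established, the conditional ``if, in addition, $\sigma_*\geq 0$'' clause of Theorem \ref{ThmGenThomMinPrin} activates and yields the bounds (\ref{ThmGenThomMinPrinUpperAndLowerBounds1}).

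The argument is essentially a chain of citations with no genuine analytic difficulty; the only point demanding care is the bookkeeping in the two applications of Lemma \ref{LemFundPosSemiDefImpliesKerRanBlockOpInclusion}, where one must correctly match the regrouped $2\times 2$ block indices against the triple-decomposition subblocks $\sigma_{ij}$ and $(\sigma^+)_{ij}$ (and, for $(d)$, remember to restrict the columnwise kernel inclusion down to the single block $(\sigma^+)_{02}$). Once those identifications are pinned down, every remaining step is a direct invocation of an earlier result.
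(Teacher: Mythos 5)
Your proposal is correct and follows essentially the same route as the paper: hypothesis $(a)$ is immediate, $(\sigma^+)^*=\sigma^+\geq 0$ comes from Lemma \ref{LemMPProp}.$(1)$ and Lemma \ref{LemThomPrelim}.$(c)$, and hypotheses $(b)$ and $(d)$ follow from the positive-semidefinite block kernel inclusion applied to $\sigma$ and $\sigma^+$ under the two regroupings $(\mathcal{U}\ho\mathcal{J})\ho\mathcal{E}$ and $(\mathcal{U}\ho\mathcal{E})\ho\mathcal{J}$ (the paper routes this through Proposition \ref{PropgscFact}, whose psd case rests on the same Lemma \ref{LemFundPosSemiDefImpliesKerRanBlockOpInclusion} you cite directly). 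The only difference is that you spell out the compression argument for $(\sigma^+)_{22}\geq 0$, the derivation of $\sigma_*\geq 0$ from Corollary \ref{CorSuffCondForTrueThmGenClassicalDiriMinPrin}, and the final invocation of Theorem \ref{ThmGenThomMinPrin}, all of which the paper dismisses as obvious.
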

\begin{proof}
    Suppose $(\mathcal{H},\mathcal{U},\mathcal{E},\mathcal{J},\sigma)$ is a $Z$-problem with $\dim \mathcal{H}<\infty$ such that $\sigma^*=\sigma\geq 0$ [hence hypotheses $(a)$ in Proposition \ref{PropEqPseudEffOp} is satisfied]. Then by Lemma \ref{LemMPProp}.$(1)$ and Lemma \ref{LemThomPrelim}.$(c)$ we have $(\sigma^+)^*=\sigma^+\geq 0$. From the block decomposition (\ref{PropEqPseudEffOpBlockDecompSigma}) for $\sigma$ and the corresponding one for $\sigma^+$, it immediately follow from Proposition \ref{PropgscFact} that hypotheses $(b)$ and $(d)$ in Proposition \ref{PropEqPseudEffOp} is satisfied. The rest of the proof of this corollary is now obvious.
\end{proof}

\section{\label{sec:DiscreteNetworkExamples}Examples from Discrete electrical network problems}

The following function space and its basic properties as a vector space are standard in linear algebra (see, for example, Ref.\ \onlinecite{19FIS}).
\begin{definition}\label{DefGenFunSpace}
	Let $\mathcal{T}$ be a nonempty set. The set of all functions from $\mathcal{T}$ to $\mathbb{K}$ (where $\mathbb{K}=\mathbb{R}$ or $\mathbb{K}=\mathbb{C}$) is denoted by
	\begin{align}
		\mathcal{F}(\mathcal{T},\mathbb{K})=\{f\;|\;f:\mathcal{T}\to\mathbb{K}\}. 
	\end{align}
	The functions $0_{\mathcal{T}}$ and $1_{\mathcal{T}}$ are given by
	\begin{gather}
		0_{\mathcal{T}}(t)=0,\;\forall t\in \mathcal{T},\\
		1_{\mathcal{T}}(t)=1,\;\forall t\in \mathcal{T},
	\end{gather}
	and the identity operator on $\mathcal{F}(\mathcal{T},\mathbb{K})$ is denoted by $I_{\mathcal{F}(\mathcal{T},\mathbb{K})}$.
\end{definition}

\begin{lemma}\label{PropHilbertSpace}
	If $\mathcal{T}$ is a non-empty set, then $\mathcal{F}(\mathcal{T},\mathbb{K})$ is a vector space over the field $\mathbb{K}$. Furthermore, if $\mathcal{T}=\{t_1,\ldots,t_{|\mathcal{T}|}\}$ is finite, then $\mathcal{F}(\mathcal{T},\mathbb{K})$ is a Hilbert space under the inner product
	\begin{equation}\label{PropInnerProd}
		(f,g)_\mathcal{T}=\sum_{t\in\mathcal{T}}\overline{f(t)}g(t),\; \forall f,g\in \mathcal{F}(\mathcal{T},\mathbb{K}).
	\end{equation}
	Moreover, the set
	\begin{equation}\label{PropOrthBas}
		\alpha_{\mathcal{T}}=\{\delta_{t_i}: i=1,\ldots, |\mathcal{T}|\},\quad \delta_{t_i}(t_j)=\delta_{ij},\;\;1\leq i,j\leq |\mathcal{T}|,
	\end{equation}
	is an orthonormal basis for $\mathcal{F}(\mathcal{T},\mathbb{K})$, where $\delta_{ij}$ is the Kronecker delta, i.e.,
	\begin{equation}
	    \delta_{ij}=
	    \begin{cases}
	        1,\quad i=j,\\
	        0,\quad i\neq j.
	    \end{cases}\label{DefKroneckerDelta}
	\end{equation}
\end{lemma}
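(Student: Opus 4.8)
The plan is to verify the three assertions in turn, all of which reduce to the pointwise structure of functions valued in the field $\mathbb{K}$ together with standard facts about finite-dimensional inner product spaces. First, to see that $\mathcal{F}(\mathcal{T},\mathbb{K})$ is a vector space over $\mathbb{K}$, I would equip it with the pointwise operations $(f+g)(t)=f(t)+g(t)$ and $(\lambda f)(t)=\lambda f(t)$ for all $t\in\mathcal{T}$, with additive identity $0_{\mathcal{T}}$. Each vector space axiom (associativity and commutativity of addition, existence of additive inverses, distributivity, and the unit law) then holds because it holds at every point $t\in\mathcal{T}$ as a consequence of the corresponding field axiom in $\mathbb{K}$; this is the routine verification found in Ref.\ \onlinecite{19FIS}, which I would simply cite.

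Second, assuming $\mathcal{T}=\{t_1,\dots,t_{|\mathcal{T}|}\}$ is finite, I would check that (\ref{PropInnerProd}) defines an inner product. Conjugate symmetry $(g,f)_{\mathcal{T}}=\overline{(f,g)_{\mathcal{T}}}$ and linearity in the second argument follow at once from the corresponding properties of the finite sum and of complex conjugation, while positivity follows since $(f,f)_{\mathcal{T}}=\sum_{t}|f(t)|^2\geq 0$, with equality iff $f(t)=0$ for every $t\in\mathcal{T}$, i.e.\ $f=0_{\mathcal{T}}$. To upgrade this inner product space to a Hilbert space I must verify completeness, and here I would invoke the fact that $\mathcal{F}(\mathcal{T},\mathbb{K})$ is finite-dimensional (of dimension $|\mathcal{T}|$) and that every finite-dimensional inner product space over $\mathbb{K}$ is complete, being isometrically isomorphic to $\mathbb{K}^{|\mathcal{T}|}$ with its Euclidean norm.

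Finally, for the orthonormal basis claim, I would compute $(\delta_{t_i},\delta_{t_j})_{\mathcal{T}}=\sum_{t}\overline{\delta_{t_i}(t)}\delta_{t_j}(t)=\delta_{ij}$ directly from the definition, which shows $\alpha_{\mathcal{T}}$ is orthonormal and hence linearly independent. Spanning follows from the pointwise identity $f=\sum_{i=1}^{|\mathcal{T}|}f(t_i)\delta_{t_i}$, valid because both sides agree at each $t_j$. Thus $\alpha_{\mathcal{T}}$ is an orthonormal basis and $\dim\mathcal{F}(\mathcal{T},\mathbb{K})=|\mathcal{T}|$, which retroactively justifies the finite-dimensionality used in the completeness step. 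Since none of these computations presents a genuine difficulty, the only point requiring any care is the logical ordering: the completeness argument depends on finite-dimensionality, so I would either establish the basis first or simply note that $\dim\mathcal{F}(\mathcal{T},\mathbb{K})<\infty$ is immediate from $\mathcal{T}$ having finitely many elements. In short, there is no substantive obstacle here; the lemma records standard linear-algebra facts that will serve as the scaffolding for the discrete network examples in Sec.\ \ref{sec:DiscreteNetworkExamples}.
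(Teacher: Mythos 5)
Your proof is correct and is exactly the standard linear-algebra argument that the paper itself omits, citing Ref.\ \onlinecite{19FIS} instead of supplying a proof. The pointwise verification of the vector space axioms, the inner product axioms with completeness from finite-dimensionality, and the computation showing $\alpha_{\mathcal{T}}$ is an orthonormal spanning set are precisely what is intended.
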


\begin{notation}
     The matrix representation for any linear operator $L\in\mathcal{L}(\mathcal{F}(\mathcal{T}_1,\mathbb{K}),\mathcal{F}(\mathcal{T}_2,\mathbb{K}))$ with respect to the ordered bases $\alpha_{\mathcal{T}_1},\; \alpha_{\mathcal{T}_2}$ will be denoted by 
     $\begin{bmatrix}
        L
    \end{bmatrix}_{\alpha_{\mathcal{T}_1}}^{\alpha_{\mathcal{T}_2}}$.
\end{notation}

\subsection{Operator Framework for Electrical Conductivity in Finite Networks}\label{SubSectOpView}

Let $G=(P_G, E_G)$ denote a finite linear digraph with finite nonempty node and directed edge sets $P_G=\{p_j:j=1, \ldots , |P_G|\}$ and $E_G=\{e_j:j=1,,\ldots,|E_G|\}$, respectively. For each edge $e\in E_G$, the outgoing and incoming incident nodes are denoted by $e_-$ and $e_+$, respectively [more precisely, these are functions $(\cdot)_{\pm}:E_G\to P_G$] as shown in Figure \ref{fig:eplusminus}. We will assume that the graph $G$ has no (self) loops, i.e., $(e)_-\not=(e)_+$ for all $e\in E_G$.
\begin{figure}[ht!]
	\begin{center} 
		\begin{circuitikz}[scale=1.5, /tikz/circuitikz/bipoles/length=0.75cm]
			\draw(0,0) node[above] {$e_-$} to[thick, short, i=\phantom{},l=$e$] (2,0) node[circ, color=black] {};
			% \draw  (0,0) node[circ, color=gray] {};
			\draw  (0,0) node[circ, color=black] {};
			%\draw (2,0) node[circ, color=gray] {};
			\draw (2,0) node[above] {$e_+$};
			%\draw  (1,0) node[circ, color=gray] {};
			\draw  (1,0) node[circ, color=black] {}; 
		\end{circuitikz}
	\end{center}
	\caption{Example of a directed edge.}\label{fig:eplusminus}
\end{figure}
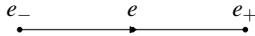

The following definitions of $D$ and $D^{\bullet}$ are the discrete analogs of the gradient $\nabla$ and divergence $\nabla\cdot$ in the continuum.

\begin{definition}\label{DefAnalGradDiv}
	We define the function $D:\mathcal{F}(P_G,\mathbb{K})\to\mathcal{F}(E_G,\mathbb{K})$ by
	\begin{equation}\label{DefDGrad}
		(Df)(e)=f(e_+)-f(e_-),\;\forall (f,e)\in \mathcal{F}(P_G,\mathbb{K})\times E_G,
	\end{equation}
	and the function $D^\bullet:\mathcal{F}(E_{G},\mathbb{K})\to\mathcal{F}(P_G,\mathbb{K})$ by
	\begin{equation}\label{DefDDiv}
		(D^\bullet f)(p)=\sum_{\begin{subarray}{c}e\in E_G,\\e_-=p\end{subarray}}f(e)-\sum_{\begin{subarray}{c}e\in E_G,\\e_+=p\end{subarray}}f(e),\;\forall(f,p)\in \mathcal{F}(E_G,\mathbb{K})\times P_G.
	\end{equation}
\end{definition}

It follows that the matrix representation $\begin{bmatrix}
\cdot
\end{bmatrix}^{\alpha_{E_G}}_{\alpha_{P_G}}$ of $D$ is the edge-node incidence matrix $A_a^T$, i.e.,
\begin{align}\label{RefDMat}
	\begin{bmatrix}                                                   
	D                                                                 
	\end{bmatrix}^{\alpha_{E_G}}_{\alpha_{P_G}}=A_a^T=\begin{bmatrix} 
	a_{ji}                                                            
	\end{bmatrix},                                                     
\end{align}
where $A_a=[a_{ij}]$ is the well known node-edge incidence matrix\cite{69BB, 21AS} defined by
\begin{align} 
	\label{Defnode-edgeincidencematrix}
	a_{ij} & =                                                                                           
	\begin{cases}
	1      & \text{if }(e_j)_-=p_i, \\
	-1     & \text{if }(e_j)_+=p_i,\\
	0      & \text{if }(e_j)_-\not=p_i\not=(e_j)_+ ,                
	\end{cases} 
	\\&=\begin{cases} 
	1      & \begin{circuitikz}[scale=1, /tikz/circuitikz/bipoles/length=0.75cm]                         
	\draw(0,0) node[above] {$p_i$} to[short, i=\phantom{},l=$e_j$] (2,0); 
	% \draw  (0,0) node[circ, color=gray] {};
	\draw  (0,0) node[circ, color=black] {};
	%\draw (2,0) node[circ, color=gray] {};
	\draw (2,0) node[circ, color=black] {};
	%\draw  (1,0) node[circ, color=gray] {};
	\draw  (1,0) node[circ, color=black] {}; 
	\end{circuitikz} \;,
	\\
	-1     & \begin{circuitikz}[scale=1, /tikz/circuitikz/bipoles/length=0.75cm]                         
	%\draw(0,0) node[above, yshift -1.5cm] {$p_i$}to[short, i<=\phantom{},l=$e_j$, color=gray] (2,0); 
	\draw(0,0) node[above] {$p_i$}to[short, i<=\phantom{},l=$e_j$, color=black] (2,0); 
	%\draw  (2,0) node[circ, color=gray] {};
	\draw  (2,0) node[circ, color=black] {}; 
	%\draw  (0,0) node[circ, color=gray] {};
	\draw  (0,0) node[circ, color=black] {}; 
	\end{circuitikz}  \;,
	\\
	0      & \begin{circuitikz}[scale=1, /tikz/circuitikz/bipoles/length=0.75cm]                         
	%\draw(1,0) node[circ, color=gray] -- (3,0) node[circ, color=gray];
	\draw (1,0) node[circ, color=black] {} -- (3,0) node[circ, color=black] {};
	\draw (2,-0.05) node[above] {$e_j$};
	\draw (0,0) node[above] {$p_i$};
	%\draw (0,0) node[circ, color=gray] {};
	\draw (0,0) node[circ, color=black] {};
	\end{circuitikz} \;.
	\end{cases}
\end{align} 
It is an easy to verify that the matrix representation $\begin{bmatrix}
\cdot
\end{bmatrix}^{\alpha_{P_G}}_{\alpha_{E_G}}$ of $D^{\bullet}$ is the matrix $-A_a$, i.e., 
\begin{align}\label{RefDBulMat}
	\begin{bmatrix}                                    
	D^{\bullet}                                        
	\end{bmatrix}^{\alpha_{P_G}}_{\alpha_{E_G}} =-A_a. 
\end{align}
In this context, the proof of the following lemma is almost immediate.
\begin{lemma}\label{LemAdjFin}
	The functions $D$ and $D^{\bullet}$ [defined by (\ref{DefDGrad}) and (\ref{DefDDiv})] are well defined linear operators and satisfy the Hilbert space adjoint relations 
	\begin{align}
		D^*=- D^{\bullet},\;(D^{\bullet})^*=-D.
	\end{align} 
\end{lemma}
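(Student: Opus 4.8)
The plan is to reduce the adjoint relations to the matrix representations already recorded in (\ref{RefDMat}) and (\ref{RefDBulMat}), using the orthonormality of the bases $\alpha_{P_G}$ and $\alpha_{E_G}$ established in Lemma \ref{PropHilbertSpace}. First I would dispatch well-definedness and linearity: for fixed $f$, formula (\ref{DefDGrad}) assigns to each edge the scalar $f(e_+)-f(e_-)\in\mathbb{K}$, so $Df\in\mathcal{F}(E_G,\mathbb{K})$, and linearity in $f$ is immediate since each evaluation $f\mapsto f(e_\pm)$ is linear; the finite sums in (\ref{DefDDiv}) are handled identically for $D^\bullet$. Since $P_G$ and $E_G$ are finite and nonempty, the domains and codomains are finite-dimensional Hilbert spaces, so the adjoints exist.

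For the core identity $D^*=-D^\bullet$, I would invoke the standard fact that, with respect to orthonormal bases, the matrix of an adjoint is the conjugate transpose of the matrix of the operator. Thus
\[
[D^*]_{\alpha_{E_G}}^{\alpha_{P_G}}=\left([D]_{\alpha_{P_G}}^{\alpha_{E_G}}\right)^*=(A_a^T)^*.
\]
The crux is then the observation that all entries $a_{ij}\in\{-1,0,1\}$ are real, so the conjugate transpose collapses to the ordinary transpose: $(A_a^T)^*=(A_a^T)^T=A_a=-(-A_a)$, which by (\ref{RefDBulMat}) equals $-[D^\bullet]_{\alpha_{E_G}}^{\alpha_{P_G}}$. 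Since two operators with the same matrix in a fixed basis coincide, this yields $D^*=-D^\bullet$. The second relation $(D^\bullet)^*=-D$ then follows in one line by taking Hilbert-space adjoints of $D^*=-D^\bullet$ and using the involutivity of the adjoint.

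There is no serious obstacle here, which is exactly why the excerpt calls the proof \emph{almost immediate}. The only points demanding care are the real-entry observation that lets the conjugate transpose reduce to a transpose, and keeping the sign conventions in (\ref{RefDMat})--(\ref{RefDBulMat}) straight. As a fully self-contained alternative that bypasses the matrix machinery, I could instead verify $(Df,g)_{E_G}=(f,-D^\bullet g)_{P_G}$ directly: expand $\sum_{e\in E_G}\overline{(f(e_+)-f(e_-))}\,g(e)$, split it into the $e_+$ and $e_-$ contributions, and regroup each as a sum over nodes by collecting all edges incident to a given node $p$; this reproduces precisely $-(D^\bullet g)(p)$ at each node, and the relation then follows from the uniqueness of the adjoint.
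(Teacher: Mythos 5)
Your proof is correct and follows essentially the same route as the paper's: both reduce the claim to the matrix identities (\ref{RefDMat})--(\ref{RefDBulMat}) with respect to the orthonormal bases $\alpha_{P_G},\alpha_{E_G}$, use that the adjoint's matrix is the conjugate transpose, and note the real entries collapse this to a transpose, giving $(A_a^T)^T=A_a=-[D^\bullet]_{\alpha_{E_G}}^{\alpha_{P_G}}$. The direct inner-product verification you sketch at the end is a fine alternative but is not needed.
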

\begin{proof}
	First, it is obvious the functions $D$ and $D^\bullet$ are well defined and linear. Second, using the identities (\ref{RefDMat}) and (\ref{RefDBulMat}) and the fact that $\alpha_{P_G}$ and $\alpha_{E_G}$ are orthonormal bases, it follows from standard results from linear algebra\cite{19FIS} that
	\begin{gather*}
		\begin{bmatrix}
			D^* 
		\end{bmatrix}^{\alpha_{P_G}}_{\alpha_{E_G}}=
		(\begin{bmatrix}
		D
		\end{bmatrix}^{\alpha_{E_G}}_{\alpha_{P_G}})^*=(\begin{bmatrix}
		D
		\end{bmatrix}^{\alpha_{E_G}}_{\alpha_{P_G}})^T\\
		=(A_a^T)^T=A_a=-\begin{bmatrix}
		D^{\bullet}
		\end{bmatrix}^{\alpha_{P_G}}_{\alpha_{E_G}}=\begin{bmatrix}
		-D^{\bullet}
		\end{bmatrix}^{\alpha_{P_G}}_{\alpha_{E_G}},
	\end{gather*}
	which implies $D^*=- D^{\bullet}$.
\end{proof}

\begin{remark}[Kirchhoff's laws]
    In this network setting, the sets of edge currents and edge voltages are $\ker D^\bullet$ and $\ran D$, respectively. In particular, Kirchhoff's current law (KCL) is equivalent to: $I$ is an edge current if and only if $D^\bullet I=0$. Similarly, Kirchhoff's voltage law (KVL) is equivalent to: $V$ is an edge voltage if and only $V=Du$, for some potential $u$.
\end{remark}

The following result is well known\cite{69BB,97NB, 21AS} (for a proof see Ref.\ \onlinecite{22KB}).

\begin{lemma}\label{LemConKer}
     Let $G_1=(P_{G_1},E_{G_1}),\ldots,G_k=(P_{G_k},E_{G_k})$ be the connected components of the finite linear digraph $G=(P_G,E_G)$. Denote the characteristic function on the node set $P_{G_i}$ of the graph $G_i$ by $\chi_{P_{G_i}}$, i.e.,
     \begin{gather}
        \chi_{P_{G_i}}(x)=
        \begin{cases}
            1,\;\text{if}\; x\in P_{G_i},\\
            0,\;\text{if}\; x\in P_G\setminus P_{G_i},
        \end{cases}
     \end{gather}
     for $i=1,\ldots,k$. Then $\{\chi_{P_{G_1}},\ldots,\chi_{P_{G_k}}\}$ is an orthonormal basis for $\ker D$,
     \begin{gather}
         \ker D=\operatorname{span}\{\chi_{P_{G_1}},\ldots,\chi_{P_{G_k}}\},\; \dim\ker D=k, \;1_{P_G}\in \ker D.
    \end{gather}
     In particular, $G$ is connected (i.e., $k=1$) if and only if $\ker D=\operatorname{span}\{1_{P_G}\}$.
\end{lemma}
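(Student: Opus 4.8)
The plan is to prove the statement about $\ker D$ for the graph operator $D$ defined by $(Df)(e) = f(e_+) - f(e_-)$. The key insight is that a function $f$ lies in $\ker D$ if and only if $f(e_+) = f(e_-)$ for every edge $e \in E_G$; in other words, $f$ must take equal values on any two nodes joined by an edge. Since connectivity is defined precisely by the existence of paths (chains of edges), $f$ must be constant on each connected component. First I would establish this characterization: $f \in \ker D$ iff $f$ is constant on each connected component $G_i$.

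To carry this out rigorously, I would fix $f \in \ker D$ and two nodes $p, q$ lying in the same connected component $G_i$. By definition of connectedness there is a path (in the underlying undirected graph) from $p$ to $q$, i.e., a sequence of nodes $p = p_{i_0}, p_{i_1}, \ldots, p_{i_m} = q$ where consecutive nodes are joined by some edge $e$. For each such edge, whether it is oriented as $(e_- , e_+) = (p_{i_{j-1}}, p_{i_j})$ or the reverse, the condition $(Df)(e) = 0$ gives $f(p_{i_{j-1}}) = f(p_{i_j})$. Chaining these equalities along the path yields $f(p) = f(q)$, so $f$ is constant on $P_{G_i}$. Conversely, any function constant on each $P_{G_i}$ has $(Df)(e) = 0$ for every edge, since both endpoints of an edge lie in the same component (the graph has no edges between distinct components by definition of connected component).

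From the characterization, the remaining claims follow by routine linear algebra. A function constant on each component is exactly a linear combination $f = \sum_{i=1}^k c_i \chi_{P_{G_i}}$, so $\ker D = \operatorname{span}\{\chi_{P_{G_1}}, \ldots, \chi_{P_{G_k}}\}$. The characteristic functions $\chi_{P_{G_i}}$ have disjoint supports (the node sets $P_{G_i}$ partition $P_G$), so they are orthogonal with respect to the inner product $(f,g)_{P_G} = \sum_{p \in P_G} \overline{f(p)} g(p)$ from Lemma \ref{PropHilbertSpace}, and each is nonzero, hence they form a basis; this gives $\dim \ker D = k$. Since $1_{P_G} = \sum_{i=1}^k \chi_{P_{G_i}}$ is the constant function $1$, it is visibly constant on each component and thus lies in $\ker D$. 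Finally, $G$ is connected iff $k=1$ iff $\{\chi_{P_{G_1}}\} = \{1_{P_G}\}$ spans $\ker D$, giving the last assertion.

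I do not expect a genuine obstacle here; the result is essentially a restatement of the definition of connected component combined with the defining formula for $D$. The only point requiring care is the orientation-independence in the path argument: because $D$ records a signed difference $f(e_+) - f(e_-)$, one must note that the vanishing of this difference forces equality regardless of which endpoint is the head and which is the tail, so undirected connectivity (rather than directed reachability) is the correct notion governing $\ker D$. I would make sure to state the path argument in terms of the underlying undirected graph to avoid any confusion about directionality.
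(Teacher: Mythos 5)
Your proof is correct; the paper itself omits the proof of this lemma (deferring to Refs.\ \onlinecite{69BB,97NB,21AS} and the thesis \onlinecite{22KB}), and your argument --- $f\in\ker D$ iff $f$ is constant on each connected component of the underlying undirected graph, hence $\ker D$ is spanned by the pairwise-orthogonal, disjointly supported characteristic functions --- is the standard one. One small remark: you prove (correctly) that the $\chi_{P_{G_i}}$ are orthogonal and linearly independent, whereas the lemma asserts they are \emph{orthonormal}; under the inner product of Lemma \ref{PropHilbertSpace} one has $\|\chi_{P_{G_i}}\|^2=|P_{G_i}|$, so they are unit vectors only when every component is a single node --- the lemma's wording is slightly too strong, and your version is the right one for the span and dimension claims actually used later.
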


\begin{definition}[Conductivity on electrical networks]\label{DefCondElecNet}
    An electrical network is a triple $(P_G,E_G,\sigma)$, with graph $G=(P_G,E_G)$ and operator $\sigma\in\mathcal{L}(\mathcal{F}(E_G,\mathbb{K}))$ satisfying $\sigma^*=\sigma\geq 0$. We call $\sigma$ the network conductivity and we define the Kirchhoff operator $K_\sigma\in\mathcal{L}(\mathcal{F}(P_G,\mathbb{K}))$ by
	\begin{gather}
		K_\sigma = -D^\bullet \sigma D.\label{DefKirchhoffOperator}
	\end{gather}
	In particular, for $\sigma=I$ [the identity operator on $\mathcal{F}(E_G,\mathbb{K})$], the ``normalized" Kirchhoff operator is 
	\begin{align}\label{KirchhoffGammaI}
        K_I = -D^{\bullet} D.
    \end{align}
\end{definition}

\begin{remark}
The normalized Kirchhoff operator $K_I$ is the graph-theoretic analog of $-\Delta=-\nabla\cdot\nabla$, i.e., the negative of the Laplacian in the continuum. Similarly, the Kirchhoff operator $K_\sigma$ (sometimes referred to in the literature as the graph Laplacian) is graph-theoretic analog of $-\nabla\cdot\sigma\nabla$ in the continuum. In the matrix setting, the well known\cite{00CM, 21AS} Kirchhoff matrix is the matrix representation of the Kirchhoff operator $K_{\sigma}$ with respect to the orthonormal bases $\alpha_{P_G}, \alpha_{E_G}$, i.e.,
\begin{align}
	\begin{bmatrix}
    K_{\sigma}
    \end{bmatrix}^{\alpha_{E_G}}_{\alpha_{E_G}} = 
    -\begin{bmatrix}                          
	    D^{\bullet}                                                   
	\end{bmatrix}_{\alpha_{E_G}}^{\alpha_{P_G}} 
	\begin{bmatrix}
        \sigma
    \end{bmatrix}^{\alpha_{E_G}}_{\alpha_{E_G}}
    \begin{bmatrix}                                   
	    D                                                              
	\end{bmatrix}^{\alpha_{E_G}}_{\alpha_{P_G}} = A_a 
	\begin{bmatrix}
        \sigma
    \end{bmatrix}^{\alpha_{E_G}}_{\alpha_{E_G}} A_a^T,
\end{align}
where $\begin{bmatrix}
        \sigma
       \end{bmatrix}^{\alpha_{E_G}}_{\alpha_{E_G}}$
     is the matrix representation of the conductivity $\sigma$, i.e., the conductivity matrix. Typically, in a resistor-only network, the conductivity matrix $\begin{bmatrix}
    \sigma
    \end{bmatrix}^{\alpha_{E_G}}_{\alpha_{E_G}}$ is a diagonal matrix. This corresponds to the conductivity $\sigma$ being given by
    \begin{align}
        \sigma=\sum_{i=1}^{|E_G|}\sigma_{ii}\delta_{e_i},
    \end{align}
    where $\sigma_{ii}$ is the conductivity (i.e., the inverse of resistance) on the $i$th edge of the graph $G=(P_G,E_G)$. 
\end{remark}

Let us conclude this section now with the following physical interpretation for an electrical network $(P_G,E_G,\sigma)$ following Ref.\ \onlinecite{69BB}. By KVL, the set of edge voltages of the network (containing no voltage sources) is $\ran D$. Now, any edge voltage $V\in \ran D$ in the network can be considered as being generated by a current source $I_g\in \mathcal{F}(E_G,\mathbb{K})$ that is balanced by a passive current $I\in \mathcal{F}(E_G,\mathbb{K})$ in the network, i.e., they must satisfy $I-I_g\in \ker D^{\bullet}$ by KCL, and
$$I=\sigma V$$ by Ohm's law. As $V=Du$, for some potential $u\in \mathcal{F}(P_G,\mathbb{K}),$ then it follows that $u$ satisfies the equations
$$K_{\sigma}u=\Phi,$$ where $\Phi=-D^\bullet I_g\in \ran D^\bullet$ is considered as a node current-source since, for each $p\in P_G,$ $$\Phi(p)=-(D^\bullet I_g)(p)=-\sum_{\begin{subarray}{c}e\in E_G,\\e_-=p\end{subarray}}I_g(e)+\sum_{\begin{subarray}{c}e\in E_G,\\e_+=p\end{subarray}}I_g(e)$$
is the algebraic sum of current sources incident at the node $p$ (with references chosen so that they enter the node). 

Conversely, suppose $u\in \mathcal{F}(P_G,\mathbb{K}).$ Then $K_\sigma u\in \ran D^\bullet$ and so $\exists\; I_g\in \mathcal{F}(E_G,\mathbb{K})$ such that $K_{\sigma}u=-D^\bullet I_g.$ Define $V=Du$ and $I=\sigma V$. Then $V\in \ran D, I\in \mathcal{F}(E_G,\mathbb{K}),$ and $-D^\bullet I=-D^\bullet \sigma V=K_{\sigma}u=-D^\bullet I_g$ implying $I-I_g\in \ker D^{\bullet}$. Thus, according to the interpretation above, the edge voltage $V$ in the network can be considered as being generated by the source current $I_g$ that is balanced by the passive current $I$ in the network. %phenomenal interpretation

\subsection{Dirichlet-to-Neumann Map}\label{SubSectDtNZ}
In this subsection, we will use our operator framework (from Sec.\ \ref{SubSectOpView}) to reformulate an example from Ref.\ \onlinecite{16GM} (see Ch. 2, Sec. 2.13) on finite linear digraphs and the characterization of their discrete Dirichlet-to-Neumann (DtN) maps in terms of effective operators and a Hodge decomposition associated with the discrete Dirichlet problem. For further background and motivation for the DtN map in this setting, see Refs.\ \onlinecite{00CM, 21AS}, for instance.

Consider an electrical network $(P_G,E_G,\sigma)$ (see Def.\ \ref{DefCondElecNet}) with the node set $P_G$ partitioned into two nonempty subsets $P_{\partial G}$ and $P_{G^\circ}$, i.e.,
\begin{align}
    P_G=P_{\partial G}\cup P_{G^\circ},\;\; P_{\partial G}\cap P_{G^\circ}=\emptyset,\;\; P_{\partial G}\neq\emptyset,\;\; P_{G^\circ}\neq\emptyset.\label{AssDtNMapGraphBddIntNodes}
\end{align}
The boundary nodes and interior nodes for the graph $G=(P_G,E_G)$ are defined as $P_{\partial G}$ and $P_{G^\circ}$, respectively. 

We will now define the discrete DtN map in terms of the discrete analog of the Dirichlet problem for the conductivity equation in the continuum. To do this, we will need the restriction maps $|_{P_{G^\circ}}:\mathcal{F}(P_G,\mathbb{K})\rightarrow\mathcal{F}(P_{G^\circ},\mathbb{K})$ and $|_{P_{\partial G}}:\mathcal{F}(P_G,\mathbb{K})\rightarrow\mathcal{F}(P_{\partial G},\mathbb{K})$ which are the bounded linear operators defined by
\begin{gather}
    u|_{P_{G^\circ}}(p)=u(p),\;\forall p\in P_{G^\circ},\label{DefRestrictionMapInteriorNodes}\\
    u|_{P_{\partial G}}(p)=u(p),\;\forall p\in P_{\partial G},\label{DefRestrictionMapBoundaryNodes}
\end{gather}
for each $u\in \mathcal{F}(P_G,\mathbb{K})$.
\begin{definition}[Discrete Dirichlet problem]\label{DefDiscreteDirchProb}
    The discrete Dirichlet problem is defined as follows: given a function $f\in\mathcal{F}(P_{\partial G},\mathbb{K})$, find a function $u\in\mathcal{F}(P_G,\mathbb{K})$ satisfying
    \begin{gather}
        (-D^\bullet\sigma Du)|_{P_{G^\circ}}=0,\label{DefDisDirProb}\\
        u|_{P_{\partial G}}=f\label{DefDisBoundCond}.
    \end{gather}
    For each such solution $u\in\mathcal{F}(P_G,\mathbb{K})$, the boundary (source) current $\phi$ is defined by
    \begin{equation}\label{DefBoundCurPhi}
        \phi=(-D^\bullet\sigma Du)|_{P_{\partial G}}.
    \end{equation}
\end{definition}

\begin{definition}[Dirichlet-to-Neumann map]\label{DefDtNMap}
    The Dirichlet-to-Neumann (DtN) map is the operator $\Lambda_\sigma\in\mathcal{L}(\mathcal{F}(P_{\partial G},\mathbb{K}))$ that maps the potential on the boundary $u|_{P_{\partial G}}$ for each solution $u\in\mathcal{F}(P_G,\mathbb{K})$ of the discrete Dirichlet problem (\ref{DefDisDirProb}) and (\ref{DefDisBoundCond}) to the boundary current $\phi$ (\ref{DefBoundCurPhi}), i.e.,
    \begin{gather}
        \Lambda_\sigma u|_{P_{\partial G}}=\phi.
    \end{gather}
\end{definition}

\subsubsection{Generalized Schur Complement Representation}

In this section we will derive a generalized Schur complement formula for the DtN map $\Lambda_\sigma$. To do so, we will need the following. The Hilbert space $H=\mathcal{F}(P_G,\mathbb{K})$ decomposes into the orthogonal direct sum 
\begin{gather}
    \mathcal{F}(P_G,\mathbb{K})=H_0\ho H_1,\;H_0=\ker (|_{P_{G^\circ}}),\;H_1=\ker (|_{P_{\partial G}}).\label{DefDtNMapSchurComplOrthogDecomposition}
\end{gather}
With respect to this decomposition, the Kirchhoff operator [see Def.\ \ref{DefCondElecNet}] can be written in $2\times 2$ block operator form as
    \begin{gather}
        K_{\sigma}=\begin{bmatrix}
            (K_{\sigma})_{00} & (K_{\sigma})_{01}\\
            (K_{\sigma})_{10} & (K_{\sigma})_{11}
        \end{bmatrix},\\
        \mathcal{L}(H_j,H_i)\ni(K_{\sigma})_{ij}=\Gamma_{H_i}K_{\sigma}\Gamma_{H_j}:H_j\rightarrow H_i,\;i,j=0,1,\label{DefKirchhoffOpBlocks}
    \end{gather}
where $\Gamma_{H_0}$ and $\Gamma_{H_1}$ are the orthogonal projections of $\mathcal{F}(P_G,\mathbb{K})$ onto $H_0$ and $H_1$, respectively. 

Next, we define the zero extension maps $\iota_0:\mathcal{F}(P_{\partial G},\mathbb{K})\rightarrow \mathcal{F}(P_{G},\mathbb{K})$ and $\iota_1:\mathcal{F}(P_{G^\circ},\mathbb{K})\rightarrow \mathcal{F}(P_{G},\mathbb{K})$ by
\begin{gather}
    [\iota_0(f)](p)=\left\{\begin{matrix}
        f(p) & \text{if }p\in P_{\partial G},\\
        0 & \text{if }p\in P_{G^\circ},
    \end{matrix}\right.\label{DefZeroExtIota0Map}\\
    [\iota_1(g)](p)=\left\{\begin{matrix}
        0 & \text{if }p\in P_{\partial G},\\
        g(p) & \text{if }p\in P_{G^\circ},
    \end{matrix}\right.\label{DefZeroExtIota1Map}
\end{gather}
for each $f\in\mathcal{F}(P_{\partial G},\mathbb{K}),$ $g\in \mathcal{F}(P_{G^\circ},\mathbb{K}),$ and every $p\in P_G$. 

The fundamental relationship between the restriction and zero extension maps is described in the next lemma.
\begin{lemma}\label{LemFundRelRestrZeroExtMaps}
    The zero extension maps $\iota_0:\mathcal{F}(P_{\partial G},\mathbb{K})\rightarrow \mathcal{F}(P_{G},\mathbb{K})$ and $\iota_1:\mathcal{F}(P_{G^\circ},\mathbb{K})\rightarrow \mathcal{F}(P_{G},\mathbb{K})$ [as defined in (\ref{DefZeroExtIota0Map}) and (\ref{DefZeroExtIota1Map})] are bounded linear operators and isometries whose adjoints are the restriction maps $|_{P_{G^\circ}}:\mathcal{F}(P_G,\mathbb{K})\rightarrow\mathcal{F}(P_{G^\circ},\mathbb{K})$ and $|_{P_{\partial G}}:\mathcal{F}(P_G,\mathbb{K})\rightarrow\mathcal{F}(P_{\partial G},\mathbb{K})$ [as defined in (\ref{DefRestrictionMapInteriorNodes}) and (\ref{DefRestrictionMapBoundaryNodes})]. In particular,
    \begin{gather}
        \iota_0^*=|_{P_{\partial G}},\;\iota_1^*=|_{P_{G^\circ}}.\label{LemFundRelRestrZeroExtMapsAdjointRel}
    \end{gather}
    Moreover,
    \begin{align}
         \ran \iota_0&=H_0,\;\ran \iota_1=H_1,\\
         |_{P_{\partial G}}\circ \iota_0&=I_{\mathcal{F}(P_{\partial G},\mathbb{K})},\;\iota_0\circ |_{P_{\partial G}}=\Gamma_{H_0},\label{LemFundRelRestrZeroExtMapsLeftInvH0}\\
        |_{P_{G^\circ}}\circ \iota_1&=I_{\mathcal{F}(P_{G^\circ},\mathbb{K})},\;\iota_1\circ |_{P_{G^\circ}}=\Gamma_{H_1},\label{LemFundRelRestrZeroExtMapsLeftInvH1}\\
        \iota_0^+&=|_{P_{\partial G}},\;\iota_1^+=|_{P_{G^\circ}},
    \end{align}
    where $(\cdot)^+$ is the Moore-Penrose pseudoinverse (see Def.\ \ref{DefMP}).
\end{lemma}
\begin{proof}
    First, it is obvious that the zero extension maps (\ref{DefZeroExtIota0Map}) and (\ref{DefZeroExtIota1Map}) are well-defined linear operators which implies they are bounded since the Hilbert space $\mathcal{F}(P_G,\mathbb{K})$ is finite-dimensional. It also follows immediately from their definitions that $\ran \iota_0=H_0,\;\ran \iota_1=H_1$. Second, for any $f_1,f_2,f\in \mathcal{F}(P_{\partial G},\mathbb{K})$ and $u\in \mathcal{F}(P_G,\mathbb{K})$ we have
    \begin{gather*}
        (\iota_0(f_1),\iota_0(f_2))_{P_{G}}=\sum_{p\in P_{G}}\overline{\iota_0(f_1)(p)}\iota_0(f_2)(p)\\
        =\sum_{p\in P_{\partial G}}\overline{f_1(p)}f_2(p)=(f_1,f_2)_{P_{\partial G}},\\
        (u|_{P_{\partial G}},f)_{P_{\partial G}}=\sum_{p\in P_{\partial G}}\overline{u|_{P_{\partial G}}(p)}f(p)=\sum_{p\in P_{\partial G}}\overline{u(p)}f(p)\\
        =\sum_{p\in P_{G}}\overline{u(p)}\iota_0(f)(p)=(u,\iota_0(f))_{P_{G}},\\
        [|_{P_{\partial G}}\circ \iota_0(f)](p)=[\iota_0(f)|_{P_{\partial G}}](p)\\
        =[\iota_0(f)](p)=f(p),\;\forall p\in P_{\partial G},\\
        [\iota_0\circ |_{P_{\partial G}}(u)](p)=[\iota_0(u|_{P_{\partial G}})](p)=\left\{\begin{matrix}
        u(p) & \text{if }p\in P_{\partial G},\\
        0 & \text{if }p\in P_{G^\circ},
    \end{matrix}\right.\\
        =[\Gamma_{H_0}(u)](p),\;\forall p\in P_{G},
    \end{gather*}
    which proves that $\iota_0$ is an isometry, the adjoint relations $\iota_0^*=|_{P_{\partial G}}$ hold, $|_{P_{\partial G}}\circ \iota_0=I_{\mathcal{F}(P_{\partial G},\mathbb{K})},$ and $\iota_0\circ |_{P_{\partial G}}=\Gamma_{H_0}$. It now follows from this with $X=\iota_0, Y=|_{P_{\partial G}}$ that the four Penrose equations $YXY=Y,XYX=X,(YX)^*=YX,(XY)^*=XY$ are satisfied so that $X^+=Y$ (by Def.\ \ref{DefMP}). The rest of the proof follows similarly.
\end{proof}

\begin{theorem}\label{ThmDtNMapSchurComp}
    Let $(P_G,E_G,\sigma)$ be an electrical network (see Def. \ref{DefCondElecNet}) with boundary and interior nodes, $P_{\partial G}$ and $P_{G^\circ}$, respectively [i.e., satisfying (\ref{AssDtNMapGraphBddIntNodes})]. Then 
    \begin{align}
        K_{\sigma}^*=K_{\sigma}\geq 0,
    \end{align}
    the discrete Dirichlet problem (see Def.\ \ref{DefDiscreteDirchProb}) has a solution for each $f\in\mathcal{F}(P_{\partial G},\mathbb{K})$, and the DtN map $\Lambda_\sigma$ (see Def.\ \ref{DefDtNMap}) is given by the formula
    \begin{gather}
        \Lambda_\sigma=\iota_0^*K_{\sigma}/(K_{\sigma})_{11}\iota_0,\label{ThmDtNMapSchurCompFormula}\\
        K_{\sigma}/(K_{\sigma})_{11}=(K_{\sigma})_{00}-(K_{\sigma})_{01}(K_{\sigma})_{11}^{+}(K_{\sigma})_{10},\label{ThmDtNMapSchurCompFormulaPart2}
    \end{gather}
    where $\iota_0:\mathcal{F}(P_{\partial G},\mathbb{K})\rightarrow H_0$ is the zero extension map [as defined in (\ref{DefZeroExtIota0Map})] and $K_{\sigma}/(K_{\sigma})_{11}$ is the generalized Schur complement of the Kirchhoff operator $K_{\sigma}=[(K_{\sigma})_{ij}]_{i,j=0,1}$ [as defined in (\ref{DefKirchhoffOperator}), (\ref{DefDtNMapSchurComplOrthogDecomposition}), and (\ref{DefKirchhoffOpBlocks})] with respect to $(K_{\sigma})_{11}$.
\end{theorem}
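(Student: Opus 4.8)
The plan is to first establish the factorization $K_\sigma = D^*\sigma D$ and then reduce the whole statement to the block decomposition (\ref{DefDtNMapSchurComplOrthogDecomposition}) together with the generalized Schur complement machinery already developed in this paper. First I would prove $K_\sigma^* = K_\sigma \geq 0$: by Lemma \ref{LemAdjFin} we have $D^\bullet = -D^*$, so $K_\sigma = -D^\bullet\sigma D = D^*\sigma D$. Since $\sigma^* = \sigma$, it follows that $K_\sigma^* = D^*\sigma^* D = D^*\sigma D = K_\sigma$, and for any $u$ we get $(u,K_\sigma u) = (Du,\sigma Du)\geq 0$ because $\sigma\geq 0$. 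Writing $K_\sigma = [(K_\sigma)_{ij}]_{i,j=0,1}$ with respect to (\ref{DefDtNMapSchurComplOrthogDecomposition}), Lemma \ref{LemFundPosSemiDefImpliesKerRanBlockOpInclusion} then yields $\ker(K_\sigma)_{11}\subseteq\ker(K_\sigma)_{01}$ and (in finite dimensions) $\ran(K_\sigma)_{10}\subseteq\ran(K_\sigma)_{11}$, which is exactly the hypothesis needed to invoke Proposition \ref{PropgscFact} and the solvability results.

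Next I would recast the discrete Dirichlet problem in block form. Writing $u = u_0 + u_1$ with $u_0 = \Gamma_{H_0}u\in H_0$ and $u_1 = \Gamma_{H_1}u\in H_1$, the boundary condition (\ref{DefDisBoundCond}) forces $u_0 = \iota_0(f)$, using $\iota_0\circ|_{P_{\partial G}} = \Gamma_{H_0}$ from Lemma \ref{LemFundRelRestrZeroExtMaps}, while the interior equation (\ref{DefDisDirProb}), namely $(K_\sigma u)|_{P_{G^\circ}} = 0$, is equivalent to $\Gamma_{H_1}K_\sigma u = 0$, i.e. to the constrained linear equation $(K_\sigma)_{11}u_1 = -(K_\sigma)_{10}u_0$. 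Because $\ran(K_\sigma)_{10}\subseteq\ran(K_\sigma)_{11}$, this equation is solvable; applying Lemma \ref{LemGenConstEq} (with $X = [(K_\sigma)_{ij}]_{i,j=0,1}$, $u = u_0$, $v = u_1$) gives a solution for every $f$, with general interior part $u_1 = -(K_\sigma)_{11}^+(K_\sigma)_{10}u_0 + K$, where $K\in\ker(K_\sigma)_{11}$.

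Finally I would compute the boundary current. Using $|_{P_{\partial G}} = \iota_0^*$ and the identity $\iota_0^* = \iota_0^*\Gamma_{H_0}$ (both consequences of Lemma \ref{LemFundRelRestrZeroExtMaps}), the boundary current (\ref{DefBoundCurPhi}) becomes $\phi = \iota_0^* K_\sigma u = \iota_0^*\bigl((K_\sigma)_{00}u_0 + (K_\sigma)_{01}u_1\bigr)$. Substituting the solution formula for $u_1$ and using the inclusion $\ker(K_\sigma)_{11}\subseteq\ker(K_\sigma)_{01}$ to annihilate the term $(K_\sigma)_{01}K = 0$, I obtain $\phi = \iota_0^*\bigl[(K_\sigma)_{00} - (K_\sigma)_{01}(K_\sigma)_{11}^+(K_\sigma)_{10}\bigr]u_0 = \iota_0^*\,\bigl(K_\sigma/(K_\sigma)_{11}\bigr)\,\iota_0 f$, which is precisely (\ref{ThmDtNMapSchurCompFormula})--(\ref{ThmDtNMapSchurCompFormulaPart2}).

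The main obstacle is not any single computation but verifying that $\Lambda_\sigma$ is genuinely well defined as an operator, since the Dirichlet problem need not have a unique solution when $(K_\sigma)_{11}$ is singular. The resolution is exactly the inclusion $\ker(K_\sigma)_{11}\subseteq\ker(K_\sigma)_{01}$ coming from $K_\sigma\geq 0$: any two solutions sharing the same boundary data $f$ differ by an element of $\ker(K_\sigma)_{11}$, and this inclusion guarantees their boundary currents coincide. This is precisely the point where the generalized (pseudoinverse) Schur complement must replace the classical one, and hence where the relaxed hypotheses of this paper are indispensable.
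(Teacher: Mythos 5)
Your proposal is correct and follows essentially the same route as the paper: both establish $K_\sigma = D^*\sigma D \geq 0$ via Lemma \ref{LemAdjFin}, recast the Dirichlet problem as the block constrained equation with respect to (\ref{DefDtNMapSchurComplOrthogDecomposition}) using Lemma \ref{LemFundRelRestrZeroExtMaps}, and extract the generalized Schur complement via Proposition \ref{PropgscFact} and Lemma \ref{LemGenConstEq}. Your explicit substitution of the interior solution and the well-definedness discussion merely unpack what the paper obtains directly from the $w=Wu$ conclusion of Lemma \ref{LemGenConstEq}.
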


\begin{proof}
    Assume the hypotheses. First, by hypothesis $\sigma^*=\sigma\geq 0$ and $D^*=-D^{\bullet}$ by Lemma \ref{LemAdjFin}, hence
    \begin{align}
        K_\sigma=-D^{\bullet}\sigma D=D^*\sigma D
    \end{align} 
    from which it follows that $K_\sigma^*=K_\sigma\geq 0$. Next, by Lemma \ref{LemFundRelRestrZeroExtMaps} it follows that the discrete Dirichlet problem (Def.\ \ref{DefDiscreteDirchProb}) is equivalent to the constrained linear system of equations
    \begin{gather}
        K_{\sigma}(\iota_0(f)+\iota_1(g))=\iota_0(\phi),\\
        u|_{P_{\partial G}}=f,\; u|_{P_{G^\circ}}=g,\;(K_\sigma u)|_{P_{\partial G}}=\phi,
    \end{gather}
    or, equivalently, in block form
    \begin{gather}\label{RefBlockDisDir}
        \begin{bmatrix}
            (K_{\sigma})_{00} & (K_{\sigma})_{01}\\
            (K_{\sigma})_{10} & (K_{\sigma})_{11}
        \end{bmatrix}
        \begin{bmatrix}
           \iota_0(f)\\
           \iota_1(g)
        \end{bmatrix}
        =
        \begin{bmatrix}
            \iota_0(\phi)\\
            0
        \end{bmatrix}.
    \end{gather}
    Thus, by Proposition \ref{PropgscFact}, Lemma \ref{LemGenConstEq}, and Lemma \ref{LemFundRelRestrZeroExtMaps}, the above system is solvable for any $f\in\mathcal{F}(P_{\partial G},\mathbb{K})$ and $\iota_0(\phi)$ is given in terms of the generalized Schur complement (\ref{ThmDtNMapSchurCompFormula}) by
    \begin{gather}
        \iota_0(\phi)=K_{\sigma}/(K_{\sigma})_{11}\iota_0(f)
    \end{gather}
    from which it follows by (\ref{LemFundRelRestrZeroExtMapsAdjointRel}) and (\ref{LemFundRelRestrZeroExtMapsLeftInvH0}) in Lemma \ref{LemFundRelRestrZeroExtMaps} that
        \begin{gather}
        \phi=(\iota_0^*\circ\iota_0)(\phi)=[\iota_0^*K_{\sigma}/(K_{\sigma})_{11}\iota_0](f).
    \end{gather}
    As $f$ was an arbitrary element of $\mathcal{F}(P_{\partial G},\mathbb{K})$, this proves the identity (\ref{ThmDtNMapSchurCompFormula}), which completes the proof of the theorem. 
\end{proof}

\begin{remark}
    In the matrix setting, the well known (see, Refs.\ \onlinecite{00CM, 21AS}) response matrix of the electric network $(P_G,E_G,\sigma)$ would be $\begin{bmatrix}
        \Lambda_\sigma
    \end{bmatrix}_{\alpha_{P_{\partial G}}}^{\alpha_{P_{\partial G}}}$, the matrix representation of the DtN map $\Lambda_\sigma$ with respect to the orthonormal basis $\alpha_{P_{\partial G}}$ for $\mathcal{F}(P_{\partial G},\mathbb{K})$.
\end{remark}

\begin{corollary}\label{CorConditionsForUniqueSolvDiricProb}
 Let $(P_G,E_G,\sigma)$ be an electrical network (see Def. \ref{DefCondElecNet}) with boundary and interior nodes, $P_{\partial G}$ and $P_{G^\circ}$, respectively [i.e., satisfying (\ref{AssDtNMapGraphBddIntNodes})]. If $G$ is a connected graph and $\sigma$ is invertible then $(K_{\sigma})_{11}$ is invertible and the discrete Dirichlet problem (see Def.\ \ref{DefDiscreteDirchProb}) has a unique solution for each $f\in\mathcal{F}(P_{\partial G},\mathbb{K})$.
\end{corollary}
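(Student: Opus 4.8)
The plan is to establish invertibility of $(K_\sigma)_{11}$ first, and then read off uniqueness of the discrete Dirichlet problem from the block-triangular structure already assembled in the proof of Theorem \ref{ThmDtNMapSchurComp}.

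First I would exploit the factorization $K_\sigma = -D^\bullet\sigma D = D^*\sigma D$, which is valid since $D^*=-D^\bullet$ by Lemma \ref{LemAdjFin}. For any $v\in H_1=\ker(|_{P_{\partial G}})$ we have $(K_\sigma)_{11}v=\Gamma_{H_1}K_\sigma v$, and using $\Gamma_{H_1}v=v$ together with self-adjointness of $\Gamma_{H_1}$,
\[
\left(v,(K_\sigma)_{11}v\right)=\left(v,K_\sigma v\right)=\left(Dv,\sigma Dv\right).
\]
Because $\sigma^*=\sigma\geq 0$ is invertible on the finite-dimensional space $\mathcal{F}(E_G,\mathbb{K})$, its smallest eigenvalue is a scalar $\delta>0$, so $\sigma\geq\delta I$ and hence $\left(v,(K_\sigma)_{11}v\right)\geq\delta\lVert Dv\rVert^2\geq 0$.

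Next I would upgrade this to strict positivity by showing $\ker D\cap H_1=\{0\}$. Since $G$ is connected, Lemma \ref{LemConKer} gives $\ker D=\operatorname{span}\{1_{P_G}\}$; thus any $v\in\ker D\cap H_1$ has the form $v=c\,1_{P_G}$ while simultaneously vanishing on $P_{\partial G}$. As $P_{\partial G}\neq\emptyset$ by (\ref{AssDtNMapGraphBddIntNodes}), evaluating $v$ at any boundary node forces $c=0$, so $v=0$. Consequently every nonzero $v\in H_1$ satisfies $Dv\neq 0$, whence $\left(v,(K_\sigma)_{11}v\right)\geq\delta\lVert Dv\rVert^2>0$. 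Since $H_1$ is finite-dimensional, this positive-definiteness makes $(K_\sigma)_{11}$ invertible.

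Finally, for uniqueness I would invoke the equivalent block system from the proof of Theorem \ref{ThmDtNMapSchurComp},
\[
\begin{bmatrix}
(K_{\sigma})_{00} & (K_{\sigma})_{01}\\
(K_{\sigma})_{10} & (K_{\sigma})_{11}
\end{bmatrix}
\begin{bmatrix}
\iota_0(f)\\
\iota_1(g)
\end{bmatrix}
=
\begin{bmatrix}
\iota_0(\phi)\\
0
\end{bmatrix}.
\]
The second row reads $(K_\sigma)_{11}\iota_1(g)=-(K_\sigma)_{10}\iota_0(f)$; with $(K_\sigma)_{11}$ now invertible this determines $\iota_1(g)$ uniquely, hence $g$ uniquely since $\iota_1$ is injective (Lemma \ref{LemFundRelRestrZeroExtMaps}), and the first row then fixes $\iota_0(\phi)$ and so $\phi$. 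Equivalently, applying Lemma \ref{LemGenConstEq} with $Z=(K_\sigma)_{11}$, the free parameter lies in $\ker Z=\{0\}$, leaving a single solution $u=\iota_0(f)+\iota_1(g)$ with $u|_{P_{\partial G}}=f$. The main obstacle is precisely the positive-definiteness step for $(K_\sigma)_{11}$: the crux is recognizing that connectedness collapses $\ker D$ to the constants and that the nonempty boundary eliminates the constant mode inside $H_1$, which is exactly what promotes the a priori only positive-semidefinite compression to an invertible operator.
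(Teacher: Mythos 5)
Your proof is correct and follows essentially the same route as the paper: both arguments reduce invertibility of $(K_\sigma)_{11}$ to the facts that connectedness forces $\ker D=\operatorname{span}\{1_{P_G}\}$ and the nonempty boundary eliminates the constant mode from $H_1$, and both obtain uniqueness from the block system in the proof of Theorem \ref{ThmDtNMapSchurComp}. The only cosmetic difference is that you establish strict positivity of the quadratic form $(v,(K_\sigma)_{11}v)=(Dv,\sigma Dv)\geq\delta\|Dv\|^2$ directly, whereas the paper deduces $K_\sigma u=0$ from $(K_\sigma)_{11}u=0$ via the kernel-inclusion property of positive semidefinite block operators (Proposition \ref{PropgscFact}) before invoking $K_\sigma=(\sigma^{1/2}D)^*(\sigma^{1/2}D)$.
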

\begin{proof}
    Assume the hypotheses. Then $\ker D=\operatorname{span}\{1_{P_G}\}$ by Lemma \ref{LemConKer}. As $(K_{\sigma})_{11}\in \mathcal{L}(H_1)$ and $\dim H_1<\infty$, it suffices to prove that $\ker (K_{\sigma})_{11}=\{0\}$ in order to prove $(K_{\sigma})_{11}$ is invertible. Let $u\in \ker (K_{\sigma})_{11}$. As $K_{\sigma}^*=K_{\sigma}\geq 0$ and $\mathcal{F}(P_G,\mathbb{K})=H_0\ho H_1$, then it follows from Proposition \ref{PropgscFact} that $u\in \ker (K_{\sigma})_{01}$ and hence $K_{\sigma}u=0$. As $K_{\sigma}= (\sigma^{1/2}D)^*(\sigma^{1/2}D)$, this implies $u\in \ker D$ so that $u=c1_{P_G}$ for some scalar $c$. As $u\in H_1=\ker(|_{P_{\partial G}})$, we have $0=u|_{P_{\partial G}}=c1_{P_{\partial G}}$ and since $P_{\partial G}\not=\emptyset$ then $c=0$. This proves $\ker (K_{\sigma})_{11}=\{0\}$ and hence $(K_{\sigma})_{11}$ is invertible. The rest of the proof follows immediately from this and the proof of Theorem \ref{ThmDtNMapSchurComp}.
\end{proof}

\subsubsection{Effective Operator Representation}\label{sec:EffOpReprOfDtNMap}

In this section we will derive effective operator representation for the DtN map in terms of a $Z$-problems. To do so, we need will need the following discrete analog of the Hodge decomposition for the Dirichlet problem in the continuum\cite{40WH} in which the gradient $\nabla$, divergence $\nabla\cdot$, and Laplacian $\Delta$ are replaced by their discrete analogs $D, D^{\bullet},$ and  $D^{\bullet}D$, respectively.

\begin{theorem}\label{ThmKHodge}
    The sets $\mathcal{U},\;\mathcal{E},\;\mathcal{J}$ defined by 
	\begin{align}
		\mathcal{U} & = \{D u : u \in \mathcal{F}(P_G, \mathbb{K}), (D^{\bullet}D u)|_{P_{G^{\circ}}}=0\} \label{Uspace}, \\
		\mathcal{E} & = \{ D u : u \in \mathcal{F}(P_G, \mathbb{K}) \text{ and } u|_{P_{\partial G}} = 0\}, \label{Espace}                           \\
		\mathcal{J} & = \operatorname{ker}(D^{\bullet}), \label{Jspace}                                  
	\end{align}
	are mutually orthogonal subspaces in the Hilbert space $ \mathcal{F}(E_G, \mathbb{K})$. Furthermore, 
	\begin{gather} \mathcal{F}(E_G, \mathbb{K})=\mathcal{U} \overset{\perp}{\oplus} \mathcal{E} \ho \mathcal{J}, \label{orthotripleDtNZproblem}\\
	\operatorname{ran}(D)= \mathcal{U} \ho \mathcal{E}.
	\end{gather}
\end{theorem}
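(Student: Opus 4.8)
The plan is to reduce the entire statement to two ingredients: that $\mathcal{J}$ is exactly the orthogonal complement of $\operatorname{ran}(D)$, and that $\operatorname{ran}(D)$ splits orthogonally as $\mathcal{U}\ho\mathcal{E}$. Granting these, the chain
\begin{align*}
\mathcal{F}(E_G,\mathbb{K})=\operatorname{ran}(D)\ho(\operatorname{ran}(D))^\perp=(\mathcal{U}\ho\mathcal{E})\ho\mathcal{J}
\end{align*}
delivers both displayed identities at once. Before anything else I would note that each of $\mathcal{U},\mathcal{E},\mathcal{J}$ is a genuine subspace (automatically closed, since $\dim\mathcal{F}(E_G,\mathbb{K})<\infty$): $\mathcal{J}=\ker(D^\bullet)$ is a kernel, while $\mathcal{U}$ and $\mathcal{E}$ are the images under the linear map $D$ of the subspaces $\{u:(D^\bullet Du)|_{P_{G^\circ}}=0\}$ and $\{u:u|_{P_{\partial G}}=0\}$ of $\mathcal{F}(P_G,\mathbb{K})$, respectively.

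For $\mathcal{J}=(\operatorname{ran}D)^\perp$ I would invoke the adjoint relation $D^*=-D^\bullet$ from Lemma \ref{LemAdjFin} together with the standard Hilbert-space identity $\ker(D^*)=(\operatorname{ran}D)^\perp$ (valid here because $\operatorname{ran}(D)$ is closed in finite dimensions); since $\ker(D^*)=\ker(D^\bullet)=\mathcal{J}$, this gives $\mathcal{F}(E_G,\mathbb{K})=\operatorname{ran}(D)\ho\mathcal{J}$, and in particular $\mathcal{U}\perp\mathcal{J}$ and $\mathcal{E}\perp\mathcal{J}$ because $\mathcal{U},\mathcal{E}\subseteq\operatorname{ran}(D)$. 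The remaining orthogonality $\mathcal{U}\perp\mathcal{E}$ is a one-line integration by parts: for representatives $u_1,u_2$ with $(D^\bullet Du_1)|_{P_{G^\circ}}=0$ and $u_2|_{P_{\partial G}}=0$,
\begin{align*}
(Du_1,Du_2)_{E_G}=(-D^\bullet Du_1,u_2)_{P_G}=\sum_{p\in P_G}\overline{(-D^\bullet Du_1)(p)}\,u_2(p)=0,
\end{align*}
since on $P_{G^\circ}$ the first factor vanishes, on $P_{\partial G}$ the second does, and $P_G=P_{\partial G}\cup P_{G^\circ}$ is a disjoint union. Mutual orthogonality of all three spaces then follows from these three statements.

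The substantive step, and the one I expect to be the main obstacle, is the surjectivity $\operatorname{ran}(D)=\mathcal{U}+\mathcal{E}$ (the inclusion $\supseteq$ being immediate from $\mathcal{U},\mathcal{E}\subseteq\operatorname{ran}(D)$). Here I would exploit the solvability of the discrete Dirichlet problem already established in Theorem \ref{ThmDtNMapSchurComp}, applied with conductivity $\sigma=I$ (which satisfies $\sigma^*=\sigma\geq0$): given any $u\in\mathcal{F}(P_G,\mathbb{K})$, set $f=u|_{P_{\partial G}}$ and let $u_1$ solve $(D^\bullet Du_1)|_{P_{G^\circ}}=0$ with $u_1|_{P_{\partial G}}=f$. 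Then $Du_1\in\mathcal{U}$ by definition, while $u_2:=u-u_1$ satisfies $u_2|_{P_{\partial G}}=0$, so $Du_2\in\mathcal{E}$, and $Du=Du_1+Du_2$. Hence every element of $\operatorname{ran}(D)$ lies in $\mathcal{U}+\mathcal{E}$, and combined with $\mathcal{U}\perp\mathcal{E}$ this yields $\operatorname{ran}(D)=\mathcal{U}\ho\mathcal{E}$, completing the decomposition $\mathcal{F}(E_G,\mathbb{K})=\mathcal{U}\ho\mathcal{E}\ho\mathcal{J}$. The only point requiring care is the existence of $u_1$, whose heart is the inclusion $\ker(K_I)_{11}\subseteq\ker(K_I)_{01}$ forced by $K_I=K_I^*\geq0$ through Proposition \ref{PropgscFact}; this is precisely what Theorem \ref{ThmDtNMapSchurComp} packages into its solvability claim, so citing that theorem removes the obstacle.
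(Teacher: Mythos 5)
Your proof is correct, and non-circular: Theorem \ref{ThmDtNMapSchurComp} is established earlier (via Lemma \ref{LemGenConstEq} and Proposition \ref{PropgscFact}) independently of any Hodge decomposition, so you may legitimately invoke its solvability claim with $\sigma=I$. But your route is genuinely different from the paper's. The paper proves this in one stroke by feeding the operators $T=T^*=\Gamma_{\ker D^\bullet}$ and $U=D\Gamma_{H_1}$ (with $H_1=\ker(|_{P_{\partial G}})$) into the abstract Hodge decomposition, Theorem \ref{ThmHodgeDecomp}: the condition $TU=0$ holds because $\operatorname{ran}(D)\perp\ker D^\bullet$, and the three summands $\operatorname{ran}T^*$, $\ker(T^*T+UU^*)=\ker T\cap\ker U^*=\operatorname{ran}(D)\cap\ker(\Gamma_{H_1}D^\bullet)$, and $\operatorname{ran}U$ are identified with $\mathcal{J}$, $\mathcal{U}$, and $\mathcal{E}$; the identity $\operatorname{ran}(D)=\mathcal{U}\ho\mathcal{E}$ then falls out of Lemma \ref{LemHodgePre2}.(b) as $\ker T=\ker(T^*T+UU^*)\ho\operatorname{ran}U$. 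What the paper's approach buys is uniformity (the same abstract theorem handles the decompositions in Lemma \ref{LemHodEffCond} and Theorem \ref{ThmLatDecomp} as well) and the fact that the splitting of $\operatorname{ran}(D)$ comes from pure linear algebra — it does not presuppose solvability of the discrete Dirichlet problem, and indeed re-derives the existence half of that solvability as a byproduct. What your approach buys is concreteness: the orthogonality $\mathcal{U}\perp\mathcal{E}$ is an explicit summation-by-parts, and the decomposition $Du=Du_1+Du_2$ exhibits the physically meaningful splitting of a potential into the harmonic extension of its boundary values plus a potential vanishing on the boundary. Both are complete proofs; yours simply trades the abstract machinery for an appeal to a previously proved solvability result.
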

\begin{proof}
The proof follows immediately from the adjoint relation $D^*=-D^{\bullet}$ (by Lemma \ref{LemAdjFin}) together with the abstract Hodge decomposition theorem (i.e., Theorem \ref{ThmHodgeDecomp}) in which
	\begin{gather}
		\mathcal{B}=\mathcal{C}=\FSpace{E_G}{\mathbb{K}},\; \mathcal{A}=\FSpace{P_G}{\mathbb{K}},\\ T^*=T=\Gamma_{\ker D^\bullet},\;
		U=D\Gamma_{H_1},\; U^*=-\Gamma_{H_1}D^\bullet,\\ 
		\ran T^*=\ker D^\bullet=\mathcal{J},\; \ran U=\ran (D\Gamma_{H_1})=\mathcal{E},\\
		\ker (T^*T+UU^*)=\ran D\cap \ker (\Gamma_{H_1}D^\bullet)=\mathcal{U}.
	\end{gather}
This completes the proof.
\end{proof}

We now give the definition of the $Z$-problem and effective operator associated with this Hodge decomposition, the discrete Dirichlet problem, and the DtN map.

\begin{definition}[Dirichlet $Z$-problem and effective operator]\label{DefKZProb}
	The (Dirichlet) $Z$-problem $(\mathcal{H},\mathcal{U},\mathcal{E},\mathcal{J},\sigma)$ associated with the Hilbert space $\mathcal{H}=\mathcal{F}(E_G, \mathbb{K})$ and the orthogonal triple decomposition of $\mathcal{H}$ in \eqref{orthotripleDtNZproblem} is defined as follows: given $V_0\in\mathcal{U}$, find triples $(I_0, V, I)\in \mathcal{U} \times \mathcal{E} \times \mathcal{J}$ satisfying 
	\begin{align}
	    I_0 + I = \sigma (V_0 + V), \label{ZproblemDtN}
	\end{align} 
	such a triple $(I_0, V, I)$ is called a solution to the Z-problem at $V_0$. If there exists an operator $\sigma_* \in \mathcal{L}(\mathcal{U})$ such that
	\begin{align}
		I_0={\sigma}_*V_0,
	\end{align}
	whenever $V_0\in\mathcal{U}$ and $(I_0,V,I)$ is a solution of the $Z$-problem at $V_0$, then $\sigma_*$ is called an effective operator of this $Z$-problem.
\end{definition}

\begin{theorem}\label{ThmKEffOp}
    Let $(P_G,E_G,\sigma)$ be an electrical network (see Def. \ref{DefCondElecNet}) with boundary and interior nodes, $P_{\partial G}$ and $P_{G^\circ}$, respectively [i.e., satisfying (\ref{AssDtNMapGraphBddIntNodes})]. Then the Dirichlet $Z$-problem (\ref{DefZProbEq}) has a solution for each $V_0\in \mathcal{U}$ and it is given by the formulas  
    \begin{gather}
	   I_0=\sigma_*V_0,\label{ThmDirichZProbSolpPart1}\\
	   V=-\sigma_{11}^+\sigma_{10}V_0+K,\;K\in\ker \sigma_{11},\label{ThmDirichZProbSolpPart2}\\
	   I=\sigma_{20}V_0+\sigma_{21}V,\label{ThmDirichZProbSolpPart3}\\
        \sigma_*=\begin{bmatrix}
		\sigma_{00} & \sigma_{01}\\
		\sigma_{10} & \sigma_{11}
		\end{bmatrix}/\sigma_{11}=\sigma_{00}-\sigma_{01}\sigma_{11}^+\sigma_{10},\label{ThmDirichZProbSolpPart4}
\end{gather}
where the $3\times 3$ block operator matrix $\sigma=[\sigma_{ij}]_{i,j=0,1,2}$ is with respect to the orthogonal triple decomposition of $\mathcal{F}(E_G, \mathbb{K})$ in \eqref{orthotripleDtNZproblem} [see Sec.\ \ref{sec:ClassicalResults}, Eq.(\ref{DefOfSigmaSubblocks}) for $\sigma_{ij}$ formulas]. Moreover, the effective operator of the $Z$-problem exists, is unique, and is given by the generalized Schur complement formula (\ref{ThmDirichZProbSolpPart4}).
\end{theorem}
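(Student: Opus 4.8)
The plan is to recognize that the Dirichlet $Z$-problem of Definition \ref{DefKZProb} is literally an instance of the abstract $Z$-problem of Definition \ref{DefZProbMain}, and then to invoke Theorem \ref{ThmZProbSol} verbatim. The entire setup has been arranged precisely so that all the hypotheses of that theorem hold automatically, so the argument is essentially a reduction rather than a fresh computation.

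First I would confirm the $Z$-problem data. The Hilbert space is $\mathcal{H}=\mathcal{F}(E_G,\mathbb{K})$, which is finite-dimensional by Lemma \ref{PropHilbertSpace} since $E_G$ is a finite edge set; thus the hypothesis $\dim \mathcal{H}<\infty$ holds. The orthogonal triple decomposition $\mathcal{H}=\mathcal{U}\ho\mathcal{E}\ho\mathcal{J}$ required by (\ref{DefZProbHOrthTri}) is exactly the one established in Theorem \ref{ThmKHodge}, with $\mathcal{U},\mathcal{E},\mathcal{J}$ given by (\ref{Uspace})--(\ref{Jspace}). The operator is the network conductivity $\sigma$, so equation (\ref{ZproblemDtN}) is precisely the $Z$-problem equation (\ref{DefZProbEq}) under the relabeling $(E_0,E,J_0,J)=(V_0,V,I_0,I)$.

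Next comes the only substantive point: verifying the self-adjointness and kernel-inclusion hypotheses of Theorem \ref{ThmZProbSol}. Since $(P_G,E_G,\sigma)$ is an electrical network, Definition \ref{DefCondElecNet} guarantees $\sigma^*=\sigma\geq 0$. Writing $\sigma=[\sigma_{ij}]_{i,j=0,1,2}$ as a $3\times 3$ block operator matrix with respect to (\ref{orthotripleDtNZproblem}), the compression $[\sigma_{ij}]_{i,j=0,1}$ of $\sigma$ to $\mathcal{U}\ho\mathcal{E}$ is again self-adjoint and positive semidefinite, which gives the self-adjointness hypothesis (\ref{ThmZProbSolSelfAdjHyp}); the kernel inclusion $\ker\sigma_{11}\subseteq\ker\sigma_{01}$ then follows from Lemma \ref{LemFundPosSemiDefImpliesKerRanBlockOpInclusion} (equivalently from Proposition \ref{PropgscFact}). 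In practice this entire verification is already packaged in Corollary \ref{CorSuffCondForTrueThmGenClassicalDiriMinPrin} under hypothesis (\ref{WeakSuffCondForTrueThmGenClassicalDiriMinPrin}), so I would simply cite that corollary to conclude that Theorem \ref{ThmZProbSol} applies.

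With the hypotheses in hand, Theorem \ref{ThmZProbSol} delivers the conclusion directly: for each $V_0\in\mathcal{U}$ the solution set of (\ref{ZproblemDtN}) is given by (\ref{ThmDirichZProbSolpPart1})--(\ref{ThmDirichZProbSolpPart3}), and the effective operator exists, is unique, and equals the generalized Schur complement (\ref{ThmDirichZProbSolpPart4}). There is no genuine obstacle here, since the real work has been done upstream in establishing the Hodge decomposition (Theorem \ref{ThmKHodge}) and the abstract solvability result. The only thing I would double-check is the bookkeeping of variable names, ensuring that the roles $(I_0,V,I)$ in the network problem correspond correctly to $(J_0,E,J)$ in the abstract formulas, so that (\ref{ThmDirichZProbSolpPart1})--(\ref{ThmDirichZProbSolpPart4}) are faithful translations of (\ref{ThmZProbSolSelfAdjHypPart1})--(\ref{GenEffOpSchurCompFormula}).
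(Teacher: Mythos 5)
Your proposal is correct and matches the paper's own proof, which likewise notes that $\sigma^*=\sigma\geq 0$ holds by Definition \ref{DefCondElecNet} and then cites Corollary \ref{CorSuffCondForTrueThmGenClassicalDiriMinPrin} to apply Theorem \ref{ThmZProbSol} to the decomposition from Theorem \ref{ThmKHodge}. Your additional care in verifying finite-dimensionality and the variable relabeling is sound but simply makes explicit what the paper leaves implicit.
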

\begin{proof}
    Assume the hypotheses. Then $\sigma^*=\sigma\geq 0$ and hence the result follows immediately from Corollary \ref{CorSuffCondForTrueThmGenClassicalDiriMinPrin}.
\end{proof}

In order to state the main result of this section, that relates the DtN map $\Lambda_{\sigma}$ to the effective operator $\sigma_*$ of the Dirichlet $Z$-problem, we require the following definition and lemma.

\begin{definition}[The lift operator]\label{DefLiftOp}
    For a connected graph $G$, the lift operator
    \begin{align}
        \Pi\in\mathcal{L}(\mathcal{F}(P_{\partial G},\mathbb{K}),\mathcal{U}),
    \end{align} 
    where $\mathcal{U}$ is defined by (\ref{Uspace}),
    is the map defined by
    \begin{gather}\label{DefLiftOpEq}
        \Pi(f)=Du,\;\;\text{for every } f\in\mathcal{F}(P_{\partial G},\mathbb{K}),
    \end{gather}
    where $u$ is the unique solution to the discrete Dirichlet problem (\ref{DefDisDirProb}), (\ref{DefDisBoundCond}) with conductivity $\sigma=I_{\mathcal{F}(E_G, \mathbb{K})}$ [i.e., the identity operator on $\mathcal{F}(E_G, \mathbb{K})$] and $u|_{P_{\partial G}}=f$.
\end{definition}

\begin{lemma}
    If $G$ is a connected graph, then the lift operator is well defined.
\end{lemma}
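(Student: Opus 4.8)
The plan is to show that the formula $\Pi(f)=Du$ in (\ref{DefLiftOpEq}) assigns to each $f\in\mathcal{F}(P_{\partial G},\mathbb{K})$ one and only one element, that this element actually lies in the codomain $\mathcal{U}$, and that the resulting assignment is linear; since both $\mathcal{F}(P_{\partial G},\mathbb{K})$ and $\mathcal{U}$ are finite-dimensional, linearity will automatically yield boundedness, so that $\Pi\in\mathcal{L}(\mathcal{F}(P_{\partial G},\mathbb{K}),\mathcal{U})$ exactly as asserted in Definition \ref{DefLiftOp}. In other words, ``well defined'' is really three checks bundled together: unambiguity of $Du$, membership in $\mathcal{U}$, and linearity.

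First I would establish that $Du$ is unambiguously determined by $f$. The vector $u$ in (\ref{DefLiftOpEq}) is the solution of the discrete Dirichlet problem (\ref{DefDisDirProb}), (\ref{DefDisBoundCond}) for the conductivity $\sigma=I_{\mathcal{F}(E_G,\mathbb{K})}$ with boundary data $u|_{P_{\partial G}}=f$. Since $G$ is connected by hypothesis and the identity operator is trivially invertible, Corollary \ref{CorConditionsForUniqueSolvDiricProb} applies with this choice $\sigma=I_{\mathcal{F}(E_G,\mathbb{K})}$ and guarantees that the Dirichlet problem has a \emph{unique} solution $u$ for each $f$. Hence $u$, and therefore $Du$, is determined unambiguously, and the defining formula makes sense.

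Next I would verify that the output genuinely lies in $\mathcal{U}$. For the choice $\sigma=I_{\mathcal{F}(E_G,\mathbb{K})}$ the Dirichlet equation (\ref{DefDisDirProb}) reduces to $(-D^{\bullet}Du)|_{P_{G^{\circ}}}=0$, which is precisely the condition appearing in the definition (\ref{Uspace}) of $\mathcal{U}$. Consequently $Du\in\mathcal{U}$, so $\Pi$ does map into $\mathcal{U}$. Linearity then follows from the uniqueness already obtained: given $f_1,f_2$ with corresponding solutions $u_1,u_2$ and scalars $a,b\in\mathbb{K}$, the function $au_1+bu_2$ satisfies $(au_1+bu_2)|_{P_{\partial G}}=af_1+bf_2$ and, by linearity of $D^{\bullet}D$, also solves (\ref{DefDisDirProb}); by uniqueness it must be \emph{the} solution associated with $af_1+bf_2$, whence $\Pi(af_1+bf_2)=D(au_1+bu_2)=a\Pi(f_1)+b\Pi(f_2)$.

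I do not anticipate a genuine obstacle here: the entire argument rests on applying Corollary \ref{CorConditionsForUniqueSolvDiricProb} to the invertible conductivity $\sigma=I_{\mathcal{F}(E_G,\mathbb{K})}$, and the remaining verifications are immediate. The only point demanding care is the bookkeeping noted above — namely that ``well defined'' must simultaneously cover unambiguity of $Du$, membership in the stated codomain $\mathcal{U}$, and linearity (with boundedness free in finite dimensions) — rather than merely existence of a solution.
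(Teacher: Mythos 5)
Your proposal is correct and follows essentially the same route as the paper, which simply invokes Corollary \ref{CorConditionsForUniqueSolvDiricProb} for the network $(P_G,E_G,I_{\mathcal{F}(E_G,\mathbb{K})})$ under connectedness to get uniqueness of the Dirichlet solution. Your additional checks that $Du\in\mathcal{U}$ and that $\Pi$ is linear are sound elaborations of what the paper leaves implicit.
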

\begin{proof}
    The result follows immediately from Corollary \ref{CorConditionsForUniqueSolvDiricProb} applied to the electrical network $(P_G,E_G,I_{\mathcal{F}(E_G, \mathbb{K})})$ under the assumption that $G$ is a connected graph.
\end{proof}

We now prove the main result of this section.
\begin{theorem}\label{DtNZprob}  Let $(P_G,E_G,\sigma)$ be an electrical network (see Def. \ref{DefCondElecNet}) with boundary and interior nodes, $P_{\partial G}$ and $P_{G^\circ}$, respectively [i.e., satisfying (\ref{AssDtNMapGraphBddIntNodes})]. If $G=(P_G,E_G)$ is a connected graph then the DtN map $\Lambda_{\sigma}$ (in Def.\ \ref{DefDtNMap}), the effective operator $\sigma_*$ of the Dirichlet $Z$-problem (in Def.\ \ref{DefKZProb}), and the lift operator $\Pi$ (in Def.\ \ref{DefLiftOp}) satisfy the identity
    \begin{align}
	    \Lambda_{\sigma} = \Pi^*\sigma_*\Pi.
	\end{align}
\end{theorem}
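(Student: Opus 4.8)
The plan is to prove the operator identity by testing it against arbitrary boundary data, thereby reducing everything to a pairing on $\mathcal{F}(P_{\partial G},\mathbb{K})$. The key bridge is that a single solution of the discrete Dirichlet problem with conductivity $\sigma$ simultaneously produces a solution of the Dirichlet $Z$-problem anchored at the lifted data, so that the boundary current and the effective operator become two encodings of the same object.

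First I would fix $f\in\mathcal{F}(P_{\partial G},\mathbb{K})$. By Theorem \ref{ThmDtNMapSchurComp} the discrete Dirichlet problem with conductivity $\sigma$ has a solution $u$ with $u|_{P_{\partial G}}=f$, interior condition $(K_\sigma u)|_{P_{G^\circ}}=0$ where $K_\sigma=-D^\bullet\sigma D$, and boundary current $\phi=(K_\sigma u)|_{P_{\partial G}}$, so that $\Lambda_\sigma f=\phi$. Separately, let $V_0=\Pi(f)=Dw'$, where $w'$ is the unique solution (guaranteed by Corollary \ref{CorConditionsForUniqueSolvDiricProb}, using connectivity of $G$) of the Dirichlet problem with conductivity $I$ and $w'|_{P_{\partial G}}=f$; by construction $V_0\in\mathcal{U}$ as in (\ref{Uspace}).

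Second, I would exhibit a $Z$-problem solution at $V_0$. Set $V:=D(u-w')$; since $(u-w')|_{P_{\partial G}}=0$ we have $V\in\mathcal{E}$ by (\ref{Espace}), and $V_0+V=Du$. The technical heart is to show that $\sigma(V_0+V)=\sigma Du$ has no $\mathcal{E}$-component, i.e. $\sigma Du\perp\mathcal{E}$. Using $D^*=-D^\bullet$ from Lemma \ref{LemAdjFin}, for any $Dv\in\mathcal{E}$ (so $v|_{P_{\partial G}}=0$, hence $v$ is supported on $P_{G^\circ}$) one computes $(\sigma Du,Dv)_{E_G}=(K_\sigma u,v)_{P_G}$, which vanishes because $(K_\sigma u)|_{P_{G^\circ}}=0$. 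Therefore $\sigma Du=I_0+I$ with $I_0\in\mathcal{U}$ and $I\in\mathcal{J}=\ker D^\bullet$, so $(I_0,V,I)$ solves the Dirichlet $Z$-problem at $V_0$; by the existence and uniqueness of the effective operator in Theorem \ref{ThmKEffOp}, this forces $\sigma_*\Pi(f)=\sigma_*V_0=I_0$.

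Third, I would compute the pairing for arbitrary $g$, writing $\Pi g=Dw_g'$ with $w_g'|_{P_{\partial G}}=g$. Since $\Pi g\in\mathcal{U}\perp\mathcal{J}\ni I$, we get $(\Pi g,\sigma_*\Pi f)_{E_G}=(\Pi g,I_0)_{E_G}=(\Pi g,\sigma Du)_{E_G}=(Dw_g',\sigma Du)_{E_G}$, and adjoining $D$ via Lemma \ref{LemAdjFin} gives $(w_g',K_\sigma u)_{P_G}$. Because $(K_\sigma u)|_{P_{G^\circ}}=0$ and $(K_\sigma u)|_{P_{\partial G}}=\phi$, while $w_g'|_{P_{\partial G}}=g$, this collapses to the boundary pairing $(g,\phi)_{P_{\partial G}}=(g,\Lambda_\sigma f)_{P_{\partial G}}$. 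Hence $(g,\Pi^*\sigma_*\Pi f)_{P_{\partial G}}=(g,\Lambda_\sigma f)_{P_{\partial G}}$ for all $f,g$, which yields $\Lambda_\sigma=\Pi^*\sigma_*\Pi$. The main obstacle I anticipate is precisely the second step: verifying that the conductivity-$\sigma$ Dirichlet solution produces a genuine $Z$-problem solution anchored at $\Pi(f)$, which rests on recognizing the orthogonality $\sigma Du\perp\mathcal{E}$ as exactly the interior Dirichlet condition; everything else is bookkeeping with $D^*=-D^\bullet$ and with the supports of restrictions to $P_{\partial G}$ and $P_{G^\circ}$.
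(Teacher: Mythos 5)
Your proof is correct, and it travels the bridge between the Dirichlet problem and the $Z$-problem in the opposite direction from the paper. The paper starts from a $Z$-problem solution $(I_0,V,I)$ at $V_0=\Pi f$ (guaranteed by Theorem \ref{ThmKEffOp}), writes $V=Dv$ with $v|_{P_{\partial G}}=0$, and checks that $u_f+v$ solves the $\sigma$-Dirichlet problem; you instead start from a $\sigma$-Dirichlet solution $u$ (guaranteed by Theorem \ref{ThmDtNMapSchurComp}) and manufacture a $Z$-problem solution at $\Pi f$ by observing that the interior condition $(K_\sigma u)|_{P_{G^\circ}}=0$ is exactly the statement $\sigma Du\perp\mathcal{E}$, via $D^*=-D^\bullet$ and the fact that elements of $\mathcal{E}$ come from potentials supported on $P_{G^\circ}$. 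That orthogonality observation is the one genuinely new ingredient in your write-up and it is sound. The second difference is in the endgame: the paper only computes the quadratic form $(f,\Pi^*\sigma_*\Pi f)=(f,\Lambda_\sigma f)$ and then invokes self-adjointness of both operators together with uniqueness of quadratic forms for self-adjoint operators, whereas you compute the full sesquilinear pairing $(g,\Pi^*\sigma_*\Pi f)=(g,\Lambda_\sigma f)$ for arbitrary $g$, which identifies the operators directly and dispenses with the self-adjointness step. Both routes rest on the same computation $(\,\cdot\,,\sigma Du)_{E_G}=(\,\cdot\,,K_\sigma u)_{P_G}$ collapsing to a boundary pairing; yours is marginally more self-contained at the cost of carrying the extra test function $g$ and its lift $w_g'$.
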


\begin{proof}
    Assume the hypotheses. Let $f\in\mathcal{F}(P_{\partial G},\mathbb{K})$. Then $\Pi (f)=Du_f\in\mathcal{U}$, where $u_f$ is the unique solution to the discrete Dirichlet problem (i.e., the problem in Definition \ref{DefKZProb}) with conductivity $\sigma=I_{\mathcal{F}(E_G, \mathbb{K})}$ satisfying $u_f|_{P_{\partial G}}=f$. By Theorem \ref{ThmKEffOp} there is a solution to the $Z$-problem $(I_0,V,I)\in\mathcal{U}\times\mathcal{E}\times\mathcal{J}$ at $V_0=Du_f$, i.e.,
    \begin{gather}
        I_0+I=\sigma(V_0+V).
    \end{gather}
    In particular, since $V\in\mathcal{E}$ there exists a $v\in\mathcal{F}(P_G,\mathbb{K})$ such that $v|_{P_{\partial G}}=0$ and $V=Dv$.
    We claim that $u=u_f+v$ is a solution to the discrete Dirichlet problem for conductivity $\sigma$ satisfying $u|_{P_{\partial G}}=f$. To prove this, notice that
    \begin{gather}
        u|_{P_{\partial G}}=(u_f+v)|_{P_{\partial G}}=f+0=f
    \end{gather}
    and since $I\in\mathcal{J}=\ker D^\bullet$ and $I_0\in\mathcal{U}$ then
    \begin{gather}
        0=(D^\bullet I_0)|_{P_{G^\circ}}=(D^\bullet (I_0+I))|_{P_{G^\circ}}=(D^\bullet\sigma (V_0+V))|_{P_{G^\circ}}\\=(D^\bullet\sigma D(u_f+v))|_{P_{G^\circ}}=(D^\bullet\sigma D(u_f+v))|_{P_{G^\circ}}.
    \end{gather}
    This proves the claim. It follows from this, the definition of the DtN map $\Lambda_{\sigma}$ (in Def.\ \ref{DefDtNMap}), and Theorem \ref{ThmDtNMapSchurComp} that  $\Lambda_\sigma f=\Lambda_{\sigma}(u|_{P_{\partial G}})=(-D^\bullet\sigma Du)|_{P_{\partial G}}$. Thus,
    \begin{gather*}
        (f,\Pi^*\sigma_*\Pi f)_{\mathcal{F}(P_{\partial G},\mathbb{K})}=(\Pi f, \sigma_*\Pi f)_{\mathcal{F}(E_G,\mathbb{K})}\\=(Du_f, \sigma_*Du_f)_{\mathcal{F}(E_{G},\mathbb{K})}=(V_0+V, \sigma(V_0+V))_{\mathcal{F}(E_{G},\mathbb{K})}\\
       =(D(u_f+v), \sigma D(u_f+v))_{\mathcal{F}(E_{G},\mathbb{K})}\\=(u_f+v, -D^\bullet\sigma D(u_f+v))_{\mathcal{F}(E_{G},\mathbb{K})}\\
       =(u_f+v, (-D^\bullet\sigma D)(u_f+v))_{\mathcal{F}(P_{G},\mathbb{K})}\\=((u_f+v)|_{P_{\partial G}},(-D^\bullet\sigma Du)|_{P_{\partial G}})_{\mathcal{F}(P_{\partial G},\mathbb{K})}\\
       =(f,\Lambda_\sigma f)_{\mathcal{F}(P_{\partial G},\mathbb{K})}.
    \end{gather*}
    As this equality holds for all $f\in\mathcal{F}(P_{\partial G},\mathbb{K})$ and both $\Pi^*\sigma_*\Pi$ and $\Lambda_\sigma$ are self-adjoint (by Theorems \ref{ThmDtNMapSchurComp} and \ref{ThmKEffOp}), it follows that $\Lambda_\sigma=\Pi \sigma_*\Pi$ (by uniqueness of quadratic forms for self-adjoint operators\cite{80JW}). This completes the proof.
\end{proof}
	
	\subsection{Effective Conductivity and Effective Resistance}\label{SubSectDtNeffcond}
	
	In this section, $(P_G,E_G,\sigma)$ will denote an electrical network (see Def.\ \ref{DefCondElecNet}) with node set $P_G$ containing least two nodes. Let $p,q$ denote any two nodes in the graph $G=(P_G,E_G)$, i.e.,
	\begin{align}
	    p,q\in P_G,\;p\not=q.
	\end{align}
	Define their delta functions by
	\begin{gather}
	\delta_p, \delta_q\in \mathcal{F}(P_G,\mathbb{K}),\\
	    \forall x\in P_G,\;\;\delta_p(x)=
        \begin{cases}
            1,\;\text{if}\; x=p,\\
            0,\;\text{if}\; x\not=p,
        \end{cases}\;\;
        \delta_q(x)=
        \begin{cases}
            1,\;\text{if}\; x=q,\\
            0,\;\text{if}\; x\not=q.
        \end{cases}
	\end{gather}
	The following is a well-known definition from electrical network theory\cite{49FM, 61FM, 97NB, 09JP}.
	
	\begin{definition}[Effective conductivity/resistance]\label{DefEffCondEffRes}
	    A scalar $\sigma_{eff}=\sigma_{eff}(p,q)\in \mathbb{K}$ is called an effective conductivity [of the electrical network $(P_G,E_G,\sigma)$ with respect to $p,q$]  if
	    \begin{gather}
	        j=\sigma_{eff}[u(p)-u(q)],
	    \end{gather}
	    whenever $j\in\mathbb{K}$ and $u\in \mathcal{F}(P_G,\mathbb{K})$ is a solution of the equation
	    \begin{align}
	        (-D^\bullet \sigma D)u=j(\delta_p-\delta_q)\label{DefEffCondProb}.
	    \end{align}
    It's inverse $r_{eff}=\frac{1}{\sigma_{eff}}$ is called the effective resistance (with the convention $r_{eff}=\infty$, if $\sigma_{eff}=0$).
	\end{definition}

\subsubsection{Effective operator representation of the effective conductivity}
    The goal of this section is to derive an effective operator representation for the effective conductivity $\sigma_{eff}$ in terms of an associated $Z$-problem. To do this, we need the next lemma and definition.
	\begin{lemma}\label{LemHodEffCond}
	
		The sets $\;\mathcal{U},\;\mathcal{E},$ $\mathcal{J}$ defined by
		\begin{align}
			\mathcal{U} & =\{u\in \mathcal{F}(P_G,\mathbb{K}):u(x)=0,\forall\; x\in P_G\setminus\{p\}\},\\
			\mathcal{E} & = \{u\in \mathcal{F}(P_G,\mathbb{K}):u(p)=u(q)=0\}, \\
			\mathcal{J} & =\{u\in \mathcal{F}(P_G,\mathbb{K}):u(x)=0,\forall\; x\in P_G\setminus\{q\}\},                 
		\end{align}
		are mutually orthogonal subspaces in the Hilbert space $\mathcal{F}(P_G,\mathbb{K})$.
		Furthermore,
		\begin{gather}
			\mathcal{F}(P_G,\mathbb{K})=\mathcal{U}\ho \mathcal{E}\ho \mathcal{J},\label{LemHodEffCond1}\\
			\mathcal{U}=\operatorname{span}\{\delta_p\},\;
			\mathcal{J}=\operatorname{span}\{\delta_q\},\;
			\mathcal{E}=\operatorname{span}\{\delta_p\}^{\perp} \cap \operatorname{span}\{\delta_q\}^{\perp}.
		\end{gather}
		Moreover, the orthogonal projections of $\mathcal{F}(P_G,\mathbb{K})$ onto $\mathcal{U},\;\mathcal{J},\;\mathcal{E}$ are the following left multiplication operators, respectively:
		\begin{gather}\label{EffCondResistanceGammaOpsExplicit}
			\Gamma_0=\delta_p I_{\mathcal{F}(P_G,\mathbb{K})},\;\;
		    \Gamma_2=\delta_q I_{\mathcal{F}(P_G,\mathbb{K})},\;\;
			\Gamma_1=(1-\delta_p-\delta_q)I_{\mathcal{F}(P_G,\mathbb{K})}.
		\end{gather}
	\end{lemma}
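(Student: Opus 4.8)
The plan is to reduce the entire statement to the orthonormal-basis structure of $\mathcal{F}(P_G,\mathbb{K})$ supplied by Lemma \ref{PropHilbertSpace}. The single tool I would use throughout is the point-evaluation identity $(\delta_x,u)_{P_G}=u(x)$ for every $x\in P_G$ and $u\in\mathcal{F}(P_G,\mathbb{K})$, which is immediate from the definition (\ref{PropInnerProd}) of the inner product together with (\ref{PropOrthBas}). This identity is what converts the pointwise vanishing conditions defining $\mathcal{U}$, $\mathcal{E}$, $\mathcal{J}$ into span and orthogonality conditions, after which everything is bookkeeping.

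First I would identify the three spaces explicitly. For $\mathcal{U}$, the condition $u(x)=0$ for all $x\neq p$ forces $u=u(p)\delta_p$, so $\mathcal{U}=\operatorname{span}\{\delta_p\}$; symmetrically $\mathcal{J}=\operatorname{span}\{\delta_q\}$. For $\mathcal{E}$, I would rewrite the defining conditions $u(p)=u(q)=0$ as $(\delta_p,u)_{P_G}=(\delta_q,u)_{P_G}=0$, which says precisely that $u\in\operatorname{span}\{\delta_p\}^{\perp}\cap\operatorname{span}\{\delta_q\}^{\perp}$; this is the claimed description of $\mathcal{E}$.

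Next I would establish mutual orthogonality and the direct-sum decomposition. Since $p\neq q$ and $\alpha_{P_G}$ is orthonormal, $(\delta_p,\delta_q)_{P_G}=0$, so $\mathcal{U}\perp\mathcal{J}$; and $\mathcal{E}\perp\mathcal{U}$, $\mathcal{E}\perp\mathcal{J}$ are built into the characterization of $\mathcal{E}$ as the intersection of the two orthogonal complements. For the decomposition (\ref{LemHodEffCond1}), given any $u$ I would write $u=u(p)\delta_p+u(q)\delta_q+w$ with $w=u-u(p)\delta_p-u(q)\delta_q$, and check $w(p)=w(q)=0$, so that the three summands lie in $\mathcal{U}$, $\mathcal{J}$, $\mathcal{E}$ respectively; together with orthogonality this yields $\mathcal{F}(P_G,\mathbb{K})=\mathcal{U}\ho\mathcal{E}\ho\mathcal{J}$.

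Finally I would verify the projection formulas (\ref{EffCondResistanceGammaOpsExplicit}) by computing the pointwise action of the left-multiplication operators. For instance $(\delta_p\,u)(x)=\delta_p(x)u(x)$ equals $u(p)$ at $x=p$ and $0$ otherwise, so $\delta_p I_{\mathcal{F}(P_G,\mathbb{K})}$ sends $u$ to $u(p)\delta_p\in\mathcal{U}$; similarly $\delta_q I_{\mathcal{F}(P_G,\mathbb{K})}$ and $(1-\delta_p-\delta_q)I_{\mathcal{F}(P_G,\mathbb{K})}$ reproduce exactly the $\mathcal{J}$- and $\mathcal{E}$-components of the decomposition above. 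These three operators are self-adjoint idempotents (both properties are clear from the pointwise formulas) that sum to $I_{\mathcal{F}(P_G,\mathbb{K})}$, hence they are the orthogonal projections onto $\mathcal{U}$, $\mathcal{J}$, $\mathcal{E}$. There is no genuine obstacle here: the statement is a finite-dimensional linear-algebra computation, and the only point requiring slight care is confirming that these multiplication operators are \emph{orthogonal} (not merely oblique) projections, which is why I would explicitly note their self-adjointness and idempotency rather than merely compute their ranges.
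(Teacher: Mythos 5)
Your proof is correct, but it takes a genuinely different route from the paper's. The paper proves this lemma in one stroke by invoking the abstract Hodge decomposition (Theorem \ref{ThmHodgeDecomp} of Appendix \ref{SectAbsHodgDecomp}) with $\mathcal{A}=\mathcal{B}=\mathcal{C}=\mathcal{F}(P_G,\mathbb{K})$, $T=T^*=\delta_p I$, $U=U^*=\delta_q I$; the hypothesis $TU=0$ holds because $\delta_p\delta_q=0$ for $p\neq q$, and then $\ran T^*=\mathcal{U}$, $\ran U=\mathcal{J}$, $\ker(T^*T+UU^*)=\operatorname{span}\{\delta_p\}^{\perp}\cap\operatorname{span}\{\delta_q\}^{\perp}=\mathcal{E}$ deliver the decomposition. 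You instead carry out the elementary finite-dimensional computation directly: identifying the three spaces via the point-evaluation identity $(\delta_x,u)_{P_G}=u(x)$, exhibiting the explicit splitting $u=u(p)\delta_p+u(q)\delta_q+w$, and verifying that the multiplication operators are self-adjoint idempotents summing to the identity. What the paper's route buys is uniformity --- the same abstract theorem is reused verbatim for Theorem \ref{ThmKHodge} and Theorem \ref{ThmLatDecomp}, so all three Hodge decompositions in Section \ref{sec:DiscreteNetworkExamples} are instances of one result --- whereas your route buys self-containedness and, in particular, a direct verification of the projection formulas (\ref{EffCondResistanceGammaOpsExplicit}), which under the paper's approach come from the corollary to Theorem \ref{ThmHodgeDecomp} expressing the projections via Moore--Penrose pseudoinverses. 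Your closing remark about checking self-adjointness (not just idempotency) of the multiplication operators is exactly the right point of care. Both arguments are sound.
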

	
	\begin{proof}
		The proof follows immediately by the abstract Hodge decomposition (i.e., Theorem \ref{ThmHodgeDecomp}) in which
		\begin{gather}
			\mathcal{A}=\mathcal{B}=\mathcal{C}=\FSpace{P_G}{\mathbb{C}},\; T^*=T=\Gamma_0,\; U^*=U=\Gamma_2,\\
			\ran T^*=\operatorname{span}\{\delta_p\}=\mathcal{U},\; \ran U=\operatorname{span}\{\delta_q\}=\mathcal{J},\;\\
			\ker (T^*T+UU^*)=\operatorname{span}\{\delta_p\}^{\perp}\cap\operatorname{span}\{\delta_q\}^{\perp}=\mathcal{E}.
		\end{gather}
	\end{proof}

	\begin{definition}\label{DefDipProb}
	The $Z$-problem $(\mathcal{H},\mathcal{U},\mathcal{E},\mathcal{J},K_\sigma)$ associated with the Hilbert space $\mathcal{H}=\mathcal{F}(P_G, \mathbb{K})$ and the orthogonal triple decomposition of $\mathcal{H}$ in \eqref{LemHodEffCond1} and with the Kirchhoff operator $K_\sigma=-D^\bullet \sigma D\in\mathcal{L}(\mathcal{F}(P_G,\mathbb{K}))$ (in  Def. \ref{DefCondElecNet}) is defined as follows: given $v_0\in\mathcal{U}$, find triples $(\rho_0, v, \rho)\in \mathcal{U} \times \mathcal{E} \times \mathcal{J}$ satisfying 
	\begin{align}
	    \rho_0 + \rho = K_\sigma (v_0 + v), \label{ZproblemEffConductivityResistance}
	\end{align} 
	such a triple $(\rho_0, v, \rho)$ is called a solution to the $Z$-problem at $v_0$. If there exists an operator $(K_\sigma)_* \in \mathcal{L}(\mathcal{U})$ such that
	\begin{align}
		\rho_0=(K_\sigma)_*v_0,
	\end{align}
	whenever $v_0\in\mathcal{U}$ and $(\rho_0,v,\rho)$ is a solution of the $Z$-problem at $v_0$, then $(K_\sigma)_*$ is called an effective operator of this $Z$-problem.
\end{definition}

\begin{theorem}\label{ThmKEffOpZProb}
The solutions of the $Z$-problem (in Def.\ \ref{DefDipProb}) at each $v_0\in \mathcal{U}$ are given by the formulas  
    \begin{gather}
	   \rho_0=(K_\sigma)_*v_0,\label{ThmKEffOpZProbPart1}\\
	   v=-(K_\sigma)_{11}^+(K_\sigma)_{10}v_0+\kappa,\;\kappa\in\ker (K_\sigma)_{11},\label{ThmKEffOpZProbPart2}\\
	   \rho=(K_\sigma)_{20}v_0+(K_\sigma)_{21}v,\label{ThmKEffOpZProbPart3}\\
        \hspace{-2.25em}(K_\sigma)_*=\begin{bmatrix}
		(K_\sigma)_{00} & (K_\sigma)_{01}\\
		(K_\sigma)_{10} & (K_\sigma)_{11}
		\end{bmatrix}/(K_\sigma)_{11}
		=(K_\sigma)_{00}-(K_\sigma)_{01}(K_\sigma)_{11}^+(K_\sigma)_{10},\label{ThmKEffOpZProbPart4}
\end{gather}
where the $3\times 3$ block operator matrix $K_\sigma=[(K_\sigma)_{ij}]_{i,j=0,1,2}$ is with respect to the orthogonal triple decomposition of $\mathcal{F}(P_G, \mathbb{K})$ in \eqref{LemHodEffCond1}, i.e.,
\begin{gather}
        K_{\sigma}=[(K_\sigma)_{ij}]_{i,j=0,1,2},\\
        \mathcal{L}(H_j,H_i)\ni(K_{\sigma})_{ij}=\Gamma_iK_{\sigma}\Gamma_j:H_j\rightarrow H_i,\;i,j=0,1,2,\label{ThmKEffOpZProbPart5}\\
        H_0=\mathcal{U},  H_1=\mathcal{E},  H_2=\mathcal{J},
    \end{gather}
where $\Gamma_i$, $i=0,1,2$ are given explicitly by formulas (\ref{EffCondResistanceGammaOpsExplicit}). Moreover, the effective operator of the $Z$-problem (in Def.\ \ref{DefDipProb}) exists, is unique, and is given by the generalized Schur complement formula (\ref{ThmKEffOpZProbPart4}).
\end{theorem}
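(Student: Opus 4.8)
The plan is to recognize that this $Z$-problem falls squarely within the finite-dimensional, self-adjoint, positive-semidefinite setting already resolved by Corollary \ref{CorSuffCondForTrueThmGenClassicalDiriMinPrin}, applied with the Kirchhoff operator $K_\sigma$ playing the role of the abstract operator $\sigma$ there. The only genuine verification needed is that $K_\sigma$ satisfies $K_\sigma^*=K_\sigma\geq 0$; all the stated formulas and the existence/uniqueness claim are then a direct transcription of the abstract conclusions, exactly as in the proof of Theorem \ref{ThmKEffOp}.

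First I would establish the self-adjointness and positivity of $K_\sigma$. By Lemma \ref{LemAdjFin} we have $D^*=-D^\bullet$, so the Kirchhoff operator factors as
\[
K_\sigma=-D^\bullet\sigma D=D^*\sigma D.
\]
Since $\sigma^*=\sigma\geq 0$ by Definition \ref{DefCondElecNet}, it admits a self-adjoint square root $\sigma^{1/2}$, giving $K_\sigma=(\sigma^{1/2}D)^*(\sigma^{1/2}D)$. Taking adjoints yields $K_\sigma^*=D^*\sigma^* D=D^*\sigma D=K_\sigma$, and for any $v\in\mathcal{F}(P_G,\mathbb{K})$ we have $(v,K_\sigma v)=(\sigma^{1/2}Dv,\sigma^{1/2}Dv)=\|\sigma^{1/2}Dv\|^2\geq 0$, so $K_\sigma\geq 0$.

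Next, since $\mathcal{F}(P_G,\mathbb{K})$ is finite-dimensional (of dimension $|P_G|$) by Lemma \ref{PropHilbertSpace}, and the orthogonal triple decomposition \eqref{LemHodEffCond1} from Lemma \ref{LemHodEffCond} is in hand, the $Z$-problem $(\mathcal{H},\mathcal{U},\mathcal{E},\mathcal{J},K_\sigma)$ satisfies hypothesis \eqref{WeakSuffCondForTrueThmGenClassicalDiriMinPrin} of Corollary \ref{CorSuffCondForTrueThmGenClassicalDiriMinPrin} with $K_\sigma$ in place of $\sigma$. Invoking that corollary makes Theorem \ref{ThmZProbSol} applicable, which immediately furnishes the explicit solution formulas \eqref{ThmKEffOpZProbPart1}--\eqref{ThmKEffOpZProbPart3}, the generalized Schur complement representation \eqref{ThmKEffOpZProbPart4} for the effective operator $(K_\sigma)_*$, and its existence and uniqueness. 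Here the subblocks $(K_\sigma)_{ij}$ are precisely those defined in \eqref{ThmKEffOpZProbPart5} relative to the decomposition \eqref{LemHodEffCond1}, with the orthogonal projections $\Gamma_i$ given explicitly by \eqref{EffCondResistanceGammaOpsExplicit}.

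I do not anticipate a serious obstacle: the entire content of the theorem is a specialization of the abstract machinery, and the one substantive point is the factorization $K_\sigma=D^*\sigma D$, which transfers the self-adjointness and positivity of the network conductivity $\sigma$ to the Kirchhoff operator $K_\sigma$. If any care is required, it would be only in confirming that the block decomposition \eqref{ThmKEffOpZProbPart5} of $K_\sigma$ relative to \eqref{LemHodEffCond1} is the one to which Theorem \ref{ThmZProbSol} is being applied, so that the $2\times2$ compression $[(K_\sigma)_{ij}]_{i,j=0,1}$ inherits $(K_\sigma)^*=K_\sigma\geq 0$ and hence automatically satisfies both the self-adjointness and the kernel-inclusion hypotheses of that theorem.
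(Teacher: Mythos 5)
Your proposal is correct and follows essentially the same route as the paper: the paper's proof likewise notes that $\sigma^*=\sigma\geq 0$ and $K_\sigma=D^*\sigma D$ imply $K_\sigma^*=K_\sigma\geq 0$, and then cites Corollary \ref{CorSuffCondForTrueThmGenClassicalDiriMinPrin} to conclude. Your additional detail on the square-root factorization and on matching the block decomposition to \eqref{LemHodEffCond1} is a sound elaboration of the same argument.
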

\begin{proof}
    Assume the hypotheses. Then $\sigma^*=\sigma\geq 0, K_{\sigma}=D^*\sigma D$ which implies $(K_{\sigma})^*=K_{\sigma}\geq 0$ and hence the result follows immediately from Corollary \ref{CorSuffCondForTrueThmGenClassicalDiriMinPrin}.
\end{proof}
	
We now prove the main result of this section.	
\begin{theorem}\label{ThmEffCondIsEffOpFiniteGraph}
The effective conductivity $\sigma_{eff}$ [of the electrical network $(P_G,E_G,\sigma)$ with respect to $p,q$] (see Def.\ \ref{DefEffCondEffRes}) exists, is unique, and is given by the formula
    \begin{align}
	    \sigma_{eff}I_{\mathcal{U}}=(K_\sigma)_*,
	\end{align}
where $I_{\mathcal{U}}$ is the identity operator on $\mathcal{U}$ and $(K_\sigma)_*$ is the effective operator of the $Z$-problem in Def.\ \ref{DefDipProb}.
\end{theorem}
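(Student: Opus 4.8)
The plan is to exploit that $\mathcal{U}=\operatorname{span}\{\delta_p\}$ is one-dimensional, so the effective operator of the $Z$-problem in Definition \ref{DefDipProb} must be a scalar multiple of the identity: by Theorem \ref{ThmKEffOpZProb} there is a unique $c\in\mathbb{K}$ with $(K_\sigma)_*=cI_{\mathcal{U}}$, and the whole task reduces to showing $\sigma_{eff}=c$. The bridge between the two notions is the observation that the source term $j(\delta_p-\delta_q)$ in the effective-conductivity equation (\ref{DefEffCondProb}) already lies in $\mathcal{U}\ho\mathcal{J}$, with $\mathcal{U}$-component $j\delta_p$ and $\mathcal{J}$-component $-j\delta_q$ and no $\mathcal{E}$-component, which is exactly the form $\rho_0+\rho$ of the right-hand side of the $Z$-problem equation (\ref{ZproblemEffConductivityResistance}).

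First I would establish the formula and existence. Let $u$ be any solution of $K_\sigma u=j(\delta_p-\delta_q)$. Since $D1_{P_G}=0$ (Lemma \ref{LemConKer}), we have $K_\sigma 1_{P_G}=0$, so replacing $u$ by $\tilde u=u-u(q)1_{P_G}$ leaves both $K_\sigma u$ and $u(p)-u(q)$ unchanged while forcing $\tilde u(q)=0$; hence the $\mathcal{J}$-component of $\tilde u$ vanishes and, using the explicit projections in Lemma \ref{LemHodEffCond}, $\tilde u=v_0+v$ with $v_0=(u(p)-u(q))\delta_p\in\mathcal{U}$ and $v=\Gamma_1\tilde u\in\mathcal{E}$. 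Then $(j\delta_p,\,v,\,-j\delta_q)$ is a solution of the $Z$-problem at $v_0$, so by the effective-operator relation $j\delta_p=(K_\sigma)_*v_0=c(u(p)-u(q))\delta_p$, i.e.\ $j=c(u(p)-u(q))$. As $j$ and $u$ were arbitrary, $c$ satisfies the defining property in Definition \ref{DefEffCondEffRes}; thus $\sigma_{eff}$ exists and $\sigma_{eff}I_{\mathcal{U}}=cI_{\mathcal{U}}=(K_\sigma)_*$.

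For uniqueness I would exhibit one solution with $u(p)-u(q)\neq 0$. By Theorem \ref{ThmKEffOpZProb} the $Z$-problem at $v_0=\delta_p$ has a solution $(\rho_0,v,\rho)$ with $\rho_0=(K_\sigma)_*\delta_p=c\delta_p$ and $\rho=d\delta_q\in\mathcal{J}$. Pairing $\rho_0+\rho=K_\sigma(\delta_p+v)$ with $1_{P_G}$ and using $K_\sigma^*=K_\sigma$, $K_\sigma 1_{P_G}=0$, and $(1_{P_G},\delta_p)_{P_G}=(1_{P_G},\delta_q)_{P_G}=1$ gives $c+d=0$, so $\rho=-c\delta_q$ and $u:=\delta_p+v$ solves $K_\sigma u=c(\delta_p-\delta_q)$. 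Since $v\in\mathcal{E}$ means $v(p)=v(q)=0$, this $u$ has $u(p)-u(q)=1$. Consequently any effective conductivity must satisfy $c=\sigma_{eff}\cdot 1$, forcing the value $c$ and proving uniqueness.

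The only genuinely delicate point, and the step I expect to be the main obstacle to state cleanly, is the shift by $u(q)1_{P_G}$ together with the $1_{P_G}$-pairing that pins down the $\mathcal{J}$-component as exactly $-c\delta_q$; everything else is bookkeeping with the one-dimensionality of $\mathcal{U}$ and $\mathcal{J}$ and the explicit projections of Lemma \ref{LemHodEffCond}. I note that no connectedness or invertibility hypothesis is needed: $K_\sigma 1_{P_G}=0$ and the existence of the $Z$-problem solution (guaranteed because $K_\sigma^*=K_\sigma\ge 0$) are all that the argument uses.
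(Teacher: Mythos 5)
Your proposal is correct and follows essentially the same route as the paper's proof: both arguments rest on the one-dimensionality of $\mathcal{U}$ and $\mathcal{J}$, the shift $u\mapsto u-u(q)1_{P_G}$ to place a general solution of eq.\ (\ref{DefEffCondProb}) into $\mathcal{U}\ho\mathcal{E}$, and the pairing with $1_{P_G}\in\ker K_\sigma$ to force the $\mathcal{J}$-component of the $Z$-problem output to be $-c\,\delta_q$. The only difference is cosmetic ordering (you treat the general solution first and the explicit solution second, while the paper does the reverse), so no further comment is needed.
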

\begin{proof}
    Assume the hypotheses. First, by Theorem \ref{ThmKEffOpZProb}, the effective operator $(K_\sigma)_*$ of the $Z$-problem (in Def. \ref{DefDipProb}) exists and is unique.
    Let $c\in\mathbb{K}\setminus\{0\}$. Then $v_0=c\delta_p\in\mathcal{U}$ and so by Theorem \ref{ThmKEffOpZProb} there exists a solution $(\rho_0,v,\rho)\in\mathcal{U}\times\mathcal{E}\times\mathcal{J}$ to the $Z$-problem (in Def. \ref{DefDipProb}) satisfying
	\begin{gather*}
	    c_1\delta_p-c_2\delta_q=\rho_0+\rho=K_\sigma(c\delta_p+v),
	\end{gather*}
	where $\rho_0=c_1\delta_p$ and $\rho=-c_2\delta_q$ for some $c_1,c_2\in\mathbb{K}$
	and hence 
	\begin{gather*}
	    (K_\sigma)_*(c\delta_p)=c_1\delta_p.
	\end{gather*}
	or, equivalently,
	\begin{gather}
	    (K_{\sigma})_*=\frac{c_1}{c}I_{\mathcal{U}}\label{PrfCRatio}.
	\end{gather}
    It also follows,
    \begin{gather*}
        c_1\delta_p-c_2\delta_q\in\ran K_\sigma=(\ker K_\sigma)^\perp,
    \end{gather*}
    and, in particular, since $1_{P_G}\in\ker K_\sigma$ (by Lemma \ref{LemConKer}) then
    \begin{gather*}
        0=(1_{P_G}, c_1\delta_p-c_2\delta_q)_{P_G}=c_1( 1_{P_G}, \delta_p)_{P_G}-c_2(1_{P_G}, \delta_q)_{P_G}\\
        =c_1-c_2.
    \end{gather*}
Hence,
    \begin{gather*}
        c_1(\delta_p-\delta_q)=K_\sigma(c\delta_p+v),
    \end{gather*}
    and $c\delta_p+v$ is a solution of linear eq.\ \eqref{DefEffCondProb} with the scalar $c_1$. Thus, either $c_1=0$, in which case $u=0$ is a solution of linear eq.\ \eqref{DefEffCondProb} with the scalar $j=0$ or $c_1\not =0$, in which case $\frac{jc}{c_1}\delta_p+\frac{j}{c_1}v$ is a solution of linear eq.\ \eqref{DefEffCondProb} for any scalar $j\in \mathbb{K}$.

    Next, suppose $u\in \mathcal{F}(P_G,\mathbb{K})$ is a solution to eq.\ \eqref{DefEffCondProb} with scalar $j$. We claim that
    \begin{gather*}
        \frac{c_1}{c}[u(p)-u(q)]=j.
    \end{gather*}
    First, we have
	\begin{align*}
		K_\sigma(v_0'+v')=K_\sigma(u-u(q)1_{P_G})=K_\sigma u = j(\delta _p - \delta_q)=\rho_0'+\rho',
	\end{align*}
	where
	\begin{align*}
		  & v_0'=[u(p)-u(q)]\delta_p\in\mathcal{U},\;\rho_0'=j\delta _p\in \mathcal{U},  \\
		  & v'=u-u(q)1_{P_G}-[u(p)-u(q)]\delta_p\in\mathcal{E},\;\;\rho'=-j\delta_q\in\mathcal{J}. 
	\end{align*}
	In particular, $(\rho_0',v',\rho')\in \mathcal{U}\times \mathcal{E}\times \mathcal{J}$ is a solution to the $Z$-problem $(\mathcal{F}(P_G,\mathbb{C}), \mathcal{U},\mathcal{E},\mathcal{J},K_\sigma)$ (in Def. \ref{DefDipProb}) at $v_0'$. It follows this, Theorem \ref{ThmKEffOpZProb}, and \eqref{PrfCRatio} that
	\begin{align*}
	   \frac{c_1}{c}[u(p)-u(q)]\delta_p=\frac{c_1}{c}v_0'=(K_\sigma)_*v_0'=\rho_0'=j\delta_p,
	\end{align*}
	which proves the claim. This also proves that $\sigma_{eff}$ exists, is unique, and
	\begin{gather*}
	    \sigma_{eff}=\frac{c_1}{c}.
	\end{gather*}
	Therefore, it follows from this and \eqref{PrfCRatio} that $\sigma_{eff}I_{\mathcal{U}}=(K_\sigma)_*$. This completes the proof.
\end{proof}

The next result provides insight into the role that connectedness plays in the notion of effective conductivity.
\begin{corollary}\label{cor:EffConductivityAndConnectedness}
    Let $\sigma_{eff}=\sigma_{eff}(p,q)$ be the effective conductivity of an electrical network $(P_G,E_G,\sigma)$ with respect to $p,q$ (see Def.\ \ref{DefEffCondEffRes}). Then the following statements are true:
    \begin{itemize}
        \item[(a)] If $p$ and $q$ belong to distinct connected components of the graph $G=(P_G,E_G)$ then $\sigma_{eff}=0$.
        \item[(b)] $\sigma_{eff}=0$ if and only if $\exists\; V\in \operatorname{ran} D\cap \ker \sigma$ such that $V=Du$ for some $u\in \mathcal{F}(P_G,\mathbb{K})$ with $u(p)\not=u(q)$.
        \item[(c)] If $\sigma$ is invertible, then $\sigma_{eff}\not=0$ if and only if $p$ and $q$ belong to the same connected component of the graph $G=(P_G,E_G)$.
    \end{itemize}
\end{corollary}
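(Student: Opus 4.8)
The plan is to reduce all three statements to a single variational characterization of $\sigma_{eff}$ obtained by feeding Theorem \ref{ThmEffCondIsEffOpFiniteGraph} into the generalized Dirichlet minimization principle. Since $\mathcal{U}=\operatorname{span}\{\delta_p\}$ is one-dimensional with $\|\delta_p\|=1$, the identity $\sigma_{eff}I_{\mathcal{U}}=(K_\sigma)_*$ from Theorem \ref{ThmEffCondIsEffOpFiniteGraph} gives $\sigma_{eff}=(\delta_p,(K_\sigma)_*\delta_p)$. Because $K_\sigma=D^*\sigma D$ (using $D^*=-D^\bullet$ from Lemma \ref{LemAdjFin}) and $\sigma^*=\sigma\geq 0$, we have $K_\sigma^*=K_\sigma\geq 0$, so Corollary \ref{CorSuffCondForTrueThmGenClassicalDiriMinPrin} applies and the minimization principle of Theorem \ref{ThmGenClassicalDiriMinPrin} yields $\sigma_{eff}=\min_{v\in\mathcal{E}}(\delta_p+v,K_\sigma(\delta_p+v))$. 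Since $\mathcal{E}=\{u:u(p)=u(q)=0\}$ by Lemma \ref{LemHodEffCond}, the affine set $\{\delta_p+v:v\in\mathcal{E}\}$ is exactly $\{w\in\mathcal{F}(P_G,\mathbb{K}):w(p)=1,\;w(q)=0\}$, so that
\[
\sigma_{eff}=\min\{(Dw,\sigma Dw):w(p)=1,\;w(q)=0\}.
\]
This reformulation is the workhorse for everything that follows.

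I would then prove (b) first, since (a) and (c) reduce to it. Because $\sigma\geq 0$, one has $(Dw,\sigma Dw)=\|\sigma^{1/2}Dw\|^2\geq 0$, and (the minimum being attained, e.g.\ by the explicit minimizers of Theorem \ref{ThmGenClassicalDiriMinPrin}) $\sigma_{eff}=0$ holds if and only if some admissible $w$ (with $w(p)=1$, $w(q)=0$) satisfies $\sigma^{1/2}Dw=0$, equivalently $Dw\in\ker\sigma$. For the forward implication I would take such a $w$ and set $u=w$, $V=Dw\in\operatorname{ran}D\cap\ker\sigma$ with $u(p)=1\neq 0=u(q)$. For the converse, given $V=Du\in\ker\sigma$ with $u(p)\neq u(q)$, I would normalize by putting $w=(u-u(q)1_{P_G})/(u(p)-u(q))$; since $D1_{P_G}=0$ (Lemma \ref{LemConKer}), $Dw$ is a nonzero scalar multiple of $V$, hence still in $\ker\sigma$, while $w(p)=1$ and $w(q)=0$, so the minimum is attained at value $0$ and $\sigma_{eff}=0$.

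For (a), I would apply (b) with $u=\chi_{P_{G_p}}$, the characteristic function of the connected component containing $p$. By Lemma \ref{LemConKer} this lies in $\ker D$, so $V=Du=0\in\operatorname{ran}D\cap\ker\sigma$, while $u(p)=1\neq 0=u(q)$ since $q$ lies in a different component; hence $\sigma_{eff}=0$ by (b). For (c), the forward direction $\sigma_{eff}\neq 0\Rightarrow$ same component is exactly the contrapositive of (a) and needs no invertibility. For the converse, assume $\sigma$ invertible and $\sigma_{eff}=0$; by (b) there is $V=Du\in\ker\sigma$ with $u(p)\neq u(q)$, but invertibility forces $\ker\sigma=\{0\}$, so $Du=0$, i.e.\ $u\in\ker D$, which by Lemma \ref{LemConKer} means $u$ is constant on each connected component. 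Thus $u(p)\neq u(q)$ forces $p,q$ into distinct components; contrapositively, same component forces $\sigma_{eff}\neq 0$, completing the equivalence.

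The only real subtlety — the main obstacle — is setting up the displayed variational characterization rigorously: one must confirm that the hypotheses of Theorem \ref{ThmGenClassicalDiriMinPrin} genuinely hold for the $Z$-problem of Definition \ref{DefDipProb} (guaranteed by $K_\sigma\geq 0$ through Corollary \ref{CorSuffCondForTrueThmGenClassicalDiriMinPrin}) and that the identification of $\{\delta_p+v:v\in\mathcal{E}\}$ with potentials $w$ normalized by $w(p)=1,\,w(q)=0$ is exact. Once that formula is in hand, parts (a)--(c) follow quickly from the equivalence $(Dw,\sigma Dw)=0\iff Dw\in\ker\sigma$ together with the description of $\ker D$ in Lemma \ref{LemConKer}.
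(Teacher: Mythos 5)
Your proposal is correct, and its core mechanism differs from the paper's in an interesting way. The paper proves part (b) by working directly with the $Z$-problem of Definition \ref{DefDipProb}: for the converse it takes the solution $(\rho_0,v,\rho)$ at $v_0=\delta_p$ furnished by Theorem \ref{ThmKEffOpZProb}, uses $\sigma_{eff}=0$ to kill $\rho_0$ and the orthogonality $1_{P_G}\perp\operatorname{ran}K_\sigma$ to kill $\rho$, and only then extracts $(Du,\sigma Du)=0\Rightarrow \sigma Du=0$. You instead first package Theorem \ref{ThmEffCondIsEffOpFiniteGraph} and the generalized Dirichlet principle (Theorem \ref{ThmGenClassicalDiriMinPrin}, applicable via Corollary \ref{CorSuffCondForTrueThmGenClassicalDiriMinPrin} since $K_\sigma=D^*\sigma D\geq 0$) into the explicit variational formula $\sigma_{eff}=\min\{(Dw,\sigma Dw):w(p)=1,\;w(q)=0\}$, which is not stated in the paper but is correctly derived: the identification of $\{\delta_p+v:v\in\mathcal{E}\}$ with the normalized potentials is exact, the minimum is attained by Theorem \ref{ThmGenClassicalDiriMinPrin}, and $(Dw,\sigma Dw)=\|\sigma^{1/2}Dw\|^2=0\iff Dw\in\ker\sigma$ since $\ker\sigma^{1/2}=\ker\sigma$. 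This buys you a cleaner, one-line criterion for $\sigma_{eff}=0$ that dispenses with the $\rho=0$ orthogonality step, and it makes visible the classical Dirichlet-principle interpretation of effective conductance; the paper's route is more elementary in that it needs only the solvability statement of Theorem \ref{ThmKEffOpZProb} rather than the full minimization machinery. Your treatments of (a) and (c) coincide with the paper's (characteristic functions of components via Lemma \ref{LemConKer}, and $\ker\sigma=\{0\}$ forcing $u\in\ker D$), except that you derive (a) as a consequence of (b) with $V=0$ rather than directly from Definition \ref{DefEffCondEffRes} with $j=0$; both are valid.
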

\begin{proof}
Assume the hypotheses. First, Theorem \ref{ThmEffCondIsEffOpFiniteGraph} gives us existence and uniqueness of the effective conductivity $\sigma_{eff}$. 

Let us prove (a). Assume that $p$ and $q$ belong to distinct connected components of the graph $G=(P_G,E_G)$. Then there exists a connected component $G_i=(P_{G_i},E_{G_i})$ of $G$ such that $p\in P_{G_i}$ and $q\not\in P_{G_i}$. Third, by Lemma \ref{LemConKer} we know that $\chi_{P_{G_i}}\in \ker D$, where $\chi_{P_{G_i}}$ is the characteristic function on the node set $P_{G_i}$. From this it follows that $u=\chi_{P_{G_i}}\in \mathcal{F}(P_G,\mathbb{K})$ with $u(p)=1,u(q)=0$ and $K_{\sigma}u=0=0(\delta_p-\delta_q)$ so that we must have (with $j=0$) that $0=\sigma_{eff}[u(p)-u(q)]=\sigma_{eff}$. This proves (a).

Next, we will prove (b). Suppose there exists a $V\in \operatorname{ran} D\cap \ker \sigma$ such that $V=Du$ for some $u\in \mathcal{F}(P_G,\mathbb{K})$ with $u(p)\not=u(q)$. Then $K_{\sigma}u=0=0(\delta_p-\delta_q)$ implying by Theorem \ref{ThmEffCondIsEffOpFiniteGraph} that $0=\sigma_{eff}[u(p)-u(q)]$ and hence as $u(p)-u(q)\not=0$ then $\sigma_{eff}=0$. Conversely, suppose that $\sigma_{eff}=0$. Then it follows from Theorem \ref{ThmKEffOpZProb} with $v_0=\delta_p\in \mathcal{U}$ that there exists a $(\rho_0,v,\rho)\in \mathcal{U}\times \mathcal{E}\times \mathcal{J}$ such that $\rho_0+\rho =K_{\sigma}(v_0+v)$ and so by Theorem \ref{ThmEffCondIsEffOpFiniteGraph} we have $\rho_0=\sigma_{eff}v_0=0$. As $1_{P_G}\in \ker D$, hence $1_{P_G}\in \ker K_{\sigma}=(\operatorname{ran}K_{\sigma})^{\perp}$ and $\rho\in \mathcal{J}=\operatorname{span}\{\delta_q\}$ it follows that $c\delta_q=\rho\in \operatorname{ran}K_{\sigma}$ for some $c\in \mathbb{K}$ so that $c=(1_{P_G},\rho)_{\mathcal{F}(P_G,\mathbb{K})}=0$. From this we conclude that $u=v_0+v$ satisfies $0=K_{\sigma}u$ implying $0=(u,K_{\sigma}u)_{\mathcal{F}(P_G,\mathbb{K})}=(Du,\sigma Du)_{\mathcal{F}(E_G,\mathbb{K})}$ so that since $\sigma^*=\sigma\geq 0$ then it follows that $\sigma Du=0$, that is, $Du\in\operatorname{ran}D\cap \ker \sigma$. Finally, since $v\in\mathcal{E}$ then $v(p)=v(q)=0$ and hence $u(p)=v_0(p)+v(p)=1\not=0=v_0(q)+v(q)=u(q)$. This proves the converse. Therefore, (b) is true.

Finally, we prove (c). Assume $\sigma$ is invertible. Then by (a) we know that if $p$ and $q$ belong to distinct connected components of the graph $G=(P_G,E_G)$ then $\sigma_{eff}=0$. Conversely, suppose $\sigma_{eff}=0$. Then $\operatorname{ran} D\cap \ker \sigma=\{0\}$ and so by (b) there exists $u\in\mathcal{F}(P_G,\mathbb{K})$ such that $Du=0$ and $u(p)\not=u(q)$. In particular, $u\in\ker D$ so by Lemma \ref{LemConKer} it follows that $u=\sum_{i=l}^kc_l\chi_{P_{G_l}}$ for some $c_1,\ldots, c_k\in \mathbb{K}$, where $G_1=(P_{G_1},E_{G_1}),\ldots,G_k=(P_{G_k},E_{G_k})$ are the connected components of the graph $G=(P_G,E_G)$ and $\chi_{P_{G_l}}$ is the characteristic function on the node set $P_{G_l}$ of the graph $G_l$ for each $l=1,\ldots, k$. Hence, there exists an $i,j\in \{1,\ldots,k\}$ such that $p\in P_{G_i}, q\in P_{G_j}$ implying $c_i=u(p)\not=u(q)=c_j$ so that $i\not= j$. This proves that $p$ and $q$ belong to distinct connected components of the graph $G=(P_G,E_G)$. Therefore, (c) is true.
\end{proof}

\begin{example}
 The following is a counterexample to the converse of Corollary \ref{cor:EffConductivityAndConnectedness}.(a). Let $G=(P_G,E_G)$ with $P_G=\{p_1,p_2,p_3\}, E_{G}=\{e_1,e_2\}$, $|P_G|=3, |E_G|=2$, $(e_1)_-=p_1, (e_1)_+=p_2, (e_2)_-=p_2, (e_2)_+=p_2$. Then $G$ is a connected graph and so trivially $p=p_1,q=p_2$ belong to the same connected component of the graph $G$ (i.e., $k=1, G_1=G$ in Lemma \ref{LemConKer}). Consider the electrical network $(P_G,E_G,\sigma)$, where $\sigma = \chi_{e_2} I_{\mathcal{F}(E_G,\mathbb{K})}\in \mathcal{L}(\mathcal{F}(E_G,\mathbb{K}))$ is left-multiplication by the characteristic function $\chi_{e_2}$ of the edge $e_2$, i.e., for each $e\in E_G$, $\chi_{e_2}(e)=1$ if $e=e_2$ and $\chi_{e_2}(e)=0$ if $e=e_1$. Then it follows that $\sigma^*=\sigma\geq 0$ and $\sigma D\delta_p=0$, since $(\sigma D\delta_p)(e)=\chi_{e_2}(e)(D\delta_p)(e)=\chi_{e_2}(e)[\delta_p(e_+)-\delta_p(e_-)]=0$ for each $e\in E_G$. This proves that $D\delta_p\in   \operatorname{ran} D\cap \ker \sigma$ with $\delta_p\in \mathcal{F}(P_G,\mathbb{K})$ and $\delta_p(p)=1\not=0=\delta_p(q)$ so that, by Corollary \ref{cor:EffConductivityAndConnectedness}.(b), we have $\sigma_{eff}(p,q)=\sigma_{eff}=0$.
\end{example}

\subsubsection{Relationship to the DtN map}

The goal of this section is to derive a representation of the effective conductivity $\sigma_{eff}$ of the electrical network $(P_G,E_G,\sigma)$ with respect to distinct nodes $p,q\in P_G$ in terms of the DtN map $\Lambda_{\sigma}$ as introduced in Sec.\ \ref{SubSectDtNZ} when the boundary nodes $P_{\partial G}$ and interior nodes $P_{G^{\circ}}$ are defined to be
\begin{align}
    P_{\partial G}=\{p,q\},\;P_{G^{\circ}}=P_G\setminus P_{\partial G},\label{def:BddIntNodesForDtNMapRelToEffCond}
\end{align}
and assuming that $\{p,q\}\subsetneq P_G$. 

This goal is achieved with the following theorem.
\begin{theorem}
    Let $(P_G,E_G,\sigma)$ be an electrical network. If $p,q\in P_G$ with $p\not=q$ and $\{p,q\}\subsetneq P_G$ then the effective conductivity $\sigma_{eff}=\sigma_{eff}(p,q)$ (see Def.\ \ref{DefEffCondEffRes}) and the DtN map $\Lambda_{\sigma}$ [with boundary and interior nodes defined by (\ref{def:BddIntNodesForDtNMapRelToEffCond})] (see Def.\ \ref{DefDtNMap}) are related by the formula
    \begin{align}
        \Lambda_{\sigma}(\cdot) = \sigma_{eff} ((\delta_p-\delta_q)|_{P_{\partial G}},\cdot)_{\mathcal{F}(P_{\partial G},\mathbb{K})} (\delta_p-\delta_q)|_{P_{\partial G}}.
    \end{align}
\end{theorem}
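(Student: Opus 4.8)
The plan is to proceed directly from the definition of the DtN map (Definition \ref{DefDtNMap}) and reduce the computation to the defining relation for $\sigma_{eff}$ (Definition \ref{DefEffCondEffRes}). Fix an arbitrary $f \in \mathcal{F}(P_{\partial G}, \mathbb{K})$. By Theorem \ref{ThmDtNMapSchurComp} the discrete Dirichlet problem admits a solution $u \in \mathcal{F}(P_G, \mathbb{K})$ with $u|_{P_{\partial G}} = f$, and $\Lambda_\sigma f = \phi$, where $\phi = (K_\sigma u)|_{P_{\partial G}}$ and $K_\sigma = -D^\bullet \sigma D$. Since $P_{\partial G} = \{p,q\}$ and $(K_\sigma u)|_{P_{G^\circ}} = 0$ by the Dirichlet equation, the function $K_\sigma u$ is supported on $\{p,q\}$, so $K_\sigma u = (K_\sigma u)(p)\,\delta_p + (K_\sigma u)(q)\,\delta_q$.

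The crux is to show this is a scalar multiple of $\delta_p - \delta_q$. I would use that $1_{P_G} \in \ker D \subseteq \ker K_\sigma$ (Lemma \ref{LemConKer}) together with $K_\sigma^* = K_\sigma$ (Theorem \ref{ThmDtNMapSchurComp}), which forces $\operatorname{ran} K_\sigma \perp 1_{P_G}$; hence $0 = (1_{P_G}, K_\sigma u)_{P_G} = (K_\sigma u)(p) + (K_\sigma u)(q)$. Setting $j := (K_\sigma u)(p)$ then gives $K_\sigma u = j(\delta_p - \delta_q)$, that is, $u$ solves the effective-conductivity equation \eqref{DefEffCondProb} with scalar $j$.

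Now I would invoke Definition \ref{DefEffCondEffRes}, using Theorem \ref{ThmEffCondIsEffOpFiniteGraph} for the existence and uniqueness of $\sigma_{eff}$: since $u$ solves \eqref{DefEffCondProb} with scalar $j$, we have $j = \sigma_{eff}[u(p) - u(q)] = \sigma_{eff}[f(p) - f(q)]$. Writing $w := (\delta_p - \delta_q)|_{P_{\partial G}}$, a one-line computation in $\mathcal{F}(P_{\partial G}, \mathbb{K})$ gives $(w, f)_{P_{\partial G}} = f(p) - f(q)$, so $j = \sigma_{eff}\,(w, f)_{P_{\partial G}}$. Finally, restricting the identity $K_\sigma u = j(\delta_p - \delta_q)$ to $P_{\partial G}$ yields $\phi = j\,w$, whence $\Lambda_\sigma f = \phi = \sigma_{eff}\,(w, f)_{P_{\partial G}}\, w$. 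As $f$ was arbitrary, this is exactly the claimed formula.

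Because the argument is a direct verification on an arbitrary $f$, no quadratic-form uniqueness step is required. The only genuine subtlety—the main obstacle—is the reduction in the second paragraph: recognizing that the support condition coming from the Dirichlet problem, combined with orthogonality of $\operatorname{ran} K_\sigma$ to the all-ones function, pins $K_\sigma u$ down to a multiple of $\delta_p - \delta_q$ and thereby matches precisely the source term in \eqref{DefEffCondProb}. I would also double-check that no connectedness hypothesis enters anywhere, so that the formula remains valid (and consistent with $\sigma_{eff} = 0$) when $p$ and $q$ lie in different connected components, in agreement with Corollary \ref{cor:EffConductivityAndConnectedness}(a).
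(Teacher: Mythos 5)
Your proposal is correct and follows essentially the same route as the paper: solve the Dirichlet problem for an arbitrary boundary datum $f$, pin $K_\sigma u$ down to a multiple of $\delta_p-\delta_q$ via support on $\{p,q\}$ plus orthogonality of $\operatorname{ran}K_\sigma$ to $1_{P_G}$, then invoke Theorem \ref{ThmEffCondIsEffOpFiniteGraph} to identify the scalar as $\sigma_{eff}[f(p)-f(q)]=\sigma_{eff}\,((\delta_p-\delta_q)|_{P_{\partial G}},f)_{P_{\partial G}}$. The only cosmetic difference is that the paper phrases the orthogonality step as $1_{P_G}|_{P_{\partial G}}\in\ker\Lambda_\sigma=(\operatorname{ran}\Lambda_\sigma)^\perp$ rather than at the level of $K_\sigma$; both come to the same thing.
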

\begin{proof}
    Assume the hypotheses. First, Theorems \ref{ThmEffCondIsEffOpFiniteGraph} and \ref{ThmDtNMapSchurComp} give us existence and uniqueness of both the effective conductivity $\sigma_{eff}$ and the DtN map $\Lambda_{\sigma}$ for this electrical network $(P_G,E_G,\sigma)$ with boundary and interior nodes (\ref{def:BddIntNodesForDtNMapRelToEffCond}). Next, let $f\in \mathcal{F}(P_{\partial G},\mathbb{K})$. Then by Theorem \ref{ThmDtNMapSchurComp} there exists a solution $u\in \mathcal{F}(P_G,\mathbb{K})$ to the discrete Dirichlet problem (see Def.\ \ref{DefDiscreteDirchProb}):
    \begin{align*}
        (K_{\sigma}u)|_{P_{G^\circ}}=0,\;u|_{P_{\partial G}}=f,
    \end{align*}
    and
    \begin{align*}
        \Lambda_{\sigma}(f)=(K_{\sigma}u)|_{P_{\partial G}}.
    \end{align*}
    Now it follows from this and the two facts
    \begin{gather*}
        \mathcal{F}(P_{\partial G},\mathbb{K})=\operatorname{span}\{\delta_p|_{P_{\partial G}},\delta_q|_{P_{\partial G}}\},\\
        \;1_{P_G}|_{P_{\partial G}}\in\ker \Lambda_{\sigma}=(\operatorname{ran}\Lambda_{\sigma})^{\perp},
    \end{gather*}
    that we have
    \begin{align*}
         \Lambda_{\sigma}(f)=j\delta_p|_{P_{\partial G}}-j\delta_q|_{P_{\partial G}}=j(\delta_p-\delta_q)|_{P_{\partial G}},
    \end{align*}
    for some scalar $j\in\mathbb{K}$, which implies that
    \begin{align*}
        K_{\sigma}u=j(\delta_p-\delta_q)|_{P_{\partial G}}.
    \end{align*}
    Hence, by this and Theorem \ref{ThmEffCondIsEffOpFiniteGraph}, we have
    \begin{align*}
        j=\sigma_{eff}[u(p)-u(q)].
    \end{align*}
    Thus, we have
    \begin{gather*}
        \Lambda_{\sigma}(f)=j(\delta_p-\delta_q)|_{P_{\partial G}}
        =\sigma_{eff}[u(p)-u(q)](\delta_p-\delta_q)|_{P_{\partial G}}\\
        = \sigma_{eff} ((\delta_p-\delta_q)|_{P_{\partial G}},f)_{\mathcal{F}(P_{\partial G},\mathbb{K})} (\delta_p-\delta_q)|_{P_{\partial G}}.
    \end{gather*}
    Therefore, as this is true for all $f\in \mathcal{F}(P_{\partial G},\mathbb{K})$, this proves the theorem.
\end{proof}

\subsection{Conductivity for Periodic Linear Graphs}\label{SectCondInf}
	
	In this section, we study the discrete analog of the continuum electrical conductivity problem and effective conductivity $\sigma_*$ (cf.\, Example \ref{ExContinuumPeriodicCondZProb}) for periodic linear graphs. Although we are still in a discrete setting, we must treat new complications that arise when one moves from finite linear graphs to infinite linear graphs. 
	
	In order to gain a clear insight into this transition, we only consider Cartesian graphs. This is also motivated by Refs.\ \onlinecite{15MG, 90BG}, which study the effective conductivity in lattice networks, but use analytic and probabilistic methods from Ref.\ \onlinecite{83GP} as opposed to the Hilbert space approach that we are advocating for in our paper. Nevertheless, we believe that our methods are complementary to theirs and as such may find applications in the theory of composites for lattice networks.
    
    Now there are some similarities and some changes that occur when transitioning from Example \ref{ExContinuumPeriodicCondZProb} and Section \ref{SubSectOpView}. First, we continue to use the function spaces $\mathcal{F}(\mathcal{T},\mathbb{K})$, but now they are infinite-dimensional vector spaces without an inner product. Also, the gradient $D$ and divergence $D^\bullet$ are still well defined linear operators, but we lose their adjoint relationship. Instead, we replace these with their periodic counterparts $\mathcal{F}_\#(\mathcal{T},\mathbb{K})$, $D_\#$, and $(D^\bullet)_\#$. Now, the function space $\mathcal{F}_\#(\mathcal{T},\mathbb{K})$ is a finite-dimensional Hilbert space with inner product $(\cdot,\cdot)_{\mathcal{F}_\#(\mathcal{T},\mathbb{K})}$ and we have the adjoint relationship $D_\#^*=-(D^\bullet)_\#$. From this we are able to get a Hodge decomposition and an associated $Z$-problem in this setting, in analogy with the continuum problem under these periodic spaces and operators. In addition, we obtain the analogous relation that the effective conductivity is the effective operator $\sigma_*$ under certain additional hypotheses, but without which there are some interesting nuances that we discuss.
	
	\subsubsection{Operator Framework for Electrical Conductivity in Lattice Networks}
	Fix a $d$-tuple of positive integers
	\begin{align}
	    \tau=(\tau_1,\ldots,\tau_d)\in\mathbb{N}^d\label{ref:latticevecperiodicitytau}
	\end{align}
	and consider the lattice 
	\begin{align}\label{ref:latticevec}
		L=\left\{\sum_{i=1}^dn_ia_i:n_1,\ldots,n_d\in\mathbb{Z}\right\}\subseteq \mathbb{Z}^d,
	\end{align}
	spanned by the $d$ primitive lattice vectors
	\begin{align}
		\{a_i:a_i=\tau_i\mathbf{e}_i,\;i=1,\ldots,d\}, 
	\end{align}
	where $\mathbf{e}_i$ denotes the $i$th standard basis vector, i.e., the $d$-tuple in $\mathbb{Z}^d$ whose $i$th coordinate is $1$ and all others are $0$. Consider the linear directed graph $G=(\mathbb{Z}^d,E)$ with node set $\mathbb{Z}^d$ and directed edge set
	\begin{align}\label{ref:edgeset}
		E=\{\overrightarrow{v\left(v+\mathbf{e}_i\right)}:v\in \mathbb{Z}^d, i=1,\ldots,d\}, 
	\end{align}
	that is, each edge $e=\overrightarrow{v\left(v+\mathbf{e}_i\right)}$ is a directed line segment with initial node $e_-=v$ and terminal node $e_+=v+\mathbf{e}_i$. In particular, if $e_1,e_2\in E$ then $e_1=e_2$ if and only if $(e_1)_\pm=(e_2)_\pm$. 
	
	Notice that, for each edge $e\in E$ and lattice vector $R\in L$, there is a unique edge $e+R\in E$ defined by translation (see Figure \ref{FigEdgTrans}) as
	\begin{align}\label{ref:edgeadd}
		e+R=\overrightarrow{(e_-+R)(e_++R)}\in E. 
	\end{align}
	In particular, $G$ becomes a periodic digraph under the lattice $L$.
	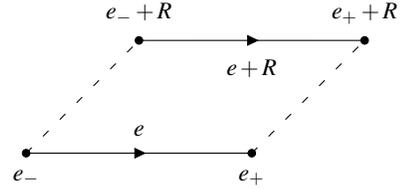
\begin{figure}[ht!]
		\centering
		\begin{circuitikz}[scale=0.75, /tikz/circuitikz/bipoles/length=1.25cm]
			\draw[loosely dashed]
			(0,0) -- (2,2)
			(4,0) -- (6,2);
			\draw
			(0,0) node[label={below:$e_-$}] {}
			to[short, i=\phantom{},bipole nodes={circ}{circ},l=$e$]
			(4,0) node[label={below:$e_+$}] {}
			                    
			(2,2) node[label={above:$e_-+R$}] {}
			to[short, i=\phantom{},bipole nodes={circ}{circ},l_=$e+R$]
			(6,2) node[label={above:$e_++R$}] {}
			;
		\end{circuitikz}
		\caption{For a given edge $e\in E$ in the periodic digraph $G$ and a lattice vector $R\in L$, the translation $e+R$ is another edge in $E$.}
		\label{FigEdgTrans}
	\end{figure}
	\begin{definition}[Primitive unit cell]\label{DefPrimCell}
		A primitive unit cell (fundamental domain) with respect to the lattice $L$ is a subset $\mathcal{S}\subseteq \mathcal{T}\in\{\mathbb{Z}^d, E\}$ satisfying the properties
			\begin{enumerate}[(a)]
				\item $\mathcal{S}\neq \emptyset$,
				\item $\mathcal{T}=\mathcal{S}+L$,
				\item $(s,R)\in \mathcal{S}\times(L\setminus\{0\})\Rightarrow s+R\not\in \mathcal{S}$.
			\end{enumerate}
	\end{definition}
	In particular, a primitive unit cell $P$ for the nodes $\mathbb{Z}^d$ is given by
	\begin{align}
		P=\mathbb{Z}^d\cap\prod_{k=1}^d[0,\tau_k)=\{p_1,\ldots,p_{|P|}\}\label{def:primitive_unit_cell_P}. 
	\end{align}
	Similarly, a primitive unit cell $E_P^-$ for the edges $E$ is given by
	\begin{align}
		E_P^-=\{e\in E: e_-\in P\}=\{e_1,\ldots,e_{|E_P^-|}\}.\label{def:primitive_unit_cell_E} 
	\end{align}
	An illustrated example of this construction can be found in Figure \ref{FigLatEx}.
	\begin{figure}[ht!]
		\centering
		\begin{subfigure}[t]{0.5\textwidth}
			\centering
			\begin{circuitikz}[scale=0.5, /tikz/circuitikz/bipoles/length=1.5cm]
				\draw [black,thick] (-0.5,0)--(8.5,0);
				\draw [blue, very thick] (2,0)--(4,0);
				\draw [black, dashed] (2,0)--(1,-2)
				(4,0)--(5,-2)
				;
				\draw [blue]
				(1,-2) node[] {}
				to[short, i=\phantom{},bipole nodes={ocirc, fill=red}{ocirc, fill=red}]
				(3,-2) node[] {}
				to[short, i=\phantom{},bipole nodes={}{ocirc}]
				(5,-2) node[] {}
				;
			\end{circuitikz}
			\subcaption{$d=1$, $\tau=2$.}
		\end{subfigure}%
		\vspace{\floatsep}
		\begin{subfigure}[t]{0.5\textwidth}
			\centering
					\begin{circuitikz}[scale=0.5, /tikz/circuitikz/bipoles/length=1.5cm]
				\draw [step=1.0, black, thick] (0.5,0.5) grid (5.5,5.5);
				\draw [blue, very thick] (2,2)--(2,3)
				(3,2)--(2,2) (2,3)--(3,3) (3,3)--(3,2);
				\draw [blue, very thick] (2,3)--(2,4)
				 (3,3)--(3,4) (3,3)--(4,3) (3,2)--(4,2);
				\draw [black, dashed] (2,4)--(6,4.5);
				\draw [black, dashed] (2,2)--(6,1.5);
				\draw [draw=blue]
				(6,1.5) node[] {}
				to[short, i=\phantom{},bipole nodes={ocirc, fill=red}{ocirc, fill=red}]
				(7.5,1.5) node[] {}
				to[short, i=\phantom{},bipole nodes={ocirc, fill=red}{ocirc, fill=red}]
				(9,1.5) node[ocirc] {}
				                    
				(6,3) node[] {}
				to[short, i=\phantom{},bipole nodes={ocirc, fill=red}{ocirc, fill=red}]
				(7.5,3) node[] {}
				to[short, i=\phantom{},bipole nodes={ocirc, fill=red}{ocirc, fill=red}]
				(9,3) node[ocirc] {}
				                    
				(6,1.5)
				to[short, i=\phantom{},bipole nodes={ocirc, fill=red}{ocirc, fill=red}]
				(6,3)
				to[short, i=\phantom{},bipole nodes={ocirc, fill=red}{ocirc, fill=red}]
				(6,4.5) node[ocirc] {}
				                    
				(7.5,1.5)
				to[short, i=\phantom{},bipole nodes={ocirc, fill=red}{ocirc, fill=red}]
				(7.5,3)
				to[short, i=\phantom{},bipole nodes={ocirc, fill=red}{ocirc, fill=red}]
				(7.5,4.5) node[ocirc] {};
			\end{circuitikz}
			\subcaption{$d=2$, $\tau=(2,2)$.}
		\end{subfigure}
		\caption{Two illustrations showing the periodic digraph $G$ [in dimensions (a) $d=1$ and (b) $d=2$ and periods (a) $\tau=\tau_1=2$ and (b) $\tau=(\tau_1,\tau_2)=(2,2)$] is generated by translation of a unit cell with node unit cell $P$ (\textcolor{red}{red}) and edge unit cell $E_P^-$ (\textcolor{blue}{blue}).}\label{FigLatEx}
	\end{figure}
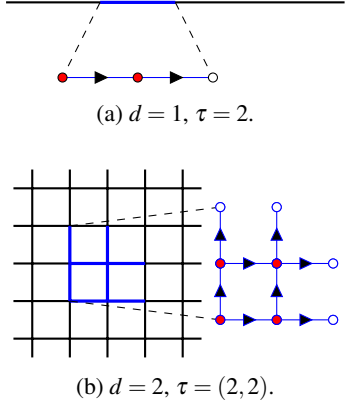

	Observe, the cardinalities of both of these primitive unit cells are nonzero and finite, i.e., 
	\begin{align}\label{ref:finitecard}
		0<|P|<\infty \text{ and } 0<|E_P^-|<\infty, 
	\end{align}
	and the in-degree $\deg_{in}(\cdot)$ and out-degree $\deg_{out}(\cdot)$ of each node in the graph is finite and equal, i.e.,
	\begin{align}\label{ref:inout}
		0<\deg_{out}(p)=\sum_{\begin{subarray}{c}e\in E, \\e_-=p\end{subarray}}1=\sum_{\begin{subarray}{c}e\in E,\\e_+=p\end{subarray}}1=\deg_{in}(p)<\infty,
	\end{align}
	for each $p\in \mathbb{Z}^d$.
	\begin{definition}\label{def:LPerSpace}
		Let $\mathcal{T}\in\{\mathbb{Z}^d,E\}$. We define the subspace of $\mathcal{F}(\mathcal{T},\mathbb{K})$, containing all periodic functions with respect to a lattice $L$, by
		\begin{equation}\label{DefLPerSpace}
			\hspace{-1em}\mathcal{F}_{\#}(\mathcal{T},\mathbb{K})=\left\{f\in \mathcal{F}(\mathcal{T},\mathbb{K}):f(t+R)=f(t),\;\forall(t,R)\in \mathcal{T}\times L\right\}.
		\end{equation}
	\end{definition}
	
	A key difference between the space $\mathcal{F}(\mathcal{T},\mathbb{K})$ and $\mathcal{F}_\#(\mathcal{T},\mathbb{K})$ is that the former is finite-dimensional only if $\mathcal{T}$ is finite-dimensional, whereas the latter is always finite-dimensional, a consequence of (\ref{ref:finitecard}) and the next lemma.
	\begin{lemma}\label{LemExtRes}
		Let $(\mathcal{S},\mathcal{T})\in\{(P,\mathbb{Z}^d),(E_P^-,E)\}$ with lattice $L$.
		For each $f\in \mathcal{F}(\mathcal{S},\mathbb{K})$, define its periodic extension $f^{\#}\in \mathcal{F}_\#(\mathcal{T},\mathbb{K})$ by
		\begin{align}
			f^{\#}(s+R)=f(s),\;\;\forall (s,R)\in \mathcal{S}\times L. 
		\end{align}
		Then the periodic extension map, i.e., the function
		\begin{align}
			(\cdot)^{\#}:\mathcal{F}(\mathcal{S},\mathbb{K})\to \mathcal{F}_\#(\mathcal{T},\mathbb{K}),
		\end{align}
		is well defined and, moreover, it is an invertible linear transformation with inverse the restriction map
		\begin{align}
			(\cdot)|_{\mathcal{S}}:\mathcal{F}_\#(\mathcal{T},\mathbb{K})\to \mathcal{F}(\mathcal{S},\mathbb{K}), 
		\end{align}
		defined, for each $f\in \mathcal{F}_\#(\mathcal{T},\mathbb{K})$, by
		\begin{align}
			f|_{\mathcal{S}}(s)=f(s), \forall s\in \mathcal{S}. 
		\end{align}
		In particular, the following vector spaces are isomorphic:
		\begin{align}
			\mathcal{F}(\mathcal{S},\mathbb{K})\cong \mathcal{F}_\#(\mathcal{T},\mathbb{K})\cong \mathbb{K}^{|\mathcal{S}|}. 
		\end{align}
	\end{lemma}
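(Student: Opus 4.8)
The plan is to verify directly that the periodic extension map $(\cdot)^{\#}$ is a well-defined linear bijection whose inverse is the restriction map $(\cdot)|_{\mathcal{S}}$, and then to read off the asserted isomorphisms. Throughout I would use that $\mathcal{S}$ is a primitive unit cell for $\mathcal{T}$, i.e., that properties (a)--(c) of Definition \ref{DefPrimCell} hold, together with the finiteness $0<|\mathcal{S}|<\infty$ recorded in (\ref{ref:finitecard}).

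First I would establish that every $t\in\mathcal{T}$ admits a \emph{unique} decomposition $t=s+R$ with $s\in\mathcal{S}$ and $R\in L$. Existence is immediate from property (b), namely $\mathcal{T}=\mathcal{S}+L$. For uniqueness, suppose $s_1+R_1=s_2+R_2$ with $s_i\in\mathcal{S}$ and $R_i\in L$; then $s_1=s_2+(R_2-R_1)$ with $R_2-R_1\in L$, so if $R_2-R_1\neq 0$, property (c) forces $s_2+(R_2-R_1)\notin\mathcal{S}$, contradicting $s_1\in\mathcal{S}$. Hence $R_1=R_2$ and $s_1=s_2$. This unique-decomposition property is the crux of the whole lemma: it is precisely what makes the prescription $f^{\#}(s+R):=f(s)$ unambiguous, so I expect this step, small as it is, to be the main (and essentially the only) obstacle; everything afterward is bookkeeping.

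Granting well-definedness, I would check the remaining assertions in turn. Periodicity of $f^{\#}$: for $t=s+R$ and any $R'\in L$, writing $t+R'=s+(R+R')$ gives $f^{\#}(t+R')=f(s)=f^{\#}(t)$, so $f^{\#}\in\mathcal{F}_{\#}(\mathcal{T},\mathbb{K})$. Linearity: evaluating $(\alpha f+\beta g)^{\#}$ at $t=s+R$ reduces to the combination $\alpha f(s)+\beta g(s)$, whence $(\cdot)^{\#}$ is linear. That $(\cdot)|_{\mathcal{S}}$ is a two-sided inverse again follows from unique decomposition: for $f\in\mathcal{F}(\mathcal{S},\mathbb{K})$ and $s\in\mathcal{S}$ one has $(f^{\#})|_{\mathcal{S}}(s)=f^{\#}(s+0)=f(s)$, while for $g\in\mathcal{F}_{\#}(\mathcal{T},\mathbb{K})$ and $t=s+R$ the periodicity of $g$ gives $(g|_{\mathcal{S}})^{\#}(t)=g(s)=g(s+R)=g(t)$.

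Finally, the isomorphisms fall out immediately. The map $(\cdot)^{\#}$ is an invertible linear transformation, hence a vector-space isomorphism $\mathcal{F}(\mathcal{S},\mathbb{K})\cong\mathcal{F}_{\#}(\mathcal{T},\mathbb{K})$; and since $\mathcal{S}=\{s_1,\dots,s_{|\mathcal{S}|}\}$ is finite by (\ref{ref:finitecard}), the standard evaluation map $f\mapsto(f(s_1),\dots,f(s_{|\mathcal{S}|}))$ furnishes $\mathcal{F}(\mathcal{S},\mathbb{K})\cong\mathbb{K}^{|\mathcal{S}|}$. Composing these yields the full chain $\mathcal{F}(\mathcal{S},\mathbb{K})\cong\mathcal{F}_{\#}(\mathcal{T},\mathbb{K})\cong\mathbb{K}^{|\mathcal{S}|}$, completing the argument.
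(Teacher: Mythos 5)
Your proposal is correct and follows essentially the same route as the paper's proof: establish the unique decomposition $t=s+R$ from properties (b) and (c) of the primitive unit cell, then verify periodicity, linearity, and that restriction is a two-sided inverse, and finally read off the isomorphisms from finiteness of $\mathcal{S}$. No gaps.
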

	
	\begin{proof}
		We begin by proving that the periodic extension map is well defined. By Definition \ref{DefPrimCell}.$(b)$ we have $\mathcal{T}=\mathcal{S}+L$, that is for each $t\in \mathcal{T}$ there exist $(s,R)\in \mathcal{S}\times L$ such that $t=s+R$. Second, by Definition \ref{DefPrimCell}.$(c)$ if $s\in \mathcal{S}$ and $0\neq R\in L$ then $s+L\not\in \mathcal{S}$. Now suppose, $t,t'\in \mathcal{T}$ then $t=s+R$ for some $(s,R)\in \mathcal{S}\times L$ and $t'=s'+R'$ for some $(s',R')\in \mathcal{S}\times L$. Hence, if $t=t'$ then $s+R=s'+R'$ implying $s=s'+(R'-R)$ which by Definition \ref{DefPrimCell}.$(c)$ further implies $R'-R=0$ so that $s=s'$. Together with $\mathcal{T}=\mathcal{S}+L$, this proves the well-definedness of the map $(\cdot)^\#:\mathcal{F}(\mathcal{S},\mathbb{K})\to\mathcal{F}(\mathcal{T},\mathbb{K})$. 
		    
		Next we will prove that if $f\in \mathcal{F}(\mathcal{S},\mathbb{K})$ then $f^\#\in\mathcal{F}_\#(\mathcal{T},\mathbb{K})$. Let $f\in\mathcal{F}(\mathcal{S},\mathbb{K})$ and $(t,R)\in \mathcal{T}\times L$, then there is an $(s,R')\in \mathcal{S}\times L$ such that $t=s+R'$ and so $f^\#(t+R)=f^\#(s+R'+R)=f^\#(s)=f^\#(s+R')=f^\#(t)$ which proves $f^\#\in\mathcal{F}_\#(\mathcal{T},\mathbb{K})$. This proves the well-definedness of the map $(\cdot)^\#:\mathcal{F}(\mathcal{S},\mathbb{K})\to\mathcal{F}_\#(\mathcal{T},\mathbb{K})$.
		    
		We will now prove that the map $(\cdot)^\#:\mathcal{F}(\mathcal{S},\mathbb{K})\to\mathcal{F}_\#(\mathcal{T},\mathbb{K})$ is linear. First, both $\mathcal{F}(\mathcal{S},\mathbb{K})$ and $\mathcal{F}_\#(\mathcal{T},\mathbb{K})$ are vectors spaces over the field $\mathbb{K}$ with addition and scalar multiplication for functions as usual. Next, let $f,g\in\mathcal{F}(\mathcal{S},\mathbb{K})$ and $(c,s,R)\in \mathbb{K}\times \mathcal{S}\times L$. Then
		\begin{gather*}
			(cf+g)^\#(s+R)=(cf+g)(s)\\
			=cf(s)+g(s)
			=cf^\#(s+R)+g^\#(s+R)\\=(cf^\#+g^\#)(s+R).
		\end{gather*}
		This proves by Definition \ref{DefPrimCell}.$(b)$ that $(cf+g)^\#=cf^\#+g^\#$. Hence, the periodic extension map is linear.
		    
		Next, we will prove that the restriction map is well defined. Let $f\in\mathcal{F}_\#(\mathcal{S},\mathbb{K})$ and $s\in \mathcal{S}$. Then, as $f:\mathcal{T}\to\mathbb{K}$ and $\mathcal{S}\subseteq \mathcal{T}$, it follows $g(s)\in \mathbb{K}$ implying $f|_{\mathcal{S}}:\mathcal{S}\to\mathbb{K}$ is a well-defined function in $\mathcal{F}(\mathcal{S},\mathbb{K})$. Suppose $f=g$ for some $g\in\mathcal{F}(\mathcal{S},\mathbb{K})$. Then $f(s)=g(s)$ $\forall s\in \mathcal{T}$ and, in particular, $f(s)=g(s)$ $\forall s\in \mathcal{S}$ which implies $f|_{\mathcal{S}}=g|_{\mathcal{S}}$. This proves $(\cdot)|_{\mathcal{S}}:\mathcal{F}_\#(\mathcal{T},\mathbb{K})\to\mathcal{F}(\mathcal{S},\mathbb{K})$ is well defined.
		    
		Lastly, we prove that the periodic extension map and the restriction map are inverses of each other. Let $f\in\mathcal{F}_\#(\mathcal{T},\mathbb{K})$. Then for each $t\in \mathcal{T}$ there exists $(s,R)\in \mathcal{S}\times L$ with $t=s+R$ and so
		\begin{gather*}
			\{[(\cdot)^\#\circ (\cdot)|_{\mathcal{S}}](f)\}(t)=(f|_{\mathcal{S}})^\#(t)=(f|_{\mathcal{S}})^\#(s+R)\\
			=f|_{\mathcal{S}}(s)=f(s)=f(s+R)=f(t).
		\end{gather*}
		This implies that $[(\cdot)^\#\circ (\cdot)|_{\mathcal{S}}](f)=f$. As $f\in\mathcal{F}_\#(\mathcal{T},\mathbb{K})$ was arbitrary it follows that $(\cdot)^\#\circ (\cdot)|_{\mathcal{S}}=I_{\mathcal{F}_\#(\mathcal{T},\mathbb{K})}$. Now let $f\in\mathcal{F}(\mathcal{S},\mathbb{K})$ and $s\in \mathcal{S}$. Then
		\begin{gather*}
			\{[(\cdot)|_{\mathcal{S}}\circ (\cdot)^\#](f)\}(s)=f^\#|_{\mathcal{S}}(s)=f^\#(s)=f^\#(s+0)=f(s),
		\end{gather*}
		which implies $[(\cdot)|_{\mathcal{S}}\circ (\cdot)^\#](f)=f$. Since $f\in\mathcal{F}(\mathcal{S},\mathbb{K})$ was arbitrary it follows that $(\cdot)|_{\mathcal{S}}\circ (\cdot)^\#=I_{\mathcal{F}(\mathcal{S},\mathbb{K})}$. Therefore, we have proven that the periodic extension map and restriction map are inverses of each other. The remainder of the proof now follows immediately. 
	\end{proof}
	
	It follows by Proposition \ref{PropHilbertSpace} and Lemma \ref{LemExtRes} that, for  $(\mathcal{S},\mathcal{T})\in\{(P,\mathbb{Z}^d),(E_P^-,E)\}$, a basis for $\mathcal{F}(\mathcal{S},\mathbb{K})$ is given by $\alpha_{\mathcal{S}}$ defined by (\ref{PropOrthBas}). Consequently, we can apply $(\cdot)^\#$ to the elements of $\alpha_{\mathcal{S}}$ to obtain a basis $\alpha_{\mathcal{S}}^\#$ for $\mathcal{F}_\#(\mathcal{T},\mathbb{K})$, i.e.,
	\begin{gather}\label{ref:perbasis}
		\alpha_{\mathcal{S}}^\#=\left\{\delta_{s_i}^\#: \delta_{s_i}\in \alpha_{\mathcal{S}}\right\},\quad
		\delta_{s_i}^\#(s_j+R)=\delta_{ij},
	\end{gather}
	for all $R\in L$ and $i,j=1,\ldots,|\mathcal{S}|$ [where $\delta_{ij}$ is the Kronecker delta, see (\ref{DefKroneckerDelta})]. 
	The following lemma is a corollary of Proposition \ref{PropHilbertSpace} and Lemma \ref{LemExtRes}.
	\begin{lemma}\label{lem:findimh}
		Let  $(\mathcal{S},\mathcal{T})\in\{(P,\mathbb{Z}^d),(E_P^-,E)\}$ with lattice $L$. Then the vector space $\mathcal{F}_\#(\mathcal{T},\mathbb{K})$ [defined in (\ref{DefLPerSpace})] is finite-dimensional and
		\begin{align}
			0<\dim \mathcal{F}_\#(\mathcal{T},\mathbb{K})=|\mathcal{S}|<\infty. 
		\end{align}
		Furthermore, every $f\in \mathcal{F}_\#(\mathcal{T},\mathbb{K})$ is determined by its evaluation on $\mathcal{S}$, that is,
		\begin{align}
			f=\sum_{i=1}^{|\mathcal{S}|}f_i\delta_{s_i}^\#, 
		\end{align}
		where
		\begin{align}
			f_i=f(s_i),\quad i=1,\ldots,|\mathcal{S}|. 
		\end{align}
		Moreover,
		$\mathcal{F}_\#(\mathcal{T},\mathbb{K})$ is a finite-dimensional Hilbert space with inner product
		\begin{align}\label{Ref:iprod}
			(f,g)_{\mathcal{F}_\#(\mathcal{T},\mathbb{K})}=\sum_{i=1}^{|\mathcal{S}|}\overline{f_i}g_i,\;\;\forall f,g\in\mathcal{F}_\#(\mathcal{T},\mathbb{K}), 
		\end{align}
		and orthornormal basis $\alpha_{\mathcal{S}}^\#$ [defined in (\ref{ref:perbasis})].
	\end{lemma}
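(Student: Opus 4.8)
The plan is to deduce every assertion as a direct consequence of the isomorphism established in Lemma \ref{LemExtRes}, combined with the Hilbert space structure on $\mathcal{F}(\mathcal{S},\mathbb{K})$ furnished by Lemma \ref{PropHilbertSpace}. First I would record that by (\ref{ref:finitecard}) the primitive unit cell satisfies $0<|\mathcal{S}|<\infty$, so $\mathcal{S}=\{s_1,\dots,s_{|\mathcal{S}|}\}$ is a finite nonempty set and Lemma \ref{PropHilbertSpace} applies: $\mathcal{F}(\mathcal{S},\mathbb{K})$ is a finite-dimensional Hilbert space with orthonormal basis $\alpha_{\mathcal{S}}=\{\delta_{s_i}\}$, hence $\dim\mathcal{F}(\mathcal{S},\mathbb{K})=|\mathcal{S}|$. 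Since Lemma \ref{LemExtRes} asserts that the periodic extension map $(\cdot)^\#$ is a vector space isomorphism onto $\mathcal{F}_\#(\mathcal{T},\mathbb{K})$, and isomorphisms preserve dimension, this immediately yields $0<\dim\mathcal{F}_\#(\mathcal{T},\mathbb{K})=|\mathcal{S}|<\infty$ and shows that the image $\alpha_{\mathcal{S}}^\#=\{\delta_{s_i}^\#\}$ of $\alpha_{\mathcal{S}}$ is a basis for $\mathcal{F}_\#(\mathcal{T},\mathbb{K})$.

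Next I would establish the expansion formula. Writing an arbitrary $f\in\mathcal{F}_\#(\mathcal{T},\mathbb{K})$ in the basis $\alpha_{\mathcal{S}}^\#$ as $f=\sum_i c_i\delta_{s_i}^\#$, I would recover the coefficients by evaluating at a node $s_j\in\mathcal{S}$: from the defining property (\ref{ref:perbasis}), namely $\delta_{s_i}^\#(s_j+R)=\delta_{ij}$ taken with $R=0$, one gets $f(s_j)=\sum_i c_i\delta_{ij}=c_j$, so $c_i=f(s_i)=f_i$ and therefore $f=\sum_i f_i\delta_{s_i}^\#$. This exhibits $f$ as determined by its values on $\mathcal{S}$.

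For the inner product, the clean route is to transport the inner product of $\mathcal{F}(\mathcal{S},\mathbb{K})$ through the isomorphism rather than verifying the axioms by hand. I would set $(f,g)_{\mathcal{F}_\#(\mathcal{T},\mathbb{K})}:=(f|_{\mathcal{S}},g|_{\mathcal{S}})_{\mathcal{S}}$, where the right-hand side is the inner product (\ref{PropInnerProd}) and $(\cdot)|_{\mathcal{S}}$ is the inverse isomorphism from Lemma \ref{LemExtRes}. A pullback of an inner product along a linear isomorphism is again an inner product, and it makes $(\cdot)|_{\mathcal{S}}$ an isometric isomorphism onto the finite-dimensional Hilbert space $\mathcal{F}(\mathcal{S},\mathbb{K})$; since finite-dimensional inner product spaces are automatically complete, $\mathcal{F}_\#(\mathcal{T},\mathbb{K})$ is a Hilbert space. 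Unwinding the definition gives $(f,g)_{\mathcal{F}_\#(\mathcal{T},\mathbb{K})}=\sum_{s\in\mathcal{S}}\overline{f(s)}g(s)=\sum_{i=1}^{|\mathcal{S}|}\overline{f_i}g_i$, which is exactly (\ref{Ref:iprod}). Finally, since $(\delta_{s_i}^\#)|_{\mathcal{S}}=\delta_{s_i}$ by (\ref{ref:perbasis}) and $\alpha_{\mathcal{S}}$ is orthonormal, the isometry carries orthonormality backward to show $\alpha_{\mathcal{S}}^\#$ is orthonormal.

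I do not anticipate a genuine obstacle here, since the statement is labelled a corollary and its entire content is the transport of structure along the isomorphism of Lemma \ref{LemExtRes}. The only points that warrant care are the coefficient identification via evaluation at $R=0$ in (\ref{ref:perbasis}), and the check that the pulled-back inner product reproduces the claimed diagonal formula (\ref{Ref:iprod}); completeness is free in finite dimensions, so no analytic estimate is required.
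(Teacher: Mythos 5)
Your proposal is correct and follows essentially the same route as the paper: both rest entirely on the isomorphism of Lemma \ref{LemExtRes} to transport the dimension, the basis $\alpha_{\mathcal{S}}\mapsto\alpha_{\mathcal{S}}^{\#}$, and the Hilbert space structure from $\mathcal{F}(\mathcal{S},\mathbb{K})$ to $\mathcal{F}_{\#}(\mathcal{T},\mathbb{K})$. The only difference is that you spell out the coefficient identification and the pullback of the inner product, which the paper's proof leaves as ``the rest follows immediately.''
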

	
	\begin{proof}
		Assume the hypotheses. By Lemma \ref{LemExtRes}, the periodic extension map is an isomorphism from $\mathcal{F}(\mathcal{S},\mathbb{K})$ to $\mathcal{F}_\#(\mathcal{T},\mathbb{K})$ and since $\dim \mathcal{F}(\mathcal{S},\mathbb{K})=|\mathcal{S}|$ this implies $\dim \mathcal{F}_\#(\mathcal{T},\mathbb{K})=|\mathcal{S}|$. Moreover, the periodic extension map takes the basis of $\mathcal{F}(\mathcal{S},\mathbb{K})$ to a basis for $\mathcal{F}_\#(\mathcal{T},\mathbb{K})$. In particular, $\alpha_\mathcal{S}\mapsto\alpha_{\mathcal{S}}^\#$ under the periodic extension map. The rest of the corollary now follows immediately from this.
	\end{proof}
	
	Similarly to Definition \ref{DefAnalGradDiv}, the following definitions of $D$ and $D^{\bullet}$ are the discrete analogs of the gradient $\nabla$ and divergence $\nabla\cdot$ in the continuum. After this, we define their restrictions to periodic functions.

\begin{definition}\label{DefAnalGradDivLattice}
	We define the function $D:\mathcal{F}(\mathbb{Z}^d,\mathbb{K})\to\mathcal{F}(E,\mathbb{K})$ by
	\begin{equation}\label{DefDGradLattice}
		(Df)(e)=f(e_+)-f(e_-),\;\forall (f,e)\in \mathcal{F}(\mathbb{Z}^d,\mathbb{K})\times E,
	\end{equation}
	and the function $D^\bullet:\mathcal{F}(E,\mathbb{K})\to\mathcal{F}(\mathbb{Z}^d,\mathbb{K})$ by
	\begin{equation}\label{DefDDivLattice}
		(D^\bullet f)(p)=\sum_{\begin{subarray}{c}e\in E,\\e_-=p\end{subarray}}f(e)-\sum_{\begin{subarray}{c}e\in E,\\e_+=p\end{subarray}}f(e),\;\forall(f,p)\in \mathcal{F}(E,\mathbb{K})\times \mathbb{Z}^d.
	\end{equation}
\end{definition}

	\begin{definition}
		We define the function $D_\#:\Hzero\to\Hone$ by
		\begin{align}
			(D_\#f)(e)=(Df)(e) \text{ for all } (f,e)\in \Hzero\times E, \label{DefDGradLatticePeriodic}
		\end{align}
		and the function $(D^\bullet)_\#:\Hone\to\Hzero$ by
		\begin{align}
			[(D^\bullet)_\#g](p)=(D^\bullet g)(p),\;\text{for all}\;(g,p)\in \Hone\times \mathbb{Z}^d. \label{DefDDivLatticePeriodic}
		\end{align}
	\end{definition}
	    
	The following lemma shows that $D,\;D_\#,\; D^\bullet$ and $(D^\bullet)_\#$ are still well defined in this setting.
	
	\begin{lemma}\label{LemDBulletWellDefLinearOp}
		The functions $D$, $D_\#$, $D^\bullet$ and $(D^\bullet)_\#$ are well defined linear operators.
	\end{lemma}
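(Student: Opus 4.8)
The plan is to treat the four operators in two groups: first the ambient operators $D$ and $D^\bullet$ on the full function spaces, and then their periodic restrictions $D_\#$ and $(D^\bullet)_\#$, where the only real content lies.

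For $D$, well-definedness is immediate: for any $f\in\mathcal{F}(\mathbb{Z}^d,\mathbb{K})$ the rule $(Df)(e)=f(e_+)-f(e_-)$ assigns a scalar to each $e\in E$, so $Df\in\mathcal{F}(E,\mathbb{K})$, and linearity follows by evaluating $D(cf+g)$ at an arbitrary edge and distributing. For $D^\bullet$, the only subtlety is that its defining expression involves sums over the edges incident to a node $p$, so I would invoke (\ref{ref:inout}), which guarantees $0<\deg_{out}(p)=\deg_{in}(p)<\infty$ for every $p$; hence both sums are finite, $(D^\bullet f)(p)$ is a well-defined scalar, $D^\bullet f\in\mathcal{F}(\mathbb{Z}^d,\mathbb{K})$, and linearity again follows by distributing over these finite sums.

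The substance is in the two periodic restrictions, where the goal is to verify that $D$ and $D^\bullet$ actually map periodic functions to periodic functions, so that the restrictions land in the correct subspaces. For $D_\#$, I would take $f\in\mathcal{F}_\#(\mathbb{Z}^d,\mathbb{K})$ and an arbitrary $(e,R)\in E\times L$, and use the edge-translation identity (\ref{ref:edgeadd}), namely $(e+R)_\pm=e_\pm+R$, to compute $(Df)(e+R)=f(e_++R)-f(e_-+R)=f(e_+)-f(e_-)=(Df)(e)$, the middle equality being the $L$-periodicity of $f$; hence $Df\in\mathcal{F}_\#(E,\mathbb{K})$, so $D_\#$ is well defined, with linearity inherited from $D$. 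For $(D^\bullet)_\#$, I would take $g\in\mathcal{F}_\#(E,\mathbb{K})$ and $(p,R)\in\mathbb{Z}^d\times L$ and reindex each incidence sum via the translation map $e\mapsto e+R$: the outgoing edges at $p+R$ are exactly the $R$-translates of the outgoing edges at $p$ (and likewise for the incoming edges), while $g(e+R)=g(e)$ by periodicity, so $(D^\bullet g)(p+R)=(D^\bullet g)(p)$, giving $D^\bullet g\in\mathcal{F}_\#(\mathbb{Z}^d,\mathbb{K})$ and linearity inherited from $D^\bullet$.

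I expect the reindexing step for $(D^\bullet)_\#$ to be the only genuine obstacle. One must justify that $e\mapsto e+R$ restricts to a bijection between $\{e\in E:e_-=p\}$ and $\{e\in E:e_-=p+R\}$ (and the analogous statement for the incoming sets), which follows from the injectivity of edge translation together with the identity $(e+R)_-=e_-+R$ from (\ref{ref:edgeadd}), and that the finiteness guaranteed by (\ref{ref:inout}) legitimizes this termwise reindexing of the sums. Everything else reduces to routine verification.
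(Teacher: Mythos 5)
Your proposal is correct and follows essentially the same route as the paper's proof: routine verification for $D$ and $D^\bullet$ (with finiteness of the incidence sums supplied by (\ref{ref:inout})), then invariance of the periodic subspaces via the translation identity $(e+R)_\pm=e_\pm+R$ for $D_\#$ and the reindexing bijection $e\mapsto e+R$ on the incidence sets for $(D^\bullet)_\#$. The reindexing step you single out as the only genuine obstacle is exactly the step the paper writes out in detail.
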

	\begin{proof}
		Clearly, $\Hzero$ and $\Hone$ are subspaces of the vector spaces $\Fzero$ and $\Fone$, respectively. It is also clear that $D$ is a well defined linear function and, since the in-degree and out-degree of every node $p\in \mathbb{Z}^d$ is finite [see (\ref{ref:inout})], $D^\bullet$ is also a well defined linear function. Thus, to prove the lemma we need only show that $\Hzero$ and $\Hone$ are invariant subspaces of $D$ and $D^\bullet$, respectively. 
		For any $f\in\Hzero$ and $R\in L$, we have
		\begin{align*}
			  (Df)(e+R)&=f((e+R)_+)-f((e+R)_-)\\
			  &=f(e_++R)-f(e_-+R) \\
			  &=f(e_+)-f(e_-)=(Df)(e),\;\forall e\in E.
		\end{align*}
		Next, let $g\in \mathcal{F}_{\#}(E,\mathbb{C})$. Then for every $p\in \mathbb{Z}^d$ and every $R\in L$, we have $-R\in L$ hence $g(e-R)=g(e)$ for every $e\in E$ and 
		\begin{gather*}
			\{e\in E:e_{\pm}=p+R\}=\{e-R\in E:(e-R)_{\pm}=p\}\\
			=\{e'\in E:(e')_{\pm}=p\},
		\end{gather*}
		which implies that 
		\begin{gather*}
			(D^{\bullet}g)(p+R)=\sum_{\begin{subarray}{c}e\in E,\\e_-=p+R\end{subarray}}g(e)-\sum_{\begin{subarray}{c}e\in E,\\e_+=p+R\end{subarray}}g(e) \\=\sum_{\begin{subarray}{c}e-R\in E,\\(e-R)_-=p\end{subarray}}g(e-R)-\sum_{\begin{subarray}{c}e-R\in E,\\(e-R)_+=p\end{subarray}}g(e-R)
			\\ =\sum_{\begin{subarray}{c}e'\in E,\\(e')_-=p\end{subarray}}g(e')-\sum_{\begin{subarray}{c}e'\in E,\\(e')_+=p\end{subarray}}g(e')
			=(D^{\bullet}g)(p).
		\end{gather*}
		This completes the proof.
	\end{proof}
	
	The following theorem is the analogy of Lemma \ref{LemAdjFin}, but the proof is more technical.
	
	\begin{theorem}\label{ThmLatAdj}
		The functions $D_\#:\Hzero\to\Hone$ and $-(D^{\bullet})_{\#}:\Hone\to\Hzero$ [defined by (\ref{DefDGradLatticePeriodic}) and (\ref{DefDDivLatticePeriodic})] are bounded linear operators and satisfy the Hilbert space adjoint relations
		\begin{equation}
			(D_{\#})^*=-(D^{\bullet})_{\#},\;[(D^{\bullet})_{\#}]^*=-D_{\#}.\label{RefAdjPerOps} 
		\end{equation}
	\end{theorem}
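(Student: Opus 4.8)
The plan is to dispatch boundedness immediately and then verify the adjoint relation $(D_\#)^* = -(D^\bullet)_\#$ by a direct computation of both sides, with the periodicity of the graph carrying the essential combinatorics. By Lemma~\ref{lem:findimh}, both $\Hzero$ and $\Hone$ are finite-dimensional Hilbert spaces (of dimensions $|P|$ and $|E_P^-|$), and every linear map between finite-dimensional spaces is bounded; since $D_\#$ and $(D^\bullet)_\#$ are already known to be linear (Lemma~\ref{LemDBulletWellDefLinearOp}), all the content lies in the adjoint identities.

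To prove $(D_\#)^* = -(D^\bullet)_\#$, I would fix $f \in \Hzero$ and $g \in \Hone$ and expand $(D_\# f, g)_{\Hone}$ using the inner product~(\ref{Ref:iprod}), which sums over the edge unit cell $E_P^-$. Writing $(D_\# f)(e) = f(e_+) - f(e_-)$ splits this into an initial-node sum $\sum_{e \in E_P^-} \overline{f(e_-)}\,g(e)$ and a terminal-node sum $\sum_{e \in E_P^-} \overline{f(e_+)}\,g(e)$. The initial-node sum is routine: each $e \in E_P^-$ has $e_- \in P$, and every edge of $E$ with initial node $p \in P$ already lies in $E_P^-$, so regrouping by the value of $e_-$ yields $\sum_{p \in P} \overline{f(p)} \sum_{e \in E,\, e_- = p} g(e)$, precisely the first sum in the divergence formula~(\ref{DefDDivLattice}).

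The crux --- and the step I expect to require the most care --- is the terminal-node sum, since for $e \in E_P^-$ the head $e_+$ need not lie in $P$. Here I would exploit periodicity twice. First, each $e_+ \in \mathbb{Z}^d$ lies in a unique coset $p + L$ with $p \in P$ (as $P$ is a primitive cell), and $f(e_+) = f(p)$ by the $L$-periodicity of $f$; this lets me regroup the terminal sum as $\sum_{p \in P} \overline{f(p)} \sum_{e \in E_P^-,\, e_+ \in p + L} g(e)$. Second, for each fixed $p$ I would set up a bijection between $\{e \in E_P^- : e_+ \in p + L\}$ and the fiber $\{e' \in E : (e')_+ = p\}$: given $e'$ with head $p$, Definition~\ref{DefPrimCell}.$(c)$ guarantees a unique translate $e = e' + R \in E_P^-$, and conversely; moreover $g(e) = g(e')$ by the $L$-periodicity of $g$. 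Consequently the inner sum equals $\sum_{e' \in E,\, (e')_+ = p} g(e')$, the second sum in~(\ref{DefDDivLattice}). Combining the two regrouped expressions (terminal minus initial) produces exactly $-\sum_{p \in P} \overline{f(p)}\,(D^\bullet g)(p) = (f, -(D^\bullet)_\# g)_{\Hzero}$, which is the first adjoint relation; the only delicate point is the bookkeeping verifying uniqueness of the translate and that the fiberwise correspondence is genuinely a bijection.

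Finally, the second relation $[(D^\bullet)_\#]^* = -D_\#$ needs no further computation: taking Hilbert-space adjoints of $(D_\#)^* = -(D^\bullet)_\#$ and using $A^{**} = A$ in finite dimensions gives $[(D^\bullet)_\#]^* = -[(D_\#)^*]^* = -D_\#$, completing the proof.
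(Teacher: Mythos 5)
Your proof is correct and follows essentially the same route as the paper's: both reduce the adjoint identity to the explicit unit-cell inner products and use $L$-periodicity to relocate the terminal node $e_+$ (which may fall outside $P$) back into the node cell via a unique lattice translate. The only cosmetic difference is that the paper verifies the identity entrywise on the orthonormal delta-function bases $\alpha^\#_P$ and $\alpha^\#_{E_P^-}$, whereas you perform the equivalent regrouping (a discrete summation by parts) for arbitrary $f$ and $g$; the bijection you set up between $\{e\in E_P^-: e_+\in p+L\}$ and $\{e'\in E:(e')_+=p\}$ is exactly the content of the paper's case analysis.
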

	\begin{proof}
		First, by Lemma \ref{LemDBulletWellDefLinearOp}, both $D_{\#}$ and $D^{\bullet}$ are linear operators. Second, by Lemma \ref{lem:findimh}, $\alpha^\#_P$ and $\alpha^\#_{E_P^-}$ are orthonormal bases of the finite-dimensional Hilbert spaces $\Hzero$ and $\Hone$, respectively.  Hence, for each $i,j$ it follows that
		\begin{gather*}
			( D_\#\delta_{p_j}^\#,\delta_{e_i}^\#)_{\Hone}
			=\sum_{k=1}^{|E_P^-|}\overline{(D_\#\delta_{p_j}^\#)_k}(\delta_{e_i}^\#)_k
			=(D_\#\delta_{p_j}^\#)_i\\
			=-\delta_{p_j}^\#[(e_i)_-]+\delta_{p_j}^\#[(e_i)_+]
			=-\delta_{p_j}[(e_i)_-]+\delta_{p_j}^\#[(e_i)_+]\\
			=-\begin{cases}1,\quad (e_i)_-= p_j\\0,\quad (e_i)_-\neq p_j\end{cases}+\begin{cases}1,\quad (e_i)_+\in p_j+L\\0,\quad (e_i)_+\not\in p_j+L\end{cases} \\=-\sum_{\begin{subarray}{c}e\in E,\\e_-=p_j\end{subarray}}\begin{cases}1,\quad e\in e_i+L\\0,\quad e\not\in e_i+L \end{cases}+\sum_{\begin{subarray}{c}e\in E,\\e_+=p_j\end{subarray}}\begin{cases}1,\quad e\in e_i+L\\0,\quad e\not\in e_i+L \end{cases}
			\\=-\sum_{\begin{subarray}{c}e\in E,\\e_-=p_j\end{subarray}}\delta_{e_i}^\#(e)+\sum_{\begin{subarray}{c}e\in E,\\e_+=p_j\end{subarray}}\delta_{e_i}^\#(e)=[-(D^\bullet)_\#\delta_{e_i}^\#]_j\\=\sum_{k=0}^{|P|-1}\overline{(\delta_{p_j}^\#)_k}[-(D^\bullet)_\#\delta_{e_i}^\#]_k=( \delta_{p_j}^\#,-(D^\bullet)_\#\delta_{e_i}^\# )_{\Hzero},
		\end{gather*}
		which implies
		\begin{gather*}
			(\delta_{p_j}^\#,-(D^\bullet)_\#\delta_{e_i}^\#)_{\Hzero}=( D_\#\delta_{p_j}^\#,\delta_{e_i}^\#)_{\Hone}.
		\end{gather*}
		This proves the theorem.
	\end{proof}
	
	\subsubsection{Lattice \textit{Z}-problem and Effective Operator}
	
		\begin{definition}[Edge cell average]\label{DefAvgEdgeUnitCell}
		 The (edge cell) average of a periodic function $f\in\mathcal{F}_\#(E,\mathbb{K})$, denoted by $\langle f \rangle_{\mathcal{F}_\#(E,\mathbb{K})}$, is the constant function
		\begin{align}\label{AvgFuncEdgeUnitCell}
			\langle f \rangle_{\mathcal{F}_\#(E,\mathbb{K})}=\left(\frac{1}{|E_P^-|}\sum_{e\in E_P^-}f(e)\right) 1_{E}\in \mathcal{F}_\#(E,\mathbb{K}).
		\end{align}
		The (edge cell) average operator $\Gamma_0\in\mathcal{L}(\mathcal{F}_\#(E,\mathbb{K}))$ is defined by
		\begin{align}\label{def:edgavgop}
			\Gamma_0(f)=\langle f \rangle_{\mathcal{F}_\#(E,\mathbb{K})},\;\forall f\in\mathcal{F}_\#(E,\mathbb{K}). 
		\end{align}
	\end{definition}
	
	\begin{theorem}\label{ThmLatDecomp}
		The sets $\mathcal{U}$, $\mathcal{E}$, $\mathcal{J}$ defined by 
		\begin{align}
			\mathcal{U}&=\{U\in\Hone:U= \langle U \rangle_{\Hone}\},\label{ThmLatDecompU}\\
			\mathcal{E}&=\{D_\#u: u\in\Hzero,\langle D_\#u\rangle_{\Hone}=0\},\label{ThmLatDecompE}\\
			\mathcal{J}&=\{J\in\Hone: (D^\bullet)_\# J=0, \langle J \rangle_{\Hone}=0\}\label{ThmLatDecompJ}
		\end{align}
		are mutually orthogonal subspaces in the Hilbert space $\Hone$. Furthermore,
		\begin{gather}
			\Hone= \mathcal{U}\ho \mathcal{E}\ho \mathcal{J},\label{LatticeHodgeDecomp}\\
			\mathcal{U}\ho\mathcal{J}=\ker (D^\bullet)_\#,\;\mathcal{U}=\ran\Gamma_0,\;\mathcal{E}=\ran D_\#.
		\end{gather}
	\end{theorem}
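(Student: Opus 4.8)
The plan is to obtain the decomposition from the abstract Hodge decomposition theorem (Theorem~\ref{ThmHodgeDecomp}), exactly as in the proofs of Theorem~\ref{ThmKHodge} and Lemma~\ref{LemHodEffCond}, once the one genuinely new ingredient is isolated: that every periodic discrete gradient has vanishing edge-cell average. So first I would prove the lemma $\langle D_\# u\rangle_{\Hone}=0$ for all $u\in\Hzero$. Expanding the average over the edge unit cell $E_P^-=\{\,e\in E:e_-\in P\,\}$ reduces this to $\sum_{p\in P}\sum_{i=1}^d\big(u(p+\mathbf{e}_i)-u(p)\big)=0$; since $P+\mathbf{e}_i$ is again a fundamental domain for $L$, the $L$-periodicity of $u$ gives $\sum_{p\in P}u(p+\mathbf{e}_i)=\sum_{p\in P}u(p)$ for each $i$, so each inner difference sums to zero. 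This yields $\ran D_\#=\mathcal{E}$, the subtle point that makes the discrete analogy imperfect.

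With this lemma available, I would invoke Theorem~\ref{ThmHodgeDecomp} with the identifications $\mathcal{B}=\mathcal{C}=\Hone$, $\mathcal{A}=\Hzero$, $U=D_\#$, and $T=T^*=\Gamma_0$ (the edge-cell average operator). By the adjoint relation $U^*=(D_\#)^*=-(D^\bullet)_\#$ from Theorem~\ref{ThmLatAdj}, the orthogonality of the two ranges needed to apply the theorem follows from $U^*T^*=-(D^\bullet)_\#\Gamma_0=0$, which holds because $\Gamma_0$ takes values in the constant functions and every constant lies in $\ker(D^\bullet)_\#$ (an immediate consequence of the degree identity $\deg_{in}(p)=\deg_{out}(p)$ in~(\ref{ref:inout})). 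The theorem then delivers the mutually orthogonal triple decomposition $\Hone=\ker(T^*T+UU^*)\ho\ran U\ho\ran T^*$.

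It remains to match the three abstract summands with $\mathcal{U},\mathcal{E},\mathcal{J}$. I would identify $\ran T^*=\ran\Gamma_0=\mathcal{U}$ (since $\Gamma_0$ fixes constants and annihilates zero-average functions, so $\ran\Gamma_0$ is exactly the constants and $\mathcal{U}^\perp$ is exactly the zero-average functions), then $\ran U=\ran D_\#=\mathcal{E}$ by the lemma, and finally $\ker(T^*T+UU^*)=(\ran U\ho\ran T^*)^\perp=\ker U^*\cap\ker T=\ker(D^\bullet)_\#\cap\mathcal{U}^\perp=\mathcal{J}$. The same chain of equalities simultaneously yields the remaining identities $\ker(D^\bullet)_\#=\mathcal{U}\ho\mathcal{J}$ and $\mathcal{U}=\ran\Gamma_0$.

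The main obstacle is the zero-average lemma of the first paragraph: while its continuum counterpart (the average of a gradient vanishes) is routine, in the discrete periodic setting it hinges on the reindexing that $P+\mathbf{e}_i$ is a fundamental domain, and it is precisely the step that is sensitive to how the node and edge unit cells $P$ and $E_P^-$ are chosen. Once it is in place, everything else is bookkeeping with the adjoint relation and orthogonal complements.
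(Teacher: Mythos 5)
Your proposal is correct and follows essentially the same route as the paper: both reduce the statement to the abstract Hodge decomposition (Theorem~\ref{ThmHodgeDecomp}) applied to the pair $\Gamma_0$, $D_\#$ (the paper takes $T=-(D^\bullet)_\#$, $U=\Gamma_0$ rather than your $T=\Gamma_0$, $U=D_\#$, which is only a cosmetic swap), with identical identifications of $\mathcal{U}$, $\mathcal{E}$, $\mathcal{J}$. The only difference is how the key fact $\Gamma_0 D_\#=0$ (equivalently $\ran D_\#=\mathcal{E}$) is obtained: you verify $\langle D_\# u\rangle_{\Hone}=0$ directly by reindexing the sum over the translated fundamental domain $P+\mathbf{e}_i$, whereas the paper gets it for free by taking adjoints in $(D^\bullet)_\#\Gamma_0=0$ via Theorem~\ref{ThmLatAdj}; both arguments are valid.
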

	\begin{proof}
		First, its obvious from the definition (Def.\ \ref{DefAvgEdgeUnitCell}) that the average operator $\Gamma_0$ is an orthogonal projection of $\Hone$ onto $\mathcal{U}$ so that, in particular, $\Gamma_0^*=\Gamma_0$ and $\ran (\Gamma_0)=\mathcal{U}$ is the set of all constant functions in $\Hone$. Second, it follows from (\ref{ref:inout}) that all these constant functions are in $\ker (D^\bullet)_\#,$ or, equivalently, $(D^\bullet)_\#\Gamma_0=0$. Hence, it follows from this and Theorem \ref{ThmLatAdj} that $0=-((D^\bullet)_\#\Gamma_0)^*=-\Gamma_0^*((D^\bullet)_\#)^*=\Gamma_0D_\#$ so that $\operatorname{ran}D_\#=\mathcal{E}$. Finally, the proof of (\ref{LatticeHodgeDecomp}) follows immediately from this and Theorem \ref{ThmLatAdj} together with the abstract Hodge decomposition (i.e., Theorem \ref{ThmHodgeDecomp}) in which
		\begin{gather*}
			\mathcal{A}=\mathcal{B}=\Hone,\; \mathcal{C}=\Hzero,\; T=-(D^\bullet)_\#,\; T^*=D_\#,\;\\ U=U^*=\Gamma_0,\;
			\ran U=\ran \Gamma_0=\mathcal{U},\;\ran (T^*)=\operatorname{ran}D_\#=\mathcal{E},\\
		\ker(T^*T+UU^*)=\ker T\cap\ker(U^*)=\ker(D^\bullet)_\#\cap \ker \Gamma_0=\mathcal{J}.
	\end{gather*}
	In particular, it follows now from (\ref{LatticeHodgeDecomp}) and Theorem \ref{ThmLatAdj} that $\mathcal{U}\ho\mathcal{J}=\mathcal{E}^{\perp}=(\operatorname{ran}D_\#)^{\perp}=\ker(D_\#)^*=\ker (D^\bullet)_\#$. This completes the proof.
	\end{proof}
	
	\begin{definition}[Lattice Z-problem and effective operator]
		\label{def:latzprob}The (lattice) $Z$-problem $(\mathcal{H},\mathcal{U},\mathcal{E},\mathcal{J},\sigma)$ is the following problem associated with the Hilbert space $\mathcal{H}=\Hone$, the orthogonal triple decomposition of $\mathcal{H}$ in (\ref{LatticeHodgeDecomp}), and a linear operator $\sigma\in \mathcal{L}(\mathcal{H})$:
		given $V_{0}\in\mathcal{U}$, find triples $\left(  I_{0},V,I\right)
		\in\mathcal{U}\times\mathcal{E}\times\mathcal{J}$ satisfying%
		\begin{equation}
			I_{0}+I=\sigma\left(V_{0}+V\right),\label{def:latzprobeq}
		\end{equation}
		such triple $\left(  I_{0},V,I\right)$ is called a solution of the
		$Z$-problem at $V_{0}$. If there exists an operator $\sigma_{\ast}\in \mathcal{L}(\mathcal{U})$ such that
		\begin{equation}
			I_{0}=\sigma_{\ast}V_{0},\label{def:latzprobeffop}
		\end{equation}
		whenever $V_{0}\in\mathcal{U}$ and $\left(  I_{0},V,I\right)$ is a solution of the $Z$-problem at
		$V_{0}$, then $\sigma_{\ast}$ is called an effective operator for the $Z$-problem.
	\end{definition}
	
		The following theorem is the main result of this section.
	\begin{theorem}\label{ThmEffCondEffOp}
    Let $(\mathbb{Z}^d,E,\sigma)$ be a lattice electrical network with conductivity $\sigma\in \mathcal{L}(\Hone)$ satisfying $\sigma^*=\sigma\geq 0$ and consider the lattice $Z$-problem $(\mathcal{H},\mathcal{U},\mathcal{E},\mathcal{J},\sigma)$ (in Def.\ \ref{def:latzprob}) with Hilbert space $\mathcal{H}=\Hone$ and orthogonal triple decomposition of $\mathcal{H}$ in (\ref{LatticeHodgeDecomp}). Then the solutions of the lattice $Z$-problem (\ref{def:latzprobeq}) at $V_0\in \mathcal{U}$ are given by the formulas 
    \begin{gather}
	   I_0=\sigma_*V_0,\label{ThmLatticeZProbSolpPart1}\\
	   V=-\sigma_{11}^+\sigma_{10}V_0+K,\;K\in\ker \sigma_{11},\label{ThmLatticeZProbSolpPart2}\\
	   I=\sigma_{20}V_0+\sigma_{21}V,\label{ThmLatticeZProbSolpPart3},\\
        \sigma_*=\begin{bmatrix}
		\sigma_{00} & \sigma_{01}\\
		\sigma_{10} & \sigma_{11}
		\end{bmatrix}/\sigma_{11}=\sigma_{00}-\sigma_{01}\sigma_{11}^+\sigma_{10},\label{ThmLatticeZProbSolpPart4}
\end{gather}
where the $3\times 3$ block operator matrix $\sigma=[\sigma_{ij}]_{i,j=0,1,2}$ is with respect to the orthogonal triple decomposition of $\mathcal{F}(E, \mathbb{K})$ in (\ref{LatticeHodgeDecomp}) [see Sec.\ \ref{sec:ClassicalResults}, Eq.(\ref{DefOfSigmaSubblocks}) for $\sigma_{ij}$ formulas]. Moreover, the effective operator of the $Z$-problem exists, is unique, and is given by the generalized Schur complement formula (\ref{ThmLatticeZProbSolpPart4}).
\end{theorem}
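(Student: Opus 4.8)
The plan is to recognize that this theorem is an immediate specialization of the abstract finite-dimensional theory of Section \ref{sec:MainResults}, so the whole proof reduces to checking that the hypotheses of Corollary \ref{CorSuffCondForTrueThmGenClassicalDiriMinPrin} are in force and then quoting it. Concretely, I would verify three structural facts: (i) the ambient space is finite-dimensional; (ii) the triple $(\mathcal{U},\mathcal{E},\mathcal{J})$ furnishes a genuine orthogonal triple decomposition; and (iii) $\sigma^*=\sigma\geq 0$.

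For (i), I would invoke Lemma \ref{lem:findimh} with $(\mathcal{S},\mathcal{T})=(E_P^-,E)$, which gives $\dim\mathcal{H}=\dim\mathcal{F}_\#(E,\mathbb{K})=|E_P^-|<\infty$, so hypothesis \hyperlink{(H4)}{(H4)} holds. This is the single point where the lattice setting differs essentially from the continuum model of Example \ref{ExContinuumPeriodicCondZProb}: although the digraph $G=(\mathbb{Z}^d,E)$ is infinite, passing to the $L$-periodic functions collapses everything onto a finite primitive edge cell. For (ii), the orthogonal triple decomposition $\mathcal{H}=\mathcal{U}\ho\mathcal{E}\ho\mathcal{J}$ is exactly the content of Theorem \ref{ThmLatDecomp}, so $(\mathcal{H},\mathcal{U},\mathcal{E},\mathcal{J},\sigma)$ is a $Z$-problem in the sense of Definition \ref{DefZProbMain}. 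For (iii), self-adjointness and positive semidefiniteness are assumed outright.

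With these three facts, condition (\ref{WeakSuffCondForTrueThmGenClassicalDiriMinPrin}) of Corollary \ref{CorSuffCondForTrueThmGenClassicalDiriMinPrin} is satisfied, so that corollary applies and guarantees that Theorem \ref{ThmZProbSol} holds for this $Z$-problem. Forming the $3\times 3$ block representation $\sigma=[\sigma_{ij}]_{i,j=0,1,2}$ relative to (\ref{LatticeHodgeDecomp}) via the projections in (\ref{DefOfSigmaSubblocks}) and reading off the conclusions of Theorem \ref{ThmZProbSol} then yields precisely the solution formulas (\ref{ThmLatticeZProbSolpPart1})--(\ref{ThmLatticeZProbSolpPart3}) together with the generalized Schur complement formula (\ref{ThmLatticeZProbSolpPart4}), and establishes existence and uniqueness of $\sigma_*$. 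This mirrors verbatim the one-line proofs of Theorems \ref{ThmKEffOp} and \ref{ThmKEffOpZProb}, which dispatch the finite-graph Dirichlet and effective-conductivity $Z$-problems the same way.

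The main obstacle is not in this proof at all but in the groundwork it rests on: the genuinely nontrivial step is Theorem \ref{ThmLatDecomp}, whose validity hinges on the periodic adjoint relation $(D_{\#})^*=-(D^{\bullet})_{\#}$ of Theorem \ref{ThmLatAdj}. On the infinite lattice the operators $D$ and $D^\bullet$ lose their adjoint relationship (there is no inner product on $\mathcal{F}(\mathbb{Z}^d,\mathbb{K})$), and it is only after restricting to the periodic subspaces and exploiting the equality of in- and out-degrees (\ref{ref:inout}) that the adjoint pairing is recovered; this is what feeds the abstract Hodge decomposition (Theorem \ref{ThmHodgeDecomp}) used to prove Theorem \ref{ThmLatDecomp}. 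Once that decomposition and finite-dimensionality are secured, the present theorem carries no further analytic difficulty.
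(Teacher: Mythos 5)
Your proposal is correct and follows exactly the same route as the paper: the paper's proof is the one-line observation that $\sigma^*=\sigma\geq 0$ lets one invoke Corollary \ref{CorSuffCondForTrueThmGenClassicalDiriMinPrin}, with the finite-dimensionality and the orthogonal triple decomposition supplied by Lemma \ref{lem:findimh} and Theorem \ref{ThmLatDecomp} as you note. Your additional commentary on where the real work lies (Theorems \ref{ThmLatAdj} and \ref{ThmLatDecomp}) is accurate but not part of the proof itself.
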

\begin{proof}
    Assume the hypotheses. Then $\sigma^*=\sigma\geq 0$ and hence the result follows immediately from Corollary \ref{CorSuffCondForTrueThmGenClassicalDiriMinPrin}.
\end{proof}
	
	\subsubsection{On Effective Conductivity as an Effective Operator}\label{SecEffCond}
	
	Here we define a discrete analog to Ohm's law in the continuum for the periodic electrical conductivity problem and the effective conductivity (cf.\ Example \ref{ExContinuumPeriodicCondZProb} and Remark \ref{rem:DefPreciseSpacesContinuum}).
	
	\begin{definition}[Effective conductivity]\label{def:perohmlaw}
		The periodic Ohm's law for a lattice electric network $(\mathbb{Z}^d,E,\sigma)$, where $\sigma\in\mathcal{L}(\Hone)$ (i.e., the conductivity), is the following relation between voltage $V$ and current $I$:
		\begin{gather}
			I=\sigma V,\label{def:perohmlawpart1}\\
			I\in\ker D^\bullet \cap \Hone, V\in\operatorname{ran}D\cap \Hone.\label{def:perohmlawpart2}
		\end{gather}
		If a function $\sigma_{eff}:\mathcal{U}\rightarrow \mathcal{U}$ exists such that
		\begin{align}
			\langle I\rangle_{\Hone} = \sigma_{\text{eff}}(\langle V\rangle_{\Hone}), 
		\end{align}
		whenever $I$ and $V$ satisfy (\ref{def:perohmlawpart1}) and (\ref{def:perohmlawpart2}), then $\sigma_{eff}$ is called an effective conductivity of the lattice electric network.
	\end{definition}
	
	The natural questions to ask are: $(1)$ Does an effective conductivity exist? $(2)$ If so, how is it related to the lattice $Z$-problem and the effective operator? $(3)$ If an effective operator of the lattice $Z$-problem exists, does an effective conductivity also exist? In this section, we answer all these questions: $(1)$ Not always (see Example \ref{ExEffCondLatticeZProbIsContrived}). Necessary and sufficient conditions for the existence of an effective conductivity are given (see Theorem \ref{thm:NecSuffCondEffCondExistLattice} which is essentially a corollary of Theorem \ref{thm:FundThmExistenceUniquenessSolvabilityZProbEffOpVecSp} in Appendix \ref{SectAbsTheoryCompositesVecSpFramework}). $(2)$ If it does exist and $\sigma^*=\sigma$ then it equals the effective operator (see Theorem \ref{ThmMainResultEffCondEqsEffOpLatticeZProb} and Example \ref{ExZProbNoOhmContinued}). $(3)$ Not necessarily (see Example \ref{ExEffCondLatticeZProbIsContrived}). 
	
    To elaborate on all this, we prove the next two key lemmas.
	\begin{lemma}\label{prop:PeriodicVoltCurrentSpBreakdown}
		Let $\mathcal{U},\mathcal{E},\mathcal{J}$ denote the subspaces in (\ref{ThmLatDecompU}), (\ref{ThmLatDecompE}), and (\ref{ThmLatDecompJ}). Then
		\begin{gather}
			\ker D^\bullet \cap \Hone=\ker (D^\bullet)_\#=\mathcal{U}\ho\mathcal{J}\label{ref:JUeqaul},\\
			\ran D\cap \Hone\supseteq\mathcal{U}\ho\mathcal{E}\label{ref:ranDsubsetUplusE}.
		\end{gather}
	\end{lemma}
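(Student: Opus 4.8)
The plan is to prove the two assertions (\ref{ref:JUeqaul}) and (\ref{ref:ranDsubsetUplusE}) separately, leaning on the Hodge decomposition already established in Theorem \ref{ThmLatDecomp} and on the $D^\bullet$-invariance of $\Hone$ from Lemma \ref{LemDBulletWellDefLinearOp}.

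For (\ref{ref:JUeqaul}), the second equality $\ker (D^\bullet)_\# = \mathcal{U}\ho\mathcal{J}$ is nothing more than a restatement of a conclusion of Theorem \ref{ThmLatDecomp}, so the only content is the first equality $\ker D^\bullet\cap\Hone = \ker (D^\bullet)_\#$. First I would recall that, by the invariance proved in Lemma \ref{LemDBulletWellDefLinearOp}, we have $D^\bullet g\in\Hzero$ whenever $g\in\Hone$, and that by the defining relation (\ref{DefDDivLatticePeriodic}) the operator $(D^\bullet)_\#$ agrees with $D^\bullet$ pointwise on $\Hone$. Consequently, for $g\in\Hone$ the three conditions $(D^\bullet)_\# g=0$, $D^\bullet g=0$, and $g\in\ker D^\bullet$ are mutually equivalent, which yields the set equality by double inclusion.

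For (\ref{ref:ranDsubsetUplusE}), since $\mathcal{U}$ and $\mathcal{E}$ are both subspaces of $\Hone$, their orthogonal sum $\mathcal{U}\ho\mathcal{E}$ already lies in $\Hone$; thus it suffices to show $\mathcal{U}\ho\mathcal{E}\subseteq\ran D$, and because $\ran D$ is a subspace it is enough to verify $\mathcal{U}\subseteq\ran D$ and $\mathcal{E}\subseteq\ran D$ individually. The inclusion $\mathcal{E}\subseteq\ran D$ is immediate from $\mathcal{E}=\ran D_\#$ (Theorem \ref{ThmLatDecomp}) together with $D_\# u = Du$ for every $u\in\Hzero\subseteq\Fzero$. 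The inclusion $\mathcal{U}\subseteq\ran D$ I would prove by exhibiting an explicit (necessarily non-periodic) potential: given a constant field $c\,1_E\in\mathcal{U}$, set $u(v)=c\sum_{k=1}^d v_k$ for $v=(v_1,\ldots,v_d)\in\mathbb{Z}^d$, and check from (\ref{DefDGradLattice}) that $(Du)(\overrightarrow{v(v+\mathbf{e}_i)})=u(v+\mathbf{e}_i)-u(v)=c$ for every edge, so that $Du=c\,1_E$.

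I expect no serious obstacle here; the lemma is essentially bookkeeping once Theorem \ref{ThmLatDecomp} and Lemma \ref{LemDBulletWellDefLinearOp} are in hand. The one point deserving care — and the reason (\ref{ref:ranDsubsetUplusE}) is stated only as an inclusion rather than an equality — is that the potential $u$ realizing a constant edge field is linear, hence genuinely not periodic, so it lies in $\Fzero$ but not in $\Hzero$. This is precisely the source of the \emph{nuance} flagged before the lemma: the averaging operator $\Gamma_0$ collapses all $d$ edge directions into a single scalar, so $\mathcal{U}$ is one-dimensional and cannot absorb the $d$-parameter family of periodic, discretely curl-free fields arising from linear potentials. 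I would therefore keep the argument at the level of the one-sided inclusion and defer any discussion of strictness to the examples that follow.
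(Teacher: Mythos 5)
Your proposal is correct and follows essentially the same route as the paper's own proof: both parts rest on the pointwise agreement of $D^\bullet$ with $(D^\bullet)_\#$ on $\Hone$ (resp.\ $D$ with $D_\#$ on $\Hzero$) combined with Theorem \ref{ThmLatDecomp}, and both establish $\mathcal{U}\subseteq\ran D$ by exhibiting the linear potential $u(v)=\sum_{k}v_k$ (the paper's $u_{\text{pos}}$, with the scalar $c$ handled by noting $\ran D\cap\Hone$ is a subspace and $\mathcal{U}=\operatorname{span}\{1_E\}$). Your closing remark about the non-periodicity of this potential being the source of the strict inclusion matches the paper's subsequent Example \ref{ExZProbNoOhm}.
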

	\begin{proof}
		First, $D^\bullet$ and $(D^\bullet)_\#$ are equal in value on $\Hone$ which implies $\ker D^\bullet\cap\Hone=\ker (D^\bullet)_\#$ and hence (\ref{ref:JUeqaul}) then follows by Theorem $\ref{ThmLatDecomp}$. Second, since $D$ and $D_\#$ are equal in value on $\Hzero$ it follows immediately by the definition of $\mathcal{E}$ that $\mathcal{E}\subseteq \ran D\cap\Hzero$. As the $\ran D \cap \Hone$ is a subspace of $\Hone$ and $\operatorname{span}\{1_E\}=\mathcal{U}$, it suffices to show that $1_E\in\ran D\cap\Hone$. Define the function $u_{\text{pos}}:\Fzero\to\Fzero$ by
		\begin{align}
			u_{\text{pos}}[(x_1,\ldots,x_d)]=\sum_{j=1}^dx_j,\text{ for all } (x_1,\ldots, x_d)\in\mathbb{Z}^d. 
		\end{align}
		Then for each $e\in E$ there exists a standard basis vector $\mathbf{e}_i$ such that $e_+=e_-+\textbf{e}_i$ so that
		\begin{gather*}
			(Du)(e)=u_{\text{pos}}(e_+)-u_{\text{pos}}(e_-)=\sum_{j=1}^d(e_+)_j-(e_-)_j\\
			=\sum_{j=1}^d(e_-+\mathbf{e}_i)_j-(e_-)_j
			=\sum_{j=1}^d(e_-+\mathbf{e}_i-e_-)_j\\=\sum_{j=1}^d (\mathbf{e}_i)_j=(\mathbf{e}_i)_i=1=1_{E}(e).
		\end{gather*}
		Therefore, $Du=1_E\in \ran D\cap \Hone$. This proves (\ref{ref:ranDsubsetUplusE}), which completes the proof.
	\end{proof}
	
	\begin{lemma}\label{lem:PeriodicVoltCurrentSpBreakdown2}
	    If $(I_0,V,I)\in\mathcal{U}\times\mathcal{E}\times\mathcal{J}$ is a solution of the lattice $Z$-problem (in Def.\ \ref{def:latzprob}) at $V_0\in\mathcal{U}$ then
	    \begin{gather}
		    I_0+I=\sigma(V_0+V),\\
		    V_0+V\in\mathcal{U}\ho\mathcal{E}\subseteq \operatorname{ran}D\cap \Hone,\\
		    I_0+I\in\mathcal{U}\ho\mathcal{J}=\ker D^\bullet \cap \Hone
		\end{gather}
		and, in particular, $V_0+V, I_0+I$ are a pair of voltage and current, respectively, satisfying the periodic Ohm's law (in Def.\ \ref{def:perohmlaw}).
	\end{lemma}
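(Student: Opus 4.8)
The plan is to obtain all three displayed conclusions by directly unpacking the definition of a solution to the lattice $Z$-problem (Definition \ref{def:latzprob}) and then invoking the subspace identities established in Lemma \ref{prop:PeriodicVoltCurrentSpBreakdown}. Since the hypothesis is precisely that $(I_0,V,I)$ is a solution of the lattice $Z$-problem at $V_0$, the defining relation \eqref{def:latzprobeq} already yields the first conclusion $I_0+I=\sigma(V_0+V)$ with no work at all.

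For the second conclusion I would note that the membership $(I_0,V,I)\in\mathcal{U}\times\mathcal{E}\times\mathcal{J}$ gives $V_0\in\mathcal{U}$ and $V\in\mathcal{E}$, whence $V_0+V\in\mathcal{U}\ho\mathcal{E}$ directly from the orthogonal triple decomposition \eqref{LatticeHodgeDecomp} of Theorem \ref{ThmLatDecomp}. The inclusion $\mathcal{U}\ho\mathcal{E}\subseteq\operatorname{ran}D\cap\Hone$ is then exactly \eqref{ref:ranDsubsetUplusE} from Lemma \ref{prop:PeriodicVoltCurrentSpBreakdown}. Symmetrically, $I_0\in\mathcal{U}$ and $I\in\mathcal{J}$ give $I_0+I\in\mathcal{U}\ho\mathcal{J}$, and the equality $\mathcal{U}\ho\mathcal{J}=\ker D^\bullet\cap\Hone$ is exactly \eqref{ref:JUeqaul}. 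This establishes the third conclusion and completes the three displayed assertions.

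Finally, for the ``in particular'' claim I would simply verify that the pair consisting of the voltage $V_0+V$ and the current $I_0+I$ meets all the conditions defining the periodic Ohm's law (Definition \ref{def:perohmlaw}): the relation $I_0+I=\sigma(V_0+V)$ is \eqref{def:perohmlawpart1}, while the two membership statements $I_0+I\in\ker D^\bullet\cap\Hone$ and $V_0+V\in\operatorname{ran}D\cap\Hone$ are precisely the constraints \eqref{def:perohmlawpart2}. I do not anticipate any genuine obstacle, since this lemma is a bookkeeping bridge between the $Z$-problem formulation and the physical periodic Ohm's law, with every ingredient already prepared in Theorem \ref{ThmLatDecomp} and Lemma \ref{prop:PeriodicVoltCurrentSpBreakdown}. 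The only point deserving the slightest care is the asymmetry between \eqref{ref:JUeqaul} and \eqref{ref:ranDsubsetUplusE}, the former being an equality and the latter only an inclusion; but since the periodic Ohm's law requires merely that $V_0+V$ lie in $\operatorname{ran}D\cap\Hone$, the one-sided inclusion is all that is needed.
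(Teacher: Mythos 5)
Your proposal is correct and follows exactly the same route as the paper, which proves the lemma by citing Definition \ref{def:latzprob}, Definition \ref{def:perohmlaw}, and Lemma \ref{prop:PeriodicVoltCurrentSpBreakdown}; you have merely written out the bookkeeping that the paper leaves implicit. Your closing observation that only the one-sided inclusion \eqref{ref:ranDsubsetUplusE} is needed is accurate and consistent with the paper's later discussion of why the inclusion can be strict.
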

	\begin{proof}
	    The proof of this lemma follows immediately from Def.\ \ref{def:latzprob} and Def.\ \ref{def:perohmlaw} by Lemma \ref{prop:PeriodicVoltCurrentSpBreakdown}.
	\end{proof}
	
	We are now ready to prove the main result of this section.
	\begin{theorem}\label{ThmMainResultEffCondEqsEffOpLatticeZProb}
	    If an effective conductivity $\sigma_{eff}$ of the lattice electrical network $(\mathbb{Z}^d,E,\sigma)$ (in Def.\ \ref{def:perohmlaw}) exists and $\sigma^*=\sigma$ then an effective operator of lattice $Z$-problem $(\mathcal{H},\mathcal{U},\mathcal{E},\mathcal{J},\sigma)$ (in Def.\ \ref{def:latzprob}) exists, is unique, and equals $\sigma_{eff}$, i.e.,
	    \begin{align}
	        \sigma_{eff}=\sigma_*.
	    \end{align}
	\end{theorem}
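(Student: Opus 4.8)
The plan is to reduce everything to Theorem \ref{ThmZProbSol} by first extracting, from the mere existence of $\sigma_{eff}$, the kernel inclusion $\ker\sigma_{11}\subseteq\ker\sigma_{01}$, and then matching the two operators on each $V_0\in\mathcal{U}$ via the dictionary between lattice $Z$-problem solutions and periodic Ohm's law pairs recorded in Lemma \ref{lem:PeriodicVoltCurrentSpBreakdown2}. Throughout, $[\sigma_{ij}]_{i,j=0,1,2}$ denotes the block form of $\sigma$ relative to the triple decomposition (\ref{LatticeHodgeDecomp}). The hypothesis $\sigma^*=\sigma$ makes the compression $\begin{bmatrix}\sigma_{00}&\sigma_{01}\\\sigma_{10}&\sigma_{11}\end{bmatrix}$ self-adjoint, so the only thing standing between us and Theorem \ref{ThmZProbSol} is that kernel inclusion.

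First I would establish $\ker\sigma_{11}\subseteq\ker\sigma_{01}$. Let $K\in\ker\sigma_{11}$, so $K\in\mathcal{E}$ and $\Gamma_1\sigma K=\sigma_{11}K=0$, whence $\sigma K=\Gamma_0\sigma K+\Gamma_2\sigma K\in\mathcal{U}\ho\mathcal{J}$. By Lemma \ref{prop:PeriodicVoltCurrentSpBreakdown} we have $\mathcal{U}\ho\mathcal{J}=\ker D^\bullet\cap\Hone$ and $K\in\mathcal{E}\subseteq\mathcal{U}\ho\mathcal{E}\subseteq\operatorname{ran}D\cap\Hone$, so the pair $V=K$, $I=\sigma K$ satisfies the periodic Ohm's law of Def.\ \ref{def:perohmlaw}. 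The defining relation for $\sigma_{eff}$ then gives $\sigma_{01}K=\Gamma_0\sigma K=\langle I\rangle_{\Hone}=\sigma_{eff}(\langle V\rangle_{\Hone})=\sigma_{eff}(0)$, using $\langle K\rangle_{\Hone}=0$. Since the trivial pair $(V,I)=(0,0)$ is itself an Ohm's law pair, $\sigma_{eff}(0)=0$ (no linearity of $\sigma_{eff}$ is needed for this), so $\sigma_{01}K=0$, i.e.\ $K\in\ker\sigma_{01}$.

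With the inclusion in hand, Theorem \ref{ThmZProbSol} applies directly: for every $V_0\in\mathcal{U}$ the lattice $Z$-problem is solvable, and the effective operator $\sigma_*$ exists, is unique, and is given by the generalized Schur complement (\ref{ThmLatticeZProbSolpPart4}). To finish, I would fix $V_0\in\mathcal{U}$ and choose a solution $(I_0,V,I)$ at $V_0$; by the definition of the effective operator, $I_0=\sigma_*V_0$. On the other hand, Lemma \ref{lem:PeriodicVoltCurrentSpBreakdown2} shows that $V_0+V$ and $I_0+I$ form a periodic Ohm's law pair with $\langle V_0+V\rangle_{\Hone}=V_0$ and $\langle I_0+I\rangle_{\Hone}=I_0$, so the defining relation for $\sigma_{eff}$ yields $I_0=\sigma_{eff}(V_0)$. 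Comparing, $\sigma_*V_0=\sigma_{eff}(V_0)$ for all $V_0\in\mathcal{U}$, that is $\sigma_{eff}=\sigma_*$ (and in particular $\sigma_{eff}$ is automatically linear).

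I expect the main obstacle to be the first step: recognizing that every $K\in\ker\sigma_{11}$ manufactures a genuine Ohm's law voltage--current pair, with the two membership conditions $\sigma K\in\ker D^\bullet\cap\Hone$ and $K\in\operatorname{ran}D\cap\Hone$ being precisely what the Hodge decomposition of Lemma \ref{prop:PeriodicVoltCurrentSpBreakdown} supplies, and then using $\sigma_{eff}(0)=0$ to force $\sigma_{01}K=0$. A secondary point to keep straight is the asymmetry flagged in Sec.\ \ref{SecEffCond}: the voltages relevant to the $Z$-problem lie in $\mathcal{U}\ho\mathcal{E}$, which is only a \emph{subspace} of $\operatorname{ran}D\cap\Hone$. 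Because the inclusion $\mathcal{U}\ho\mathcal{E}\subseteq\operatorname{ran}D\cap\Hone$ holds, the constraint imposed by $\sigma_{eff}$ still transfers to every $Z$-problem solution, so this nuance causes no difficulty in the present direction; everything after the kernel inclusion is routine bookkeeping through Theorem \ref{ThmZProbSol} and Lemma \ref{lem:PeriodicVoltCurrentSpBreakdown2}.
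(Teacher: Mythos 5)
Your proposal is correct, and its second half (converting a $Z$-problem solution at $V_0$ into a periodic Ohm's law pair via Lemma \ref{lem:PeriodicVoltCurrentSpBreakdown2} to force $I_0=\sigma_{eff}(V_0)$) is exactly the computation at the heart of the paper's proof. Where you diverge is in how existence and uniqueness of $\sigma_*$ is secured: the paper feeds the identity $I_0=\sigma_{eff}(V_0)$ into the abstract machinery of Theorem \ref{thm:FundThmExistenceUniquenessSolvabilityZProbEffOpVecSp} (its condition $(b)$, that solutions determine $w_0$ as a function of $v_0$) together with Corollary \ref{cor:KeyResultAppendixComposites}, which then delivers the kernel inclusion, existence, uniqueness, linearity, and $D_*(\sigma)=\mathcal{U}$ all at once. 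You instead prove $\ker\sigma_{11}\subseteq\ker\sigma_{01}$ by hand — taking $K\in\ker\sigma_{11}$, observing $\sigma K\in\mathcal{U}\ho\mathcal{J}=\ker D^\bullet\cap\Hone$ and $K\in\mathcal{E}\subseteq\ran D\cap\Hone$, so $(K,\sigma K)$ is an Ohm's law pair, and using $\sigma_{eff}(0)=0$ to kill $\sigma_{01}K$ — and then invoke Theorem \ref{ThmZProbSol}. Both arguments are sound; yours is more self-contained (it stays in the body of the paper and makes visible exactly where the existence of $\sigma_{eff}$ constrains the block structure of $\sigma$), while the paper's is shorter and illustrates why the appendix's ``fundamental theorem of effective operators'' was set up in the generality it was. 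One small remark: your direct construction of the Ohm's law pair $(K,\sigma K)$ is essentially the concrete instantiation, in this lattice setting, of the implication $(c)\Rightarrow(a)$ in Theorem \ref{thm:FundThmExistenceUniquenessSolvabilityZProbEffOpVecSp}, so the two routes are cousins rather than strangers.
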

	\begin{proof}
	Assume the hypotheses. Suppose  $(I_0,V,I)\in\mathcal{U}\times\mathcal{E}\times\mathcal{J}$ is a solution of the lattice $Z$-problem at $V_0\in\mathcal{U}$. Then by Lemma \ref{lem:PeriodicVoltCurrentSpBreakdown2} we know that $V_0+V, I_0+I$ are a pair of voltage and current, respectively, satisfying the periodic Ohm's law and so by hypothesis we must have $\langle I_0+I\rangle_{\Hone} = \sigma_{eff}(\langle V_0+V\rangle_{\Hone})$, that is,
		\begin{align*}
		    I_0=\langle I_0+I\rangle_{\Hone} = \sigma_{eff}(\langle V_0+V\rangle_{\Hone})=\sigma_{eff}(V_0).
		\end{align*}
		The result now follows immediately from this by Theorem \ref{thm:FundThmExistenceUniquenessSolvabilityZProbEffOpVecSp} and Corollary \ref{cor:KeyResultAppendixComposites}.
	\end{proof}
	
	The next theorem gives us necessary and sufficient conditions for the effective conductivity to exist.
	
	\begin{theorem}\label{thm:NecSuffCondEffCondExistLattice}
	    An effective conductivity $\sigma_{eff}$ of the lattice electrical network $(\mathbb{Z}^d,E,\sigma)$ (in Def.\ \ref{def:perohmlaw}) exists if and only if
	    \begin{align}
	    \ker(\Gamma_{\mathcal{E}}\sigma \Gamma_{\mathcal{V}})\subseteq \ker(\Gamma_{\mathcal{U}}\sigma \Gamma_{\mathcal{V}}),    
	    \end{align}
	    where $\Gamma_{\mathcal{U}}$ and $\Gamma_{\mathcal{E}}$ are the orthogonal projections of $\Hone$ onto $\mathcal{U}$ and $\mathcal{E}$, respectively, and $\Gamma_{\mathcal{V}}$ is the orthogonal projection of $\Hone$ onto 
	    \begin{align}
	        \mathcal{V}=[\operatorname{ran}D\cap \Hone]\overset{\perp}{\ominus} \mathcal{U},\label{def:VInthm:NecSuffCondEffCondExistLattice}
	    \end{align}
	    i.e., $\mathcal{V}$ is the orthogonal complement of $\mathcal{U}$ in $\operatorname{ran}D\cap \Hone$.
	\end{theorem}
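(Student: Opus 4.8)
The plan is to recast the existence of an effective conductivity as a single-valuedness (well-definedness) condition and then translate that condition into the stated kernel inclusion using the Hodge decomposition of Theorem~\ref{ThmLatDecomp} and Lemma~\ref{prop:PeriodicVoltCurrentSpBreakdown}. First I would observe that the set of periodic Ohm's law pairs $(V,I)$ in Definition~\ref{def:perohmlaw} is a linear subspace: the relation $I=\sigma V$ is linear and the constraints $V\in\operatorname{ran}D\cap\Hone$ and $I\in\ker D^{\bullet}\cap\Hone$ cut out subspaces. Since $I=\sigma V$ is determined by $V$, it is convenient to work with the single subspace $\mathcal{W}=\{V\in\operatorname{ran}D\cap\Hone:\sigma V\in\ker D^{\bullet}\cap\Hone\}$ of admissible voltages. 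Because the averaging map $\langle\cdot\rangle_{\Hone}=\Gamma_{0}=\Gamma_{\mathcal U}$ and the map $V\mapsto\langle\sigma V\rangle_{\Hone}=\Gamma_{\mathcal U}\sigma V$ are linear, a total function $\sigma_{eff}\colon\mathcal U\to\mathcal U$ satisfying $\langle I\rangle=\sigma_{eff}(\langle V\rangle)$ on all admissible pairs exists if and only if the partial assignment $\langle V\rangle\mapsto\langle\sigma V\rangle$ is single-valued (any single-valued partial assignment extends to a total function, e.g.\ by zero). By linearity this single-valuedness is equivalent to the implication: for every $V\in\mathcal W$ with $\langle V\rangle=0$ one has $\langle\sigma V\rangle=0$.

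Next I would identify the spaces explicitly. By Lemma~\ref{prop:PeriodicVoltCurrentSpBreakdown} and Theorem~\ref{ThmLatDecomp}, the current space is $\ker D^{\bullet}\cap\Hone=\mathcal U\ho\mathcal J=\mathcal E^{\perp}$ and the voltage space is $\operatorname{ran}D\cap\Hone=\mathcal U\ho\mathcal V$, where $\mathcal V$ is defined by (\ref{def:VInthm:NecSuffCondEffCondExistLattice}); this is precisely the point where the lattice problem departs from the continuum and finite cases, since $\mathcal V$ is generally strictly larger than $\mathcal E$. Consequently $V\in\mathcal W$ with $\langle V\rangle=\Gamma_{\mathcal U}V=0$ is the same as $V\in\mathcal V$ with $\sigma V\in\mathcal E^{\perp}$, i.e.\ $\Gamma_{\mathcal E}\sigma V=0$, while the conclusion $\langle\sigma V\rangle=0$ reads $\Gamma_{\mathcal U}\sigma V=0$. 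Thus the existence of $\sigma_{eff}$ is equivalent to the \emph{restricted} inclusion
\[
\{V\in\mathcal V:\Gamma_{\mathcal E}\sigma V=0\}\subseteq\{V\in\mathcal V:\Gamma_{\mathcal U}\sigma V=0\}.
\]

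Finally I would promote this restricted inclusion to the stated inclusion of full kernels using only that $\Gamma_{\mathcal V}$ is the orthogonal projection onto $\mathcal V$, so that $\Gamma_{\mathcal V}|_{\mathcal V}=I$ and $\operatorname{ran}\Gamma_{\mathcal V}=\mathcal V$. For the forward direction, given $x\in\ker(\Gamma_{\mathcal E}\sigma\Gamma_{\mathcal V})$ I set $v=\Gamma_{\mathcal V}x\in\mathcal V$; then $\Gamma_{\mathcal E}\sigma v=0$, the restricted inclusion gives $\Gamma_{\mathcal U}\sigma v=0$, and hence $\Gamma_{\mathcal U}\sigma\Gamma_{\mathcal V}x=0$. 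The converse simply specializes the full inclusion to $x=v\in\mathcal V$. This yields $\ker(\Gamma_{\mathcal E}\sigma\Gamma_{\mathcal V})\subseteq\ker(\Gamma_{\mathcal U}\sigma\Gamma_{\mathcal V})$ as claimed. I expect the main obstacle to be the first step, the careful reduction of ``existence of a function $\sigma_{eff}$'' to the algebraic single-valuedness condition: one must use the larger voltage space $\mathcal U\ho\mathcal V$ (not $\mathcal U\ho\mathcal E$), handle the fact that $\sigma_{eff}$ is a priori only a set-function on $\mathcal U$ and is not assumed linear, and invoke linearity of the Ohm's law relation to pass to the zero-average case; the space identifications and the final kernel algebra are then routine. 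Alternatively, the whole argument can be packaged as a corollary of the abstract existence criterion Theorem~\ref{thm:FundThmExistenceUniquenessSolvabilityZProbEffOpVecSp}, applied to the generalized field equation with field space $\mathcal U\ho\mathcal V$ and current constraint $\mathcal E^{\perp}$.
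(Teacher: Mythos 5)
Your proposal is correct and is essentially the paper's argument in unpacked form: the paper's proof simply decomposes $\Hone=\mathcal{U}\ho\mathcal{V}\ho\mathcal{W}$ and cites the abstract existence criterion (Theorem \ref{thm:FundThmExistenceUniquenessSolvabilityZProbEffOpVecSp}) with domain decomposition $V_1=\mathcal{V}$ and codomain decomposition $W_1=\mathcal{E}$, whose internal proof is precisely your single-valuedness-at-zero reduction followed by the kernel identification $\sigma_{11}=\Gamma_{\mathcal{E}}\sigma\Gamma_{\mathcal{V}}$, $\sigma_{01}=\Gamma_{\mathcal{U}}\sigma\Gamma_{\mathcal{V}}$. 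You correctly identify this packaging yourself in your closing sentence, so the two routes coincide.
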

	\begin{proof}
	    First, both $\mathcal{U}$ and $\operatorname{ran}D\cap \Hone$ are subspaces of the finite-dimensional Hilbert space $\Hone$ and so by Lemma \ref{prop:PeriodicVoltCurrentSpBreakdown}, $\mathcal{U}$ is a subspace of $\operatorname{ran}D\cap \Hone$ and hence it follows that
	    \begin{align}
	        \Hone = \mathcal{U}\ho \mathcal{V}\ho \mathcal{W},
	    \end{align}
	    where
	    \begin{align}
	        \mathcal{W}=\Hone\overset{\perp}{\ominus}[\operatorname{ran}D\cap \Hone]
	    \end{align}
	    is the orthogonal complement of $\operatorname{ran}D\cap \Hone$ in $\Hone$ and $\mathcal{V}$ is defined by (\ref{def:VInthm:NecSuffCondEffCondExistLattice}). The proof now follows immediately from Theorem \ref{thm:FundThmExistenceUniquenessSolvabilityZProbEffOpVecSp} in Appendix \ref{SectAbsTheoryCompositesVecSpFramework} by setting
	    \begin{gather*}
	        V=W=\Hone,V_0=W_0=\mathcal{U},\\
	        V_1=\mathcal{V}, V_2=\mathcal{W},W_1=\mathcal{E},W_2=\mathcal{J}.
	    \end{gather*}
	\end{proof}
	
	In the rest of this section, we give several examples that demonstrate the complexity associated with the Def.\ \ref{def:perohmlaw}. Our first example shows that, in general, (\ref{ref:ranDsubsetUplusE}) is not an equality.
		\begin{example}\label{ExZProbNoOhm} Here we give an example of a lattice electric network $(\mathbb{Z}^d,E,\sigma)$ with $\sigma\in\mathcal{L}(\Hone)$ and $\sigma^*=\sigma\geq 0$, and a pair $V,I$ satisfying the periodic Ohm's law (\ref{def:perohmlawpart1}) and (\ref{def:perohmlawpart2}), but $V\not\in \mathcal{U}\ho\mathcal{E}$ from which it will follow immediately by Lemma \ref{prop:PeriodicVoltCurrentSpBreakdown} that
		\begin{align*}
		    \mathcal{U}\ho\mathcal{E}\subsetneq\ran D\cap \Hone.
		\end{align*}
		
		Consider the periodic digraph $G=(\mathbb{Z}^d,E)$ with lattice $L$ in (\ref{ref:latticevecperiodicitytau})-(\ref{ref:edgeset}) with $d=2$ and $\tau=(\tau_1,\tau_2)=(2,2)$ as shown in Fig.\ \ref{FigLatEx}.$(b)$. Let $\sigma=I_{\Hone}$ be the identity operator on $\Hone$. Define the functions $u\in \mathcal{F}(\mathbb{Z}^2,\mathbb{K})$ and $V\in \ran D$ by $u(i,j)=i-j$ for each $(i,j)\in \mathbb{Z}^2$ and $V=Du$. Then $V\in\ran D\cap\Hone$ since, for each $e\in E$, we have
		\begin{align*}
			V(e)=\begin{cases}1, \text{ if }e||\mathbf{e}_1,                                                                     \\
			-1, \text{ if } e||\mathbf{e}_2,\end{cases}
		\end{align*}
		which implies $V(e+R)=V(e) \text{ for all } (e,R)\in E\times L$. Let us define $I$ by $I=\sigma V$, i.e., $I=V$. We will now show that the pair $V,I$ satisfies the periodic Ohm's law (\ref{def:perohmlawpart1}) and (\ref{def:perohmlawpart2}), but $V\not\in \mathcal{U}\ho\mathcal{E}$. To do this, it suffices to show $I\in\mathcal{J}$, i.e., $\langle I \rangle_{\Hone}=0$ and $D^\bullet I=0$. For this lattice, we have $|E_P^-|=8$, $|\{e\in E_P^-:e||\mathbf{e}_1\}|=4$, and $|\{e\in E_P^-:e||\mathbf{e}_2\}|=4$ so that
		\begin{gather*}
			\langle I\rangle_{\Hone}=\frac{1}{8}\left(\sum_{e\in E_P^-:e||\mathbf{e}_1}I(e)+\sum_{e\in E_P^-:e||\mathbf{e}_2}I(e)\right)=0.
		\end{gather*}
		Also, $D^\bullet I=0$ since, for any $p=(i,j)\in\mathbb{Z}^2$, we have
		\begin{gather*}
			(-D^\bullet I)(p)=(-D^\bullet Du)(p)=-\sum_{p'\sim p}u(p)-u(p')\\
			=u(i,j)-u(i-1,j)+u(i,j)-u(i,j-1)\\+u(i+1,j)-u(i,j)+u(i,j+1)-u(i,j)=0,
		\end{gather*}
		where $p'\sim p$ means $p$ is connected to $p'\in \mathbb{Z}^2$ by an edge in $E$. This proves $I\in \mathcal{J}$, hence $V=I\not\in \mathcal{U}\ho\mathcal{E}$ and $\mathcal{U}\ho\mathcal{E}\subsetneq\ran D\cap \Hone.$
	\end{example}

The next two examples show that even if $\mathcal{U}\ho\mathcal{E}\subsetneq\ran D\cap \Hone$, it is possible that $\sigma_{eff}=\sigma_*$ in one example, but its also possible, in another example, that $\sigma_{eff}$ does not exist yet $\sigma_*$ does. 
\begin{example}\label{ExZProbNoOhmContinued}
 		 Consider the lattice electric network $(\mathbb{Z}^d,E,\sigma)$ in Example \ref{ExZProbNoOhm}. In this example, the effective conductivity $\sigma_{eff}$ and the effective operator $\sigma_*$ exist with $\sigma_{eff}=\sigma_*$. Let us prove this. As $\sigma=I_{\Hone}$ so that $\sigma^*=\sigma\geq 0$, then by Theorem \ref{ThmEffCondEffOp} and Theorem \ref{ThmMainResultEffCondEqsEffOpLatticeZProb} it suffices to prove an effective conductivity $\sigma_{eff}$ of this lattice electrical network (see Def.\ \ref{def:perohmlaw}) exists. Thus, our proof will be complete if we can prove the claim that the identity operator $I_{\mathcal{U}}$ on $\mathcal{U}$ is such an effective conductivity. Suppose that $I$ and $V$ satisfy $I=\sigma V$ with $I\in\ker D^\bullet \cap \Hone, V\in\operatorname{ran}D\cap \Hone.$ Then $I=V$ so that $\langle I \rangle_{\Hone}=\langle V \rangle_{\Hone}=I_{\mathcal{U}}(\langle V \rangle_{\Hone})$, which proves the claim. In particular, we have shown that $\sigma_{eff}=\sigma_*=I_{\mathcal{U}}$.
\end{example}
	
The final example shows that an effective conductivity $\sigma_{eff}$ of lattice electric network $(\mathbb{Z}^d,E,\sigma)$ as defined in Def.\ \ref{def:perohmlaw} may not exist even though a unique effective operator $\sigma_*$ of the lattice $Z$-problem $(\mathcal{H},\mathcal{U},\mathcal{E},\mathcal{J},\sigma)$ (in Def.\ \ref{def:latzprob}) does. \footnote{In particular, this provides a counterexample to a claim (i.e., Theorem 65) in Ref.\ \onlinecite{22KB}.}
\begin{example}\label{ExEffCondLatticeZProbIsContrived}
 Consider any periodic digraph $G=(\mathbb{Z}^d,E)$ with lattice $L$ in (\ref{ref:latticevecperiodicitytau})-(\ref{ref:edgeset}) such that $\mathcal{J}\cap \operatorname{ran}D\cap \Hone\not=\emptyset$ (Example \ref{ExZProbNoOhm} gives one such example). Choose any unit vectors $I_0\in \mathcal{U}$ and $I\in\mathcal{J}\cap \operatorname{ran}D\cap \Hone$. Then there is an $\sigma\in \mathcal{L}({\Hone})$ with $\sigma^*=\sigma\geq 0$ such that
 \begin{align}
     \sigma(I_0)=\sigma(I)=I_0+I,\;\sigma(\operatorname{span}\{I_0,I\}^{\perp})=\{0\}.
 \end{align}
 Now consider the corresponding lattice electric network $(\mathbb{Z}^d,E,\sigma)$. We claim there is no effective conductivity for it as defined in Def.\ \ref{def:perohmlaw}. For suppose there was, i.e., suppose there exists a function $\sigma_{eff}:\mathcal{U}\rightarrow \mathcal{U}$ such that 
 \begin{align}
     \sigma_{eff}(\langle V\rangle_{\Hone})=\langle J\rangle_{\Hone} ,
 \end{align}
 whenever
 \begin{align}
     \sigma(V)=J,\;	J\in\ker D^\bullet \cap \Hone, V\in\operatorname{ran}D\cap \Hone.
 \end{align}
 Then since $\sigma(0)=0$ and with $V=0\in\operatorname{ran}D\cap \Hone$ and $J=0\in\ker D^\bullet \cap \Hone$, we must have
 \begin{align}
    \sigma_{eff}(0)=\sigma_{eff}(\langle 0\rangle_{\Hone})=\langle 0\rangle_{\Hone}=0.
 \end{align}
 But we also have $\sigma(I)=I_0+I$ and so with $V=I\in\operatorname{ran}D\cap \Hone$ and $J=I_0+I\in\ker D^\bullet \cap \Hone$, we must have
  \begin{align}
    \sigma_{eff}(0)=\sigma_{eff}(\langle I\rangle_{\Hone})=\langle I_0+I\rangle_{\Hone}=I_0.
 \end{align}
 This implies $0=\sigma_{eff}(0)=I_0\not=0,$ a contradiction. This proves the claim. 
 
 Nevertheless, in this example there is a unique effective operator $\sigma_*$ of the lattice $Z$-problem $(\mathcal{H},\mathcal{U},\mathcal{E},\mathcal{J},\sigma)$ (in Def.\ \ref{def:latzprob}) by Theorem \ref{ThmEffCondEffOp}.
\end{example}

\subsubsection{Periodic Dirichlet \textit{Z}-Problem}

The purpose of this section is to clarify the nuances that occurred in Section \ref{SecEffCond} using a new but related $Z$-problem. Our results in this regard are based on the following theorem that has a similar flavor to Theorem \ref{ThmKHodge}. Thus, in a sense, one could consider this section to be a periodic analogy of Section \ref{sec:EffOpReprOfDtNMap} on the Dirichlet $Z$-problem. For our purposes though, we will be satisfied with just the next two theorems.
\begin{theorem}\label{ThmPerDirHodgeDecomp}
    The sets $\mathcal{U}_{\#},\;\mathcal{E}_{\#},\;\mathcal{J}_{\#}$ defined by
	\begin{align}
		\mathcal{U}_{\#} & = \{D u\in \Hone : u \in \mathcal{F}(\mathbb{Z}^d, \mathbb{K}), D^{\bullet}D u=0\} \label{UPoundSpace}, \\
		\mathcal{E}_{\#} & = \{Du: u\in\Hzero\}, \label{EPoundSpace}                           \\
		\mathcal{J}_{\#}  & = (\operatorname{ran}D\cap \Hone)^{\perp} \label{JPoundSpace}
	\end{align}
	are mutually orthogonal subspaces in the Hilbert space $ \Hone$. Furthermore, 
	\begin{gather} \Hone=\mathcal{U}_{\#} \overset{\perp}{\oplus} \mathcal{E}_{\#} \ho \mathcal{J}_{\#}, \label{orthotriplePerDirZproblem}\\
	\operatorname{ran}D\cap \Hone= \mathcal{U}_{\#} \ho \mathcal{E}_{\#},\label{ReprUPoundPlusEPound}\\
	\ker D^\bullet \cap \Hone=\ker (D^\bullet)_\#=\mathcal{U}_{\#}\ho\mathcal{J}_{\#},\label{ReprUPoundPlusJPound}\\
	\mathcal{U}_{\#}=\mathcal{U}\ho (\mathcal{J}\cap \operatorname{ran}D),\label{ReprOfUPound}\\
	\mathcal{E}_{\#}=\mathcal{E}=\operatorname{ran}D_{\#},\label{ReprOfEPound}\\
		\mathcal{J}_{\#}=\mathcal{J}\overset{\perp}{\ominus}(\mathcal{J}\cap \operatorname{ran}D).\label{ReprOfJPound}
	\end{gather}
\end{theorem}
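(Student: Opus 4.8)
The plan is to avoid re-deriving the abstract Hodge decomposition (Theorem \ref{ThmHodgeDecomp}) from scratch and instead to build everything on top of the lattice Hodge decomposition already established in Theorem \ref{ThmLatDecomp}, namely $\Hone=\mathcal{U}\ho\mathcal{E}\ho\mathcal{J}$ with $\mathcal{E}=\operatorname{ran}D_\#$ and $\mathcal{U}\ho\mathcal{J}=\ker(D^\bullet)_\#$, together with Lemma \ref{prop:PeriodicVoltCurrentSpBreakdown}. First I would fix the abbreviation $\mathcal{R}=\operatorname{ran}D\cap\Hone$ and dispose of (\ref{ReprOfEPound}) at once: since $D$ and $D_\#$ agree on $\Hzero$ we have $\mathcal{E}_\#=\operatorname{ran}D_\#$, and Theorem \ref{ThmLatDecomp} identifies $\operatorname{ran}D_\#=\mathcal{E}$, so $\mathcal{E}_\#=\mathcal{E}$. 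Next I would rewrite $\mathcal{U}_\#$ in intersection form: an element $Du\in\Hone$ with $D^\bullet Du=0$ is precisely an element of $\mathcal{R}$ lying in $\ker D^\bullet$, and because $D^\bullet$ and $(D^\bullet)_\#$ agree on $\Hone$ while $\ker(D^\bullet)_\#=\mathcal{E}^\perp$ by (\ref{ref:JUeqaul}), this gives $\mathcal{U}_\#=\mathcal{R}\cap\ker D^\bullet=\mathcal{R}\cap\mathcal{E}^\perp$. In particular all three of $\mathcal{U}_\#,\mathcal{E}_\#,\mathcal{J}_\#$ are closed subspaces, being intersections of ranges, kernels, and orthogonal complements.

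The core of the argument is then two orthogonal splittings. Lemma \ref{prop:PeriodicVoltCurrentSpBreakdown} gives $\mathcal{U}\ho\mathcal{E}\subseteq\mathcal{R}$, so splitting $\mathcal{R}$ along its closed subspace $\mathcal{E}$ produces $\mathcal{R}=\mathcal{E}\ho(\mathcal{R}\cap\mathcal{E}^\perp)=\mathcal{E}\ho\mathcal{U}_\#$, which is exactly (\ref{ReprUPoundPlusEPound}). Since $\mathcal{J}_\#=\mathcal{R}^\perp$ by definition, I would then write $\Hone=\mathcal{R}\ho\mathcal{R}^\perp=\mathcal{U}_\#\ho\mathcal{E}\ho\mathcal{J}_\#$, which is simultaneously the mutual orthogonality claim and the triple decomposition (\ref{orthotriplePerDirZproblem}) (recalling $\mathcal{E}=\mathcal{E}_\#$). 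Reading the orthogonal complement of $\mathcal{E}$ off this last display yields $\mathcal{U}_\#\ho\mathcal{J}_\#=\mathcal{E}^\perp=\ker(D^\bullet)_\#=\ker D^\bullet\cap\Hone$, i.e. (\ref{ReprUPoundPlusJPound}).

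It remains to prove the two finer identifications (\ref{ReprOfUPound}) and (\ref{ReprOfJPound}). For (\ref{ReprOfUPound}) I would argue by double inclusion: $\mathcal{U}\subseteq\mathcal{U}_\#$ and $\mathcal{J}\cap\operatorname{ran}D\subseteq\mathcal{U}_\#$ follow from $\mathcal{U}\subseteq\mathcal{R}$, from $\mathcal{U},\mathcal{J}\subseteq\mathcal{E}^\perp=\ker(D^\bullet)_\#$, and from $\mathcal{J}\subseteq\Hone$ (so $\mathcal{J}\cap\operatorname{ran}D\subseteq\mathcal{R}$); conversely, given $V\in\mathcal{U}_\#\subseteq\mathcal{U}\ho\mathcal{J}$, decompose $V=V_{\mathcal U}+V_{\mathcal J}$ and note $V_{\mathcal J}=V-V_{\mathcal U}\in\mathcal{R}$ (both $V$ and $V_{\mathcal U}$ lie in $\mathcal{R}$), whence $V_{\mathcal J}\in\mathcal{J}\cap\operatorname{ran}D$. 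The last identity (\ref{ReprOfJPound}) is where the only genuine subtlety lies, and I expect it to be the main obstacle. From $\mathcal{J}_\#=\mathcal{R}^\perp\perp\mathcal{U},\mathcal{E}$ I obtain $\mathcal{J}_\#\subseteq\mathcal{J}$ and hence $\mathcal{J}_\#=\mathcal{J}\cap\mathcal{R}^\perp$; this always gives the one-sided inclusion $\mathcal{J}_\#\subseteq\mathcal{J}\overset{\perp}{\ominus}(\mathcal{J}\cap\operatorname{ran}D)$, but the reverse inclusion can fail purely set-theoretically. I would clear this by a dimension count valid in the finite-dimensional setting of Lemma \ref{lem:findimh}: comparing $\dim\mathcal{J}_\#=\dim\Hone-\dim\mathcal{U}_\#-\dim\mathcal{E}$ (from (\ref{orthotriplePerDirZproblem})) with $\dim\mathcal{J}=\dim\Hone-\dim\mathcal{U}-\dim\mathcal{E}$ (from (\ref{LatticeHodgeDecomp})) and invoking (\ref{ReprOfUPound}) gives $\dim\mathcal{J}_\#=\dim\mathcal{J}-\dim(\mathcal{J}\cap\operatorname{ran}D)=\dim\bigl(\mathcal{J}\overset{\perp}{\ominus}(\mathcal{J}\cap\operatorname{ran}D)\bigr)$, so the inclusion is forced to be an equality, completing the proof.
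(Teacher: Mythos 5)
Your proposal is correct, and it reorganizes the argument around one observation the paper does not exploit: the identity $\mathcal{U}_{\#}=(\operatorname{ran}D\cap \Hone)\cap\ker D^\bullet=(\operatorname{ran}D\cap \Hone)\overset{\perp}{\ominus}\mathcal{E}$, obtained by combining the definition of $\mathcal{U}_\#$ with $\ker(D^\bullet)_\#=\mathcal{U}\ho\mathcal{J}=\mathcal{E}^\perp$ from Theorem \ref{ThmLatDecomp} and Lemma \ref{prop:PeriodicVoltCurrentSpBreakdown}. Once you have that, (\ref{ReprUPoundPlusEPound}), the mutual orthogonality, (\ref{orthotriplePerDirZproblem}) and (\ref{ReprUPoundPlusJPound}) all fall out of the two standard splittings $\mathcal{R}=\mathcal{E}\ho(\mathcal{R}\cap\mathcal{E}^\perp)$ and $\Hone=\mathcal{R}\ho\mathcal{R}^\perp$, whereas the paper proves (\ref{ReprOfUPound}) first by an element-chasing double inclusion (decomposing $Du=V_0+V+I$ via Theorem \ref{ThmLatDecomp} and showing the $\mathcal{J}$-component lands in $\operatorname{ran}D$) and only then assembles (\ref{ReprUPoundPlusEPound}) from it; your route gets the coarse structure with less bookkeeping, at the cost of still needing the paper's element-level argument later for (\ref{ReprOfUPound}) itself (your double inclusion there is essentially identical to the paper's). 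For the last identity (\ref{ReprOfJPound}), where you rightly flag that the inclusion $\mathcal{J}\cap\mathcal{R}^\perp\subseteq\mathcal{J}\overset{\perp}{\ominus}(\mathcal{J}\cap\operatorname{ran}D)$ is not automatically an equality, the paper closes the gap by writing $\Hone=\mathcal{U}_\#\ho\mathcal{E}_\#\ho[\mathcal{J}\overset{\perp}{\ominus}(\mathcal{J}\cap\operatorname{ran}D)]=\mathcal{R}\ho[\mathcal{J}\overset{\perp}{\ominus}(\mathcal{J}\cap\operatorname{ran}D)]$ and invoking uniqueness of the orthogonal complement of $\mathcal{R}$; your dimension count, legitimate because $\dim\Hone<\infty$ by Lemma \ref{lem:findimh}, carries exactly the same information in slightly more pedestrian form. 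Both proofs are sound; yours is somewhat more conceptual in the first half and somewhat more computational in the last step.
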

\begin{proof}
    First, (\ref{ReprOfEPound}) follows immediately from the definition of $D_{\#}$ and by Theorem \ref{ThmLatDecomp}. Next, we will prove (\ref{ReprOfUPound}). First, let $J\in\mathcal{J}\cap \operatorname{ran}D$. Then $J\in \mathcal{J}$ so that $(D^\bullet)_{\#} J=0, \langle J \rangle_{\Hone}=0,$ and there exists $u \in \mathcal{F}(\mathbb{Z}^d, \mathbb{K})$ such that $J=Du$ so that $0=(D^\bullet)_{\#} J=D^\bullet J=D^\bullet Du$ implying $J\in \mathcal{U}_{\#}$. This proves that $\mathcal{J}\cap \operatorname{ran}D\subseteq \mathcal{U}_{\#}.$ Next, by Lemma \ref{prop:PeriodicVoltCurrentSpBreakdown} we know that $\mathcal{U}\subseteq \ran D\cap \Hone$ and by Theorem \ref{ThmLatDecomp} we know that $\mathcal{U}\subseteq \ker (D^\bullet)$ which implies $\mathcal{U}\subseteq \mathcal{U}_{\#}.$ This together with $\mathcal{U}\perp \mathcal{J}$ (by Theorem \ref{ThmLatDecomp}) proves that $\mathcal{U}\ho (\mathcal{J}\cap \operatorname{ran}D)\subseteq \mathcal{U}_{\#}.$ To prove the reverse inclusion, let $V\in \mathcal{U}_{\#}.$ Then $V=Du$ for some $u \in \mathcal{F}(\mathbb{Z}^d, \mathbb{K})$ and $(D^{\bullet})_{\#}V=D^{\bullet}D u=0$. By Theorem \ref{ThmLatDecomp} we know that $\ker(D^\bullet)=\mathcal{U}\ho\mathcal{J}$ so that $V=I_0+I$ for some $I_0\in \mathcal{U}, I\in \mathcal{J}$. As $\mathcal{U}\subseteq \mathcal{U}_{\#}\subseteq \operatorname{ran}D$, this implies that $I=V-I_0\in \operatorname{ran}D$ and hence $I\in \mathcal{J}\cap \operatorname{ran}D$. Thus, $V=I_0+I\in \mathcal{U}\ho (\mathcal{J}\cap \operatorname{ran}D)$. This proves that $\mathcal{U}_{\#}\subseteq \mathcal{U}\ho (\mathcal{J}\cap \operatorname{ran}D)$, which proves (\ref{ReprOfUPound}). Next, we prove (\ref{ReprUPoundPlusEPound}). First, it is clear from the definitions of $\mathcal{U}_{\#}$ and $\mathcal{E}_{\#}$ that $\mathcal{U}_{\#}, \mathcal{E}_{\#}\subseteq \operatorname{ran}D\cap \Hone$. We also know by Theorem \ref{ThmLatDecomp} that $\mathcal{U}\perp \mathcal{E}$ and $\mathcal{J}\perp \mathcal{E}$ so by (\ref{ReprOfUPound}) and (\ref{ReprOfEPound}) it follows that $\mathcal{U}_{\#}\perp \mathcal{E}_{\#}$ and hence $\mathcal{U}_{\#}\ho\mathcal{E}_{\#}\subseteq \operatorname{ran}D\cap \Hone$. To prove the reverse inclusion, let $u \in \mathcal{F}(\mathbb{Z}^d, \mathbb{K})$ such that $Du\in \Hone.$ Then by Theorem \ref{ThmLatDecomp} there exists an $V_0\in \mathcal{U}, V\in \mathcal{E}, I\in \mathcal{J}$ such that $Du=V_0+V+I.$ Now, by Lemma \ref{prop:PeriodicVoltCurrentSpBreakdown} we know that $V_0+V\in \operatorname{ran}D\cap \Hone$ so that there exists $v \in \mathcal{F}(\mathbb{Z}^d, \mathbb{K})$ such that $V_0+V=Dv\in \Hone.$ This implies that $I=D(u-v)=Du-Dv\in \operatorname{ran}D\cap \Hone$ and since $I\in \mathcal{J}$ then $0=D^\bullet I=D^\bullet D(u-v)$ with $u-v\in \mathcal{F}(\mathbb{Z}^d, \mathbb{K})$ this implies $I\in \mathcal{U}_{\#}$. It follows from this, (\ref{ReprOfUPound}), and (\ref{ReprOfEPound}) that $Du=(V_0+I)+V\in \mathcal{U}_{\#}\ho\mathcal{E}_{\#}$. Hence, we conclude that $\operatorname{ran}D\cap \Hone\subseteq \mathcal{U}_{\#}\ho\mathcal{E}_{\#}$, which proves (\ref{ReprUPoundPlusEPound}). Next, (\ref{orthotriplePerDirZproblem}) follows immediately from (\ref{ReprUPoundPlusEPound}) and the definition of $\mathcal{J}_{\#}$. Now, (\ref{ReprOfJPound}) follows from (\ref{ReprUPoundPlusEPound}), (\ref{ReprOfUPound}), and (\ref{ReprOfEPound}) since (by Theorem \ref{ThmLatDecomp})
    \begin{gather*}
        \Hone= \mathcal{U}\ho \mathcal{E}\ho \mathcal{J}\\
        =[\mathcal{U}\ho (\mathcal{J}\cap \operatorname{ran}D)]\ho \mathcal{E}\ho[\mathcal{J}\overset{\perp}{\ominus}(\mathcal{J}\cap \operatorname{ran}D)]\\
        =\mathcal{U}_{\#}\ho\mathcal{E}_{\#}\ho[\mathcal{J}\overset{\perp}{\ominus}(\mathcal{J}\cap \operatorname{ran}D)]\\
        =[\operatorname{ran}D\cap \Hone]\ho [\mathcal{J}\overset{\perp}{\ominus}(\mathcal{J}\cap \operatorname{ran}D)].
    \end{gather*}
    Finally, (\ref{ReprUPoundPlusJPound}) follows (\ref{ReprOfUPound}), (\ref{ReprOfJPound}), and (\ref{ref:JUeqaul}) from Lemma \ref{prop:PeriodicVoltCurrentSpBreakdown}. This completes the proof.
\end{proof}

The next theorem and it's proof is inspired by standard operations on effective operators and subspace collections in the abstract theory of composites [see Eqs.\ (3.17) and (3.18) in Ref.\ \onlinecite{87bGM}, Sec.\ 29.1 in Ref.\ \onlinecite{02GM}, and Chap.\ 7, Sec.\ 9 in Ref.\  \onlinecite{16GM}].
\begin{theorem}\label{ThmEffOpInTermsOfEffOpPound}
    Let $(\mathbb{Z}^d,E,\sigma)$ be a lattice electrical network with conductivity $\sigma\in \mathcal{L}(\Hone)$ satisfying $\sigma^*=\sigma\geq 0$. Let $\sigma_*$ be the effective operator of the lattice $Z$-problem $(\mathcal{H},\mathcal{U},\mathcal{E},\mathcal{J},\sigma)$ (in Def.\ \ref{def:latzprob}) with Hilbert space $\mathcal{H}=\Hone$ and orthogonal triple decomposition of $\mathcal{H}$ in (\ref{LatticeHodgeDecomp}). Let $\sigma_{*_{\#}}$ be the effective operator of the $Z$-problem $(\mathcal{H}_{\#},\mathcal{U}_{\#},\mathcal{E}_{\#},\mathcal{J}_{\#},\sigma)$ with Hilbert space $\mathcal{H}_{\#}=\Hone$ and orthogonal triple decomposition of $\mathcal{H}_{\#}$ in (\ref{orthotriplePerDirZproblem}).  Then
    \begin{align}
        \sigma_*=\Gamma_0\sigma_{*_{\#}}\Gamma_0|_{\mathcal{U}},\label{RelEffOpInTermsOfEffOpPound}
    \end{align}
    where $\Gamma_0$ is the (edge cell) average operator (in Def.\ \ref{DefAvgEdgeUnitCell}), i.e., $\sigma_*$ is the compression of $\sigma_{*_{\#}}$ to $\mathcal{U}$.
\end{theorem}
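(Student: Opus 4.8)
The plan is to compare the two effective operators through the generalized Dirichlet minimization principle, exploiting that the two triple decompositions share the same ``middle'' space and that $\mathcal{U}$ sits inside $\mathcal{U}_{\#}$. First I would note that since $\sigma^*=\sigma\geq 0$, Corollary \ref{CorSuffCondForTrueThmGenClassicalDiriMinPrin} applies verbatim to both the lattice $Z$-problem $(\mathcal{H},\mathcal{U},\mathcal{E},\mathcal{J},\sigma)$ and the periodic Dirichlet $Z$-problem $(\mathcal{H}_{\#},\mathcal{U}_{\#},\mathcal{E}_{\#},\mathcal{J}_{\#},\sigma)$, since the corollary holds for an arbitrary orthogonal triple decomposition of a finite-dimensional $\mathcal{H}$. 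Hence both effective operators $\sigma_*$ and $\sigma_{*_{\#}}$ exist, are unique, are self-adjoint (Corollary \ref{CorGenClassicalBasicPropEffOp}), and satisfy the generalized Dirichlet minimization principle of Theorem \ref{ThmGenClassicalDiriMinPrin}:
\begin{gather*}
    (V_0,\sigma_* V_0)=\min_{V\in\mathcal{E}}(V_0+V,\sigma(V_0+V)),\;\;\forall V_0\in\mathcal{U},\\
    (W_0,\sigma_{*_{\#}} W_0)=\min_{V\in\mathcal{E}_{\#}}(W_0+V,\sigma(W_0+V)),\;\;\forall W_0\in\mathcal{U}_{\#}.
\end{gather*}

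The key structural input then comes from Theorem \ref{ThmPerDirHodgeDecomp}, namely the identity $\mathcal{E}_{\#}=\mathcal{E}$ [eq.\ (\ref{ReprOfEPound})] together with $\mathcal{U}_{\#}=\mathcal{U}\ho(\mathcal{J}\cap\operatorname{ran}D)$ [eq.\ (\ref{ReprOfUPound})], the latter yielding the crucial inclusion $\mathcal{U}\subseteq \mathcal{U}_{\#}$. Because of this inclusion I may evaluate the second minimization principle at any $V_0\in\mathcal{U}$; and because the two constraint sets coincide, $\mathcal{E}=\mathcal{E}_{\#}$, the two minimizations are literally the same optimization problem. I would therefore conclude that
\begin{align*}
    (V_0,\sigma_{*_{\#}} V_0)=\min_{V\in\mathcal{E}_{\#}}(V_0+V,\sigma(V_0+V))=\min_{V\in\mathcal{E}}(V_0+V,\sigma(V_0+V))=(V_0,\sigma_* V_0),
\end{align*}
for every $V_0\in\mathcal{U}$.

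Finally, since $\Gamma_0$ is the orthogonal projection of $\Hone$ onto $\mathcal{U}$ (Theorem \ref{ThmLatDecomp}) and $\mathcal{U}\subseteq\mathcal{U}_{\#}$, the composition $\Gamma_0\sigma_{*_{\#}}\Gamma_0|_{\mathcal{U}}$ is exactly the compression of $\sigma_{*_{\#}}$ to $\mathcal{U}$, and for $V_0\in\mathcal{U}$ we have $\Gamma_0 V_0=V_0$ with $\Gamma_0^*=\Gamma_0$, so its quadratic form satisfies
\begin{align*}
    (V_0,\Gamma_0\sigma_{*_{\#}}\Gamma_0 V_0)=(\Gamma_0 V_0,\sigma_{*_{\#}}\Gamma_0 V_0)=(V_0,\sigma_{*_{\#}} V_0)=(V_0,\sigma_* V_0),\;\;\forall V_0\in\mathcal{U}.
\end{align*}
Both $\sigma_*$ and $\Gamma_0\sigma_{*_{\#}}\Gamma_0|_{\mathcal{U}}$ are self-adjoint operators on $\mathcal{U}$, and a self-adjoint operator is uniquely determined by its quadratic form (as used repeatedly in the paper, cf.\ Ref.\ \onlinecite{80JW}); this forces the identity (\ref{RelEffOpInTermsOfEffOpPound}). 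The only genuinely delicate point is recognizing that the two variational problems have the same feasible set and the same objective when restricted to $\mathcal{U}$, which is precisely the content of $\mathcal{E}_{\#}=\mathcal{E}$ and $\mathcal{U}\subseteq\mathcal{U}_{\#}$ from Theorem \ref{ThmPerDirHodgeDecomp}; everything downstream is the standard uniqueness-of-quadratic-forms argument, so I expect no serious obstacle beyond correctly invoking those two inclusions.
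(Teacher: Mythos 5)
Your proof is correct, but it takes a genuinely different route from the paper's. The paper argues directly at the level of solutions: starting from a solution $(I_0,V,I)\in\mathcal{U}\times\mathcal{E}\times\mathcal{J}$ of the lattice $Z$-problem at $V_0$, it uses $\mathcal{U}_{\#}=\mathcal{U}\ho(\mathcal{J}\cap\operatorname{ran}D)$ and $\mathcal{E}_{\#}=\mathcal{E}$ to split $I$ as $(I_{\mathcal{H}}-\Gamma_2^{\#})I+\Gamma_2^{\#}I$ and exhibit $(I_0+(I_{\mathcal{H}}-\Gamma_2^{\#})I,\,V,\,\Gamma_2^{\#}I)$ as a solution of the periodic Dirichlet $Z$-problem at the same $V_0$; uniqueness of $\sigma_{*_{\#}}$ and projecting with $\Gamma_0$ then give $\Gamma_0\sigma_{*_{\#}}\Gamma_0|_{\mathcal{U}}V_0=I_0=\sigma_*V_0$. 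You instead compare the two generalized Dirichlet minimization principles, observing that for $V_0\in\mathcal{U}\subseteq\mathcal{U}_{\#}$ the two variational problems have identical objective and identical feasible set $\mathcal{E}=\mathcal{E}_{\#}$, so the quadratic forms of $\sigma_*$ and of the compression of $\sigma_{*_{\#}}$ coincide on $\mathcal{U}$, and uniqueness of self-adjoint operators via their quadratic forms finishes the argument. Both proofs rest on the same structural facts from Theorem \ref{ThmPerDirHodgeDecomp} ($\mathcal{E}_{\#}=\mathcal{E}$ and $\mathcal{U}\subseteq\mathcal{U}_{\#}$) and on Corollary \ref{CorSuffCondForTrueThmGenClassicalDiriMinPrin} for existence, uniqueness, and self-adjointness of the two effective operators. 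What your route buys is brevity and a clean conceptual explanation of \emph{why} the identity holds (same minimization, bigger class of admissible $V_0$'s for the $\#$-problem); its cost is that it is wedded to the hypothesis $\sigma\geq 0$, since the Dirichlet principle needs positivity of the relevant $11$-block. The paper's solution-level construction is slightly longer but tracks explicitly how solutions of one problem transform into solutions of the other, which is the mechanism the authors flag as the ``standard operation on subspace collections'' from the abstract theory of composites, and which would survive under the weaker self-adjointness-plus-kernel-inclusion hypotheses of Theorem \ref{ThmZProbSol} if one verified those for both decompositions. One small point worth making explicit in your write-up: the self-adjointness of the compression $\Gamma_0\sigma_{*_{\#}}\Gamma_0|_{\mathcal{U}}$ follows from $\Gamma_0^*=\Gamma_0$ together with $(\sigma_{*_{\#}})^*=\sigma_{*_{\#}}$ (Corollary \ref{CorGenClassicalBasicPropEffOp} applied to the $\#$-problem), which you implicitly use before invoking uniqueness of quadratic forms.
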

\begin{proof}
    Assume the hypotheses. Then $\sigma^*=\sigma\geq 0$ and hence by Corollary \ref{CorSuffCondForTrueThmGenClassicalDiriMinPrin} we have existence and uniqueness of both the effective operators $\sigma_*$ and $\sigma_{*_{\#}}$. We will now prove (\ref{RelEffOpInTermsOfEffOpPound}).
    
    Let $V_0\in \mathcal{U}$. Then, by Corollary \ref{CorSuffCondForTrueThmGenClassicalDiriMinPrin}, there exists a $(I_0,V, I)\in \mathcal{U}\times \mathcal{E}\times \mathcal{J}$ such that
\begin{align*}
    I_0+I=\sigma(V_0+V)
\end{align*}
and
\begin{align*}
   I_0= \sigma_*V_0.
\end{align*}
On the other hand, $V\in \mathcal{E}=\mathcal{E}_{\#}$ and, by Theorem \ref{ThmPerDirHodgeDecomp}, $V_0,I_0\in \mathcal{U}_{\#}$ and $(I_{\mathcal{H}}-\Gamma_2^{\#})I\in \mathcal{U}_{\#}\overset{\perp}{\ominus}\mathcal{U}$, where $\Gamma_2^{\#}$ is the orthogonal projection of $\mathcal{H}_{\#}=\mathcal{H}$ onto $\mathcal{J}_{\#}$, so that
\begin{align*}
    [I_0+(I_{\mathcal{H}}-\Gamma_2^{\#})I]+\Gamma_2^{\#}I=\sigma(V_0+V),
\end{align*}
with $V_0\in \mathcal{U}_{\#}$ and so $(I_0+(I_{\mathcal{H}}-\Gamma_2^{\#})I,V,\Gamma_2^{\#}I)\in \mathcal{U}_{\#}\times\mathcal{E}_{\#}\times \mathcal{J}_{\#}$ is a solution of the $Z$-problem $(\mathcal{H}_{\#},\mathcal{U}_{\#},\mathcal{E}_{\#},\mathcal{J}_{\#},\sigma)$ at $V_0$ implying by Corollary \ref{CorSuffCondForTrueThmGenClassicalDiriMinPrin} that
\begin{align*}
    I_0+(I_{\mathcal{H}}-\Gamma_2^{\#})I=\sigma_{*_{\#}}(V_0).
\end{align*}
It follows from this and the facts
\begin{align*}
    \Gamma_0V_0=V_0,\;\Gamma_0[I_0+(I_{\mathcal{H}}-\Gamma_2^{\#})I]=I_0
\end{align*}
[the latter a consequence of $(I_{\mathcal{H}}-\Gamma_2^{\#})I\in \mathcal{U}_{\#}\overset{\perp}{\ominus}\mathcal{U}$],
that we have
\begin{align*}
    \Gamma_0\sigma_{*_{\#}}\Gamma_0|_{\mathcal{U}}(V_0)=I_0=\sigma_*(V_0).
\end{align*}
As this is true for every $V_0\in \mathcal{U}$, then we have proven (\ref{RelEffOpInTermsOfEffOpPound}). This completes the proof.
\end{proof}

\appendix

\section{\label{SectAbsTheoryCompositesVecSpFramework}Abstract Theory of Composites for Vector Spaces}

In this section, we give a concise and self-contained presentation of the most fundamental aspects of the abstract theory of composites on vector spaces that we need in this paper. In particular, we do not assume here that the vector spaces are inner product spaces and as such need not be Hilbert spaces. This level of generality was first considered in Chapter 7 in Ref.\ \onlinecite{16GM} and this appendix can be considered supplementary to it.

We begin by introducing some notation. Let $V, W$ be two vector spaces over a field $\mathbb{K}$. Let $L(V,W)$ denote the vector space (over $\mathbb{K}$) of all the linear functions from $V$ to $W$ [when $V=W$, we abbreviate it by $L(V),$ i.e., $L(V)=L(V,V)$]. Now suppose $V$ has a direct sum decomposition
\begin{align}
    V=V_0\oplus V_1 \oplus V_2,\label{VDecompAbsTheoryCompositesVecSpFramework}
\end{align}
for some subspaces $V_0,V_1,V_2\subseteq V$ and $W_0,W_1,W_2\subseteq W$. For each $i=0,1,2,$ let $\Gamma_{V_i}\in L(V)$ denote the projection of $V$ onto $V_i$ along $V_j\oplus V_k,$ where $j,k\in \{0,1,2\}\setminus\{i\}$ with $j<k$. More precisely, since $V=V_i\oplus (V_j \oplus V_k)$, then $\Gamma_{V_i}$ is uniquely defined by $\Gamma_{V_i}(v)=v$ if $v\in V_i$ and $\Gamma_{V_i}(v)=v$ if $v\in V_j \oplus V_k.$ Suppose that $W$ also has a direct sum decomposition
\begin{align}
    W=W_0\oplus W_1 \oplus W_2.\label{WDecompAbsTheoryCompositesVecSpFramework}
\end{align}
Then every $\sigma\in L(V,W)$ can be written as a $3\times 3$ block operator matrix, i.e.,
\begin{align}
\sigma=[\sigma_{ij}]_{i,j=0,1,2}=\begin{bmatrix}
    \sigma_{00} & \sigma_{01}&\sigma_{02}\\
     \sigma_{10}& \sigma_{11}&\sigma_{12} \\
     \sigma_{20}&\sigma_{21} &\sigma_{22}
\end{bmatrix},    
\end{align}
with respect to the decompositions (\ref{VDecompAbsTheoryCompositesVecSpFramework}) and (\ref{WDecompAbsTheoryCompositesVecSpFramework}), where
\begin{align}
    \sigma_{ij}\in \mathcal{L}(V_j,W_i),\;\sigma_{ij}=\Gamma_{W_i}\sigma\Gamma_{V_j}:V_j\rightarrow W_i,\label{DefOfSigmaSubblocksAppendix}
\end{align}
for $i,j=0,1,2$.

\begin{definition}[$Z$-problem and effective operator]\label{DefZProbMainAppendix}
	The $Z$-problem
	\begin{align}
	    (V,V_0,V_1,V_2, W, W_0,W_1,W_2, \sigma),\label{DefZProbAppendix}
	\end{align}
	is the following problem associated with two vector spaces $V,W$ (over a field $\mathbb{K}$) having decompositions
	\begin{equation}
	V=V_0\oplus V_1 \oplus V_2,\;W=W_0\oplus W_1 \oplus W_2\label{DefZProbTriDecomppAppendix}
	\end{equation}
	and an $\sigma\in L(V,W)$:
	given $v_{0}\in V_0$, find $v_1\in V_1,w_0\in W_0, w_2\in W_2$ satisfying 
	\begin{equation}
		w_{0}+w_2=\sigma \left(  v_0+v_1\right),
		\label{DefZProbEqAppendix} 
	\end{equation}
    such a tuple $(w_0,v_1, w_2)\in W_0\times V_1\times W_2$ called a solution of the $Z$-problem at $v_0$.
	If there exists a function $\sigma_*$ such that 
	\begin{equation}
		w_{0}=\sigma_{\ast}(v_0), \label{DefZProbEffOpAppendix}
	\end{equation}
	whenever $v_0\in V_0$ and $(w_0,v_1, w_2)$ is a solution of the $Z$-problem at $v_0$, then $\sigma_*$ is called an effective operator of the $Z$-problem.
\end{definition}

\begin{remark}
In our definition and notation, the equation $(\ref{DefZProbEqAppendix})$ is equivalent to the system
\begin{gather}
    \sigma_{00}v_0+\sigma_{01}v_1=w_0,\label{ZProbEquivFormPart1Appendix}\\
        \sigma_{10}v_0+\sigma_{11}v_1=0,\label{ZProbEquivFormPart2Appendix}\\
        \sigma_{20}v_0+\sigma_{21}v_1=w_2.\label{ZProbEquivFormPart3Appendix}
\end{gather}    
\end{remark}

In particular, it follows that if $\sigma_{11}$ is invertible, then we have the following formulas for the solution of the $Z$-problem at each $v_0\in V_0$ and the effective operator as a Schur complement:
\begin{gather}
    w_0=\sigma_*v_0,\;v_1=-\sigma_{11}^{-1}\sigma_{10}v_0,\; w_2=\sigma_{20}v_0+\sigma_{21}v_1,\label{ClassicSolnZProbAppendix}\\
    \sigma_*=\begin{bmatrix}
        \sigma_{00}&\sigma_{01}\\
        \sigma_{10}&\sigma_{11}
    \end{bmatrix}/\sigma_{11}=\sigma_{00}-\sigma_{01}\sigma_{11}^{-1}\sigma_{10}.\label{ClassicEffOperFormulaAppendix}
\end{gather}
Thus, the following theorem follows.
\begin{theorem}\label{thm:AppendixThmMainClassicalZProbEffOp}
If $(V,V_0,V_1,V_2, W, W_0,W_1,W_2, \sigma)$ is a $Z$-problem (as in Def.\ \ref{DefZProbMainAppendix}) and $\sigma_{11}$ [as defined by (\ref{DefOfSigmaSubblocksAppendix})] is invertible then the $Z$-problem (\ref{DefZProbEqAppendix}) has a unique solution for each $v_0\in V_0$ and it is given by the formulas (\ref{ClassicSolnZProbAppendix}), (\ref{ClassicEffOperFormulaAppendix}). Moreover, the effective operator of the $Z$-problem exists, is unique, and is given by the Schur complement formula (\ref{ClassicEffOperFormulaAppendix}).
\end{theorem}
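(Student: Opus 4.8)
The plan is to reduce the single operator equation (\ref{DefZProbEqAppendix}) to the triangular system (\ref{ZProbEquivFormPart1Appendix})--(\ref{ZProbEquivFormPart3Appendix}) and then exploit the invertibility of $\sigma_{11}$ to solve it explicitly. First I would fix $v_0 \in V_0$ and apply the three projections $\Gamma_{W_0}, \Gamma_{W_1}, \Gamma_{W_2}$ associated with the decomposition $W = W_0 \oplus W_1 \oplus W_2$ to both sides of (\ref{DefZProbEqAppendix}). Since $w_0 \in W_0$, $w_2 \in W_2$, the unknown satisfies $v_1 \in V_1$, and the blocks $\sigma_{ij}$ are defined by (\ref{DefOfSigmaSubblocksAppendix}), this yields precisely the system (\ref{ZProbEquivFormPart1Appendix})--(\ref{ZProbEquivFormPart3Appendix}); conversely, adding the three equations recovers (\ref{DefZProbEqAppendix}), so the two formulations are equivalent.

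Next I would observe that the middle equation (\ref{ZProbEquivFormPart2Appendix}), namely $\sigma_{10}v_0 + \sigma_{11}v_1 = 0$ in $W_1$, involves only the unknown $v_1 \in V_1$. Since $\sigma_{11} \in L(V_1, W_1)$ is invertible by hypothesis (so $\sigma_{11}^{-1} \in L(W_1, V_1)$), this equation has the unique solution $v_1 = -\sigma_{11}^{-1}\sigma_{10}v_0$. Substituting into the first and third equations determines $w_0$ and $w_2$ uniquely:
\begin{gather*}
w_0 = \sigma_{00}v_0 + \sigma_{01}v_1 = (\sigma_{00} - \sigma_{01}\sigma_{11}^{-1}\sigma_{10})v_0, \\
w_2 = \sigma_{20}v_0 + \sigma_{21}v_1.
\end{gather*}
This shows that for each $v_0 \in V_0$ the $Z$-problem has exactly one solution, given by the formulas (\ref{ClassicSolnZProbAppendix}), with $w_0 = (\sigma/\sigma_{11})v_0$ for the Schur complement in (\ref{ClassicEffOperFormulaAppendix}).

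For the effective operator I would set $\sigma_* := \sigma/\sigma_{11} = \sigma_{00} - \sigma_{01}\sigma_{11}^{-1}\sigma_{10}$, which is a sum and composition of linear maps and hence lies in $L(V_0, W_0)$. By the previous step every solution $(w_0, v_1, w_2)$ at $v_0$ satisfies $w_0 = \sigma_* v_0$, so $\sigma_*$ is an effective operator in the sense of Definition \ref{DefZProbMainAppendix}. Uniqueness follows at once: since the solution at each $v_0$ is unique, the first component $w_0$ is a well-defined function of $v_0$, and any effective operator must coincide with $v_0 \mapsto w_0 = \sigma_* v_0$ on all of $V_0$.

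I do not expect a substantive obstacle, as the argument is elementary linear algebra once the triangular structure is exposed. The only points requiring care are purely bookkeeping: confirming that the blockwise reduction to (\ref{ZProbEquivFormPart1Appendix})--(\ref{ZProbEquivFormPart3Appendix}) is exact under the projections $\Gamma_{V_j}, \Gamma_{W_i}$, and tracking the domains and codomains of the blocks so that $\sigma_{11}^{-1}$ maps $W_1 \to V_1$ and the resulting $\sigma_*$ indeed lands in $L(V_0, W_0)$.
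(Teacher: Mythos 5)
Your proposal is correct and follows essentially the same route as the paper: the paper reduces (\ref{DefZProbEqAppendix}) to the equivalent triangular system (\ref{ZProbEquivFormPart1Appendix})--(\ref{ZProbEquivFormPart3Appendix}) via the block decomposition and then solves the middle equation using the invertibility of $\sigma_{11}$, back-substituting to obtain (\ref{ClassicSolnZProbAppendix}) and (\ref{ClassicEffOperFormulaAppendix}). You merely spell out the bookkeeping (the projection argument and the uniqueness of the effective operator) that the paper leaves implicit.
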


The next theorem and corollary provides the necessary and sufficient conditions for an effective operator of a $Z$-problem to exist along with some of the fundamental properties of it when it does exist.

\begin{theorem}[Fundamental theorem of effective operators]\label{thm:FundThmExistenceUniquenessSolvabilityZProbEffOpVecSp}
    Let $(V,V_0,V_1,V_2, W, W_0,W_1,W_2, \sigma)$ be a $Z$-problem. Define the set $D_*(\sigma)\subseteq  V_0$ by
    \begin{align}
        D_*(\sigma)=\{v_0\in V_0:\text{a solution to the $Z$-problem at $v_0$ exists}\}.\label{ThmPartialAnswerQuestioniiPart1Appendix}
    \end{align}
    Then $D_*(\sigma)$ is the inverse image of $\sigma_{10}$ on the range of $\sigma_{11},$ i.e.,
    \begin{align}
        D_*(\sigma)=\sigma_{10}^{-1}(\ran \sigma_{11}),\label{FundFormulaForDStarSigma}
    \end{align}
    and, in particular, is a subspace of $V_0$.
    Furthermore, the following statements are equivalent:
    \begin{itemize}
        \item[(a)] $\ker \sigma_{11}\subseteq \ker \sigma_{01}.$
        \item[(b)] There exists a function $f:D_*(\sigma)\rightarrow W_0$ such that
    \begin{align}
        w_0=f(v_0)\label{PropPartialAnswerQuestioniiPart3Appendix}
    \end{align}
    whenever $v_0\in V_0$ and $(w_0,v_1,w_2)$ is a solution of the $Z$-problem at $v_0$.
        \item[(c)] If $(w_0,v_1,w_2)$ is a solution of the $Z$-problem at $v_0=0$ then $w_0=0$.
    \end{itemize}
    Moreover, if any of the statements $(a)$, $(b)$, or $(c)$ is true then an effective operator $\sigma_*$ of the $Z$-problem exists, is unique, equals the function $f:D_*(\sigma)\rightarrow W_0$ in $(b)$, i.e.,
    \begin{align}
        \sigma_*=f,
    \end{align}
    and $\sigma_*$ is linear, that is,
    \begin{align}
        \sigma_*\in L(D_*(\sigma),W_0).
    \end{align}
\end{theorem}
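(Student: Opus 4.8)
The plan is to work entirely through the equivalent system (\ref{ZProbEquivFormPart1Appendix})--(\ref{ZProbEquivFormPart3Appendix}), observing that the only genuine constraint is the middle equation $\sigma_{10}v_0+\sigma_{11}v_1=0$, since the first and third equations merely read off $w_0$ and $w_2$ once $v_0$ and $v_1$ are fixed. First I would establish the formula (\ref{FundFormulaForDStarSigma}): a solution at $v_0$ exists if and only if there is some $v_1\in V_1$ with $\sigma_{11}v_1=-\sigma_{10}v_0$, i.e. if and only if $\sigma_{10}v_0\in\ran\sigma_{11}$ (using that $\ran\sigma_{11}$ is a subspace), which is precisely $v_0\in\sigma_{10}^{-1}(\ran\sigma_{11})$. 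That $D_*(\sigma)$ is a subspace of $V_0$ is then immediate, being the preimage of the subspace $\ran\sigma_{11}$ under the linear map $\sigma_{10}$.

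Next I would prove the three statements equivalent by the cycle $(a)\Rightarrow(b)\Rightarrow(c)\Rightarrow(a)$. For $(a)\Rightarrow(b)$, given two solutions $(w_0,v_1,w_2)$ and $(w_0',v_1',w_2')$ at the same $v_0$, the middle equation gives $\sigma_{11}(v_1-v_1')=0$, so $v_1-v_1'\in\ker\sigma_{11}\subseteq\ker\sigma_{01}$, whence $w_0-w_0'=\sigma_{01}(v_1-v_1')=0$; this shows $w_0$ depends only on $v_0$ and lets me define $f\colon D_*(\sigma)\to W_0$ by $f(v_0)=w_0$. The implication $(b)\Rightarrow(c)$ is immediate because $(0,0,0)$ is always a solution at $v_0=0$, forcing $f(0)=0$ and hence $w_0=0$ for every solution at $0$. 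For $(c)\Rightarrow(a)$, given $v_1\in\ker\sigma_{11}$ I would exhibit $(\sigma_{01}v_1,\,v_1,\,\sigma_{21}v_1)$ as a solution at $v_0=0$ (each of the three equations checks directly), so that $(c)$ forces $\sigma_{01}v_1=0$, i.e. $v_1\in\ker\sigma_{01}$.

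Finally, assuming any (hence all) of $(a)$--$(c)$, I would verify the ``moreover'' clause. Existence and the identity $\sigma_*=f$ follow since the well-defined $f$ from $(b)$ satisfies exactly the defining property (\ref{DefZProbEffOpAppendix}) of an effective operator on $D_*(\sigma)$; uniqueness follows because any effective operator must agree with $w_0=f(v_0)$ wherever a solution exists, i.e. on all of $D_*(\sigma)$. Linearity is obtained from the linearity of $\sigma$: for $v_0,v_0'\in D_*(\sigma)$, $c\in\mathbb{K}$, and solutions $(w_0,v_1,w_2)$, $(w_0',v_1',w_2')$ at $v_0,v_0'$, the tuple $(cw_0+w_0',\,cv_1+v_1',\,cw_2+w_2')$ is a solution at $cv_0+v_0'\in D_*(\sigma)$, so $\sigma_*(cv_0+v_0')=cw_0+w_0'=c\,\sigma_*(v_0)+\sigma_*(v_0')$, giving $\sigma_*\in L(D_*(\sigma),W_0)$.

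I do not expect any single step to be a serious obstacle, since no inner-product structure or topology is used and every verification reduces to the block equations. The one point demanding care is the bookkeeping around the domain $D_*(\sigma)$: making sure the well-definedness argument in $(a)\Rightarrow(b)$ is recognized as the crux that upgrades the mere assignment $v_0\mapsto w_0$ to an honest function, and that ``existence and uniqueness'' of $\sigma_*$ is understood as a statement about the function on $D_*(\sigma)$ rather than on all of $V_0$, where $\sigma_*$ is unconstrained.
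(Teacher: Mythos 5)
Your proposal is correct and follows essentially the same route as the paper: the same characterization of $D_*(\sigma)$ via the middle block equation, the same cycle $(a)\Rightarrow(b)\Rightarrow(c)\Rightarrow(a)$ with the well-definedness of $f$ as the crux, and the same linearity argument via superposition of solutions. Your $(c)\Rightarrow(a)$ step, exhibiting $(\sigma_{01}v_1,\,v_1,\,\sigma_{21}v_1)$ explicitly as a solution at $0$, is just a slightly more concrete phrasing of the paper's projection argument.
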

\begin{proof}
    Assume the hypotheses. First, we prove the equality (\ref{FundFormulaForDStarSigma}). Let $v_0\in D_*(\sigma)$. Then there exists a triple $(w_0,v_1,w_2)\in W_0\times V_1\times W_2$ which is a solution of the $Z$-problem at $v_0$. This implies that $\sigma_{10}v_0+\sigma_{11}v_1=0$ so that $v_0\in \sigma_{10}^{-1}(\{\sigma_{11}(-v_1)\})\in \sigma_{10}^{-1}(\ran\sigma_{11})$. This proves $D_*(\sigma)\subseteq \sigma_{10}^{-1}(\ran\sigma_{11})$. Conversely, suppose $v_0\in \sigma_{10}^{-1}(\ran\sigma_{11})$. Then there exists $v_1\in V_1$ such that $\sigma_{10}v_0=\sigma_{11}(-v_1),$ that is, $\sigma_{10}v_0+\sigma_{11}v_1=0$. Then defining $w_0, w_2$ by $w_0=\sigma_{00}v_0+\sigma_{01}v_1\in W_0$ and $w_2=\sigma_{20}v_0+\sigma_{21}v_1\in W_2$ it follows that $v_0\in V_0, (w_0,v_1,w_2)\in W_0\times V_1\times W_2$ and $\sigma(v_0+v_1)=w_0+w_2,$ i.e., $(w_0,v_1,w_2)$ is a solution to the $Z$-problem at $v_0$. Hence, $v_0\in D_*(\sigma).$ This proves $\sigma_{10}^{-1}(\ran\sigma_{11})\subseteq D_*(\sigma)$. Thus, we have proven equality (\ref{FundFormulaForDStarSigma}). Next, it follows immediately from this and the linearity of $\sigma_{10}$ and $\sigma_{11}$ that $D_*(\sigma)$ is a subspace of $V_0$.

     We will now prove $(a)\Rightarrow (b) \Rightarrow (c)\Rightarrow (a)$. Suppose that $(a)$ is true. Let us prove $(b)$. Let $v_0\in D_*(\sigma)$ and suppose there exists $(w_0,v_1,w_2),(w_0',v_1',w_2')\in W_0\times V_1\times W_2$ such that
     \begin{align*}
         \sigma(v_0+v_1)=w_0+w_2,\;\sigma(v_0+v_1')=w_0'+w_2'.
     \end{align*}
     Then by linearity of $\sigma$ we have
     \begin{align*}
         \sigma(0+v_1-v_1')=w_0-w_0'+w_2-w_2'
     \end{align*}
     so that $(w_0-w_0',v_1-v_1',w_2-w_2')\in W_0\times V_1\times W_2$ is a solution of the $Z$-problem at $0$. From this, it follows that
     \begin{align*}
         \sigma_{11}(v_1-v_1')=0,\;\sigma_{01}(v_1-v_1')=w_0-w_0'.
     \end{align*}
     By $(a)$, this implies that $0=\sigma_{01}(v_1-v_1')=w_0-w_0'$ so that $w_0=w_0'$. This proves that the function $f:D_*(\sigma)\rightarrow W_0$ defined by $w_0=f(v_0)$ whenever $(w_0,v_1,w_2)$ is a solution of the $Z$-problem at $v_0$, is well-defined. This proves $(b)$.
     
     Now we will prove $(b)$ implies $(c)$. Suppose $(b)$ is true. If $(w_0,v_1,w_2)$ is a solution of the $Z$-problem at $v_0=0$ then by $(b)$ we must have $f(0)=w_0,$ but we also know that $(0,0,0)$ is a solution of the $Z$-problem at $0$ so by $(b)$ we must have $f(0)=0$ and hence $w_0=0$. This proves $(c)$.
     
     Next, we will prove $(c)$ implies $(a)$. Suppose $(c)$ is true. Let $v_1\in \ker \sigma_{11}$. Then $0=\sigma_{11}(v_1)=\Gamma_{W_1}\sigma\Gamma_{V_1}(v_1)=\Gamma_{W_1}\sigma(v_1)$. This implies $\sigma(v_1)=w_0+w_2$ for some $w_0\in  W_0, w_2\in W_2$ and hence $(w_0,v_1,w_2)$ is a solution of the $Z$-problem at $v_0=0$. By $(c)$ we must have $w_0=0$ and hence $\sigma(v_1)=w_2$ so that $\sigma_{01}(v_1)=\Gamma_{W_0}\sigma\Gamma_{V_1}(v_1)=\Gamma_{W_0}w_2=0$. In particular, $v_1\in \ker \sigma_{01}$. This proves $(a)$.
     
     Therefore, we have proven $(a)\Rightarrow (b)\Rightarrow (c) \Rightarrow (a)$, i.e., statements $(a)$, $(b)$, and $(c)$ are equivalent.
     
     To prove the remaining statements, assume that $(a)$, $(b)$, or $(c)$ is true. Then they are all true as we just showed. Let $f$ be any function with the properties described in $(b)$. Then by Definition \ref{DefZProbMainAppendix} and the definition of $D_*(\sigma)$ it follows immediately that $f$ is an effective operator of the $Z$-problem and if $g$ is another effective operator of the $Z$-problem then it is also a function with the properties described in $(b)$. Let us now prove that $f=g$. If $v_0\in D_*(\sigma)$ then by definition $v_0\in V_0$ and there exists a solution $(w_0,v_1,w_2)$ of the $Z$-problem at $v_0$ implying by $(b)$ that $f(v_0)=w_0=g(v_0)$. This proves $f=g$. Hence, we can denote $f$ by $\sigma_*,$ i.e., $\sigma_*=f$. We now prove that $\sigma_*:D_*(\sigma)\rightarrow W_0$ is linear, i.e., $\sigma_*\in L(D_*(\sigma),W_0)$. Let $v_0,v_0'\in D_*(\sigma)$ and $c\in \mathbb{K}$. Then there exists $(w_0,v_1,w_2),(v_0',v_1',w_2')\in W_0\times V_1\times W_2$ such that
     \begin{align*}
         \sigma(v_0+v_1)=w_0+w_2,\;\sigma(v_0'+v_1')=w_0'+w_2',
     \end{align*}
     and hence $w_0=\sigma_*(v_0)$ and $w_0'=\sigma_*(v_0')$. Thus, by linearity of $\sigma$, it follows that
     \begin{align*}
         \sigma((cv_0+v_0')+(cv_1+v_1'))=(cw_0+w_0')+(cw_2+w_2')
     \end{align*}
     and so $(cw_0+w_0',cv_1+v_1',cw_2+w_2')\in W_0\times V_1\times W_2$ is a solution of the $Z$-problem at $cv_0+v_0'\in V_0$ which implies $cv_0+v_0'\in D_*(\sigma)$ and
     \begin{align*}
         \sigma_*(cv_0+v_0')=cw_0+w_0'=c\sigma_*(v_0)+\sigma_*(v_0').
     \end{align*}
     This proves that $\sigma_*:D_*(\sigma)\rightarrow W_0$ is linear, which completes the proof of the theorem.
\end{proof}

\begin{definition}[Hilbert space orthogonal $Z$-problem]\label{def:HilbertSpaceOrthoZProb}
    A $Z$-problem $(V,V_0,V_1,V_2, W, W_0,W_1,W_2, \sigma),$ such that $\mathbb{K}=\mathbb{R}$ or $\mathbb{K}=\mathbb{C}$,
    $V=W$ is a Hilbert space (over $\mathbb{K}$), $V_i=W_i$ for $i=0,1,2$,
    \begin{align}
        V=V_0\ho V_1\ho V_2,
    \end{align}
    and $\sigma$ is a bounded linear operator, i.e., $\sigma\in\mathcal{L}(V)$, is called a Hilbert space orthogonal $Z$-problem and denoted by
    \begin{align}
        (V,V_0,V_1,V_2, \sigma).
    \end{align}
\end{definition}

\begin{corollary}\label{cor:KeyResultAppendixComposites}
    Let $(V,V_0,V_1,V_2, \sigma)$ be a Hilbert space orthogonal $Z$-problem such that $\dim V<\infty$ and
    	\begin{gather}
		\begin{bmatrix}
		\sigma_{00} & \sigma_{01}\\
		\sigma_{10} & \sigma_{11}
		\end{bmatrix}=\begin{bmatrix}
		\sigma_{00} & \sigma_{01}\\
		\sigma_{10} & \sigma_{11}
		\end{bmatrix}^*.\label{WeakAssumpKeyResultAppendixComposites}
	\end{gather}
    Then statements $(a)$, $(b)$, and $(c)$ in Theorem \ref{thm:FundThmExistenceUniquenessSolvabilityZProbEffOpVecSp} are equivalent to the statement:  \begin{itemize}
        \item[(d)] $\operatorname{ran}\sigma_{10}\subseteq \operatorname{ran}\sigma_{11}$.
    \end{itemize}
    Furthermore, if any of the statements $(a)$, $(b)$, and $(c)$ in Theorem \ref{thm:FundThmExistenceUniquenessSolvabilityZProbEffOpVecSp} or statement $(d)$ is true then an effective operator $\sigma_*$ of the $Z$-problem exists, is unique, equals the function $f:D_*(\sigma)\rightarrow V_0$ in $(b)$, i.e.,
    \begin{align}
        \sigma_*=f,
    \end{align}
    and $\sigma_*$ is a bounded linear operator, that is,
    \begin{align}
        \sigma_*\in \mathcal{L}(D_*(\sigma),V_0).
    \end{align}
    Moreover,
    \begin{align}
        D_*(\sigma)=V_0.
    \end{align}
\end{corollary}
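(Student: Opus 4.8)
The plan is to lean almost entirely on Theorem \ref{thm:FundThmExistenceUniquenessSolvabilityZProbEffOpVecSp}, which already supplies the equivalence of $(a)$, $(b)$, $(c)$ and, whenever any of them holds, the existence, uniqueness, and linearity of $\sigma_*$, the identity $\sigma_*=f$, and the formula $D_*(\sigma)=\sigma_{10}^{-1}(\operatorname{ran}\sigma_{11})$. Consequently the only genuinely new content to be established here is three-fold: first, that in the present finite-dimensional self-adjoint setting statement $(d)$, namely $\operatorname{ran}\sigma_{10}\subseteq\operatorname{ran}\sigma_{11}$, is equivalent to statement $(a)$, namely $\ker\sigma_{11}\subseteq\ker\sigma_{01}$; second, that ``linear'' can be upgraded to ``bounded linear''; and third, that $D_*(\sigma)=V_0$. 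Everything else is inherited verbatim from the general vector-space theorem by specializing $W=V$, $W_i=V_i$.

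First I would extract from the self-adjointness hypothesis (\ref{WeakAssumpKeyResultAppendixComposites}) the block adjoint relations $\sigma_{11}^*=\sigma_{11}$ and $\sigma_{10}=\sigma_{01}^*$, simply by reading off the $(1,1)$, $(1,0)$, and $(0,1)$ entries of the $2\times2$ compression to $V_0\ho V_1$ and of its adjoint. Since $\dim V<\infty$, every subspace is closed and the standard fundamental identities $\operatorname{ran}T=(\ker T^*)^\perp$ apply without qualification; these give $\operatorname{ran}\sigma_{11}=(\ker\sigma_{11})^\perp$ and $\operatorname{ran}\sigma_{10}=\operatorname{ran}\sigma_{01}^*=(\ker\sigma_{01})^\perp$. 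The equivalence $(a)\Leftrightarrow(d)$ then falls out by taking orthogonal complements: because $S\mapsto S^\perp$ is an inclusion-reversing involution on the subspaces of the finite-dimensional inner product spaces $V_1$ and $V_0$, the inclusion $\ker\sigma_{11}\subseteq\ker\sigma_{01}$ holds if and only if $(\ker\sigma_{01})^\perp\subseteq(\ker\sigma_{11})^\perp$, which by the two range identities just established is precisely $\operatorname{ran}\sigma_{10}\subseteq\operatorname{ran}\sigma_{11}$. Combining this with Theorem \ref{thm:FundThmExistenceUniquenessSolvabilityZProbEffOpVecSp} shows $(a)$, $(b)$, $(c)$, $(d)$ are mutually equivalent and, when they hold, delivers existence, uniqueness, and $\sigma_*=f\in L(D_*(\sigma),V_0)$.

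Finally, boundedness is automatic: $\dim V<\infty$ forces every linear map between the relevant subspaces to be continuous, hence $\sigma_*\in\mathcal{L}(D_*(\sigma),V_0)$. For the identity $D_*(\sigma)=V_0$ I would feed the formula $D_*(\sigma)=\sigma_{10}^{-1}(\operatorname{ran}\sigma_{11})$ back through $(d)$: the preimage of $\operatorname{ran}\sigma_{11}$ under $\sigma_{10}$ exhausts $V_0$ exactly when $\operatorname{ran}\sigma_{10}\subseteq\operatorname{ran}\sigma_{11}$, which is $(d)$ and is now known to hold under the corollary's standing assumptions. I do not anticipate a deep obstacle; the only point requiring care is the bookkeeping of which subblock is the adjoint of which under (\ref{WeakAssumpKeyResultAppendixComposites}), together with the observation that the kernel/range-perp duality and the inclusion-reversing property of $(\cdot)^\perp$ genuinely rely on finite dimensionality (or at least on closed ranges). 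In short, the corollary is a finite-dimensional, self-adjoint \emph{dualization} of the general vector-space Theorem \ref{thm:FundThmExistenceUniquenessSolvabilityZProbEffOpVecSp}, and the proof is essentially the translation of condition $(a)$ into its range-theoretic form $(d)$.
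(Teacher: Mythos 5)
Your proposal is correct and follows essentially the same route as the paper's own proof: read off $\sigma_{ij}^*=\sigma_{ji}$ for $i,j=0,1$ from the compression hypothesis, use the finite-dimensional kernel/range orthogonal-complement duality to convert $(a)$ into $(d)$, invoke Theorem \ref{thm:FundThmExistenceUniquenessSolvabilityZProbEffOpVecSp} for everything else, note that linearity implies boundedness in finite dimensions, and obtain $D_*(\sigma)=V_0$ from $D_*(\sigma)=\sigma_{10}^{-1}(\operatorname{ran}\sigma_{11})\supseteq\sigma_{10}^{-1}(\operatorname{ran}\sigma_{10})=V_0$. No gaps.
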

\begin{proof}
    Assume the hypotheses. Then $V=V_0\ho V_1\ho V_2$ so that $\Gamma_{V_i}$ is the orthogonal projection of $V$ onto $V_i$ for each $i=0,1,2$ and $\sigma^*=\sigma$. From this and the fact that $V$ is a finite-dimensional Hilbert space, it follows that $\sigma_{ij}^*=\sigma_{ji}$ for all $i,j=0,1$ and so
    \begin{align}
        (\ker \sigma_{ij})^{\perp}=\operatorname{ran} \sigma_{ji},\;(\operatorname{ran} \sigma_{ji})^{\perp}=(\ker \sigma_{ij})^{\perp}
    \end{align}
    for all $i,j=0,1$. From this we conclude that statement $(d)$ is equivalent to statement $(a)$ in Theorem \ref{thm:FundThmExistenceUniquenessSolvabilityZProbEffOpVecSp}. The rest of the proof of this theorem now follows immediately from Theorem \ref{thm:FundThmExistenceUniquenessSolvabilityZProbEffOpVecSp} for the Hilbert space orthogonal $Z$-problem $(V,V_0,V_1,V_2,\sigma)$ and the hypotheses that $V$ is a finite-dimensional Hilbert space [so that $L(D_*(\sigma),V_0)=\mathcal{L}(D_*(\sigma),V_0)$]. In particular, if statement $(d)$ is true then
    \begin{align}
        V_0\supseteq D_*(\sigma)=\sigma_{10}^{-1}(\operatorname{ran}\sigma_{11})\supseteq \sigma_{10}^{-1}(\operatorname{ran}\sigma_{10})=V_0
    \end{align}
    implies $D_*(\sigma)=V_0$. This completes the proof of the corollary.
\end{proof}

\begin{remark}
Definitions \ref{def:HilbertSpaceOrthoZProb} and \ref{DefZProbMainAppendix} taken together are very similar to Definition \ref{DefZProbMain}. In fact, the only difference is in the definition of an effective operator $\sigma_*$. For in Definition \ref{DefZProbMain}, it is required that any effective operator $\sigma_*$ be a bounded linear operator, whereas in Definition \ref{DefZProbMainAppendix} it is just required to be a function. But there is no difference in the case of finite-dimensional Hilbert spaces under the assumption $\sigma^*=\sigma$ (or the weaker assumption \ref{WeakAssumpKeyResultAppendixComposites}) due to Corollary \ref{cor:KeyResultAppendixComposites}. This is one of the main reasons we begin this paper with Definition \ref{DefZProbMain}. The other reason is that in the classical case, i.e., $\sigma_{11}$ is invertible, it is also true that there is no difference which follows from Theorem \ref{thm:FundThmExistenceUniquenessSolvabilityZProbEffOpVecSp} and Theorem \ref{ThmMainClassicalZProbEffOp}.
\end{remark}

\section{\label{SectAbsHodgDecomp}Abstract Hodge Decompositions}

In this section, we adapt, simplify, and extend results of Ref.\ \onlinecite{20LL} from matrices to linear operators. These results can be utilized when identifying the orthogonal triple decomposition needed for $Z$-problems and their effective operators. We only consider finite-dimensional inner product spaces over the same field $\mathbb{K}$, where $\mathbb{K}=\mathbb{R}$ or  $\mathbb{K}=\mathbb{C}$ (hence real or complex Hilbert spaces, respectively).

To begin, we need the following well known results \cite{19FIS}, which we state without proofs.
\begin{lemma}\label{LemHodgePre1}
	Let $\mathcal{A},\mathcal{B},\mathcal{C}$ be finite-dimensional inner product spaces, $U\in\mathcal{L}(\mathcal{A},\mathcal{B}),$ and $T\in\mathcal{L}(\mathcal{B},\mathcal{C})$. Then
		\begin{enumerate}[(a)]
			\item $\mathcal{B}=\ker U^*\ho \ran U$,
			\item $\mathcal{B}=\ran T^*\ho \ker T$.
		\end{enumerate}
\end{lemma}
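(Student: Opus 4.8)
The plan is to reduce both statements to a single orthogonality identity relating the range of an operator to the kernel of its adjoint, and then invoke the orthogonal decomposition of a finite-dimensional inner product space into a subspace and its orthogonal complement. Statement (b) will then come for free from (a) by duality, so essentially there is only one thing to prove.

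First I would establish the key identity $(\ran U)^{\perp}=\ker U^*$ for an arbitrary $U\in\mathcal{L}(\mathcal{A},\mathcal{B})$. This is immediate from the definition of the Hilbert space adjoint: for $y\in\mathcal{B}$, one has $y\in(\ran U)^{\perp}$ iff $(y,Ux)_{\mathcal{B}}=0$ for every $x\in\mathcal{A}$, iff $(U^*y,x)_{\mathcal{A}}=0$ for every $x\in\mathcal{A}$, iff $U^*y=0$, i.e.\ $y\in\ker U^*$. Since the spaces are finite-dimensional, $\ran U$ is a (closed) subspace of $\mathcal{B}$, so the standard orthogonal decomposition $\mathcal{B}=\ran U\ho(\ran U)^{\perp}$ applies; combining this with the identity just proved gives $\mathcal{B}=\ker U^*\ho\ran U$, which is statement (a).

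For statement (b), rather than redo the computation I would simply apply (a) with $T^*\in\mathcal{L}(\mathcal{C},\mathcal{B})$ playing the role of $U$. Using reflexivity of the adjoint, $(T^*)^*=T$, this yields $\mathcal{B}=\ker(T^*)^*\ho\ran T^*=\ker T\ho\ran T^*$, which is exactly (b) after reordering the two orthogonal summands. Thus both decompositions follow from the same underlying fact.

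The only point requiring any care — and hence the ``main obstacle,'' though it is mild — will be the appeal to finite-dimensionality to guarantee that $\ran U$ is closed, so that the orthogonal complement decomposition $\mathcal{B}=S\ho S^{\perp}$ is valid for $S=\ran U$. In the present setting this is automatic, but in an infinite-dimensional Hilbert space this step would require the additional hypothesis that the range be closed, which is precisely why the lemma is stated for finite-dimensional inner product spaces.
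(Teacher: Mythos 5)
Your proof is correct. The paper states this lemma without proof, citing it as a well-known result from linear algebra, so there is no in-paper argument to compare against; your derivation --- establishing $(\ran U)^{\perp}=\ker U^*$ from the definition of the adjoint, invoking the orthogonal-complement decomposition $\mathcal{B}=S\ho S^{\perp}$ (valid since subspaces of finite-dimensional spaces are closed), and obtaining (b) from (a) via $(T^*)^*=T$ --- is precisely the standard argument one would find in the cited reference, and your remark about why finite-dimensionality (or closedness of the range) matters is apt.
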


Using the next lemma we will easily be able prove the next theorem.
\begin{lemma}\label{LemHodgePre2}
	Let $\mathcal{A},\mathcal{B},\mathcal{C}$ be finite-dimensional inner product spaces. If $U\in\mathcal{L}(\mathcal{A},\mathcal{B})$ and $T\in\mathcal{L}(\mathcal{B},\mathcal{C})$ 
	satisfy
	\begin{gather}
		TU=0\; (i.e.,\; U^*T^*=0)\label{DefHodgeCond}
	\end{gather}
	then
	\begin{enumerate}[(a)]
		\item $\ker (T^*T+UU^*)=\ker T\cap \ker U^*$,
		\item $\ker T=\ker (T^*T+UU^*)\ho \ran U$.
	\end{enumerate}
\end{lemma}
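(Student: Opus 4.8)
The plan is to handle the two parts in order, leveraging positivity for (a) and then feeding (a) into an orthogonal-decomposition argument for (b).

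First I would prove (a) by a standard quadratic-form computation. For any $x \in \mathcal{B}$, taking the inner product of $(T^*T + UU^*)x$ with $x$ and using the adjoint relations gives
\[
(x, (T^*T+UU^*)x) = \|Tx\|^2 + \|U^*x\|^2.
\]
The inclusion $\ker T\cap \ker U^* \subseteq \ker(T^*T+UU^*)$ is immediate, since $Tx=0$ and $U^*x=0$ force $(T^*T+UU^*)x = T^*(Tx)+U(U^*x)=0$. For the reverse inclusion, if $x \in \ker(T^*T+UU^*)$ then the displayed identity shows $\|Tx\|^2 + \|U^*x\|^2 = 0$, and as both summands are nonnegative each vanishes, giving $x \in \ker T \cap \ker U^*$. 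I would note that this step uses only the adjoint relations and finite-dimensionality, not the hypothesis $TU=0$.

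Next, for (b), the hypothesis $TU=0$ enters through the inclusion $\ran U \subseteq \ker T$, since $T(Ua) = (TU)a = 0$ for every $a \in \mathcal{A}$. I would then invoke Lemma \ref{LemHodgePre1}(a), which supplies the ambient orthogonal decomposition $\mathcal{B} = \ker U^* \ho \ran U$, and restrict it to the subspace $\ker T$. Concretely, given $x \in \ker T$, write $x = x_1 + x_2$ with $x_1 \in \ker U^*$ and $x_2 \in \ran U$; because $x_2 \in \ran U \subseteq \ker T$ and $x \in \ker T$, the component $x_1 = x - x_2$ also lies in $\ker T$, whence $x_1 \in \ker T \cap \ker U^* = \ker(T^*T+UU^*)$ by part (a). This yields $\ker T = \ker(T^*T+UU^*) + \ran U$. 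Orthogonality of the two summands is inherited from the decomposition of $\mathcal{B}$: since $\ker(T^*T+UU^*) \subseteq \ker U^* = (\ran U)^\perp$, the sum is in fact orthogonal, giving $\ker T = \ker(T^*T+UU^*) \ho \ran U$.

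There is no serious obstacle here; the only point requiring a little care is the intersection step in (b)—recognizing that the $\ker U^*$-component of a vector in $\ker T$ remains inside $\ker T$, which is precisely where the inclusion $\ran U \subseteq \ker T$ (hence the hypothesis $TU=0$) is used. Everything else is forced by the positivity identity and the standard finite-dimensional fact $\ker U^* = (\ran U)^\perp$.
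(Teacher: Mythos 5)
Your proposal is correct and follows essentially the same route as the paper: part (a) via the quadratic-form identity $\bigl(x,(T^*T+UU^*)x\bigr)=\|Tx\|^2+\|U^*x\|^2$, and part (b) by restricting the decomposition $\mathcal{B}=\ker U^*\ho\ran U$ from Lemma \ref{LemHodgePre1}(a) to $\ker T$, using $\ran U\subseteq\ker T$ (from $TU=0$) to keep the $\ker U^*$-component inside $\ker T$. No gaps.
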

\begin{proof}
	Assume the hypotheses.
	\begin{enumerate}[(a)]
		\item Clearly, $\ker T\cap\ker U^*\subseteq \ker(T^*T+UU^*)$. It remains to show $\ker (T^*T+UU^*)\subseteq \ker T\cap \ker U^*$. Let $x\in\ker (T^*T+UU^*)$. Then
		\begin{gather*}
    0 = (T^*Tx + UU^*x, x) = (T^*Tx, x) + (UU^*x,x) 
    \\= \|Tx\|^2_{\mathcal{C}} + \|U^* x \|^2_{\mathcal{A}}
\end{gather*}
    which implies $x\in \ker T\cap \ker U^*$. This proves $\ker (T^*T+UU^*)\subseteq \ker T\cap \ker U^*$, which proves $(a)$.
		
		\item Next, we claim that
\begin{align*}
     \ker T = (\ker T\cap \ker U^*) \ho \ran (U).
\end{align*}
To prove this, notice first that it is obvious that $$(\ker T\cap \ker U^*) \ho [\ran (U) \cap \ker (T)] \subseteq\ker T.$$ To prove the reverse inclusion, let $x \in \ker T$.  Then by Lemma (\ref{LemHodgePre1}).$(a)$, $x = a + b$ for some $a \in \ker(U^*) $ and $b \in \ran(U)$.   Since $\ran(U) \subseteq \ker(T)$ [by (\ref{DefHodgeCond})], we have that $b \in \ker(T)$.  But then $a=x-b \in \ker(T)$ as well, so $a \in \ker T \cap \ker U^*$ and $b \in \ran(U) \cap \ker(T)$.  This proves
    \begin{align*}
     \ker T & \subseteq (\ker T\cap \ker U^*) \ho [\ran (U) \cap \ker (T)].
\end{align*}
From this and since $\ran(U) \subseteq \ker(T)$ [by (\ref{DefHodgeCond})], the proof of the claim follows. The proof of $(b)$ now follows immediately from this and $(a)$.
	\end{enumerate}
	This completes the proof.
\end{proof}

We now have all the necessary tools to prove the abstract Hodge decomposition theorem.
\begin{theorem}[Abstract Hodge decomposition]\label{ThmHodgeDecomp}
	Let $\mathcal{A},\mathcal{B},\mathcal{C}$ be finite-dimensional inner product spaces. If $U\in\mathcal{L}(\mathcal{A},\mathcal{B})$ and $T\in\mathcal{L}(\mathcal{B},\mathcal{C})$ 
	satisfy
	\begin{gather}
		TU=0\;\; (i.e.,\; U^*T^*=0)
	\end{gather}
	then
	\begin{gather}
		\mathcal{B}=\operatorname{ran} T^*\ho\ker(T^*T+UU^*)\ho\operatorname{ran}U.
		\label{DefHodgeDecomp}
	\end{gather}
	Furthermore,
	\begin{gather}
		\ran(T^*T+UU^*)=\ran T^*\ho\ran U,\\
		\ker (T^*T+UU^*)=\ker T\cap \ker U^*.\label{LemLaplKer}
	\end{gather}
\end{theorem}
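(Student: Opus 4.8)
The plan is to derive all three conclusions directly from the two preparatory lemmas, since the hypothesis $TU=0$ is precisely what those lemmas require. The kernel identity $\ker(T^*T+UU^*)=\ker T\cap\ker U^*$ is nothing but Lemma~\ref{LemHodgePre2}.(a), so it is recorded immediately with no further work.

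For the orthogonal triple decomposition, the idea is to refine the standard splitting of $\mathcal{B}$ induced by $T$. First I would invoke Lemma~\ref{LemHodgePre1}.(b) to write $\mathcal{B}=\ran T^*\ho\ker T$. Then I would substitute the refinement of $\ker T$ supplied by Lemma~\ref{LemHodgePre2}.(b), namely $\ker T=\ker(T^*T+UU^*)\ho\ran U$, to obtain
\begin{gather*}
\mathcal{B}=\ran T^*\ho\ker(T^*T+UU^*)\ho\ran U.
\end{gather*}
The only point needing care is that all three summands are pairwise orthogonal: the orthogonality of $\ker(T^*T+UU^*)$ and $\ran U$ is built into Lemma~\ref{LemHodgePre2}.(b), while $\ran T^*$ is orthogonal to each of the other two because both are subspaces of $\ker T$ and $\ran T^*\perp\ker T$ by the splitting from Lemma~\ref{LemHodgePre1}.(b). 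This establishes (\ref{DefHodgeDecomp}).

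For the range identity, I would use that $A:=T^*T+UU^*$ is self-adjoint (a sum of two operators of the form $B^*B$), so that in the finite-dimensional setting $\ran A=(\ker A)^\perp$. Combining this with the kernel identity already recorded gives $\ran(T^*T+UU^*)=\big(\ker(T^*T+UU^*)\big)^\perp$, and reading off the orthogonal complement of the middle summand in the triple decomposition above yields $\ran(T^*T+UU^*)=\ran T^*\ho\ran U$, where the right-hand side is the required orthogonal direct sum by the orthogonality already verified.

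I do not anticipate a genuine obstacle here: the mathematical content is entirely packaged in Lemmas~\ref{LemHodgePre1} and~\ref{LemHodgePre2}, and the hardest bookkeeping is checking the pairwise orthogonality of the three summands, which reduces to the single fact $\ran T^*\perp\ker T$. The one item worth stating explicitly is the self-adjointness of $T^*T+UU^*$ together with the resulting range--kernel duality $\ran A=(\ker A)^\perp$, since that is exactly what converts the kernel identity into the range identity.
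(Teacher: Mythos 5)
Your proposal is correct and follows essentially the same route as the paper: Lemma \ref{LemHodgePre1}.(b) combined with Lemma \ref{LemHodgePre2}.(b) yields the triple decomposition, Lemma \ref{LemHodgePre2}.(a) gives the kernel identity, and the range identity is obtained by taking the orthogonal complement of the kernel. Your extra remarks on pairwise orthogonality and the self-adjointness of $T^*T+UU^*$ simply make explicit steps the paper leaves implicit.
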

\begin{proof}
	Assume the hypotheses. By Lemma \ref{LemHodgePre1}.$(b)$ together with  Lemma \ref{LemHodgePre2}.$(b)$ we have
	\begin{gather*}
		\mathcal{B}=\ran T^*\ho\ker T=\ran T^*\ho\ker(T^*T+UU^*)\ho\ran U.
	\end{gather*}
	Furthermore, by Lemma \ref{LemHodgePre2}.$(a)$ we have
	\begin{gather*}
		\ran (T^*T+UU^*)=[\ker(T^*T+UU^*)]^\perp\\=(\ker T\cap \ker U^*)^\perp=\ran T^*\ho\ran U.
	\end{gather*}
	This completes the proof.
\end{proof}

\begin{definition}
	We call the decomposition (\ref{DefHodgeDecomp}) an (abstract) Hodge decomposition.
\end{definition}
Orthogonal projections associated with abstract Hodge decompositions play a key role in this paper and so the following corollary is complementary to Theorem \ref{ThmHodgeDecomp}.
\begin{corollary}
    Suppose $\mathcal{B}$ has an abstract Hodge decomposition (\ref{DefHodgeDecomp}). Let $\Gamma_{\ran T^*}, \Gamma_{\ran U},$ and $\Gamma_{\ker (T^*T+UU^*)}$ denote the orthogonal projections of $\mathcal{B}$ onto $\ran T^*, \ran U,$ and  $\ker (T^*T+UU^*),$ respectively. Then
    \begin{gather}
        \Gamma_{\ran T^*}=T^+T,\; \Gamma_{\ran U}=UU^+,\\
        \Gamma_{\ker (T^*T+UU^*)}=I_{\mathcal{B}}-\Gamma_{\ran T^*}-\Gamma_{\ran U},
    \end{gather}
    where $(\cdot)^+$ denotes the Moore-Penrose pseudoinverse and $I_{\mathcal{B}}$ is the identity operator on $\mathcal{B}$.
\end{corollary}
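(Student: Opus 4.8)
The plan is to read off the first two identities directly from the pseudoinverse properties already established in Lemma \ref{LemMPProp}, and then obtain the third identity from the orthogonality built into the Hodge decomposition.

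First I would apply Lemma \ref{LemMPProp}.$(3)$ to the operator $T\in\mathcal{L}(\mathcal{B},\mathcal{C})$: taking $X=T$ there gives $T^+T=\Gamma_{\ran T^*}$, the orthogonal projection of $\mathcal{B}$ onto $\ran T^*$. Symmetrically, applying Lemma \ref{LemMPProp}.$(2)$ to $U\in\mathcal{L}(\mathcal{A},\mathcal{B})$ with $X=U$ gives $UU^+=\Gamma_{\ran U}$, the orthogonal projection of $\mathcal{B}$ onto $\ran U$. These are exactly the first two claimed formulas, so no further computation is needed for them.

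For the third identity I would invoke the abstract Hodge decomposition (\ref{DefHodgeDecomp}) of Theorem \ref{ThmHodgeDecomp}, namely
\[
\mathcal{B}=\ran T^*\ho\ker(T^*T+UU^*)\ho\ran U .
\]
Since the three summands are mutually orthogonal and their orthogonal direct sum is all of $\mathcal{B}$, each $x\in\mathcal{B}$ decomposes uniquely as $x=x_1+x_2+x_3$ with $x_1\in\ran T^*$, $x_2\in\ker(T^*T+UU^*)$, and $x_3\in\ran U$; by definition of the orthogonal projections this means $\Gamma_{\ran T^*}x=x_1$, $\Gamma_{\ker(T^*T+UU^*)}x=x_2$, and $\Gamma_{\ran U}x=x_3$. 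Adding these three equations yields $(\Gamma_{\ran T^*}+\Gamma_{\ker(T^*T+UU^*)}+\Gamma_{\ran U})x=x$ for every $x\in\mathcal{B}$, i.e. $\Gamma_{\ran T^*}+\Gamma_{\ker(T^*T+UU^*)}+\Gamma_{\ran U}=I_{\mathcal{B}}$. Rearranging and substituting the two formulas just obtained gives $\Gamma_{\ker(T^*T+UU^*)}=I_{\mathcal{B}}-\Gamma_{\ran T^*}-\Gamma_{\ran U}=I_{\mathcal{B}}-T^+T-UU^+$, which is the remaining claim.

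There is essentially no hard step in this argument; the only point requiring a moment of care is the standard fact that the orthogonal projections onto the mutually orthogonal summands of an orthogonal direct sum add up to the identity, and this follows at once from the uniqueness of the orthogonal decomposition of each vector guaranteed by Theorem \ref{ThmHodgeDecomp}. Everything else is a direct citation of Lemma \ref{LemMPProp}.
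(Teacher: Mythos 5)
Your proof is correct and follows essentially the same route as the paper: read off $\Gamma_{\ran T^*}=T^+T$ and $\Gamma_{\ran U}=UU^+$ from the Moore--Penrose projection identities in Lemma \ref{LemMPProp} and then use the mutual orthogonality of the three summands in (\ref{DefHodgeDecomp}) to conclude the projections sum to $I_{\mathcal{B}}$. If anything, your citation of statements $(2)$ and $(3)$ of Lemma \ref{LemMPProp} is the accurate one --- the paper's reference to statements $(4)$ and $(5)$ there appears to be a typo.
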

\begin{proof}
    This follows immediately from the abstract Hodge decomposition (\ref{DefHodgeDecomp}) of $\mathcal{B}$ and the fundamental properties of the Moore-Penrose pseudoinverse [see statements $(4)$ and $(5)$ in Lemma \ref{LemMPProp}].
\end{proof}

\nocite{*}
\bibliography{aipsamp}% Produces the bibliography via BibTeX.

\end{document}